\pgfplotsset{compat=1.17}
\DeclareMathOperator{\tr}{tr}
\crefname{section}{Section}{Sections}
\Crefname{section}{Section}{Sections}
\crefname{equation}{}{}
\Crefname{equation}{}{}
\crefname{figure}{Figure}{Figures}
\Crefname{figure}{Figure}{Figures}
\crefname{appendix}{Appendix}{Appendices}
\Crefname{appendix}{Appendix}{Appendices}
\definecolor{color1}{RGB}{94,60,153}
\definecolor{color2}{RGB}{230,97,1}
\definecolor{color3}{RGB}{253,184,99}
\newtheorem{lem}{Lemma}
\newtheorem{theorem}{Theorem}[section]
\newtheorem{proposition}{Proposition}[section]
\newtheorem{claim}{Claim}
\newtheorem{remark}{Remark}[section]
\newtheorem{corollary}{Corollary}[section]
\theoremstyle{definition}
\newtheorem{example}{Example}[section]
\newcommand{\la}{\langle}
\newcommand{\ra}{\rangle}
\newcommand{\eps}{\varepsilon}
\newcommand{\floor}[1]{\lfloor #1 \rfloor}
\newcommand{\kbb}[2]{|#1\rangle\!\langle#2|} %ketbra
\def\ve{\varepsilon}
\def\bea{\begin{eqnarray}}
\def\eea{\end{eqnarray}}
\def\nn{\nonumber}
\def\beq{\begin{equation}}
\def\eeq{\end{equation}}
\def\ba{\begin{eqnarray}}
\def\ea{\end{eqnarray}}
\def\be{\ba\displaystyle}
\def\ee{\ea}
\newcommand{\h}{\mathcal{H}}
\newcommand{\T}{\mathcal{T}}
\newcommand{\U}{\mathcal{U}}
\newcommand{\D}{\mathcal{D}}
\newcommand{\B}{\mathcal{B}}
\newcommand{\N}{\mathcal{N}}
\newcommand{\M}{\mathcal{M}}
\newcommand{\id}{\mathbb{I}}
\newcommand{\E}{\mathcal{E}}
\newcommand{\kb}[1]{|#1\rangle\!\langle#1|} %ketbra
\newcommand{\bE}{\mathbbm{E}}
\newcommand{\1}{\mathbbm{1}}
\begin{document}
%\linenumbers

\title{Learning Quantum Processes with Memory - Quantum Recurrent Neural Networks} 
\author{Dmytro Bondarenko}
\affiliation{Institut f\"ur Theoretische Physik, Leibniz Universit\"at Hannover, Appelstraße 2, 30167 Hannover, Germany}
\affiliation{Stewart Blusson Quantum Matter Institute, University of British Columbia, 2355 East Mall, V6T 1Z4 Vancouver, BC, Canada, dimbond@live.com}
\author{Robert Salzmann}
\affiliation{Institut f\"ur Theoretische Physik, Leibniz Universit\"at Hannover, Appelstraße 2, 30167 Hannover, Germany}
\affiliation{Department of Applied Mathematics and Theoretical Physics, University of Cambridge, Cambridge CB3 0WA, United Kingdom, rals.salzmann@web.de}
\author{Viktoria-S. Schmiesing}
\affiliation{Institut f\"ur Theoretische Physik, Leibniz Universit\"at Hannover, Appelstraße 2, 30167 Hannover, Germany, viktoria.schmiesing@itp.uni-hannover.de}
\maketitle

\textbf{Recurrent neural networks play an important role in both research and industry. With the advent of quantum machine learning, the quantisation of recurrent neural networks has become recently relevant. We propose fully quantum recurrent neural networks, based on dissipative quantum neural networks, capable of learning general causal quantum automata. A quantum training algorithm is proposed and classical simulations for the case of product outputs with the fidelity as cost function are carried out. We thereby demonstrate the potential of these algorithms to learn complex quantum processes with memory in terms of the exemplary delay channel, the time evolution of quantum states governed by a time-dependent Hamiltonian, and high- and low-frequency noise mitigation. Numerical simulations indicate that our quantum recurrent neural networks exhibit a striking ability to generalise from small training sets.}\\[1ex]

\noindent

\textbf{{\large Introduction}}\\

Machine learning (ML) has made remarkable progress during the last decades~\cite{Goodfellow2016, Nielsen2015, Jordan2015}. Central among the many key successes of ML are \emph{recurrent neural networks} (RNNs): These networks are especially suited to the learning of sequential data generated in a wide variety of natural settings, with use cases including time evolution~\cite{Medsker2001}, speech recognition~\cite{Mikolov2010}, the prediction of electric power demand~\cite{Costa1999}, and machine translation~\cite{Kalchbrenner2013}. Various RNN architectures have been proposed, ranging from fundamental RNNs to the more complicated, such as LSTMs~\cite{Hochreiter1997}, and gated recurrent units (GRUs)~\cite{Cho2014, Chung2014}. However, challenges remain as ML algorithms are typically computationally expensive to implement. 

Quantum computing promises new techniques to bolster the faltering Moore's law~\cite{Prati2017}. This has led, in the context of ML, to the advent of 
\emph{quantum machine learning} (QML)~\cite{Biamonte2017}. A rough taxonomy of QML algorithms may be provided by categorising them according to whether they exploit either quantum processing (CQ), quantum data (QC), or both (QQ)~\cite{ABG06, Wikipedia2018}. CQ ML is built on quantum algorithms to learn classical data, mainly aimed at speeding up classical learning algorithms~\cite{Aimeur2013, Paparo2014, Schuld2014, Wiebe2016}. QC ML utilises classical machine learning for quantum problems, such as quantum metrology~\cite{Lovett2013}, simulating quantum many-body systems~\cite{Carleo2017}, or adaptive quantum computation~\cite{Tiersch2015}. QQ ML directly targets the development of quantum learning algorithms to analyse quantum data~\cite{DB18,SC02,G08,SCMB12,DTB16,MSW17,Alvarez2017,Amin2018,Du2020,SMMCB19}.

Many QML architectures would benefit from the ability to repeatedly query a (quantum) memory storage via a recurrent network structure. We say the system has memory if the output at time $t$ depends on the inputs at times $\tau \leq t$. 

\begin{figure}[t]
	\begin{tikzpicture}[scale=0.6]
		\draw (0,0.75) node[circle,minimum height=0.3cm,minimum width=.3cm,draw] (U000){};
		\draw (0,0) node[circle,minimum height=0.3cm,minimum width=.3cm,draw] (U001){};
		\draw (1,1.125) node[circle,minimum height=0.3cm,minimum width=.3cm,draw] (U010){};
		\draw (1,0.375) node[circle,minimum height=0.3cm,minimum width=.3cm,draw] (U011){};
		\draw (1,-0.375) node[circle,minimum height=0.3cm,minimum width=.3cm,draw] (U012){};
		\draw (2,0.75) node[circle,minimum height=0.3cm,minimum width=.3cm,draw] (U020){};
		\draw (2,0) node[circle,minimum height=0.3cm,minimum width=.3cm,draw] (U021){};
		\draw (U000.north east)--(U010.west);
		\draw (U001.north east)--(U010.south west);
		\draw (U000.east)--(U011.north west);
		\draw (U001.east)--(U011.south west);
		\draw (U000.south east)--(U012.north west);
		\draw (U001.south east)--(U012.west);
		\draw (U010.east)--(U020.north west);
		\draw (U011.north east)--(U020.west);
		\draw (U012.north east)--(U020.south west);
		\draw (U010.south east)--(U021.north west);
		\draw (U011.south east)--(U021.west);
		\draw (U012.east)--(U021.south west);
		\draw[->,orange,thick] (-0.75,0)--(U001.west);
		\draw[->,blue,thick] (-0.75,0.75)--(U000.west);
		\draw[->,purple,thick] (U021.east)--(10.75,0);		
		\draw (3,1.5) node[circle,minimum height=0.3cm,minimum width=.3cm,draw] (U100){};
		\draw (3,0.75) node[circle,minimum height=0.3cm,minimum width=.3cm,draw] (U101){};
		\draw (4,1.875) node[circle,minimum height=0.3cm,minimum width=.3cm,draw] (U110){};
		\draw (4,1.125) node[circle,minimum height=0.3cm,minimum width=.3cm,draw] (U111){};
		\draw (4,0.375) node[circle,minimum height=0.3cm,minimum width=.3cm,draw] (U112){};
		\draw (5,1.5) node[circle,minimum height=0.3cm,minimum width=.3cm,draw] (U120){};
		\draw (5,0.75) node[circle,minimum height=0.3cm,minimum width=.3cm,draw] (U121){};
		\draw (U100.north east)--(U110.west);
		\draw (U101.north east)--(U110.south west);
		\draw (U100.east)--(U111.north west);
		\draw (U101.east)--(U111.south west);
		\draw (U100.south east)--(U112.north west);
		\draw (U101.south east)--(U112.west);
		\draw (U110.east)--(U120.north west);
		\draw (U111.north east)--(U120.west);
		\draw (U112.north east)--(U120.south west);
		\draw (U110.south east)--(U121.north west);
		\draw (U111.south east)--(U121.west);
		\draw (U112.east)--(U121.south west);
		\draw[->,orange,thick] (U020.east)--(U101.west);
		\draw[->,blue,thick] (-0.75,1.5)--(U100.west);
		\draw[->,purple,thick] (U121.east)--(10.75,0.75);
		\draw[->,orange,thick] (U120.east)--(5.75,1.5);
		\draw (8,3.75) node[circle,minimum height=0.3cm,minimum width=.3cm,draw] (Un00){};
		\draw (8,3) node[circle,minimum height=0.3cm,minimum width=.3cm,draw] (Un01){};
		\draw (9,4.125) node[circle,minimum height=0.3cm,minimum width=.3cm,draw] (Un10){};
		\draw (9,3.375) node[circle,minimum height=0.3cm,minimum width=.3cm,draw] (Un11){};
		\draw (9,2.625) node[circle,minimum height=0.3cm,minimum width=.3cm,draw] (Un12){};
		\draw (10,3.75) node[circle,minimum height=0.3cm,minimum width=.3cm,draw] (Un20){};
		\draw (10,3) node[circle,minimum height=0.3cm,minimum width=.3cm,draw] (Un21){};
		\draw (Un00.north east)--(Un10.west);
		\draw (Un01.north east)--(Un10.south west);
		\draw (Un00.east)--(Un11.north west);
		\draw (Un01.east)--(Un11.south west);
		\draw (Un00.south east)--(Un12.north west);
		\draw (Un01.south east)--(Un12.west);
		\draw (Un10.east)--(Un20.north west);
		\draw (Un11.north east)--(Un20.west);
		\draw (Un12.north east)--(Un20.south west);
		\draw (Un10.south east)--(Un21.north west);
		\draw (Un11.south east)--(Un21.west);
		\draw (Un12.east)--(Un21.south west);
		\draw[->,orange,thick] (7.25,3)--(Un01.west);
		\draw[->,blue,thick] (-0.75,3.75)--(Un00.west);
		\draw[->,purple,thick] (Un21.east)--(10.75,3);
		\draw[->,orange,thick] (Un20.east)--(10.75,3.75);
		\draw[dotted, thick, purple] (10.5,1)--(10.5,2.75);
		\draw[dotted, thick, blue] (-0.25,1.75)--(-0.25,3.5);
		\draw[dotted, thick, orange] (6,1.75)--(7,2.75);
		\draw (-0.75,-0.25)--(-0.75,4)--(-1.5,4)--(-1.5,-0.25)--(-0.75,-0.25);
		\draw (-1.125,1.875) node[](){\(\rho^\text{IN}\)};
		\draw[decorate, decoration={brace}, yshift=2ex]  (11,3.75) -- node[right=0.4ex] {\(\rho^\text{OUT}\)}  (11,-0.75);
	\end{tikzpicture}
\caption{\textbf{A general Quantum Recurrent Neural Network.} A QRNN is an iteration over the memory system (orange) of a feed-forward QNN with \(L\) hidden layers (\(L=1\) in Figure) where the total input is split into an input (blue) and the memory and the total output is split into an output (purple) and the memory. For a concrete number of iterations we can input any state \(\rho^\text{IN}\) on \(\h^\text{m}\otimes\h^{\text{in}}\otimes \dots\otimes\h^{\text{in}}\) and get out a state \(\rho^\text{OUT}\) on \(\h^{\text{out}}\otimes \dots\otimes\h^{\text{out}}\otimes \h^\text{m}\).}
\label{QRNN}
\end{figure}
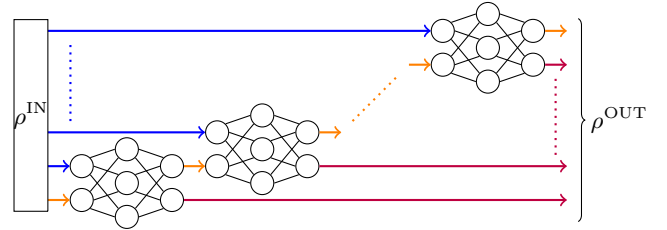
In the CQ context, there are attempts to use quantum computers, quantum channels~\cite{cptp_language_cats} and states~\cite{Information_retreival-Hilbert_space, Q_language_IR_Benjio, QNLP_embedding} for tasks where classical RNNs are often used, such as natural language processing~\cite{QNLP_cat_2016, QLSTM_translation, Qubits_speak, QNLP_music}. In this context, there are already several quantisations of RNNs~\cite{BehrmannQRNN2000, CirsteahandwritingQRNN2018, ElkenawyQDRNN2021, ZakQRNN1999, CQRNN_Qadvantage}, usually exploiting a substitution of the feed-forward NNs in RNNs, LSTMs or GRUs with some form of QNN. Typically such QNNs encode classical information into a quantum state, apply a variational circuit~\cite{takakiQRNN2020, BauschQRNN2020, ChoiQGRNN2021,chenquantumLSTM2020, QLSTM_translation, ChenQREDNN2020} or a tunable evolution~\cite{DawesQRNN1992, BeheraQRNN2004, BeheraQRNN2006, GandhiQRNNEEG2014, GandhiQRNNEMG2013, LuQRNNQKD2018, CQRNN_Qadvantage}, and then carry out a measurement, followed by further classical processing. A similar approach can be pursued with non-RNN variational classes used for machine learning with memory, such as matrix-product operators~\cite{DrD2021, QTN_ML_2019}. 
Non-universal CQ RNNs that exhibit contextuality were proven capable of better scaling of memory resources than their entirely classical counterparts~\cite{CQRNN_Qadvantage}. Amazingly, these networks were also able to outperform widely-used classical architectures of similar size in a natural language translation task~\cite{CQRNN_Qadvantage}. This result suggests a wide industrial appeal of QRNNs when fault-tolerant quantum computers become available.

Unfortunately, with state-of-the-art classical natural language processing ML models training hundreds of billions of parameters and employing highly connected sophisticated architectures~\cite{GPT3, Megatron-NLG}, it is unclear if CQ RNNs can be competitive in the current noisy intermediate-scale quantum (NISQ) era.

Unlike for CQ learning, there is no classical alternative to QQ ML. If the task calls for direct handling of quantum data, QQ ML can immediately improve the state-of-the-art even if applied to modestly-sized systems.
Many tasks require quantum data spanning simulation, metrology, error correction and denoising, and communication~\cite{DrD2021}. It is known that entangled states are essential to represent many-body correlations in materials, able to enhance variance scaling of an $n$-shot measurement from $\sim 1/n$ to $\sim 1/n^2$, allow for unprecedentedly secure key distribution, and can significantly speed up specific algorithms if properly shielded from noise. For complex systems in hard-to-characterise environments, data is often easier to procure than exact solutions. Thus, QQ ML is a natural approach to searching for good entangled states for many applications.

In the QQ scenario, some exciting tasks also require {\emph{quantum memory}. {We call the memory quantum if the time dependence is facilitated by a subsystem that is able to store entangled states.} One of the simplest examples where quantum memory enlarges the capabilities of quantum protocols is when an extra state does not change during a particular process yet facilitates it. This is precisely what a catalyst does --- a widely applied concept in chemistry~\cite{MaselCatalyst2001} and quantum information processing~\cite{QCatalist_PRL1999}. There are also many practically relevant tasks where it is desirable to add (quantum or classical) memory that can change during the process.} These {QQ} tasks include modelling the time-dependent evolution of a quantum system in order to steer it in a desired direction. Further application areas arise in chemistry and molecular dynamics~\cite{TDObservables_RKrems_2022}, quantum control{~\cite{DongQuantumControl2010, DAlessandroQuantumControl2021, Non-Markovian_Steady_States}}, scheduling for quantum annealers~\cite{QAnnealing_speedup_schedule}, and the design of classical and quantum adaptive filters \cite{brandenburg1999mp3, Adaptive_filter_theory_book, DSP_book, DrD2021}.

{However, the QQ scenario is targeted by comparatively fewer architectures. One example of such a fully quantum RNN was presented in~\cite{killoranContinuousvariableQuantumNeural2019} for systems of continuous variables. In~\cite{verdon2019QGNN}, a graph quantum RNN was used to represent a Trotterised Hamiltonian evolution. For all but the first and last steps, the memory in~\cite{verdon2019QGNN} is treated as the entire input and output. Instead, we want to focus on qudits and having in- and outputs in every time step.
}

{To build a fully quantum recurrent neural network (QRNN), we must pick a suitable neuron. Just like classical RNNs, neurons should not modify subsystems not connected to it, and big enough networks should be universal~\cite{RNN_Turing1, RNN_Turing886, RNN_Turing3, RNN_Turing4}. We say that a QRNN architecture is universal if it can implement any causal quantum automaton, i.e. any quantum process where the output at time \(t\) only depends on the input at times \(t^\prime \leq t\)~\cite{Kretschmann2005}. Implementing a dissipative quantum channel might be desirable even if the data is pure, as the ability to forget is quintessential for learning long-term dependencies~\cite{LSTM_forget} and robustness to data corruption~\cite{RNN_grad_problems_3}. These requirements are conveniently satisfied by dissipative QNNs from~\cite{beerTrainingDeepQuantum2020}, which can represent any quantum channel.} These QNNs are then connected as depicted in Fig.~\ref{QRNN} to build the QRNN. We apply our QRNN to learn sequential quantum data generated in various contexts, including the delay channel, the time evolution of a state governed by a time-dependent Hamiltonian, and high- and low-frequency noise mitigation. Our classical simulations demonstrate a striking capability of the QRNN architecture to generalise from small training sets. If desired, our QRNNs can also be restricted to classical processing or classical data.
\noindent\\[1ex]
\textbf{{\large Results}}\\[1ex]
\textbf{The network architecture.}
A feed-forward (dissipative) quantum neural network (QNN) for qudits as proposed in~\cite{beerTrainingDeepQuantum2020} is built from connected quantum perceptrons arranged in \(L+2\) layers, i.e.\(\, L\) hidden layers, one input and one output layer, with \(m_l\) perceptrons in layer \(l\). To the perceptron in the \(l^\text{th}\) layer at place \(j\) we assign a qubit and a unitary \(U_j^l\). The unitary acts on the corresponding perceptron qubit and the qubits in the \((l-1)^{\text{th}}\) layer connected to it. The unitaries of layer \(l\) can be pooled together to create a layer unitary \(U^l=U^l_{m_l}\dots U^l_1\), and these can again be pooled together to yield the network unitary \(\mathcal{U}=U^{L+1}\dots U^1\). For a total input state \(\rho^0\), the total output is 
\begin{align}
	\label{eq:a}
	\rho^{L+1}=& \mathcal{N}_{\U}\left( \rho^{0} \right)\\ =&\mathrm{tr}_{0,...,L}(\mathcal{U}(\rho^{0}\otimes \ket{0\dots 0}_{1,...,L+1}\bra{0\dots 0})\mathcal{U}^\dagger),
\end{align}
which also can be written as \(\mathcal{E}^{L+1}\left(\dots \mathcal{E}^2 \left(\mathcal{E}^1\left(\rho^{0}\right)\right)\right)\) with the layer-to-layer channels \(\mathcal{E}^l\) being defined by \(\mathcal{E}^l(X^{l-1})=\mathrm{tr}_{l-1}(U^l(X^{l-1}\otimes \ket{0\dots 0}_{l}\bra{0\dots 0})U^{l^\dagger})\). This definition of a perceptron is naturally connected to the open-system representation of a quantum channel ~\cite{wolf2012quantum, DrD2021, DrBeer2022}.

The defining feature of a recurrent network is that some perceptron outputs are used again as part of the input in the same or previous layers. Without loss of generality, we exploit a subsystem of the total output as part of the input, which is often called the \textit{memory}. Hence, the total input Hilbert space \(\h^\text{intotal}\) is split into an input Hilbert space \(\h^\text{in}\) and a memory Hilbert space \(\h^\text{m}\), and accordingly the total output Hilbert space \(\h^\text{outtotal}\) is split into an output Hilbert space \(\h^\text{out}\) and \(\h^\text{m}\). The memory part of the total output of a QNN is then input into the memory part of the total input of the QNN in the next iteration step.

We index the Hilbert spaces, states, channels and unitaries with the number of the iteration.
In general, we can allow an arbitrary input state \(\rho^\text{IN}\) on \(\h^\text{IN}_{N}=\h_0^\text{m}\otimes \h_1^\text{in}\otimes \dots \otimes \h_{N}^\text{in}\) giving rise to the output state
\begin{align}
	\rho^\text{OUT} &:=\left(\left(\id^\text{out}_{1,...,N-1}\otimes \N_{\U}\right) \circ \dots \circ  \left(\N_{\U}\otimes \id^\text{in}_{2,...,N}\right)\right)\left( \rho^\text{IN} \right)\nonumber\\
	&:=\left( \N_{\U_{N}}\circ \dots \circ  \N_{\U_1}\right)\left( \rho^\text{IN} \right)
\end{align}
on \(\h^\text{OUT}_{N}=\h_1^\text{out}\otimes \dots \otimes \h_{N}^\text{out} \otimes \h_{N}^\text{m} \). See Fig.~\ref{QRNN} for an illustration.
It is a direct consequence of the quantum-circuit structure of our QRNNs that they can be used to describe all causal quantum automata with finite in- and output systems. To argue this firstly note that the strategy above directly gives rise to concatenated quantum channels with memory, as described in, e.g.~\cite{Kretschmann2005, Rybar2009, Rybar2015}, only that the QNN with channel \(\N_{\U}\) in the above description is replaced with an arbitrary channel. Secondly, quantum channels with memory can be used to describe general causal quantum automata~\cite{Kretschmann2005} where the memory system is finite in the case of finite in- and output systems~\cite{Eggeling2002}. Finally, a QNN, as described above, can, if it is large enough, represent any quantum channel on a finite input or output system~\cite{beerTrainingDeepQuantum2020}. Together, this means that our QRNNs can represent any causal quantum automaton if the underlying QNN is large enough.\\
\noindent
\textbf{The training process.}
Now that we have an architecture for our QRNN, we can specify the learning task. To quantise the classical scenario, one might first replace each classical input and output with a quantum state. However, in the quantum mechanical setting, the different in- and outputs could also be entangled. In general, the training set is given by \(S=\left\{ S_\alpha \right\}_{\alpha=1}^{M}=  \left\{ \left( \rho^\text{IN}_\alpha, \sigma^\text{OUT}_\alpha \right) \right\}_{{\alpha}=1}^{M} \in \bigtimes_{{{\alpha}}=1}^{M} \left(\D (\h^\text{IN}_{N_\alpha})\times \D (\h^\text{OUT}_{N_\alpha} \right)) \) for \(M\) runs of a QRNN where we iterate over the underlying QNN \(N_{\alpha}\) times, and \(\D (\h)\) denotes the set of density operators on a Hilbert space \(\h\).  Note that the training set can contain states of varying dimensionality. Nevertheless, for the quantum algorithm to work, we need to be able to request multiple copies of every in- and output state.

We focus on the case of product in- and output states here. That is, the input state is given by \(\rho^\text{IN}_\alpha=\rho_{0 \, \alpha}^\text{m}\otimes \rho_{1\, \alpha}^\text{in}\otimes \dots \otimes \rho_{N_{\alpha} \, \alpha}^\text{in}\) and the output state by \(\sigma_\alpha^\text{OUT}=\sigma_{1 \, \alpha}^\text{out}\otimes \dots \sigma_{N_\alpha \, \alpha}^\text{out}\otimes \sigma_{N_{{\alpha}} \, \alpha}^\text{m}\). In this case, we often write \(S_\alpha=\left(\rho_{0 \, \alpha}^\text{m},\left(\rho_{1 \, \alpha}^\text{in},\rho_{1 \, \alpha}^\text{out}\right),\dots,  \left(\rho_{N_{{\alpha}} \, \alpha}^\text{in},\rho_{N_{{\alpha}} \, \alpha}^\text{out}\right) \right)\). The case of entangled inputs is left to a future publication.
To evaluate the performance of our QRNN for learning training data, i.e., to evaluate how close the network output $\rho_{\alpha}^{\text{OUT}}$ for the input $\rho_\alpha^\text{IN}$ is to the correct output $\sigma_\alpha^\text{OUT}$, we need a cost function. If $\sigma_\alpha^\text{OUT}$ is a product state, we can either compare $\rho_{\alpha}^{\text{OUT}}$ with $\sigma_\alpha^\text{OUT}$ locally or globally. 
The local case corresponds to trying to locally learn the different states \(\sigma_{x \, \alpha}^\text{out}\). We compare the states \(\sigma_{x \, \alpha}^\text{out}\) with the reduced density operators on the \(x^\text{th}\) output in the \(\alpha ^\text{th}\) run, i.e.
\(\rho_{x \, \alpha}^\text{out}=\tilde{\mathrm{tr}}_{\text{out}}^{x\, {{\alpha}}}  \left(\rho_\alpha^\text{OUT}\right) =\tilde{\mathrm{tr}}_{\text{out}}^{x\, {{\alpha}}} \left( \M_{\U}\left(\rho_\alpha^\text{IN}\right)\right) , \)
where \(\tilde{\mathrm{tr}}\) traces out everything but the indicated layer. Operationally, the fidelity is the essentially unique measure of closeness for pure quantum states. Therefore, we use it to define our cost function for pure output states. Hence, as a direct generalisation of the risk function considered in training classical RNNs, for \(\sigma_{x \, \alpha}^\text{out}=\ket{\phi_{x \, \alpha}^\text{out}} \bra{\phi_{x \, \alpha}^\text{out}}\) the \textit{local cost} is given by
\begin{equation}
	C_\text{local,pure}=\frac{1}{M}\sum_{\alpha=1}^{M}{\frac{1}{N_{\alpha}}}\sum_{x=1}^{N_{{\alpha}}}\bra{\phi_{x \, \alpha}^\text{out}} \rho_{x \, \alpha}^\text{out} \ket{\phi_{x \, \alpha}^\text{out}}.
\end{equation}
We ignore the last memory state as it is unimportant to us in which state the memory is at the end. For mixed states the fidelity is given by \(F(\rho,\sigma)=\left(\mathrm{tr}\sqrt{\rho^{1/2}\sigma \rho^{1/2}}\right)^2\). {A polynomial-time algorithm exists to estimate mixed state fidelity~\cite{MixedFidelityAlg}}. Nevertheless, the Hilbert-Schmidt distance is much easier to estimate and minimise in classical simulation or quantum implementation than the mixed state fidelity~\cite{MixedFidelityAlg, Estimator_QAE}. Hence, we use the square of the Hilbert-Schmidt distance {$\mathrm{tr}(\rho -\sigma)^2$} instead if \(\sigma_{x\, \alpha}^\text{out}\) is not pure in general.

The global case corresponds to trying to learn the overall product state. We compare the global output of the training set with the global output of the QRNN. For pure outputs, we use the fidelity between the global output state of the QRNN and the correct one. Again, the memory state at the end is not important to us. As a state is a product state, if the partial trace over a part of the system is a pure state, we can also neglect the state of the memory at the end and choose
\begin{align}
	C_\text{global, pure}=& \frac{1}{M}\sum_{\alpha=1}^M \left( \bra{\phi_{1 \, \alpha}^\text{out}}\otimes \dots\otimes \bra{\phi_{N_{{\alpha}} \, \alpha}^\text{out}} \right) \nonumber\\ &\mathrm{tr}_\text{m}^N \left(\rho_\alpha^\text{OUT}\right)
	 \left( \ket{\phi_{1 \, \alpha}^\text{out}}\otimes \dots\otimes \ket{\phi_{N_{{\alpha}} \, \alpha}^\text{out}} \right)
\end{align} 
as the cost function. We call this the \textit{global cost}.
\begin{figure*}[t]
	\begin{subfigure}{0.32\textwidth}
		\begin{tikzpicture}
			\node at (-1,3.6) {\textbf{a}};
			\begin{axis}[
				width=\linewidth,  
				xlabel= number of training pairs, 
				ylabel= Cost,
				xmin=0,
				xmax=11,
				ymin=0.45,
				ymax=1.05,
				legend columns=2, 
				legend style={at={(0.5,-0.3)},anchor=north}, 
				]
				\addplot+[only marks,color=red,mark=triangle,mark size=2.9pt]  table[x=N,y=Cwomround,col sep=comma] {Generalisationbehaviour.csv};
				\addlegendentry{\begin{tikzpicture}[scale=0.1, baseline]
						\node(1) [circle,draw,inner sep=0pt,minimum size=3.5pt] at (-1,0.15) {};
						\node(4) [circle,draw,inner sep=0pt,minimum size=3.5pt] at (0,0.15) {};
						\draw (1)--(4);
					\end{tikzpicture} training} 
				\addplot+[only marks,color=blue,mark=triangle,mark size=2.9pt]   table[x=N,y=C1mround,col sep=comma] {Generalisationbehaviour.csv};
				\addlegendentry{\begin{tikzpicture}[xscale=0.1,yscale=0.175, baseline]
						\node(1) [circle,draw,inner sep=0pt,minimum size=3.5pt] at (-1,0) {};
						\node(2) [circle,draw,inner sep=0pt,minimum size=3.5 pt] at (-1,0.3) {};
						\node(4) [circle,draw,inner sep=0pt,minimum size=3.5pt] at (0,0) {};
						\node(5) [circle,draw,inner sep=0pt,minimum size=3.5pt] at (0,0.3) {};
						\draw (1)--(4);
						\draw (2)--(4);
						\draw (1)--(5);
						\draw (2)--(5);
						\draw (5) -- (0.3,0.3);
						\draw[out=0,in=0] (0.3,0.3) to (0.3,0.5);
						\draw (0.3,0.5) to (-1.3,0.5);
						\draw[out=180,in=180] (-1.3,0.5) to (-1.3,0.3);
						\draw (-1.3,0.3) -- (2);
					\end{tikzpicture} training}
				\addplot+[only marks,color=red,mark=square,mark size=2pt]   table[x=N,y=Cwomtestround,col sep=comma] {Generalisationbehaviour.csv}; 
				\addlegendentry{\begin{tikzpicture}[scale=0.1, baseline]
						\node(1) [circle,draw,inner sep=0pt,minimum size=3.5pt] at (-1,0.15) {};
						\node(4) [circle,draw,inner sep=0pt,minimum size=3.5pt] at (0,0.15) {};
						\draw (1)--(4);
					\end{tikzpicture} testing}
				\addplot+[only marks,color=blue,mark=square,mark size=2pt]   table[x=N,y=C1mtestround,col sep=comma] {Generalisationbehaviour.csv};
				\addlegendentry{\begin{tikzpicture}[xscale=0.1,yscale=0.175, baseline]
						\node(1) [circle,draw,inner sep=0pt,minimum size=3.5pt] at (-1,0) {};
						\node(2) [circle,draw,inner sep=0pt,minimum size=3.5 pt] at (-1,0.3) {};
						\node(4) [circle,draw,inner sep=0pt,minimum size=3.5pt] at (0,0) {};
						\node(5) [circle,draw,inner sep=0pt,minimum size=3.5pt] at (0,0.3) {};
						\draw (1)--(4);
						\draw (2)--(4);
						\draw (1)--(5);
						\draw (2)--(5);
						\draw (5) -- (0.3,0.3);
						\draw[out=0,in=0] (0.3,0.3) to (0.3,0.5);
						\draw (0.3,0.5) to (-1.3,0.5);
						\draw[out=180,in=180] (-1.3,0.5) to (-1.3,0.3);
						\draw (-1.3,0.3) -- (2);
					\end{tikzpicture} testing}
				\addplot [domain=0:11, samples=10, color=gray,]{0.5};
				\addplot [domain=0:11, samples=10, color=gray,]{1};
			\end{axis}
		\end{tikzpicture}
	\end{subfigure}
	\hfill
	\begin{subfigure}{0.32\textwidth}
		\begin{tikzpicture}[scale=0.65]
			\node at (-1,6.6) {\textbf{b}};
			\draw[] (3,3) circle (2.675);
			\draw[->] (3,3)--(3,6) node[above]{\(tr(\rho \sigma^z)\)};
			\draw[->] (3,3)--(6,3) node[right]{\(tr(\rho \sigma^y)\)};
			\draw[white] (-1,-1)--(6,-1);
			\node[blue](4) [circle,draw,inner sep=0pt,minimum size=12pt] at (3.094,0.328) {};
			\node[blue](5) [circle,draw,inner sep=0pt,minimum size=12pt] at (0.325,3.005) {};
			\node[blue](6) [circle,draw,inner sep=0pt,minimum size=12pt] at (3.013,5.675) {};
			\node[thick](1) [star,draw,inner sep=0pt,minimum size=12pt] at (3,5.675) {\(1\)};
			\node[thick](2) [star,draw,inner sep=0pt,minimum size=12pt] at (0.325,3) {\(2\)};
			\node[thick](3) [star,draw,inner sep=0pt,minimum size=12pt] at (3,0.325) {\(3\)};
			\node[red](4) [circle,draw,inner sep=0pt,minimum size=12pt] at (0.526,1.994) {\(1\)};
			\node[red](5) [circle,draw,inner sep=0pt,minimum size=12pt] at (0.389,2.609) {\(3\)};
			\node[red](6) [circle,draw,inner sep=0pt,minimum size=12pt] at (0.469,2.35) {\(2\)};
			\draw[->] (2.8,5.77) arc (95:174:2.85);
			\draw[->] (0.23,2.8) arc (185:264:2.85);
			\draw[->] (0.12,3.25) arc (175:93.5:2.87);
			\draw[->] (2.75,0.12) arc (265:183.5:2.87);
			
			\node[thick](7) [star,draw,inner sep=0pt,minimum size=12pt] at (2,-0.75){};
			\node (8) at (4,-0.75){training data};
			
			\node[blue](10) [circle,draw,inner sep=0pt,minimum size=12pt] at (1.1,-1.7) {};
			\node(11) [circle,draw,inner sep=0pt,minimum size=3.5pt] at (1.7,-1.85) {};
			\node(12) [circle,draw,inner sep=0pt,minimum size=3.5 pt] at (1.7,-1.55) {};
			\node(14) [circle,draw,inner sep=0pt,minimum size=3.5pt] at (2.45,-1.85) {};
			\node(15) [circle,draw,inner sep=0pt,minimum size=3.5pt] at (2.45,-1.55) {};
			\node(16) [circle,draw,inner sep=0pt,minimum size=3.5pt] at (3.2,-1.85) {};
			\node(17) [circle,draw,inner sep=0pt,minimum size=3.5pt] at (3.2,-1.55) {};
			\draw (11)--(14);
			\draw (12)--(14);
			\draw (11)--(15);
			\draw (12)--(15);
			\draw (16)--(14);
			\draw (17)--(14);
			\draw (16)--(15);
			\draw (17)--(15);
			\draw (17) -- (3.3,-1.55);
			\draw[out=0,in=0] (3.3,-1.55) to (3.3,-1.35);
			\draw (3.3,-1.35) to (1.6,-1.35);
			\draw[out=180,in=180] (1.6,-1.35) to (1.6,-1.55);
			\draw (1.6,-1.55) -- (12);
			
			\node[red](20) [circle,draw,inner sep=0pt,minimum size=12pt] at (4.1,-1.7) {};
			\node(21) [circle,draw,inner sep=0pt,minimum size=3.5pt] at (4.7,-1.7) {};
			\node(24) [circle,draw,inner sep=0pt,minimum size=3.5pt] at (5.45,-1.85) {};
			\node(25) [circle,draw,inner sep=0pt,minimum size=3.5pt] at (5.45,-1.55) {};
			\node(26) [circle,draw,inner sep=0pt,minimum size=3.5pt] at (6.2,-1.7) {};
			\draw (21)--(24);
			\draw (21)--(25);
			\draw (26)--(24);
			\draw (26)--(25);
			
			\draw (0.5,-2.2)--(6.5,-2.2)--(6.5,-0.25)--(0.5,-0.25)--(0.5,-2.2);
		\end{tikzpicture}
	\end{subfigure}
	\hfill
	\begin{subfigure}{0.32\textwidth}
		\begin{tikzpicture}
			\node at (-1,3.5) {\textbf{c}};
			\begin{axis}[
				width=\linewidth,
				xlabel= Sequence length, 
				ylabel= Cost,
				xmin=9,
				xmax=101,
				ymin=0.45,
				ymax=0.75,
				legend columns=3, 
				legend style={at={(0.4,-0.3)},anchor=north}, 
				]
				\addplot[thick,color=red,mark=None]  table[x=x,y=i1m0round,col sep=comma] {fidelity_DriftAndSmooth.csv}; 
				\addlegendentry{\begin{tikzpicture}[scale=0.1, baseline]
						\node(1) [circle,draw,inner sep=0pt,minimum size=3.5pt] at (-1,0.15) {};
						\node(4) [circle,draw,inner sep=0pt,minimum size=3.5pt] at (0,0.15) {};
						\draw (1)--(4);
				\end{tikzpicture}}
				\addplot[thick,color=magenta,mark=None]  table[x=x,y=i1m0h2round,col sep=comma] {fidelity_DriftAndSmooth.csv}; 
				\addlegendentry{\begin{tikzpicture}[xscale=0.1,yscale=0.175, baseline]
						\node(1) [circle,draw,inner sep=0pt,minimum size=3.5pt] at (-1,0.15) {};
						\node(4) [circle,draw,inner sep=0pt,minimum size=3.5pt] at (0,0) {};
						\node(5) [circle,draw,inner sep=0pt,minimum size=3.5pt] at (0,0.3) {};
						\node(6) [circle,draw,inner sep=0pt,minimum size=3.5pt] at (1,0.15) {};
						\draw (1)--(4);
						\draw (1)--(5);
						\draw (6)--(4);
						\draw (6)--(5);
				\end{tikzpicture}}
				\addplot[thick,color=orange,mark=None]  table[x=x,y=i1m0h2h2round,col sep=comma] {fidelity_DriftAndSmooth.csv}; 
				\addlegendentry{\begin{tikzpicture}[xscale=0.1,yscale=0.175, baseline]
						\node(1) [circle,draw,inner sep=0pt,minimum size=3.5pt] at (-1,0.15) {};
						\node(2) [circle,draw,inner sep=0pt,minimum size=3.5pt] at (0,0) {};
						\node(3) [circle,draw,inner sep=0pt,minimum size=3.5pt] at (0,0.3) {};
						\node(4) [circle,draw,inner sep=0pt,minimum size=3.5pt] at (1,0) {};
						\node(5) [circle,draw,inner sep=0pt,minimum size=3.5pt] at (1,0.3) {};
						\node(6) [circle,draw,inner sep=0pt,minimum size=3.5pt] at (2,0.15) {};
						\draw (1)--(2);
						\draw (1)--(3);
						\draw (2)--(4);
						\draw (2)--(5);
						\draw (3)--(4);
						\draw (3)--(5);
						\draw (6)--(4);
						\draw (6)--(5);
				\end{tikzpicture}}
				\addplot[thick,color=blue,mark=None]  table[x=x,y=i1m1round,col sep=comma] {fidelity_DriftAndSmooth.csv}; 
				\addlegendentry{\begin{tikzpicture}[xscale=0.1,yscale=0.175, baseline]
						\node(1) [circle,draw,inner sep=0pt,minimum size=3.5pt] at (-1,0) {};
						\node(2) [circle,draw,inner sep=0pt,minimum size=3.5 pt] at (-1,0.3) {};
						\node(4) [circle,draw,inner sep=0pt,minimum size=3.5pt] at (0,0) {};
						\node(5) [circle,draw,inner sep=0pt,minimum size=3.5pt] at (0,0.3) {};
						\draw (1)--(4);
						\draw (2)--(4);
						\draw (1)--(5);
						\draw (2)--(5);
						\draw (5) -- (0.3,0.3);
						\draw[out=0,in=0] (0.3,0.3) to (0.3,0.5);
						\draw (0.3,0.5) to (-1.3,0.5);
						\draw[out=180,in=180] (-1.3,0.5) to (-1.3,0.3);
						\draw (-1.3,0.3) -- (2);
				\end{tikzpicture}}
				\addplot[ thick,color=cyan,mark=None]  table[x=x,y=i1m1h2round,col sep=comma] {fidelity_DriftAndSmooth.csv}; 
				\addlegendentry{\begin{tikzpicture}[xscale=0.1,yscale=0.175, baseline]
						\node(1) [circle,draw,inner sep=0pt,minimum size=3.5pt] at (-1,0) {};
						\node(2) [circle,draw,inner sep=0pt,minimum size=3.5pt] at (-1,0.3) {};
						\node(4) [circle,draw,inner sep=0pt,minimum size=3.5pt] at (0,0) {};
						\node(5) [circle,draw,inner sep=0pt,minimum size=3.5pt] at (0,0.3) {};
						\node(6) [circle,draw,inner sep=0pt,minimum size=3.5pt] at (1,0) {};
						\node(7) [circle,draw,inner sep=0pt,minimum size=3.5pt] at (1,0.3) {};
						\draw (1)--(4);
						\draw (2)--(4);
						\draw (1)--(5);
						\draw (2)--(5);
						\draw (6)--(4);
						\draw (7)--(4);
						\draw (6)--(5);
						\draw (7)--(5);
						\draw (7) -- (1.3,0.3);
						\draw[out=0,in=0] (1.3,0.3) to (1.3,0.5);
						\draw (1.3,0.5) to (-1.3,0.5);
						\draw[out=180,in=180] (-1.3,0.5) to (-1.3,0.3);
						\draw (-1.3,0.3) -- (2);
				\end{tikzpicture} }
				\addplot[thick,color=ForestGreen,mark=None]  table[x=x,y=i1m2round,col sep=comma] {fidelity_DriftAndSmooth.csv}; 
				\addlegendentry{\begin{tikzpicture}[xscale=0.1,yscale=0.175, baseline]
						\node(1) [circle,draw,inner sep=0pt,minimum size=3.5pt] at (-1,-0.15) {};
						\node(2) [circle,draw,inner sep=0pt,minimum size=3.5pt] at (-1,0.15) {};
						\node(3) [circle,draw,inner sep=0pt,minimum size=3.5pt] at (-1,0.45) {};
						\node(4) [circle,draw,inner sep=0pt,minimum size=3.5pt] at (0,-0.15) {};
						\node(5) [circle,draw,inner sep=0pt,minimum size=3.5pt] at (0,0.15) {};
						\node(6) [circle,draw,inner sep=0pt,minimum size=3.5pt] at (0,0.45) {};
						\draw (1)--(4);
						\draw (2)--(4);
						\draw (3)--(4);
						\draw (1)--(5);
						\draw (2)--(5);
						\draw (3)--(5);
						\draw (1)--(6);
						\draw (2)--(6);
						\draw (3)--(6);
						\draw (6) -- (0.3,0.45);
						\draw[out=0,in=0] (0.3,0.45) to (0.3,0.65);
						\draw (0.3,0.65) to (-1.3,0.65);
						\draw[out=180,in=180] (-1.3,0.65) to (-1.3,0.45);
						\draw (-1.3,0.45) -- (3);
						\draw (5) -- (0.35,0.15);
						\draw[out=0,in=0] (0.35,0.15) to (0.35,0.75);
						\draw (0.35,0.75) to (-1.35,0.75);
						\draw[out=180,in=180] (-1.35,0.75) to (-1.35,0.15);
						\draw (-1.35,0.15) -- (2);
				\end{tikzpicture}}
				\addplot [domain=9:101, samples=10, color=gray,]{0.5};
			\end{axis}
		\end{tikzpicture}
	\end{subfigure}
	\caption[]{\textbf{Comparison of QRNN and QNN.}  Panel \textbf{(a)} shows the generalisation behaviour of a
		\begin{tikzpicture}[xscale=0.4,yscale=0.6, baseline]
			\node(1) [circle,draw,inner sep=0pt,minimum size=3.5pt] at (-1,0) {};
			\node(2) [circle,draw,inner sep=0pt,minimum size=3.5 pt] at (-1,0.3) {};
			\node(4) [circle,draw,inner sep=0pt,minimum size=3.5pt] at (0,0) {};
			\node(5) [circle,draw,inner sep=0pt,minimum size=3.5pt] at (0,0.3) {};
			\draw (1)--(4);
			\draw (2)--(4);
			\draw (1)--(5);
			\draw (2)--(5);
			\draw (5) -- (0.3,0.3);
			\draw[out=0,in=0] (0.3,0.3) to (0.3,0.5);
			\draw (0.3,0.5) to (-1.3,0.5);
			\draw[out=180,in=180] (-1.3,0.5) to (-1.3,0.3);
			\draw (-1.3,0.3) -- (2);
		\end{tikzpicture} QRNN with one memory qubit (blue) and a \begin{tikzpicture}[scale=0.5, baseline]
		\node(1) [circle,draw,inner sep=0pt,minimum size=3.5pt] at (-1,0.15) {};
		\node(4) [circle,draw,inner sep=0pt,minimum size=3.5pt] at (0,0.15) {};
		\draw (1)--(4);
	\end{tikzpicture} feed-forward QNN (red) for the delay by one channel. We trained the QNNs for $1000$ rounds with a learning rate of $\eta\epsilon=0.06$ and \(M=1\) sequence with $N = 1, 2, …, 10$ training pairs and evaluated the cost function on the training set (triangles) and for a set of $30$ test pairs (squares) afterwards,  we then averaged this over $50$ rounds. Panel \textbf{(b)} shows the $y,z$-plane of the Bloch sphere. A \begin{tikzpicture}[xscale=0.4,yscale=0.6, baseline]
	\node(1) [circle,draw,inner sep=0pt,minimum size=3.5pt] at (-1,0) {};
	\node(2) [circle,draw,inner sep=0pt,minimum size=3.5pt] at (-1,0.3) {};
	\node(4) [circle,draw,inner sep=0pt,minimum size=3.5pt] at (0,0) {};
	\node(5) [circle,draw,inner sep=0pt,minimum size=3.5pt] at (0,0.3) {};
	\node(6) [circle,draw,inner sep=0pt,minimum size=3.5pt] at (1,0) {};
	\node(7) [circle,draw,inner sep=0pt,minimum size=3.5pt] at (1,0.3) {};
	\draw (1)--(4);
	\draw (2)--(4);
	\draw (1)--(5);
	\draw (2)--(5);
	\draw (6)--(4);
	\draw (7)--(4);
	\draw (6)--(5);
	\draw (7)--(5);
	\draw (7) -- (1.3,0.3);
	\draw[out=0,in=0] (1.3,0.3) to (1.3,0.5);
	\draw (1.3,0.5) to (-1.3,0.5);
	\draw[out=180,in=180] (-1.3,0.5) to (-1.3,0.3);
	\draw (-1.3,0.3) -- (2);
\end{tikzpicture} QRNN with one memory qubit (blue) and a \begin{tikzpicture}[xscale=0.4,yscale=0.6, baseline]
\node(1) [circle,draw,inner sep=0pt,minimum size=3.5pt] at (-1,0.15) {};
\node(4) [circle,draw,inner sep=0pt,minimum size=3.5pt] at (0,0) {};
\node(5) [circle,draw,inner sep=0pt,minimum size=3.5pt] at (0,0.3) {};
\node(6) [circle,draw,inner sep=0pt,minimum size=3.5pt] at (1,0.15) {};
\draw (1)--(4);
\draw (1)--(5);
\draw (6)--(4);
\draw (6)--(5);
\end{tikzpicture} feed-forward QNN (red) are trained over $1000$ training rounds with a learning rate of $\eta\epsilon=0.05$ to learn the time evolution, where the state goes from $\ket{0}$ (black point 1) over $\ket{\phi^-}$ (black star 2) to \(\ket{1}\) (black star 3), over $\ket{\phi^-}$ (black star 2) back to $\ket{0}$ (black star 3) and so on. To train the networks, we used an \(M=1\) sequence with $N=10$ training pairs and predicted the time evolution for $N_\text{test}=20$ points. The blue circles show the prediction for the QRNN and the red of the feed-forward QNN. In the latter case, the evolution approximately goes from 1 to 2 to 1 to 3 to 1 to 2 to 1 to 3 to 1, etc. {Panel \textbf{(c)} shows performance of various networks for the low- and high-frequency noise mitigation task, see Eq.~\ref{eq:filter}. The QRNNs are trained using $M=25$ sequences of $N_{\alpha } =25$ pairs each and are tested on sequences with length from $10$ to $100$ in the increments of $2$. We used $150$ training rounds with the Nadam~\cite{Nadam} optimiser; hyperparameters are learning rate---$555.(5)$, momentum---$0.9$, RMSprop---$0.999$.}}
\label{fig:resultsmainpaper}
\end{figure*}
Thus the network can be trained with several cost functions. The cost function for pure outputs varies between $0$ (worst) and $1$ (best), and the one for mixed outputs between \(4\) (worst) and \(0\) (best). We can use the cost functions in different cases. Both have advantages and drawbacks: While the global cost is more accurate when it comes to high costs (this is because \(C_\text{global, pure}\leq C_\text{local, pure}\), as shown in Appendix \ref{comparisoncosts}) the local cost function is easier to train with, and fewer qubits and gates are needed to calculate it on a quantum computer. Hence, although using the global cost will also result in learning the correlations between the individual input and output systems, using the local cost is easier to realise on present and next-generation quantum information processing devices. 

As in the classical case, we obtain the training algorithm by maximising/minimising the derivative of the cost function. The algorithms for classical simulations of the learning process can be found in Appendices \ref{localcostpurestates}, \ref{localcostmixedstates}, \ref{globalcostpurestates} and \ref{manysequences} for the different costs. Mainly the one for the local cost with pure output states (Appendix \ref{localcostpurestates}) is used to obtain numerical results. The updated unitaries in this algorithm can be calculated in a layer-by-layer manner. Hence, only unitaries on \(m_l+m_{l+1}\) and not all qubits have to be applied in one go, i.e., the size of the matrices used to simulate the learning process only scales with the width and not the depth of the network. This reduces the requirements of the simulation and allows us to classically simulate the training of modest-sized QRNNs.

The quantum training algorithm may be found in Appendix \ref{quantumalgorithm}.\\
\noindent
\textbf{Simulation of learning tasks.}
For simplicity, we focus on pure qubit outputs and training with the local cost in the following. Extensive numerical results may be found in Appendix \ref{numericalresults}. The exponential growth of Hilbert space dimension with the width of the network restricts us to QRNNs with small widths in our classical simulations. We classically simulated the learning process for different tasks while we used total input and output spaces up to \(4\) qubits. We illustrate the results in terms of three exemplary tasks.

The first learning task is focused on the delay, SWAP or shift channel, which often serves as the first example in papers on quantum channels with memory, see, e.g.~\cite{Rybar2009, Rybar2015, Kretschmann2005}. This channel is defined as follows: The \(x^\text{th}\) output equals the \((x-1)^\text{th}\) input, i.e. \(\rho_x^\text{in}=\ket{\phi_x^\text{in}} \bra{\phi_x^\text{in}}\) and  \(\ket{\phi_x^\text{out}} = \ket{\phi_{x-1}^\text{in}} \). We thus studied the ability of a QRNN to learn a training set of form \(S=\left(\ket{\phi_0^\text{m}},\left( \ket{\phi_1^\text{in}} ,\ket{\phi_0^\text{m}}\right),\left( \ket{\phi_2^\text{in}}, \ket{\phi_1^\text{in}} \right),\dots,  \left(\ket{\phi_N^\text{in}},\ket{\phi_{N-1}^\text{in}}\right) \right)\) with varying size \(N\) and to generalise to (bigger) test sets of the same form with different input states. We chose the input states at random with respect to the Haar measure. The generalisation behaviour is shown in Figure \ref{fig:resultsmainpaper} (a). The QRNN learns the SWAP channel perfectly and generalises well, even for only eight training pairs. A feed-forward QNN, by contrast, learns worse for every training pair added and does not generalise at all. This is what we expected as a feed-forward QNN can only guess the \(x^\text{th}\) output since it only has access to the \(x^\text{th}\) input and not the \((x-1)^\text{th}\).

The second task we consider is the learning of the time evolution in a time interval \(\left( T_2, T_3\right]\) from a time evolution of a state \(\ket{\psi (t) }\) in a time interval \(t\in\left[T_1, T_2\right]\), where \(T_1< T_2 < T_3\). Classically this is a prototypical example where RNNs are often more efficient than feed-forward NNs~\cite{Connor1991,vanLint2002, Medsker2001}. To carry out the quantum task, we exploited the training set \( S = ( \ket{\psi(T_1)} , ( \ket{\psi(T_1+\tau)},  \ket{\psi(T_1+2\tau)}  ), \dots , ( \ket{\psi(T_2-\tau)} , \allowbreak \ket{\psi(T_2)} ) )\) to train our network. We then use \( \ket{\psi(T_2)} \) as input of the network after \( \ket{\psi(T_2-\tau)} \). The goal is for the output state to be close to  \( \ket{\psi(T_2+\tau)} \). This state is then used as input again and so forth. Then the time predictions are compared to the correct time evolution via the fidelity. As an example we started in state \(\ket{0}\), then went over \(\ket{\phi^-}=\frac{1}{\sqrt{2}}\ket{0}-\frac{i}{\sqrt{2}}\ket{1}\) to \(\ket{1}\), back to \(\ket{0}\) over \(\ket{\phi^-}\) and so on. This is shown in Figure \ref{fig:resultsmainpaper} (b). As one can see, the training and prediction process works better with QRNNs than with feed-forward QNNs, as the blue circles for QRNNs are nearly coincident with the black stars (target states) and the red ones for feed-forward QNNs are completely different.

{Learning how to filter low- and high-frequency noise in quantum data is the third task we considered. Consider an interferometric experiment that encodes information of interest into a relative phase $\phi$ in the state $| \phi \rangle \equiv \frac{| 0 \rangle + e^{i \phi} | 1 \rangle}{\sqrt{2}}$. Suppose that one desires a cheaper, less complex, or more transportable version of said experiment. Naturally, the relative phase in the output states will become noisier. We can circumvent this by designing an appropriate post-processing device trained with the supervision of an expensive state-of-the-art machine. Moreover, we might want to use denoised states for further quantum computation. In this case, QNNs are a capable platform for learning denoisers~\cite{DiPol2020, achache2020denoising, DrD2021}. Shot-to-shot denoisers can have a feed-forward structure. However, operating in the frequency domain, e.g. to combat drifts, requires memory~\cite{DrD2021}.

We train QRNNs on the data set $S = \{ S_\alpha \} = \left\{|0\rangle, \left(|\phi^\alpha_t \rangle, |\psi^\alpha_t\rangle \right)_{t=1}^{T_\alpha} \right\}_{\alpha=1}^{N}$. We construct a simple example of $\phi^\alpha_t$ that require bandwidth filtering. Consider noise consisting of two components. The low-frequency noise can be cancelled by subtracting a drift with speed $v$. The high-frequency noise is cancelled by exponential smoothing, i.e. taking a convex combination with previous correct outcomes. That is for all \(\alpha\)
\begin{align} \label{eq:filter}
    \psi^\alpha_0 &= \phi^\alpha_0 \nonumber\\    
    \psi^\alpha_t &= w \cdot (\phi^\alpha_t - v \cdot t) + (1-w) \cdot \psi^\alpha_{t-1}, \ t>1.
\end{align}
This task can not be solved with fidelity one due to the no-cloning theorem. Nevertheless, QRNNs with sufficient memory can significantly improve the quality of the outputs, see Figure~\ref{fig:resultsmainpaper}(c).

For an in-depth discussion and more numerical results, see Appendix~\ref{numericalresults}.
}\\
\noindent\\[1ex]
\textbf{{\large Discussion}}\\[1ex]

In this paper, we have proposed a natural {quantisation} of recurrent neural networks. These are universal, i.e. they can be used to simulate general causal quantum automata. Our approach is scalable -- only a few qubits must be stored to train the network. The number of qubits needed only scales with the width and not the network's depth. One can view our QRNNs as trainable matrix-product quantum channels. The experimental requirements for implementing QRNNs are comparable with those demonstrated in producing matrix-product states and density matrices~\cite{QMPDO_experiment_ion, QMPS_2020barrat, QMPS_2021foss, QMPS_2021haghshenas, QMPS_PhaseTransition_2019smith}. Indeed, in~\cite{QMPDO_experiment_ion}, the authors have experimentally implemented a quantum channel with memory for 6 time steps on an ion-trap quantum computer. QRNNs are also suitable for some existing measurement-based photonic devices. These devices were used to demonstrate up to 100 sequential operations on a single logical qubit~\cite{MBQC_exp_100op} and generate more than 1 million mode 1d cluster states~\cite{MBQC_exp_1000000modes}. As such, while suited even for current devices, QRNNs can leverage the rapid progress of quantum hardware~\cite{IBM_roadmap_2022}.

We demonstrated that QRNNs can outperform feed-forward QNNs with respect to several exemplary learning tasks. For this, we have designed some benchmarking tasks and classically simulated the training of QRNNs. We started with simple yet paradigmatic examples, such as learning a delay channel or cancelling a linear drift. We then transitioned to relevant tasks for applications, such as learning time-dependent dynamics and bandwidth filters.

Many intriguing questions remain in the study of QRNNs: We have concentrated on the case of product in- and outputs; are there interesting tasks involving data entangled at different points in time (or space), and how well can QRNNs cope with such data? Theoretical results show the benefit of quantum memory for learning algorithms~\cite{QExp_Memory_Advantage_Proof}, so how large is this set of tasks? When is quantum memory necessary, and when can it be substituted by a classical memory~\cite{CQRNN_Qadvantage}? Are QRNNs particularly suited for matrix-product states and density matrices, and is there a deeper connection to CC ML tensor network methods~\cite{TN_ML, TN_QML}? If so, can QRNNs be used for state preparation and to extend the notion of the parent Hamiltonian~\cite{MPSTheo,ParentHThesis,MPSUncle,ParentH2,ParentPerturb} and Lindbladian~\cite{parentL,MPDOparentT,Dissipation_parentH}? {Can one use QRNNs for the discovery of catalysts?} Does recurrency influence robustness to noise of QNNs? 
\\

\noindent

\begin{acknowledgments}
Helpful correspondence and discussions with Tobias J. Osborne, Kerstin Beer, Polina Feldmann, Thomas Cope, Gabriel Müller, Taras Kucherenko, Terry Farrelly, Yihui Quek, Ramona Wolf, Luis Mantilla Calderon, Robert Rau{\ss}endorf are gratefully acknowledged. This work was funded by the Deutsche Forschungsgemeinschaft (German Research Foundation) under Germany's Excellence Strategy (EXC-2123, QuantumFrontiers Grant No. 390837967), through CRC 1227 (DQ-mat) within project No. A06, by the Quantum Valley Lower Saxony
(QVLS) through the Volkswagen Foundation, Canada First Research Excellence Fund, Quantum Materials and Future Technologies Program, and 
the Cambridge Commonwealth, European and International Trust.

%\textbf{Author contributions:} This project was conceived of, and initiated in, discussions of ... and .... The QRNN architecture was formulated by ... and ....  Operational considerations were investigated by .... Classical numerical implementations and investigations were developed by ... and .... The quantum implementation was developed by .... All authors contributed to writing the manuscript. ...
%
%\textbf{Competing interests:} Authors declare no competing interests.
\textbf{Data availability:} We obtained all results using Python and Matlab. The code is available at \url{https://github.com/qigitphannover/DeepQuantumNeuralNetworks/tree/master/QRNN}.

\end{acknowledgments}

\onecolumngrid
\newpage
\appendix
\section{Comparison of the different Costs}
\label{comparisoncosts}
Generally, the global cost will be smaller or equal to the average cost, which is the subject of the following lemma:
\begin{lem}
	For all global input states \(\rho^\text{IN}\) with corresponding output \(\rho^\text{OUT}\) being a pure product state, i.e. \(\rho^{OUT}=  \kb{\Phi^\text{OUT}}=\bigotimes_{x=1}^N \kb{\phi_x^\text{out}}\in\h^{\text{out}^{\otimes n}}\), and network unitaries \(\U\) we have
	\begin{equation}
		\label{eq:costcomparison}
		C_\text{global,pure}\leq C_\text{local,pure} .
	\end{equation}
\end{lem}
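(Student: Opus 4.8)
The plan is to prove the inequality termwise in the average over training pairs and then reduce each term to an elementary fact about commuting projections. Since both $C_\text{global,pure}$ and $C_\text{local,pure}$ carry the same overall prefactor $\tfrac1M\sum_{\alpha}$, it suffices to show, for each fixed $\alpha$,
\[
\bra{\Phi^\text{OUT}_\alpha}\tr_\text{m}^N\!\big(\rho^\text{OUT}_\alpha\big)\ket{\Phi^\text{OUT}_\alpha}\;\le\;\frac{1}{N_\alpha}\sum_{x=1}^{N_\alpha}\bra{\phi_{x\,\alpha}^\text{out}}\rho_{x\,\alpha}^\text{out}\ket{\phi_{x\,\alpha}^\text{out}},
\]
where $\rho_{x\,\alpha}^\text{out}$ is the reduction of $\tr_\text{m}^N(\rho^\text{OUT}_\alpha)$ onto the $x$-th output factor. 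From here I fix $\alpha$, drop it from the notation, and write $\tilde\rho:=\tr_\text{m}^N(\rho^\text{OUT})$ for the resulting state on $\h^{\text{out}}_1\otimes\dots\otimes\h^{\text{out}}_N$.

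Next I would introduce, for $x=1,\dots,N$, the projections $P_x:=\1\otimes\dots\otimes\kb{\phi_x^\text{out}}\otimes\dots\otimes\1$, acting as $\kb{\phi_x^\text{out}}$ on the $x$-th output factor and as the identity elsewhere. These mutually commute, and by definition of the product state $\kb{\Phi^\text{OUT}}=P_1P_2\cdots P_N=:Q$. The key step is the observation that $Q\le P_x$ in the positive-semidefinite order for every $x$: since the $P_y$ commute, $Q$ commutes with $P_x$ and $P_xQ=QP_x=Q$, so $(P_x-Q)^2=P_x-Q$; hence $P_x-Q$ is an orthogonal projection and in particular $P_x-Q\ge 0$. (Equivalently, $Q$ is the projection onto $\bigcap_y\operatorname{ran}P_y\subseteq\operatorname{ran}P_x$.)

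I would then chain this with positivity of $\tilde\rho$: for each $x$,
\[
\bra{\Phi^\text{OUT}}\tilde\rho\ket{\Phi^\text{OUT}}=\tr\!\big(\tilde\rho\,Q\big)\;\le\;\tr\!\big(\tilde\rho\,P_x\big)=\tr\!\big(\rho_x^\text{out}\,\kb{\phi_x^\text{out}}\big)=\bra{\phi_x^\text{out}}\rho_x^\text{out}\ket{\phi_x^\text{out}},
\]
the middle equality using that $\tr(\tilde\rho\,P_x)$ depends only on the $x$-th marginal $\rho_x^\text{out}=\tilde{\tr}(\tilde\rho)$. Averaging this over $x=1,\dots,N$ gives the displayed termwise bound, and summing over $\alpha$ with weights $1/M$ yields $C_\text{global,pure}\le C_\text{local,pure}$.

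There is no real obstacle; the only point requiring care is the bookkeeping of partial traces — namely checking that $\tr_\text{m}^N$ in the global cost and $\tilde{\tr}_\text{out}^{x\,\alpha}$ in the local cost are compatible, so that the $x$-th output marginal of $\tr_\text{m}^N(\rho^\text{OUT}_\alpha)$ is precisely the $\rho_{x\,\alpha}^\text{out}$ appearing in $C_\text{local,pure}$. One can also avoid marginals altogether and argue directly from $\kb{\Phi^\text{OUT}}\le P_x$ together with $\tr\!\big(\rho^\text{OUT}(\1_\text{m}\otimes P_x)\big)=\bra{\phi_x^\text{out}}\rho_x^\text{out}\ket{\phi_x^\text{out}}$.
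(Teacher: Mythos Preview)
Your proof is correct, but it takes a different route from the paper's argument. The paper invokes the data-processing inequality for the fidelity, \(F(\E(\rho),\E(\sigma))\geq F(\rho,\sigma)\), applied with \(\E=\tilde{\tr}_\text{out}^x\) (the partial trace onto the \(x\)-th output factor): since \(\kb{\phi_x^\text{out}}=\tilde{\tr}_\text{out}^x(\kb{\Phi^\text{OUT}})\), each local term \(F(\kb{\phi_x^\text{out}},\rho_x^\text{out})\) is bounded below by the global fidelity \(F(\kb{\Phi^\text{OUT}},\tr_\text{m}^N(\rho^\text{OUT}))\), and averaging gives the claim. Your approach is more elementary and fully self-contained: you bypass the data-processing inequality entirely and instead exploit the operator ordering \(\kb{\Phi^\text{OUT}}=\prod_x P_x\le P_x\) among commuting projections, then pair with the positive state \(\tilde\rho\). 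The paper's argument is more conceptual and would generalise immediately to other monotone distinguishability measures, whereas yours is tailored to the pure-target case but requires nothing beyond basic positivity, which is arguably cleaner here. Your remark about the partial-trace bookkeeping is well placed; the paper handles it in the same way, noting that \(\tilde{\tr}_\text{out}^x\) already removes the memory so that the local marginals agree.
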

\begin{proof}
	The fidelity of two quantum states, \(F(\rho,\sigma)=\left(\mathrm{tr}\sqrt{\rho^{1/2}\sigma \rho^{1/2}}\right)^2\), satisfies the data-processing inequality (see e.g.~\cite[Lemma 9.2.1, Exercise 9.2.8]{Wilde2019}), i.e. for all quantum channels \(\E\) we have
	\begin{equation}
		\label{eq:datafidelity}
		F\left( \E(\rho),\E(\sigma)\right)\geq F\left(\rho,\sigma\right).
	\end{equation}
	Equation \eqref{eq:costcomparison} then simply follows by using \eqref{eq:datafidelity} for the quantum channel being the partial trace:
	\begin{align*}
		C_\text{local,pure}&=\frac{1}{N} \sum_{x=1}^{N}\bra{\phi_x^\text{out}} \rho_x^\text{out} \ket{\phi_x^\text{out}}= \frac{1}{N} \sum_{x=1}^{N} F\left(  \kb{\phi_x^\text{out}}, \tilde{\mathrm{tr}}_{\text{out}}^x  \left(\rho^\text{OUT}\right) \right)\\
		&=\frac{1}{N} \sum_{x=1}^{N} F\left( \tilde{\mathrm{tr}}_{\text{out}}^x  \left(\kb{\Phi^\text{Out}}\right), \tilde{\mathrm{tr}}_{\text{out}}^x   \left(\mathrm{tr}_\text{m}^N \left(\rho^\text{OUT}\right)\right) \right) \\
		&\geq \frac{1}{N} \sum_{x=1}^{N} F\left( \kb{\Phi^\text{Out}}, \mathrm{tr}_\text{m}^N \left(\rho^\text{OUT}\right)\right) = F\left( \kb{\Phi^\text{Out}}, \mathrm{tr}_\text{m}^N \left(\rho^\text{OUT}\right)\right) =C_\text{global,pure}.
	\end{align*}
\end{proof}

\section{Optimising the local Cost with pure output states classically}
\label{localcostpurestates}
We now want to derive the algorithm to maximise the local cost with pure outputs. In a first step, we only consider \(M=1\) run of the QRNN in Appendices \ref{localcostpurestates} to \ref{globalcostpurestates}, and only consider many runs with \(M > 1\) in Appendix \ref{manysequences}. Hence, we can ignore \(\alpha\) for now. It does not matter for the classical simulation if we trace out everything but the \(x^\text{th}\) output at the end or if we calculate the reduced density operators \(\rho_x^\text{out}\) and \(\rho_x^\text{m}\) after each iteration step and use \(\rho_x^\text{m}\otimes \rho_{x+1}^\text{in}\) as total input in the next iteration step. The state \(\rho_x^\text{m}\) is then given by
\begin{align*}
	\rho^\text{m}_x=&\mathrm{tr}_{{\text{in,}1:L\text{,out}}}^x(\mathrm{tr}_{\text{m}}^{x-1}(\mathcal{U}_x(\rho_x^{\text{in}}\otimes \rho_{x-1}^{\text{m}} \otimes \ket{0\dots 0}_{ {1:L+1}}^x\bra{0\dots 0})\mathcal{U}_x^\dagger))\\
	=&\mathrm{tr}_{{\text{in,}1:L\text{,out}}}^x(\mathrm{tr}_{\text{m}}^{x-1}(\mathcal{U}_x(\rho_x^{\text{in}}\otimes \mathrm{tr}_{{\text{in,}1:L\text{,out}}}^{x-1}(\mathrm{tr}_{\text{m}}^{x-2}(\mathcal{U}_{x-1}(\rho_{x-1}^{\text{in}}\otimes \rho_{x-2}^{\text{m}} \otimes \ket{0\dots 0}_{ {1:L+1}}^{x-1}\bra{0\dots 0})\mathcal{U}_{x-1}^\dagger))\\ 
	&\otimes \ket{0\dots 0}_{ {1:L+1}}^x\bra{0\dots 0})\mathcal{U}_x^\dagger)).
\end{align*}
For \(A\in \B(\h_A)\) and \(B\in \B(\h_A\otimes \h_B)\) for some Hilbert spaces \(\h_A\) and \(\h_B\) it is
\begin{align*}
	\mathrm{tr}_B((A\otimes \id_B)B)&%=\sum_k (\id_A\otimes\bra{k})(A\otimes \id_B)B(\id_A\otimes\ket{k}) =\sum_k A(\id_A\otimes\bra{k})B(\id_A\otimes\ket{k})\\
	%&= A\sum_k(\id_A\otimes\bra{k})B(\id_A\otimes\ket{k})
	=A\mathrm{tr}_B(B).
\end{align*}
Hence, as \(\U_{x}=\U_{x}\otimes \id^\text{m}_{x-2}\otimes \id^\text{in, hid, out}_{x-1}\) it follows that
\begin{align*}
	\rho^\text{m}_x=&\mathrm{tr}_{{\text{in,}1:L\text{,out}}}^{x,x-1}(\mathrm{tr}_{\text{m}}^{x-1,x-2}(\mathcal{U}_x(\rho_x^{\text{in}}\otimes (\mathcal{U}_{x-1}(\rho_{x-1}^{\text{in}}\otimes \rho_{x-2}^{\text{m}} \otimes \ket{0\dots 0}_{ {1:L+1}}^{x-1}\bra{0\dots 0})\mathcal{U}_{x-1}^\dagger))\\ 
	&\otimes \ket{0\dots 0}_{ {1:L+1}}^x\bra{0\dots 0})\mathcal{U}_x^\dagger)).
\end{align*}
Also, \(\U_{x-1}=\U_{x-1}\otimes  \id^\text{in, hid, out,m}_{x}\) which leads to
\begin{align}
	\rho^\text{m}_x=&\mathrm{tr}_{{\text{in,}1:L\text{,out}}}^{x,x-1}(\mathrm{tr}_{\text{m}}^{x-1,x-2}(\mathcal{U}_x\mathcal{U}_{x-1}(\rho_x^{\text{in}}\otimes \rho_{x-1}^{\text{in}}\otimes \rho_{x-2}^{\text{m}} \otimes \ket{0\dots 0}_{ {1:L+1}}^{x-1,x}\bra{0\dots 0})\mathcal{U}_{x-1}^\dagger\mathcal{U}_x^\dagger))\nonumber\\
	=& \dots\nonumber\\
	=&\mathrm{tr}_{{\text{in,}1:L\text{,out}}}^{x:1}(\mathrm{tr}_{\text{m}}^{x-1:0}(\mathcal{U}_x\dots\mathcal{U}_{1}(\bigotimes_{y=x}^1\rho_y^{\text{in}}\otimes \rho_{0}^{\text{m}} \otimes \ket{0\dots 0}_{ {1:L+1}}^{1:x}\bra{0\dots 0})\mathcal{U}_{1}^\dagger\dots\mathcal{U}_x^\dagger)), \label{eq:d}
\end{align}
where \(i:j\) is a short notation for \(i,\dots,j\).\\
The loss function is given by 
\begin{align}
	C=&\frac{1}{N}\sum_{x=1}^N\bra{\phi_x^\text{out}}\rho_x^\text{out}\ket{\phi_x^\text{out}} \nonumber\\
	=&\frac{1}{N}\sum_{x=1}^N\bra{\phi_x^\text{out}}\mathrm{tr}_{\text{in,}1:L\text{,m}}^x(\mathrm{tr}_{\text{m}}^{x-1}(\mathcal{U}_x(\rho_x^{\text{in}}\otimes \rho_{x-1}^{\text{m}} \otimes \ket{0\dots 0}_{ {1:L+1}}^x\bra{0\dots 0})\mathcal{U}_x^\dagger))\ket{\phi_x^\text{out}} \nonumber\\	=&\frac{1}{N}\sum_{x=1}^N \mathrm{tr}\left((\id_x^{\text{in,}1:L\text{,m}}\otimes \id_{x-1}^\text{m}\otimes\ket{\phi_x^\text{out}}\bra{\phi_x^\text{out}})\mathcal{U}_x(\rho_x^{\text{in}}\otimes \rho_{x-1}^{\text{m}} \otimes \ket{0\dots 0}_{ {1:L+1}}^x\bra{0\dots 0})\mathcal{U}_x^\dagger\right) \label{eq:c} 
\end{align}
and we write \(C=C_\text{local,pure}\) here for simplicity. In order to train the QRNN, we have to find the perceptron unitaries \(U_j^l\) that maximize the cost function. To do so, we use the same ansatz as in \cite{beerTrainingDeepQuantum2020} and the unitaries are updated in each step according to
\[U_j^l\mapsto U_j^{l \prime}= e^{i\epsilon K_j^l}U_j^l\]
where the matrices \(K_j^l\) are hermitian and depend on the matrices \(U_j^l\). We now have to determine the matrices \(K_j^l\) that make the biggest positive change in the cost function. To do so, we have to maximise the derivative of the cost function at \(U_j^l\) when changing the matrices as presented above
\[\delta C= \lim_{\epsilon\to 0}\frac{\Delta C}{\epsilon}= \lim_{\epsilon\to 0}\frac{C^\prime- C}{\epsilon}. \]
In the expression for the cost function in equation \(\eqref{eq:c}\) the unitaries \(U_j^l\), that are the object of change, appear directly in the equation and indirectly in the state \(\rho_{x-1}^{\text{m}}\) that also depends on the unitaries (for \(x>1\)). So we first look at the change in \(\rho_{x-1}^{\text{m}}\) after a step of size \(\epsilon\). As described in equation \(\eqref{eq:d}\) it is
\begin{align*}
	\rho^{\text{m}}_x=&\mathrm{tr}_{{\text{in,}1:L\text{,out}}}^{x:1}(\mathrm{tr}_{\text{m}}^{x-1:0}(U_{m_{L+1}\, x}^{L+1}\dots U_{1\,x}^1\dots U_{j\, z}^l\dots U_{m_{L+1}\, 1}^{L+1} \dots U^1_{1\,1}\\
	&(\bigotimes_{y=x}^1\rho_y^{\text{in}}\otimes \rho_{0}^{\text{m}} \otimes \ket{0\dots 0}_{ 1:L+1}^{1:x}\bra{0\dots 0})U_{1\,1}^{1^\dagger} \dots U_{m_{L+1}\, x}^{L+1^\dagger})),
\end{align*}
hence
\begin{align*}
	\rho^{\text{m}\prime}_x=&\mathrm{tr}_{{\text{in,}1:L\text{,out}}}^{x:1}(\mathrm{tr}_{\text{m}}^{x-1:0}(e^{i\epsilon K_{m_{L+1}}^{L+1}}U_{m_{L+1}\, x}^{L+1}\dots e^{i\epsilon K_{1}^{1}} U^1_{1\,1}(\bigotimes_{y=x}^1\rho_y^{\text{in}}\otimes \rho_{0}^{\text{m}} \\
	&\otimes \ket{0\dots 0}_{ 1:L+1}^{1:x}\bra{0\dots 0}) U_{1\,1}^{1^\dagger}e^{-i\epsilon K_{1}^{1}} \dots U_{m_{L+1}\, x}^{L+1^\dagger}e^{-i\epsilon K_{m_{L+1}}^{L+1}})).
\end{align*}
Using the Taylor expansion of the exponential function it follows that
\begin{align*}
	\rho^{\text{m}\prime}_x=&\mathrm{tr}_{{\text{in,}1:L\text{,out}}}^{x:1}(\mathrm{tr}_{\text{m}}^{x-1:0}(U_{m_{L+1}\, x}^{L+1}\dots  U^1_{1\,1}(\bigotimes_{y=x}^1\rho_y^{\text{in}}\otimes \rho_{0}^{\text{m}} \otimes \ket{0\dots 0}_{ 1:L+1}^{1:x}\bra{0\dots 0}) U_{1\,1}^{1^\dagger} \dots U_{m_{L+1}\, x}^{L+1^\dagger}))\\
	&+i\epsilon \sum_{z=1}^{x}\sum_{l=1}^{L+1}\sum_{j=1}^{m_l} \mathrm{tr}_{1:L+1}^{x:1}\left(\mathrm{tr}_{\text{m}}^{x-1:0} \left( \mathcal{U}_x\dots \mathcal{U}_{z+1} U_{m_{L+1}\, z}^{L+1} \dots U_{j+1\, z}^{l} K_{j\, z}^l U_{j\, z}^{l} \dots U_{1\,z}^1\mathcal{U}_{z-1}\dots \mathcal{U}_1 \right.\right.\\
	&\cdot  (\bigotimes_{y=x}^1\rho_y^{\text{in}}\otimes \rho_{0}^{\text{m}} \otimes \ket{0\dots 0}_{ 1:L+1}^{1:x}\bra{0\dots 0}) \U_1^\dagger \dots \U_{z-1}^\dagger U_{1\, z}^{1^\dagger }\dots U_{j\, z}^{l^\dagger }U_{j+1\, z}^{l^\dagger }\dots U_{m_{L+1}\, z}^{L+1^\dagger } \U_{z+1}^\dagger\dots \U_{x}^\dagger\\
	&-\mathcal{U}_x\dots \mathcal{U}_{z+1} U_{m_{L+1}\, z}^{L+1} \dots U_{j+1\, z}^{l}  U_{j\, z}^{l} \dots U_{1\,z}^1\mathcal{U}_{z-1}\dots \mathcal{U}_1 (\bigotimes_{y=x}^1\rho_y^{\text{in}}\otimes \rho_{0}^{\text{m}} \otimes \ket{0\dots 0}_{ {1:L+1}}^{1:x}\bra{0\dots 0}) \\
	&\cdot  \left.\left.\U_1^\dagger \dots \U_{z-1}^\dagger U_{1\, z}^{1^\dagger }\dots U_{j\, z}^{l^\dagger }K_{j\, z}^l U_{j+1\, z}^{l^\dagger }\dots U_{m_{L+1}\, z}^{L+1^\dagger } \U_{z+1}^\dagger\dots \U_{x}^\dagger \right)\right)+\mathcal{O}(\epsilon^2).
\end{align*}
Hence,
\begin{align}
	\Delta \rho_x^m=&\rho^{\text{m}\prime}_x-\rho^\text{m}_x\nonumber \\
	=&i\epsilon \sum_{z=1}^{x}\sum_{l=1}^{L+1}\sum_{j=1}^{m_l} \mathrm{tr}_{{\text{in,}1:L\text{,out}}}^{x:1}\left(\mathrm{tr}_{\text{m}}^{x-1:0} \left( \mathcal{U}_x\dots \mathcal{U}_{z+1} U_{m_{L+1}\, z}^{L+1} \dots U_{j+1\, z}^{l}  \right.\right.\nonumber\\
	&\cdot \left[ K_{j\, z}^l, U_{j\, z}^{l} \dots U_{1\,z}^1\mathcal{U}_{z-1}\dots \mathcal{U}_1 (\bigotimes_{y=x}^1\rho_y^{\text{in}}\otimes \rho_{0}^{\text{m}} \otimes \ket{0\dots 0}_{ {1:L+1}}^{1:x}\bra{0\dots 0}) \U_1^\dagger \dots \U_{z-1}^\dagger U_{1\, z}^{1^\dagger }\dots U_{j\, z}^{l^\dagger }\right]\nonumber\\
	&\cdot\left.\left.U_{j+1\, z}^{l^\dagger }\dots U_{m_{L+1}\, z}^{L+1^\dagger } \U_{z+1}^\dagger\dots \U_{x}^\dagger\right)\right)+\mathcal{O}(\epsilon^2). \nonumber
\end{align}
As \(\U_x...U^1_{1\, z}\) acts as the identity on  \(\h_{z-1:1}^{\text{in,}1:L\text{,out}}\) and \(\h_{z-2:0}^\text{m}\), we can write
\begin{align}
	\Delta \rho_x^m=&i\epsilon \sum_{z=1}^{x}\sum_{l=1}^{L+1}\sum_{j=1}^{m_l} \mathrm{tr}_{{\text{in,}1:L\text{,out}}}^{x:z}\Big(\mathrm{tr}_{\text{m}}^{x-1:z-1} \Big( \mathcal{U}_x\dots \mathcal{U}_{z+1} U_{m_{L+1}\, z}^{L+1} \dots U_{j+1\, z}^{l}\Big[ K_{j\, z}^l, U_{j\, z}^{l} \dots U_{1\,z}^1  \nonumber\\
	&\cdot \mathrm{tr}_{{\text{in,}1:L\text{,out}}}^{z-1:1}\left(\mathrm{tr}_{\text{m}}^{z-2:0} \left( \mathcal{U}_{z-1}\dots \mathcal{U}_1 \left(\bigotimes_{y=x}^1\rho_y^{\text{in}}\otimes \rho_{0}^{\text{m}} \otimes \ket{0\dots 0}_{ {1:L+1}}^{1:x}\bra{0\dots 0}\right) \U_1^\dagger \dots \U_{z-1}^\dagger\right)\right)\nonumber\\
	&\cdot U_{1\, z}^{1^\dagger }\dots U_{j\, z}^{l^\dagger }\Big]U_{j+1\, z}^{l^\dagger }\dots U_{m_{L+1}\, z}^{L+1^\dagger } \U_{z+1}^\dagger\dots \U_{x}^\dagger\Big)\Big)+\mathcal{O}(\epsilon^2). \nonumber\\
	=&i\epsilon \sum_{z=1}^{x}\sum_{l=1}^{L+1}\sum_{j=1}^{m_l} \mathrm{tr}_{{\text{in,}1:L\text{,out}}}^{x:z}\Bigg(\mathrm{tr}_{\text{m}}^{x-1:z-1} \Bigg( \mathcal{U}_x\dots \mathcal{U}_{z+1} U_{m_{L+1}\, z}^{L+1} \dots U_{j+1\, z}^{l}\Bigg[ K_{j\, z}^l, U_{j\, z}^{l} \dots U_{1\,z}^1   \left( \bigotimes_{y=x}^z\rho_y^{\text{in}} \right. \nonumber\\
	&\otimes \mathrm{tr}_{{\text{in,}1:L\text{,out}}}^{z-1:1}  \left(\mathrm{tr}_{\text{m}}^{z-2:0} \left( \mathcal{U}_{z-1}\dots \mathcal{U}_1 \left(\bigotimes_{y=z-1}^1\rho_y^{\text{in}}\otimes \rho_{0}^{\text{m}} \otimes \ket{0\dots 0}_{ {1:L+1}}^{1:z-1}\bra{0\dots 0}\right) \U_1^\dagger \dots \U_{z-1}^\dagger\right)\right)  \nonumber\\
	&\otimes \ket{0\dots 0}_{ {1:L+1}}^{z:x}\bra{0\dots 0}\Bigg) U_{1\, z}^{1^\dagger }\dots U_{j\, z}^{l^\dagger }\Bigg]U_{j+1\, z}^{l^\dagger }\dots U_{m_{L+1}\, z}^{L+1^\dagger } \U_{z+1}^\dagger\dots \U_{x}^\dagger\Bigg)\Bigg)+\mathcal{O}(\epsilon^2). \nonumber
\end{align}
Actually, it is
\[\rho_{z-1}^\text{m}= \mathrm{tr}_{{\text{in,}1:L\text{,out}}}^{z-1:1}  \left(\mathrm{tr}_{\text{m}}^{z-2:0} \left( \mathcal{U}_{z-1}\dots \mathcal{U}_1 \left(\bigotimes_{y=z-1}^1\rho_y^{\text{in}}\otimes \rho_{0}^{\text{m}} \otimes \ket{0\dots 0}_{ {1:L+1}}^{1:z-1}\bra{0\dots 0}\right) \U_1^\dagger \dots \U_{z-1}^\dagger\right)\right), \]
hence
\begin{align}
	\Delta \rho_x^m=&i\epsilon \sum_{z=1}^{x}\sum_{l=1}^{L+1}\sum_{j=1}^{m_l} \mathrm{tr}_{{\text{in,}1:L\text{,out}}}^{x:z}\Bigg(\mathrm{tr}_{\text{m}}^{x-1:z-1} \Bigg( \mathcal{U}_x\dots \mathcal{U}_{z+1} U_{m_{L+1}\, z}^{L+1} \dots U_{j+1\, z}^{l}\Bigg[ K_{j\, z}^l, U_{j\, z}^{l} \dots U_{1\,z}^1   \left( \bigotimes_{y=x}^z\rho_y^{\text{in}} \right. \nonumber\\
	&\otimes \rho_{z-1}^\text{m}  \otimes \ket{0\dots 0}_{ {1:L+1}}^{z:x}\bra{0\dots 0}\Bigg) U_{1\, z}^{1^\dagger }\dots U_{j\, z}^{l^\dagger }\Bigg]U_{j+1\, z}^{l^\dagger }\dots U_{m_{L+1}\, z}^{L+1^\dagger } \U_{z+1}^\dagger\dots \U_{x}^\dagger\Bigg)\Bigg)+\mathcal{O}(\epsilon^2). \label{eq:e}
\end{align}
By keeping in mind, that the the unitaries \(U_j^l\) appear directly in the cost function (equation \(\eqref{eq:c}\)) and indirectly in the state \(\rho_{x-1}^{\text{m}}\), we analogously get 
\begin{align*}
	\Delta C=& C^\prime-C\\
	=& \frac{1}{N}\sum_{x=1}^{N}\Bigg[ i\epsilon\sum_{l=1}^{L+1}\sum_{j=1}^{m_l} \mathrm{tr} \left( \left( \id_x^{{\text{in,}1:L\text{,m}}} \otimes \id_{x-1}^\text{m} \otimes \ket{\phi_x^\text{out}}\bra{\phi_x^\text{out}}\right)U_{m_{L+1}\,x}^{L+1}\dots U_{j+1\,x}^{l}\left[ K_{j\,x}^l,\right.\right.\\
	&\left. \left. U_{j\,x}^l\dots U_{1\,x}^1 \left( \rho_x^\text{in}\otimes \rho_{x-1}^\text{m} \otimes \ket{0\dots 0}_\text{1:L+1}^x \bra{0\dots 0}\right)U_{1\, x}^{1^\dagger}\dots U_{j\, x}^{l^\dagger}\right] U_{j+1\, x}^{l^\dagger} \dots U_{m_{L+1}\, x}^{L+1^\dagger} \right)\\
	&+\mathrm{tr}\left((\id_x^{\text{in,}1:L\text{,m}}\otimes \id_{x-1}^\text{m}\otimes \ket{\phi_x^\text{out}}\bra{\phi_x^\text{out}})\mathcal{U}_x(\rho_x^{\text{in}}\otimes \Delta \rho_{x-1}^{\text{m}} \otimes \ket{0\dots 0}_{ 1:L+1}^x\bra{0\dots 0})\mathcal{U}_x^\dagger\right) \Bigg]+\mathcal{O}(\epsilon^2).
\end{align*}
With equation \(\eqref{eq:e}\) it follows that
\begin{align*}
	\Delta C=& \frac{i\epsilon}{N}\sum_{x=1}^{N}\Bigg[ \sum_{l=1}^{L+1}\sum_{j=1}^{m_l} \mathrm{tr} \left( \left( \id_x^{{\text{in,}1:L\text{,m}}} \otimes \id_{x-1}^\text{m} \otimes \ket{\phi_x^\text{out}}\bra{\phi_x^\text{out}}\right)U_{m_{L+1}\,x}^{L+1}\dots U_{j+1\,x}^{l}\left[ K_{j\,x}^l,\right.\right.\\
	&\left. \left. U_{j\,x}^l\dots U_{1\,x}^1 \left( \rho_x^\text{in}\otimes \rho_{x-1}^\text{m} \otimes \ket{0\dots 0}_\text{1:L+1}^x \bra{0\dots 0}\right)U_{1\, x}^{1^\dagger}\dots U_{j\, x}^{l^\dagger}\right] U_{j+1\, x}^{l^\dagger} \dots U_{m_{L+1}\, x}^{L+1^\dagger} \right)\\
	&+\mathrm{tr}\left((\id_x^{\text{in,}1:L\text{,m}}\otimes \id_{x-1}^\text{m}\otimes \ket{\phi_x^\text{out}}\bra{\phi_x^\text{out}})\mathcal{U}_x\Bigg(\rho_x^{\text{in}}\otimes \Bigg(\sum_{z=1}^{x-1}\sum_{l=1}^{L+1}\sum_{j=1}^{m_l} \mathrm{tr}_{{\text{in,}1:L\text{,out}}}^{x-1:z}\Bigg(\mathrm{tr}_{\text{m}}^{x-2:z-1} \Bigg( \mathcal{U}_{x-1}\dots \mathcal{U}_{z+1}\right. \nonumber\\
	&\cdot U_{m_{L+1}\, z}^{L+1} \dots U_{j+1\, z}^{l}\Bigg[ K_{j\, z}^l, U_{j\, z}^{l} \dots U_{1\,z}^1   \left( \bigotimes_{y=x-1}^z\rho_y^{\text{in}} \otimes \rho_{z-1}^\text{m}  \otimes \ket{0\dots 0}_{ 1:L+1}^{z:x-1}\bra{0\dots 0}\right) U_{1\, z}^{1^\dagger }\dots U_{j\, z}^{l^\dagger }\Bigg]\\
	&\cdot U_{j+1\, z}^{l^\dagger }\dots U_{m_{L+1}\, z}^{L+1^\dagger } \U_{z+1}^\dagger\dots \U_{x-1}^\dagger\Bigg)\Bigg)\Bigg) \otimes \ket{0\dots 0}_{ 1:L+1}^x\bra{0\dots 0}\Bigg)\mathcal{U}_x^\dagger \Bigg) \Bigg]+\mathcal{O}(\epsilon^2).
\end{align*}
Hence, we get up to first order
\begin{align*}
	\delta C=& \lim_{\epsilon\to 0}\frac{\Delta C}{\epsilon}\\
	=& \frac{i}{N}\sum_{x=1}^{N}\Bigg[ \sum_{l=1}^{L+1}\sum_{j=1}^{m_l} \mathrm{tr} \left( \left( \id_x^{{\text{in,}1:L\text{,m}}} \otimes \id_{x-1}^\text{m} \otimes \ket{\phi_x^\text{out}}\bra{\phi_x^\text{out}}\right)U_{m_{L+1}\,x}^{L+1}\dots U_{j+1\,x}^{l}\left[ K_{j\,x}^l,\right.\right.\\
	&\left. \left. U_{j\,x}^l\dots U_{1\,x}^1 \left( \rho_x^\text{in}\otimes \rho_{x-1}^\text{m} \otimes \ket{0\dots 0}_\text{1:L+1}^x \bra{0\dots 0}\right)U_{1\, x}^{1^\dagger}\dots U_{j\, x}^{l^\dagger}\right] U_{j+1\, x}^{l^\dagger} \dots U_{m_{L+1}\, x}^{L+1^\dagger} \right)\\
	&+\sum_{z=1}^{x-1}\sum_{l=1}^{L+1}\sum_{j=1}^{m_l}\mathrm{tr}\left((\id_{x:z}^{\text{in,}1:L\text{,m}}\otimes \id_{z-1}^\text{m}\otimes \id_{x-1:z}^\text{out}\otimes\ket{\phi_x^\text{out}}\bra{\phi_x^\text{out}})\mathcal{U}_x\Bigg(\rho_x^{\text{in}}\otimes \Bigg( \mathcal{U}_{x-1}\dots \mathcal{U}_{z+1}\right. \nonumber\\
	&\cdot U_{m_{L+1}\, z}^{L+1} \dots U_{j+1\, z}^{l}\Bigg[ K_{j\, z}^l, U_{j\, z}^{l} \dots U_{1\,z}^1   \left( \bigotimes_{y=x-1}^z\rho_y^{\text{in}} \otimes \rho_{z-1}^\text{m}  \otimes \ket{0\dots 0}_{ 1:L+1}^{z:x-1}\bra{0\dots 0}\right) U_{1\, z}^{1^\dagger }\dots U_{j\, z}^{l^\dagger }\Bigg]\\
	&\cdot U_{j+1\, z}^{l^\dagger }\dots U_{m_{L+1}\, z}^{L+1^\dagger } \U_{z+1}^\dagger\dots \U_{x-1}^\dagger\Bigg)\Bigg)\Bigg) \otimes \ket{0\dots 0}_{ 1:L+1}^x\bra{0\dots 0}\Bigg)\mathcal{U}_x^\dagger \Bigg) \Bigg]\\
	=& \frac{i}{N}\sum_{x=1}^{N}\sum_{l=1}^{L+1}\sum_{j=1}^{m_l}\Bigg[  \mathrm{tr} \left( \left( \id_x^{{\text{in,}1:L\text{,m}}} \otimes \id_{x-1}^\text{m} \otimes \ket{\phi_x^\text{out}}\bra{\phi_x^\text{out}}\right)U_{m_{L+1}\,x}^{L+1}\dots U_{j+1\,x}^{l}\left[ K_{j\,x}^l,\right.\right.\\
	&\left. \left. U_{j\,x}^l\dots U_{1\,x}^1 \left( \rho_x^\text{in}\otimes \rho_{x-1}^\text{m} \otimes \ket{0\dots 0}_\text{1:L+1}^x \bra{0\dots 0}\right)U_{1\, x}^{1^\dagger}\dots U_{j\, x}^{l^\dagger}\right] U_{j+1\, x}^{l^\dagger} \dots U_{m_{L+1}\, x}^{L+1^\dagger} \right)\\
	&+\sum_{z=1}^{x-1}\mathrm{tr}\Bigg((\id_{x:z}^{\text{in,}1:L\text{,m}}\otimes \id_{z-1}^\text{m}\otimes \id_{x-1:z}^\text{out}\otimes\ket{\phi_x^\text{out}}\bra{\phi_x^\text{out}})\mathcal{U}_x\dots \mathcal{U}_{z+1}U_{m_{L+1}\, z}^{L+1}\dots U_{j+1\, z}^{l} \nonumber\\
	&\cdot  \Bigg[ K_{j\, z}^l, U_{j\, z}^{l} \dots U_{1\,z}^1   \left( \bigotimes_{y=x}^z\rho_y^{\text{in}} \otimes \rho_{z-1}^\text{m}  \otimes \ket{0\dots 0}_{ 1:L+1}^{z:x}\bra{0\dots 0}\right) U_{1\, z}^{1^\dagger }\dots U_{j\, z}^{l^\dagger }\Bigg]\\
	&\cdot U_{j+1\, z}^{l^\dagger }\dots U_{m_{L+1}\, z}^{L+1^\dagger } \U_{z+1}^\dagger\dots \mathcal{U}_x^\dagger \Bigg) \Bigg].
\end{align*}
By using the cyclic rule of trace we get
\begin{align*}
	\delta C
	=& \frac{i}{N}\sum_{x=1}^{N}\sum_{l=1}^{L+1}\sum_{j=1}^{m_l}\Bigg[  \mathrm{tr} \left( U_{x}^{l+1^\dagger}\dots U_{x}^{L+1^\dagger} \left( \id_x^{{\text{in,}1:L\text{,m}}} \otimes \id_{x-1}^\text{m} \otimes \ket{\phi_x^\text{out}}\bra{\phi_x^\text{out}}\right) U_{x}^{L+1}\dots U_{x}^{l+1}   \right.\\
	&\left. U_{m_{l}\,x}^{l}\dots U_{j+1\,x}^{l} \left[ K_{j\,x}^l, U_{j\,x}^l\dots U_{1\,x}^1 \left( \rho_x^\text{in}\otimes \rho_{x-1}^\text{m} \otimes \ket{0\dots 0}_\text{1:L+1}^x \bra{0\dots 0}\right)U_{1\, x}^{1^\dagger}\dots U_{j\, x}^{l^\dagger}\right] U_{j+1\, x}^{l^\dagger} \dots U_{m_{l}\, x}^{l^\dagger} \right)\\
	&+\sum_{z=1}^{x-1}\mathrm{tr}\Bigg(U_{z}^{l+1^\dagger}\dots U_{z}^{L+1^\dagger}\U_{z+1}^\dagger\dots \mathcal{U}_x^\dagger(\id_{x:z}^{\text{in,}1:L\text{,m}}\otimes \id_{z-1}^\text{m}\otimes \id_{x-1:z}^\text{out}\otimes\ket{\phi_x^\text{out}}\bra{\phi_x^\text{out}})\mathcal{U}_x\dots \mathcal{U}_{z+1} U_{z}^{L+1}\dots U_{z}^{l+1} \nonumber\\
	&\cdot U_{m_{l}\, z}^{l}\dots U_{j+1\, z}^{l} \Bigg[ K_{j\, z}^l, U_{j\, z}^{l} \dots U_{1\,z}^1   \left( \bigotimes_{y=x}^z\rho_y^{\text{in}} \otimes \rho_{z-1}^\text{m}  \otimes \ket{0\dots 0}_{ 1:L+1}^{z:x}\bra{0\dots 0}\right) U_{1\, z}^{1^\dagger }\dots U_{j\, z}^{l^\dagger }\Bigg]U_{j+1\, z}^{l^\dagger }\dots U_{m_{l}\, z}^{l^\dagger }  \Bigg) \Bigg].
\end{align*}
With
\begin{align}
	A=&U_{x}^{l+1^\dagger}\dots U_{x}^{L+1^\dagger} \left( \id_x^{l:L} \otimes \id_{x}^{\text{m}} \otimes \ket{\phi_x^\text{out}}\bra{\phi_x^\text{out}}\right) U_{x}^{L+1}\dots U_{x}^{l+1}\label{eq:f} \\
	B_x=&U_{m_{l}\,x}^{l}\dots U_{j+1\,x}^{l} \left[ K_{j\,x}^l, U_{j\,x}^l\dots U_{1\,x}^1 \left( \rho_x^\text{in}\otimes \rho_{x-1}^\text{m} \otimes \ket{0\dots 0}_\text{1:l}^x \bra{0\dots 0}\right)U_{1\, x}^{1^\dagger}\dots U_{j\, x}^{l^\dagger}\right] U_{j+1\, x}^{l^\dagger} \dots U_{m_{l}\, x}^{l^\dagger}\label{eq:g}\\
	C=&U_{z}^{l+1^\dagger}\dots U_{z}^{L+1^\dagger}\U_{z+1}^\dagger\dots \mathcal{U}_x^\dagger(\id_{x:z+1}^{\text{in,}1:L\text{,m}}\otimes \id_{z}^\text{l:L+1}\otimes \id_{x-1:z+1}^\text{out}\otimes\ket{\phi_x^\text{out}}\bra{\phi_x^\text{out}})\mathcal{U}_x\dots \mathcal{U}_{z+1} U_{z}^{L+1}\dots U_{z}^{l+1}\label{eq:h}
\end{align}
it is 
\begin{align}
	\delta C
	=& \frac{i}{N}\sum_{x=1}^{N}\sum_{l=1}^{L+1}\sum_{j=1}^{m_l}\Bigg[  \mathrm{tr} \left( \left(\id_x^{0:l-1}\otimes A\right)  \left(B_x\otimes \ket{0\dots 0}_{l+1:L+1}^x\bra{0\dots 0}\right) \right)\nonumber\\
	&+\sum_{z=1}^{x-1}\mathrm{tr}\left( \left(\id_z^{0:l-1}\otimes C\right)  \left( \bigotimes_{y=x}^{z+1}\rho_y^\text{in}\otimes  B_z\otimes \ket{0\dots 0}_{l+1:L+1}^z\bra{0\dots 0}\otimes \ket{0\dots 0}_{1:L+1}^{z+1:x}\bra{0\dots 0} \right) \right)\Bigg]\nonumber\\
	=& \frac{i}{N}\sum_{x=1}^{N}\sum_{l=1}^{L+1}\sum_{j=1}^{m_l}\Bigg[  \mathrm{tr} \left( A  \left(\mathrm{tr}_{0:l-1}^x(B_x)\otimes \ket{0\dots 0}_{l+1:L+1}^x\bra{0\dots 0}\right) \right)\nonumber\\
	&+\sum_{z=1}^{x-1}\mathrm{tr}\left( C  \left( \bigotimes_{y=x}^{z+1}\rho_y^\text{in}\otimes \mathrm{tr}_{0:l-1}^z( B_z)\otimes \ket{0\dots 0}_{l+1:L+1}^z\bra{0\dots 0}\otimes \ket{0\dots 0}_{1:L+1}^{z+1:x}\bra{0\dots 0} \right) \right)\Bigg]\nonumber\\
	=& \frac{i}{N}\sum_{x=1}^{N}\sum_{l=1}^{L+1}\sum_{j=1}^{m_l}\Bigg[  \mathrm{tr} \left( A  \left(\mathrm{tr}_{0:l-1}^x(B_x)\otimes \id^{l+1:L+1}_x\right) \left(\id_{x}^l\otimes\ket{0\dots 0}_{l+1:L+1}^x\bra{0\dots 0}\right)\right) \nonumber\\
	&+\sum_{z=1}^{x-1}\mathrm{tr}\Bigg( C \left(\id^\text{in}_{x:z+1}\otimes  \mathrm{tr}_{0:l-1}^z( B_z)\otimes \id_z^{l+1:L+1}\otimes \id_{z+1:x}^{1:L+1}\right)  \nonumber \\
	&\cdot\left( \bigotimes_{y=x}^{z+1}\rho_y^\text{in}\otimes\id_z^l\otimes \ket{0\dots 0}_{l+1:L+1}^z\bra{0\dots 0}\otimes \ket{0\dots 0}_{1:L+1}^{z+1:x}\bra{0\dots 0} \right) \Bigg)\Bigg]\nonumber\\
	=& \frac{i}{N}\sum_{x=1}^{N}\sum_{l=1}^{L+1}\sum_{j=1}^{m_l}\Bigg[  \mathrm{tr} \left(\left(\id_{x}^l\otimes\ket{0\dots 0}_{l+1:L+1}^x\bra{0\dots 0}\right) A  \left(\mathrm{tr}_{0:l-1}^x(B_x)\otimes \id^{l+1:L+1}_x\right) \right) \nonumber\\
	&+\sum_{z=1}^{x-1}\mathrm{tr}\Bigg( \left( \bigotimes_{y=x}^{z+1}\rho_y^\text{in}\otimes\id_z^l\otimes \ket{0\dots 0}_{l+1:L+1}^z\bra{0\dots 0}\otimes \ket{0\dots 0}_{1:L+1}^{z+1:x}\bra{0\dots 0} \right) C   \nonumber \\
	&\cdot \left(\id^\text{in}_{x:z+1}\otimes  \mathrm{tr}_{0:l-1}^z( B_z)\otimes \id_z^{l+1:L+1}\otimes \id_{z+1:x}^{1:L+1}\right)\Bigg)\Bigg]\nonumber
\end{align}
which leads to
\begin{align}
	\delta C
	=& \frac{i}{N}\sum_{x=1}^{N}\sum_{l=1}^{L+1}\sum_{j=1}^{m_l}\Bigg[  \mathrm{tr} \left(\mathrm{tr}_{l+1:L+1}^x\left(\left(\id_{x}^l\otimes\ket{0\dots 0}_{l+1:L+1}^x\bra{0\dots 0}\right) A \right) \mathrm{tr}_{0:l-1}^x(B_x) \right) \nonumber\\
	&+\sum_{z=1}^{x-1}\mathrm{tr}\Bigg(\mathrm{tr}_\text{in}^{x:z+1} \Bigg(\mathrm{tr}^z_{l+1:L+1}\Bigg(\mathrm{tr}^{z+1:x}_{1:L+1}\Bigg(\Bigg( \bigotimes_{y=x}^{z+1}\rho_y^\text{in}\otimes\id_z^l\otimes \ket{0\dots 0}_{l+1:L+1}^z\bra{0\dots 0}\nonumber \\
	&\otimes \ket{0\dots 0}_{1:L+1}^{z+1:x}\bra{0\dots 0} \Bigg) C  \Bigg)\Bigg)\Bigg)  \mathrm{tr}_{0:l-1}^z( B_z)\Bigg)\Bigg].
\end{align}
These partial traces now have to be expressed in terms of layer-to-layer channels. With \eqref{eq:g} it follows that
\begin{align*}
	\mathrm{tr}_{0:l-1}^x&(B_x) \\
	=&\mathrm{tr}_{0:l-1}^x \Big(U_{m_{l}\,x}^{l}\dots U_{j+1\,x}^{l} \Big[ K_{j\,x}^l, U_{j\,x}^l\dots U_{1\,x}^1 \Big( \rho_x^\text{in}\otimes \rho_{x-1}^\text{m} \otimes \ket{0\dots 0}_\text{1:l}^x \bra{0\dots 0}\Big)U_{1\, x}^{1^\dagger}\dots U_{j\, x}^{l^\dagger}\Big] U_{j+1\, x}^{l^\dagger} \dots U_{m_{l}\, x}^{l^\dagger}\Big)\\
	=&\mathrm{tr}_{l-1}^x \Big(U_{m_{l}\,x}^{l}\dots U_{j+1\,x}^{l} \Big[ K_{j\,x}^l, U_{j\,x}^l\dots U_{1\,x}^l \Big( \mathrm{tr}_{0:l-2}^x \Big(U_{x}^{l-1}\dots  U_{x}^1 \Big( \rho_x^\text{in}\otimes \rho_{x-1}^\text{m} \otimes \ket{0\dots 0}_\text{1:l-1}^x \bra{0\dots 0}\Big)U_{x}^{1^\dagger} \dots U_{ x}^{l-1^\dagger}\Big)\\
	&\otimes \ket{0\dots 0}_l^x \bra{0\dots 0} \Big) U_{1\, x}^{l^\dagger} \dots U_{j\, x}^{l^\dagger}\Big] U_{j+1\, x}^{l^\dagger} \dots U_{m_{l}\, x}^{l^\dagger}\Big)
\end{align*}
With the layer to layer channel
\begin{align}
	\mathcal{E}^l:B(\h^{l-1})&\to  \B(\h^{l})\\
	X^{l-1}&\mapsto\mathrm{tr}_{l-1}\left(U^l\left(X^{l-1}\otimes \ket{0\dots 0}_l\bra{0\dots 0}\right)U^{l^\dagger}\right)\nonumber
\end{align}
it is
\begin{align}
	\rho_x^{l-1}&=\mathcal{E}^{l-1}\left(\dots \mathcal{E}^1\left( \rho_x^\text{in}\otimes \rho_{x-1}^\text{m} \right)\dots\right)
\end{align}
as shown in \cite{beerTrainingDeepQuantum2020},  hence
\begin{align*}
	\mathrm{tr}&_{0:l-2}^x \Big(U_{x}^{l-1}\dots  U_{x}^1 \Big( \rho_x^\text{in}\otimes \rho_{x-1}^\text{m} \otimes \ket{0\dots 0}_\text{1:l-1}^x \bra{0\dots 0}\Big)U_{x}^{1^\dagger} \dots U_{ x}^{l-1^\dagger}\Big)\\
	&= \mathrm{tr}_{1:l-2}^x \Big(U_{x}^{l-1}\dots U_{x}^1 \Big(\mathrm{tr}_{0}^x \Big( U_{x}^1 \Big( \rho_x^\text{in}\otimes \rho_{x-1}^\text{m} \otimes \ket{0\dots 0}_\text{1}^x \bra{0\dots 0}\Big)U_{x}^{1^\dagger}\Big) \otimes \ket{0\dots 0}_\text{2:l-1}^x \bra{0\dots 0}\Big) U_{x}^{2^\dagger }\dots U_{ x}^{l-1^\dagger}\Big)\\
	&= \mathrm{tr}_{1:l-2}^x \Big(U_{x}^{l-1}\dots U_{x}^1 \Big(\mathcal{E}^1\Big(\rho_x^\text{in}\otimes \rho_{x-1}^\text{m}\Big) \otimes \ket{0\dots 0}_\text{2:l-1}^x \bra{0\dots 0}\Big) U_{x}^{2^\dagger }\dots U_{ x}^{l-1^\dagger}\Big)\\
	&=\dots\\
	&=\mathcal{E}^{l-1}\left(\dots \mathcal{E}^1\left( \rho_x^\text{in}\otimes \rho_{x-1}^\text{m} \right)\dots\right)\\
	&=\rho_x^{l-1}.
\end{align*}
Thus we get
\begin{align}
	\mathrm{tr}_{0:l-1}^x&(B_x) \nonumber\\
	=&\mathrm{tr}_{l-1}^x \Big(U_{m_{l}\,x}^{l}\dots U_{j+1\,x}^{l} \Big[ K_{j\,x}^l, U_{j\,x}^l\dots U_{1\,x}^l \Big( \rho_x^{l-1}\otimes \ket{0\dots 0}_l^x \bra{0\dots 0} \Big) U_{1\, x}^{l^\dagger} \dots U_{j\, x}^{l^\dagger}\Big] U_{j+1\, x}^{l^\dagger} \dots U_{m_{l}\, x}^{l^\dagger}\Big).
\end{align}
Now let 
\[\sigma_x^l:=\mathrm{tr}_{l+1:L+1}^x\left(\left(\id_{x}^l\otimes\ket{0\dots 0}_{l+1:L+1}^x\bra{0\dots 0}\right) A \right).\]
With equation \(\eqref{eq:f}\) it follows that
\begin{align*}
	\sigma_x^l=&\mathrm{tr}_{l+1:L+1}^x\left(\left(\id_{x}^l\otimes\ket{0\dots 0}_{l+1:L+1}^x\bra{0\dots 0}\right) U_{x}^{l+1^\dagger}\dots U_{x}^{L+1^\dagger} \left( \id_x^{l:L} \otimes \id_{x}^{\text{m}} \otimes \ket{\phi_x^\text{out}}\bra{\phi_x^\text{out}}\right) U_{x}^{L+1}\dots U_{x}^{l+1} \right)\\
	=&\mathrm{tr}_{l+1:L+1}^x\Big(\left(\id_{x}^l\otimes\ket{0\dots 0}_{l+1:L}^x\bra{0\dots 0} \otimes \id_x^{L+1}\right)\left(\id_{x}^{l:L}\otimes\ket{0\dots 0}_{L+1}^x\bra{0\dots 0}\right) \left(U_{x}^{l+1^\dagger}\dots U_{x}^{L^\dagger}\otimes \id_x^{L+1}\right)\\
	&\cdot \left(\id_x^{l:L-1}\otimes  U_{x}^{L+1^\dagger}\right) \left( \id_x^{l:L} \otimes \id_{x}^{\text{m}} \otimes \ket{\phi_x^\text{out}}\bra{\phi_x^\text{out}}\right) \left(\id_x^{l:L-1}\otimes  U_{x}^{L+1}\right) \left( U_{x}^{L}\dots U_{x}^{l+1}\otimes \id_x^{L+1}\right)\Big)
	\\
	=&\mathrm{tr}_{l+1:L+1}^x\Big(\left(\id_{x}^l\otimes\ket{0\dots 0}_{l+1:L}^x\bra{0\dots 0} \otimes \id_x^{L+1}\right)\left(U_{x}^{l+1^\dagger}\dots U_{x}^{L^\dagger}\otimes \id_x^{L+1}\right)\left(\id_{x}^{l:L}\otimes\ket{0\dots 0}_{L+1}^x\bra{0\dots 0}\right) \\
	&\cdot \left(\id_x^{l:L-1}\otimes  U_{x}^{L+1^\dagger}\right) \left( \id_x^{l:L} \otimes \id_{x}^{\text{m}} \otimes \ket{\phi_x^\text{out}}\bra{\phi_x^\text{out}}\right) \left(\id_x^{l:L-1}\otimes  U_{x}^{L+1}\right) \left( U_{x}^{L}\dots U_{x}^{l+1}\otimes \id_x^{L+1}\right)\Big)\\
	=&\mathrm{tr}_{l+1:L}^x\Big(\left(\id_{x}^l\otimes\ket{0\dots 0}_{l+1:L}^x\bra{0\dots 0}\right)U_{x}^{l+1^\dagger}\dots U_{x}^{L^\dagger}\Big(\id_x^{l:L-1}\otimes \mathrm{tr}_{L+1}^x\Big(\left(\id_{x}^{L}\otimes\ket{0\dots 0}_{L+1}^x\bra{0\dots 0}\right) \\
	&\cdot   U_{x}^{L+1^\dagger} \left( \id_x^{L} \otimes \id_{x}^{\text{m}} \otimes \ket{\phi_x^\text{out}}\bra{\phi_x^\text{out}}\right)  U_{x}^{L+1}\Big)\Big)  U_{x}^{L}\dots U_{x}^{l+1}\Big).
\end{align*}
By defining the channel
\[\mathcal{F}^l(X^l)=\mathrm{tr}_l\left( \left(\id^{l-1}\otimes \ket{0\dots 0}_l\bra{0\dots 0}\right)U^{l^\dagger}\left(\id^{l-1}\otimes X^l\right)U^l\right),\]
that is the adjoint channel of \(\mathcal{E}^l\) as shown in \cite{beerTrainingDeepQuantum2020}, actually, we get
\begin{align}
	\sigma_x^l=&\mathrm{tr}_{l+1:L}^x\Big(\left(\id_{x}^l\otimes\ket{0\dots 0}_{l+1:L}^x\bra{0\dots 0}\right)U_{x}^{l+1^\dagger}\dots U_{x}^{L^\dagger}\Big(\id_x^{l:L-1}\otimes \mathcal{F}^{L+1}\left(\id_{x}^{\text{m}} \otimes \ket{\phi_x^\text{out}}\bra{\phi_x^\text{out}}\right)\Big) U_{x}^{L}\dots U_{x}^{l+1}\Big)\nonumber \\
	=&\dots\nonumber\\
	=&\mathrm{tr}_{l+1}^x\Big(\left(\id_{x}^l\otimes\ket{0\dots 0}_{l+1}^x\bra{0\dots 0}\right)U_{x}^{l+1^\dagger}\Big(\id_x^{l}\otimes \mathcal{F}^{l+2}\left(\dots\mathcal{F}^{L+1}\left(\id_{x}^{\text{m}} \otimes \ket{\phi_x^\text{out}}\bra{\phi_x^\text{out}}\right)\dots\right)\Big)  U_{x}^{l+1}\Big)\nonumber \\
	=& \mathcal{F}^{l+1}\left(\dots\mathcal{F}^{L+1}\left(\id_{x}^{\text{m}} \otimes \ket{\phi_x^\text{out}}\bra{\phi_x^\text{out}}\right)\dots\right).
\end{align}
Now, the only partial trace left is 
\begin{align*}
	\omega_{zx}^l:=&\mathrm{tr}_\text{in}^{x:z+1} \Bigg(\mathrm{tr}^z_{l+1:L+1}\Bigg(\mathrm{tr}^{x:z+1}_{1:L+1}\Bigg(\Bigg( \bigotimes_{y=x}^{z+1}\rho_y^\text{in}\otimes\id_z^l\otimes \ket{0\dots 0}_{l+1:L+1}^z\bra{0\dots 0}\nonumber \otimes \ket{0\dots 0}_{1:L+1}^{z+1:x}\bra{0\dots 0} \Bigg) C  \Bigg)\Bigg)\Bigg)
\end{align*}
With equation \(\eqref{eq:h}\) it follows that
\begin{align*}
	\omega_{zx}^l=&\mathrm{tr}^z_{l+1:L+1}\Bigg(\mathrm{tr}^{x:z+1}_{\text{in,}1:L+1}\Bigg(\Bigg( \bigotimes_{y=x}^{z+1}\rho_y^\text{in}\otimes\id_z^l\otimes \ket{0\dots 0}_{l+1:L+1}^z\bra{0\dots 0}\nonumber \otimes \ket{0\dots 0}_{1:L+1}^{z+1:x}\bra{0\dots 0} \Bigg)\\
	&\cdot U_{z}^{l+1^\dagger}\dots U_{z}^{L+1^\dagger}\U_{z+1}^\dagger\dots \mathcal{U}_x^\dagger(\id_{x:z+1}^{\text{in,}1:L\text{,m}}\otimes \id_{z}^\text{l:L+1}\otimes \id_{x-1:z+1}^\text{out}\otimes\ket{\phi_x^\text{out}}\bra{\phi_x^\text{out}})\mathcal{U}_x\dots \mathcal{U}_{z+1} U_{z}^{L+1}\dots U_{z}^{l+1}  
	\Bigg)\Bigg)\\
	=&\mathrm{tr}^z_{l+1:L+1}\Bigg(\mathrm{tr}^{x:z+1}_{\text{in,}1:L+1}\Bigg(\Bigg(\id_x^\text{in}\otimes  \bigotimes_{y=x-1}^{z+1}\rho_y^\text{in}\otimes\id_z^l\otimes \ket{0\dots 0}_{l+1:L+1}^z\bra{0\dots 0}\nonumber \otimes \ket{0\dots 0}_{1:L+1}^{z+1:x-1}\bra{0\dots 0}\otimes \id_x^{1:L+1} \Bigg)\\
	&\cdot \Big(\rho_x^\text{in}\otimes  \id_{z+1:x-1}^\text{in}\otimes \id^{l:L+1}_z \otimes \id^{1:L+1}_{z+1:x-1}\otimes \ket{0\dots 0}_{1:L+1}^{x}\bra{0\dots 0} \Big)\Big(U_{z}^{l+1^\dagger}\dots U_{z}^{L+1^\dagger}\U_{z+1}^\dagger\dots \mathcal{U}_{x-1}^\dagger\otimes \id_x^{1:L+1}\Big)\\
	&\cdot \Big(\id_z^{l:L+1}\otimes\id_{z+1:x-2}^{\text{in},1:L+1}\otimes \id_{x-1}^{\text{in},1:L,\text{out}}\otimes \mathcal{U}_x^\dagger\Big)\Big(\id_{x:z+1}^{\text{in,}1:L\text{,m}}\otimes \id_{z}^\text{l:L+1}\otimes \id_{x-1:z+1}^\text{out}\otimes\ket{\phi_x^\text{out}}\bra{\phi_x^\text{out}}\Big)\\
	&\cdot \Big(\id_z^{l:L+1}\otimes\id_{z+1:x-2}^{\text{in},1:L+1}\otimes \id_{x-1}^{\text{in},1:L,\text{out}}\otimes \mathcal{U}_x\Big)\Big(\mathcal{U}_{x-1}\dots \mathcal{U}_{z+1} U_{z}^{L+1}\dots U_{z}^{l+1}\otimes \id_x^{1:L+1} \Big) 
	\Bigg)\Bigg)\\
	=&\mathrm{tr}^z_{l+1:L+1}\Bigg(\mathrm{tr}^{x:z+1}_{\text{in,}1:L+1}\Bigg(\Bigg(\id_x^\text{in}\otimes  \bigotimes_{y=x-1}^{z+1}\rho_y^\text{in}\otimes\id_z^l\otimes \ket{0\dots 0}_{l+1:L+1}^z\bra{0\dots 0}\nonumber \otimes \ket{0\dots 0}_{1:L+1}^{z+1:x-1}\bra{0\dots 0}\otimes \id_x^{1:L+1} \Bigg)\\
	&\cdot \Big(U_{z}^{l+1^\dagger}\dots U_{z}^{L+1^\dagger}\U_{z+1}^\dagger\dots \mathcal{U}_{x-1}^\dagger\otimes \id_x^{1:L+1}\Big)\Big(\rho_x^\text{in}\otimes  \id_{z+1:x-1}^\text{in}\otimes \id^{l:L+1}_z \otimes \id^{1:L+1}_{z+1:x-1}\otimes \ket{0\dots 0}_{1:L+1}^{x}\bra{0\dots 0} \Big)\\
	&\cdot \Big(\id_z^{l:L+1}\otimes\id_{z+1:x-2}^{\text{in},1:L+1}\otimes \id_{x-1}^{\text{in},1:L,\text{out}}\otimes \mathcal{U}_x^\dagger\Big)\Big(\id_{x:z+1}^{\text{in,}1:L\text{,m}}\otimes \id_{z}^\text{l:L+1}\otimes \id_{x-1:z+1}^\text{out}\otimes\ket{\phi_x^\text{out}}\bra{\phi_x^\text{out}}\Big)\\
	&\cdot \Big(\id_z^{l:L+1}\otimes\id_{z+1:x-2}^{\text{in},1:L+1}\otimes \id_{x-1}^{\text{in},1:L,\text{out}}\otimes \mathcal{U}_x\Big)\Big(\mathcal{U}_{x-1}\dots \mathcal{U}_{z+1} U_{z}^{L+1}\dots U_{z}^{l+1}\otimes \id_x^{1:L+1} \Big) 
	\Bigg)\Bigg)\\
	=&\mathrm{tr}^z_{l+1:L+1}\Bigg(\mathrm{tr}^{x-1:z+1}_{\text{in,}1:L+1}\Bigg(\Bigg( \bigotimes_{y=x-1}^{z+1}\rho_y^\text{in}\otimes\id_z^l\otimes \ket{0\dots 0}_{l+1:L+1}^z\bra{0\dots 0}\nonumber \otimes \ket{0\dots 0}_{1:L+1}^{z+1:x-1}\bra{0\dots 0} \Bigg)\\
	&\cdot U_{z}^{l+1^\dagger}\dots U_{z}^{L+1^\dagger}\U_{z+1}^\dagger\dots \mathcal{U}_{x-1}^\dagger\Big(\id_z^{l:L+1}\otimes\id_{z+1:x-2}^{\text{in},1:L+1}\otimes \id_{x-1}^{\text{in},1:L,\text{out}}\otimes\mathrm{tr}^{x}_{\text{in,}1:L+1}\Big(\Big(\rho_x^\text{in}\otimes \id^{\text{m}}_{x-1}\otimes \ket{0\dots 0}_{1:L+1}^{x}\bra{0\dots 0} \Big)\\
	&\cdot \mathcal{U}_x^\dagger\Big(\id_{x-1}^\text{m}\otimes \id_{x}^{\text{in,}1:L\text{,m}}\otimes\ket{\phi_x^\text{out}}\bra{\phi_x^\text{out}}\Big)\mathcal{U}_x\Big)\Big)\mathcal{U}_{x-1}\dots \mathcal{U}_{z+1} U_{z}^{L+1}\dots U_{z}^{l+1}\Bigg)\Bigg).
\end{align*}
With
\begin{align}
	\tilde{\mathcal{S}}_x:\B(\h_x^\text{out})&\to \B(\h_{x-1}^\text{m})\\
	X_x^\text{out}&\mapsto\mathrm{tr}^{x}_{\text{in,}1:L+1}\Big(\Big(\rho_x^\text{in}\otimes \id^{\text{m}}_{x-1}\otimes \ket{0\dots 0}_{1:L+1}^{x}\bra{0\dots 0} \Big) \mathcal{U}_x^\dagger\Big(\id_{x-1}^\text{m}\otimes \id_{x}^{\text{in,}1:L\text{,m}}\otimes X_x^\text{out}\Big)\mathcal{U}_x\Big)\nonumber
\end{align}
it is
\begin{align*}
	\omega_{zx}^l=&\mathrm{tr}^z_{l+1:L+1}\Bigg(\mathrm{tr}^{x-1:z+1}_{\text{in,}1:L+1}\Bigg(\Bigg( \bigotimes_{y=x-1}^{z+1}\rho_y^\text{in}\otimes\id_z^l\otimes \ket{0\dots 0}_{l+1:L+1}^z\bra{0\dots 0}\nonumber \otimes \ket{0\dots 0}_{1:L+1}^{z+1:x-1}\bra{0\dots 0} \Bigg) U_{z}^{l+1^\dagger}\dots U_{z}^{L+1^\dagger}\\
	&\cdot\U_{z+1}^\dagger\dots \mathcal{U}_{x-1}^\dagger\Big(\id_z^{l:L+1}\otimes\id_{z+1:x-2}^{\text{in},1:L+1}\otimes \id_{x-1}^{\text{in},1:L,\text{out}}\otimes\tilde{\mathcal{S}}_x\big(\ket{\phi_x^\text{out}}\bra{\phi_x^\text{out}}\big)\Big)\mathcal{U}_{x-1}\dots \mathcal{U}_{z+1} U_{z}^{L+1}\dots U_{z}^{l+1}\Bigg)\Bigg)\\
	=&\mathrm{tr}^z_{l+1:L+1}\Bigg(\mathrm{tr}^{x-1:z+1}_{\text{in,}1:L+1}\Bigg(\Bigg( \bigotimes_{y=x-1}^{z+1}\rho_y^\text{in}\otimes\id_z^l\otimes \ket{0\dots 0}_{l+1:L+1}^z\bra{0\dots 0}\nonumber \otimes \ket{0\dots 0}_{1:L+1}^{z+1:x-1}\bra{0\dots 0} \Bigg) U_{z}^{l+1^\dagger}\dots U_{z}^{L+1^\dagger}\\
	&\cdot\U_{z+1}^\dagger\dots \mathcal{U}_{x-1}^\dagger\Big(\id_z^{l:L+1}\otimes\id_{z+1:x-2}^{\text{in},1:L+1}\otimes \id_{x-1}^{\text{in},1:L,\text{out}}\otimes\tilde{\mathcal{S}}_x\big(\ket{\phi_x^\text{out}}\bra{\phi_x^\text{out}}\big)\Big)\mathcal{U}_{x-1}\dots \mathcal{U}_{z+1} U_{z}^{L+1}\dots U_{z}^{l+1}\Bigg)\Bigg)\\
	=&\mathrm{tr}^z_{l+1:L+1}\Bigg(\mathrm{tr}^{x-1:z+1}_{\text{in,}1:L+1}\Bigg(\Bigg(\id_{x-1}^\text{in}\otimes  \bigotimes_{y=x-2}^{z+1}\rho_y^\text{in}\otimes\id_z^l\otimes \ket{0\dots 0}_{l+1:L+1}^z\bra{0\dots 0}\nonumber \otimes \ket{0\dots 0}_{1:L+1}^{z+1:x-2}\bra{0\dots 0}\otimes \id_{x-1}^{1:L+1} \Bigg)\\
	&\cdot \Big(\rho_{x-1}^\text{in}\otimes  \id_{z+1:x-2}^\text{in}\otimes\id_z^{l:L+1} \otimes \id^{1:L+1}_{z+1:x-2}\otimes \ket{0\dots 0}_{1:L+1}^{x-1}\bra{0\dots 0} \Big)\Big( U_{z}^{l+1^\dagger}\dots U_{z}^{L+1^\dagger}\U_{z+1}^\dagger\dots \mathcal{U}_{x-2}^\dagger\otimes \id_{x-1}^{\text{in},1:L+1}\Big)\\
	&\cdot \Big(\id_z^{l:L+1}\otimes \id_{z+1:x-3}^{\text{in},1:L+1}\otimes \id_{x-2}^{\text{in},1:L,\text{out}}\otimes \mathcal{U}_{x-1}^\dagger\Big)\Big(\id_z^{l:L+1}\otimes\id_{z+1:x-2}^{\text{in},1:L+1}\otimes \id_{x-1}^{\text{in},1:L,\text{out}}\otimes\tilde{\mathcal{S}}_x\big(\ket{\phi_x^\text{out}}\bra{\phi_x^\text{out}}\big)\Big)\\
	&\cdot\Big(\id_z^{l:L+1}\otimes \id_{z+1:x-3}^{\text{in},1:L+1}\otimes \id_{x-2}^{\text{in},1:L,\text{out}}\otimes \mathcal{U}_{x-1}\Big)\Big(\mathcal{U}_{x-2}\dots \mathcal{U}_{z+1} U_{z}^{L+1}\dots U_{z}^{l+1}\otimes \id_{x-1}^{\text{in},1:L+1}\Big)\Bigg)\Bigg)\\
	=&\mathrm{tr}^z_{l+1:L+1}\Bigg(\mathrm{tr}^{x-1:z+1}_{\text{in,}1:L+1}\Bigg(\Bigg(\id_{x-1}^\text{in}\otimes  \bigotimes_{y=x-2}^{z+1}\rho_y^\text{in}\otimes\id_z^l\otimes \ket{0\dots 0}_{l+1:L+1}^z\bra{0\dots 0}\nonumber \otimes \ket{0\dots 0}_{1:L+1}^{z+1:x-2}\bra{0\dots 0}\otimes \id_{x-1}^{1:L+1} \Bigg)\\
	&\cdot \Big( U_{z}^{l+1^\dagger}\dots U_{z}^{L+1^\dagger}\U_{z+1}^\dagger\dots \mathcal{U}_{x-2}^\dagger\otimes \id_{x-1}^{\text{in},1:L+1}\Big)\Big(\rho_{x-1}^\text{in}\otimes  \id_{z+1:x-2}^\text{in}\otimes\id_z^{l:L+1} \otimes \id^{1:L+1}_{z+1:x-2}\otimes \ket{0\dots 0}_{1:L+1}^{x-1}\bra{0\dots 0} \Big)\\
	&\cdot \Big(\id_z^{l:L+1}\otimes \id_{z+1:x-3}^{\text{in},1:L+1}\otimes \id_{x-2}^{\text{in},1:L,\text{out}}\otimes \mathcal{U}_{x-1}^\dagger\Big)\Big(\id_z^{l:L+1}\otimes\id_{z+1:x-2}^{\text{in},1:L+1}\otimes \id_{x-1}^{\text{in},1:L,\text{out}}\otimes\tilde{\mathcal{S}}_x\big(\ket{\phi_x^\text{out}}\bra{\phi_x^\text{out}}\big)\Big)\\
	&\cdot\Big(\id_z^{l:L+1}\otimes \id_{z+1:x-3}^{\text{in},1:L+1}\otimes \id_{x-2}^{\text{in},1:L,\text{out}}\otimes \mathcal{U}_{x-1}\Big)\Big(\mathcal{U}_{x-2}\dots \mathcal{U}_{z+1} U_{z}^{L+1}\dots U_{z}^{l+1}\otimes \id_{x-1}^{\text{in},1:L+1}\Big)\Bigg)\Bigg)\\
	=&\mathrm{tr}^z_{l+1:L+1}\Bigg(\mathrm{tr}^{x-2:z+1}_{\text{in,}1:L+1}\Bigg(\Bigg( \bigotimes_{y=x-2}^{z+1}\rho_y^\text{in}\otimes\id_z^l\otimes \ket{0\dots 0}_{l+1:L+1}^z\bra{0\dots 0}\nonumber \otimes \ket{0\dots 0}_{1:L+1}^{z+1:x-2}\bra{0\dots 0} \Bigg)U_{z}^{l+1^\dagger}\dots U_{z}^{L+1^\dagger}\\
	&\cdot  \U_{z+1}^\dagger\dots \mathcal{U}_{x-2}^\dagger\Big(\id_z^{l:L+1}\otimes \id_{z+1:x-3}^{\text{in},1:L+1}\otimes \id_{x-2}^{\text{in},1:L,\text{out}}\otimes\mathrm{tr}^{x-1}_{\text{in,}1:L+1}\Big(\Big(\rho_{x-1}^\text{in} \otimes \id^{\text{m}}_{x-2}\otimes \ket{0\dots 0}_{1:L+1}^{x-1}\bra{0\dots 0} \Big)\\
	&\cdot \mathcal{U}_{x-1}^\dagger\Big(\id_{x-2}^{\text{m}}\otimes \id_{x-1}^{\text{in},1:L,\text{out}}\otimes\tilde{\mathcal{S}}_x\big(\ket{\phi_x^\text{out}}\bra{\phi_x^\text{out}}\big)\Big)\mathcal{U}_{x-1}\Big)\Big)\mathcal{U}_{x-2}\dots \mathcal{U}_{z+1} U_{z}^{L+1}\dots U_{z}^{l+1} \Bigg)\Bigg)
\end{align*}
By defining 
\begin{align}
	\mathcal{S}_x:\B(\h_x^\text{m})&\to \B(\h_{x-1}^\text{m})\\
	X_x^\text{m}&\mapsto\mathrm{tr}^{x}_{\text{in,}1:L+1}\Big(\Big(\rho_x^\text{in}\otimes \id^{\text{m}}_{x-1}\otimes \ket{0\dots 0}_{1:L+1}^{x}\bra{0\dots 0} \Big) \mathcal{U}_x^\dagger\Big(\id_{x-1}^\text{m}\otimes \id_{x}^{\text{in,}1:L\text{,out}}\otimes X_x^\text{m}\Big)\mathcal{U}_x\Big)\nonumber
\end{align}
we get
\begin{align}
	\omega_{zx}^l=&\mathrm{tr}^z_{l+1:L+1}\Bigg(\mathrm{tr}^{x-2:z+1}_{\text{in,}1:L+1}\Bigg(\Bigg( \bigotimes_{y=x-2}^{z+1}\rho_y^\text{in}\otimes\id_z^l\otimes \ket{0\dots 0}_{l+1:L+1}^z\bra{0\dots 0}\nonumber \otimes \ket{0\dots 0}_{1:L+1}^{z+1:x-2}\bra{0\dots 0} \Bigg)U_{z}^{l+1^\dagger}\dots U_{z}^{L+1^\dagger}\\
	&\cdot  \U_{z+1}^\dagger\dots \mathcal{U}_{x-2}^\dagger\Big(\id_z^{l:L+1}\otimes \id_{z+1:x-3}^{\text{in},1:L+1}\otimes \id_{x-2}^{\text{in},1:L,\text{out}}\otimes \mathcal{S}_{x-1}\big(\tilde{\mathcal{S}}_x\big(\ket{\phi_x^\text{out}}\bra{\phi_x^\text{out}}\big)\big)\Big)\mathcal{U}_{x-2}\dots \mathcal{U}_{z+1} U_{z}^{L+1}\dots U_{z}^{l+1} \Bigg)\Bigg)\nonumber\\
	=&\dots \nonumber\\
	=&\mathrm{tr}^z_{l+1:L+1}\Big(\Big( \id_z^l\otimes \ket{0\dots 0}_{l+1:L+1}^z\bra{0\dots 0}\Big) U_{z}^{l+1^\dagger}\dots U_{z}^{L+1^\dagger} \Big(\id_z^{l:L,\text{out}}\otimes \mathcal{S}_{z+1}\big(\dots \mathcal{S}_{x-1}\big(\tilde{\mathcal{S}}_x\big(\ket{\phi_x^\text{out}}\bra{\phi_x^\text{out}}\big)\big)\dots\big)\Big)\nonumber \\
	&\cdot U_{z}^{L+1}\dots U_{z}^{l+1} \Big)\nonumber
\end{align}
Analogously to the calculation for \(\sigma_x^l\) this is just
\begin{align}
	\omega_{zx}^l=\mathcal{F}_z^{l+1}\big(\dots \mathcal{F}_z^{L+1}\big(\mathcal{S}_{z+1}\big(\dots \mathcal{S}_{x-1}\big(\tilde{\mathcal{S}}_x\big(\ket{\phi_x^\text{out}}\bra{\phi_x^\text{out}}\big)\big)\dots\big)\big)\dots\big).
\end{align}
The channels \(\mathcal{S}\) and \(\tilde{\mathcal{S}}\) also have to be written in terms of layer-to-layer channels. It is
\begin{align}
	\mathcal{S}_x(X_x^\text{m})=&\mathrm{tr}^{x}_{\text{in,}1:L+1}\Big(\Big(\rho_x^\text{in}\otimes \id^{\text{m}}_{x-1}\otimes \ket{0\dots 0}_{1:L+1}^{x}\bra{0\dots 0} \Big) \mathcal{U}_x^\dagger\Big(\id_{x-1}^\text{m}\otimes \id_{x}^{\text{in,}1:L\text{,out}}\otimes X_x^\text{m}\Big)\mathcal{U}_x\Big)\nonumber\\
	=&\mathrm{tr}^{x}_{\text{in,}1:L+1}\Big(\Big(\rho_x^\text{in}\otimes \id^{\text{m}}_{x-1}\otimes \ket{0\dots 0}_{1:L+1}^{x}\bra{0\dots 0} \Big) U_x^{1^\dagger}\dots U_x^{L+1^\dagger}\Big(\id_{x-1}^\text{m}\otimes \id_{x}^{\text{in,}1:L\text{,out}}\otimes X_x^\text{m}\Big)U_x^{L+1}\dots U_x^1\Big)\nonumber\\
	=&\mathrm{tr}^{x}_{\text{in,}1:L+1}\Big(\Big(\rho_x^\text{in}\otimes \id^{\text{m}}_{x-1}\otimes \ket{0\dots 0}_{1:L}^{x}\bra{0\dots 0} \otimes \id_x^{L+1} \Big) \Big(\id_x^{0:L} \otimes \ket{0\dots 0}_{L+1}^{x}\bra{0\dots 0} \Big)\nonumber\\
	&\cdot \Big(U_x^{1^\dagger}\dots U_x^{L^\dagger}\otimes \id_x^{L+1}\Big)\Big(\id_x^{0:L-1}\otimes U_x^{L+1^\dagger}\Big)\Big(\id_{x}^{0:L,\text{out}}\otimes X_x^\text{m}\Big) \Big(\id_x^{0:L-1}\otimes U_x^{L+1}\Big) \Big(U_x^{L}\dots U_x^1\otimes \id_x^{L+1}\Big)\Big)\nonumber\\
	=&\mathrm{tr}^{x}_{\text{in,}1:L+1}\Big(\Big(\rho_x^\text{in}\otimes \id^{\text{m}}_{x-1}\otimes \ket{0\dots 0}_{1:L}^{x}\bra{0\dots 0} \otimes \id_x^{L+1} \Big)\Big(U_x^{1^\dagger}\dots U_x^{L^\dagger}\otimes \id_x^{L+1}\Big) \nonumber\\
	&\cdot \Big(\id_x^{0:L} \otimes \ket{0\dots 0}_{L+1}^{x}\bra{0\dots 0} \Big) \Big(\id_x^{0:L-1}\otimes U_x^{L+1^\dagger}\Big)\Big(\id_{x}^{0:L,\text{out}}\otimes X_x^\text{m}\Big)\nonumber\\
	&\cdot \Big(\id_x^{0:L-1}\otimes U_x^{L+1}\Big) \Big(U_x^{L}\dots U_x^1\otimes \id_x^{L+1}\Big)\Big)\nonumber\\
	=&\mathrm{tr}^{x}_{\text{in,}1:L}\Big(\Big(\rho_x^\text{in}\otimes \id^{\text{m}}_{x-1}\otimes \ket{0\dots 0}_{1:L}^{x}\bra{0\dots 0}  \Big)U_x^{1^\dagger}\dots U_x^{L^\dagger}  \Big(\id_x^{0:L-1}\otimes\mathrm{tr}^{x}_{L+1}\Big(\Big(\id_x^{L} \otimes \ket{0\dots 0}_{L+1}^{x}\bra{0\dots 0} \Big)\nonumber\\
	&\cdot  U_x^{L+1^\dagger}\Big(\id_{x}^{L,\text{out}}\otimes X_x^\text{m}\Big) U_x^{L+1}\Big)\Big) U_x^{L}\dots U_x^1\Big)\nonumber\\
	=&\mathrm{tr}^{x}_{\text{in,}1:L}\Big(\Big(\rho_x^\text{in}\otimes \id^{\text{m}}_{x-1}\otimes \ket{0\dots 0}_{1:L}^{x}\bra{0\dots 0}  \Big)U_x^{1^\dagger}\dots U_x^{L^\dagger}  \Big(\id_x^{0:L-1}\otimes \mathcal{F}^{L+1}\Big(\id_x^\text{out}\otimes X_x^\text{m}\Big)\Big) U_x^{L}\dots U_x^1\Big)\nonumber\\
	=&\dots\nonumber\\
	=&\mathrm{tr}^{x}_{\text{in,}1}\Big(\Big(\rho_x^\text{in}\otimes \id^{\text{m}}_{x-1}\otimes \ket{0\dots 0}_{1}^{x}\bra{0\dots 0}  \Big)U_x^{1^\dagger} \Big(\id_x^{0}\otimes \mathcal{F}^{2} \Big(\dots\mathcal{F}^{L+1}\Big(\id_x^\text{out}\otimes X_x^\text{m}\Big)\Big)\Big)  U_x^1\Big)\nonumber\\
	=& \tilde{\mathcal{F}}^1_x\Big( \mathcal{F}^{2} \Big(\dots\mathcal{F}^{L+1}\Big(\id_x^\text{out}\otimes X_x^\text{m}\Big)\Big)\Big) ,
\end{align}
where 
\begin{align}
	\tilde{\mathcal{F}}^1_x:\B(\h_x^1)&\to \B(\h_{x-1}^\text{m})\\
	X_x^1&\mapsto \mathrm{tr}^{x}_{\text{in,}1}\Big(\Big(\rho_x^\text{in}\otimes \id^{\text{m}}_{x-1}\otimes \ket{0\dots 0}_{1}^{x}\bra{0\dots 0}  \Big)U_x^{1^\dagger} \Big(\id_x^{0}\otimes X_x^1\Big)  U_x^1\Big).\nonumber
\end{align}
Analogously, we get
\begin{equation}
	\tilde{\mathcal{S}}_x(X_x^\text{out})=\tilde{\mathcal{F}}^1_x\Big( \mathcal{F}^{2} \Big(\dots\mathcal{F}^{L+1}\Big(\id_x^\text{m}\otimes X_x^\text{out}\Big)\Big)\Big).
\end{equation}
In fact, \(\tilde{\mathcal{S}}_x\) is the adjoint channel to the channel
\begin{align*}
	\mathcal{T}_x:\B(\h^\text{m})&\to \B(\h^\text{out})\\
	X_{x-1}^\text{m}&\mapsto \mathrm{tr}^x_{0:L,\text{m}} \left( \U_{x}\left(  \rho_x^\text{in}\otimes X^\text{m}_{x-1}\otimes \ket{0\dots 0}_{1:L+1}\bra{0\dots 0} \right) \U_{x}\right)
\end{align*}
as it is
\begin{align*}
	\mathrm{tr}(X_x^{\text{out}^\dagger}\mathcal{T}_x(X_{x-1}^\text{m}))&=\mathrm{tr}\left(X_x^{\text{out}^\dagger}\mathrm{tr}^x_{0:L,\text{m}} \left( \U_{x}\left(  \rho_x^\text{in}\otimes X^\text{m}_{x-1}\otimes \ket{0\dots 0}_{1:L+1}\bra{0\dots 0} \right) \U_{x}^\dagger\right)\right)\\
	&=\sum_{k}\mathrm{tr}\left(X_x^{\text{out}^\dagger}  \left(\bra{k}\otimes \id^\text{out}\right) \U_{x}\left( \rho_x^\text{in}\otimes X^\text{m}_{x-1}\otimes \ket{0\dots 0}_{1:L+1}\bra{0\dots 0} \right) \U_{x}^\dagger\left(\ket{k}\otimes \id^\text{out}\right)\right)\\
	&=\sum_{k}\mathrm{tr}\left(\U_{x}^\dagger\left(\ket{k}\otimes \id^\text{out}\right)X_x^{\text{out}^\dagger}  \left(\bra{k}\otimes \id^\text{out}\right) \U_{x}\left( \rho_x^\text{in}\otimes X^\text{m}_{x-1}\otimes \ket{0\dots 0}_{1:L+1}\bra{0\dots 0} \right) \right)\\
	&=\mathrm{tr}\left(\U_{x}^\dagger\left(\id_x^{0:L,\text{m}}\otimes X_x^{\text{out}^\dagger}\right)  \U_{x}\left( \rho_x^\text{in}\otimes X^\text{m}_{x-1}\otimes \ket{0\dots 0}_{1:L+1}\bra{0\dots 0} \right) \right)\\
	&=\mathrm{tr}\left(\U_{x}^\dagger\left(\id_x^{0:L,\text{m}}\otimes X_x^{\text{out}^\dagger}\right)  \U_{x}\left(\id_x^{\text{in},1:L+1}\otimes X^\text{m}_{x-1}\right) \left( \rho_x^\text{in}\otimes \id^\text{m}_{x-1}\otimes \ket{0\dots 0}_{1:L+1}\bra{0\dots 0} \right)\right)\\
	&=\mathrm{tr}\left(\left( \rho_x^\text{in}\otimes \id^\text{m}_{x-1}\otimes \ket{0\dots 0}_{1:L+1}\bra{0\dots 0} \right)\U_{x}^\dagger\left(\id_x^{0:L,\text{m}}\otimes X_x^{\text{out}^\dagger}\right)  \U_{x}\left(\id_x^{\text{in},1:L+1}\otimes X^\text{m}_{x-1}\right) \right)\\
	&=\mathrm{tr}\left(\mathrm{tr}^x_{\text{in},1:L+1}\left(\left( \rho_x^\text{in}\otimes \id^\text{m}_{x-1}\otimes \ket{0\dots 0}_{1:L+1}\bra{0\dots 0} \right)\U_{x}^\dagger\left(\id_x^{0:L,\text{m}}\otimes X_x^{\text{out}^\dagger}\right)  \U_{x}\right) X^\text{m}_{x-1} \right)\\
	&=\mathrm{tr}\left(\left(\mathrm{tr}^x_{\text{in},1:L+1}\left(\U_{x}^\dagger\left(\id_x^{0:L,\text{m}}\otimes X_x^{\text{out}}\right)  \U_{x}\left( \rho_x^\text{in}\otimes \id^\text{m}_{x-1}\otimes \ket{0\dots 0}_{1:L+1}\bra{0\dots 0} \right)\right)\right)^\dagger X^\text{m}_{x-1} \right)\\
	&=\mathrm{tr}\left(\left(\mathrm{tr}^x_{\text{in},1:L+1}\left(\left( \rho_x^\text{in}\otimes \id^\text{m}_{x-1}\otimes \ket{0\dots 0}_{1:L+1}\bra{0\dots 0} \right)\U_{x}^\dagger\left(\id_x^{0:L,\text{m}}\otimes X_x^{\text{out}}\right)  \U_{x}\right)\right)^\dagger X^\text{m}_{x-1} \right)\\
	&=\mathrm{tr}\left(\left(\tilde{\mathcal{S}}_x(X_x^\text{out})\right)^\dagger X^\text{m}_{x-1} \right)
\end{align*}
for any orthonormal basis \(\{\ket{k}\}_k\) of \(\h^{0:L,\text{m}}\) and any \(X_x^\text{out}\in \B(\h_x^\text{out}),\ X_{x-1}^\text{m}\in \B(\h_{x-1}^\text{m})\).\\
Together, we get
\begin{align*}
	\delta C
	=& \frac{i}{N}\sum_{x=1}^{N}\sum_{l=1}^{L+1}\sum_{j=1}^{m_l}\Bigg[  \mathrm{tr} \Big(\sigma_x^l \mathrm{tr}_{l-1}^x \Big(U_{m_{l}\,x}^{l}\dots U_{j+1\,x}^{l} \Big[ K_{j\,x}^l, U_{j\,x}^l\dots U_{1\,x}^l \Big( \rho_x^{l-1}\otimes \ket{0\dots 0}_l^x \bra{0\dots 0} \Big) U_{1\, x}^{l^\dagger} \dots U_{j\, x}^{l^\dagger}\Big]\\
	&\cdot U_{j+1\, x}^{l^\dagger} \dots U_{m_{l}\, x}^{l^\dagger}\Big) \Big) +\sum_{z=1}^{x-1}\mathrm{tr}\Bigg(\omega_{zx}^l \mathrm{tr}_{l-1}^z \Big(U_{m_{l}\,z}^{l}\dots U_{j+1\,z}^{l} \Big[ K_{j\,z}^l, U_{j\,z}^l\dots U_{1\,z}^l \Big( \rho_z^{l-1}\otimes \ket{0\dots 0}_l^z \bra{0\dots 0} \Big) U_{1\, z}^{l^\dagger} \dots U_{j\, z}^{l^\dagger}\Big]\\
	&\cdot  U_{j+1\, z}^{l^\dagger} \dots U_{m_{l}\, z}^{l^\dagger}\Big)\Bigg)\Bigg].
\end{align*}
As \(\h_z^\text{m}=\h_x^\text{m}\), \(\h_z^\text{in}=\h_x^\text{in}\), \(\h_z^\text{out}=\h_x^\text{out}\) and \(U_{j\,z}^l=U_{j\,x}^l\) for all \(x,z\in \{1,...,N\}\) and we only operate on \(x\) or \(z\), respectively, we can drop the subscripts \(x\) and \(z\) on the unitaries and traces. By also pulling out the traces \(\mathrm{tr}_{l-1}\) we get
\begin{align*}
	\delta C
	=& \frac{i}{N}\sum_{x=1}^{N}\sum_{l=1}^{L+1}\sum_{j=1}^{m_l}\Bigg[  \mathrm{tr} \Big(\Big(\id^{l-1}\otimes \sigma_x^l\Big) U_{m_{l}}^{l}\dots U_{j+1}^{l} \Big[ K_{j}^l, U_{j}^l\dots U_{1}^l \Big( \rho_x^{l-1}\otimes \ket{0\dots 0}_l\bra{0\dots 0} \Big) U_{1}^{l^\dagger} \dots U_{j}^{l^\dagger}\Big]\\
	&\cdot U_{j+1}^{l^\dagger} \dots U_{m_{l}}^{l^\dagger}\Big)  +\sum_{z=1}^{x-1}\mathrm{tr}\Big(\Big(\id^{l-1}\otimes\omega_{zx}^l\Big) U_{m_{l}}^{l}\dots U_{j+1}^{l} \Big[ K_{j}^l, U_{j}^l\dots U_{1}^l \Big( \rho_z^{l-1}\otimes \ket{0\dots 0}_l \bra{0\dots 0} \Big)\\
	&\cdot  U_{1}^{l^\dagger} \dots U_{j}^{l^\dagger}\Big] U_{j+1}^{l^\dagger} \dots U_{m_{l}}^{l^\dagger}\Big)\Bigg].
\end{align*}
As it is
\begin{align*}
	\mathrm{tr}(A[B,C]D)=\mathrm{tr}(ABCD-ACBD)=\mathrm{tr}(CDAB-DACB)=\mathrm{tr}([C,DA]B)
\end{align*}
for all \(A,B,C,D\in \B(\h)\) for some Hilbert space \(\h\), it is
\begin{align*}
	\delta C
	=& \frac{i}{N}\sum_{x=1}^{N}\sum_{l=1}^{L+1}\sum_{j=1}^{m_l}\Bigg[  \mathrm{tr} \Big( \Big[ U_{j}^l\dots U_{1}^l \Big( \rho_x^{l-1}\otimes \ket{0\dots 0}_l\bra{0\dots 0} \Big) U_{1}^{l^\dagger} \dots U_{j}^{l^\dagger}, U_{j+1}^{l^\dagger} \dots U_{m_{l}}^{l^\dagger} \Big(\id^{l-1}\otimes \sigma_x^l\Big) U_{m_{l}}^{l}\dots U_{j+1}^{l}  \Big]K_{j}^l \Big)\\
	&+\sum_{z=1}^{x-1}\mathrm{tr}\Big(\Big[U_{j}^l\dots U_{1}^l \Big( \rho_z^{l-1}\otimes \ket{0\dots 0}_l \bra{0\dots 0} \Big) U_{1}^{l^\dagger} \dots U_{j}^{l^\dagger},U_{j+1}^{l^\dagger} \dots U_{m_{l}}^{l^\dagger} \Big(\id^{l-1}\otimes\omega_{zx}^l\Big) U_{m_{l}}^{l}\dots U_{j+1}^{l} \Big]K_{j}^l\Big)\Bigg]\\
	=&\frac{i}{N}\sum_{x=1}^{N}\sum_{l=1}^{L+1}\sum_{j=1}^{m_l} \mathrm{tr}\Big(M_{j\,x}^lK_j^l\Big)
\end{align*}
with
\begin{align*}
	M_{j\,x}^l=&  \Big[ U_{j}^l\dots U_{1}^l \Big( \rho_x^{l-1}\otimes \ket{0\dots 0}_l\bra{0\dots 0} \Big) U_{1}^{l^\dagger} \dots U_{j}^{l^\dagger}, U_{j+1}^{l^\dagger} \dots U_{m_{l}}^{l^\dagger} \Big(\id^{l-1}\otimes \sigma_x^l\Big) U_{m_{l}}^{l}\dots U_{j+1}^{l}  \Big]\\
	&+\sum_{z=1}^{x-1}\Big[U_{j}^l\dots U_{1}^l \Big( \rho_z^{l-1}\otimes \ket{0\dots 0}_l \bra{0\dots 0} \Big) U_{1}^{l^\dagger} \dots U_{j}^{l^\dagger},U_{j+1}^{l^\dagger} \dots U_{m_{l}}^{l^\dagger} \Big(\id^{l-1}\otimes\omega_{zx}^l\Big) U_{m_{l}}^{l}\dots U_{j+1}^{l} \Big]
\end{align*}
and
\begin{align*}
	\rho_x^{l-1}&=\mathcal{E}^{l-1}\left(\dots \mathcal{E}^1\left( \rho_x^\text{in}\otimes \rho_{x-1}^\text{m} \right)\dots\right),\\
	\sigma_x^l&= \mathcal{F}^{l+1}\left(\dots\mathcal{F}^{L+1}\left(\id_{x}^{\text{m}} \otimes \ket{\phi_x^\text{out}}\bra{\phi_x^\text{out}}\right)\dots\right),\\
	\omega_{zx}^l&=\mathcal{F}_z^{l+1}\big(\dots \mathcal{F}_z^{L+1}\big(\mathcal{S}_{z+1}\big(\dots \mathcal{S}_{x-1}\big(\tilde{\mathcal{S}}_x\big(\ket{\phi_x^\text{out}}\bra{\phi_x^\text{out}}\big)\big)\dots\big)\big)\dots\big).
\end{align*}
In \cite{beerTrainingDeepQuantum2020} it was shown that the matrices \(K_j^l\) that maximise a derivative of the cost function in the form 
\begin{align*}
	\delta C=\frac{i}{N}\sum_{x=1}^{N}\sum_{l=1}^{L+1}\sum_{j=1}^{m_l} \mathrm{tr}\Big(M_{j\,x}^lK_j^l\Big)
\end{align*}
are the matrices 
\begin{equation}
	K_j^l=\frac{i2^{m_{l-1}}\eta}{N}\sum_{x=1}^N \mathrm{tr}_\text{rest}\Big(M_{j\,x}^l\Big),
\end{equation}
if one demands, that the sum over the matrix elements of \(K_j^l\) should be smaller than some constant. Thereby, \(\mathrm{tr}_\text{rest}\) denotes the trace over all qubits not affected by \(U_j^l\) and \(\eta\) the learning rate. Altogether, this leads to the algorithm in Box~1:\\[1ex] \vspace{20pt}
{\centering
	\fbox{\label{box1}
		\begin{minipage}{.485\textwidth}
			\scriptsize
			\flushleft	
			{\normalsize \textbf{Box 1: Training algorithm local cost with pure output for \(M=1\)}\\\vspace{5pt}}
			\textbf{1. Initialize:}\\
			Choose the perceptron unitaries \(U_j^l\) randomly.\\
			\hspace{2pt}\\
			\textbf{2. Feedforward:} 
			For \(x=1,...,N\) do the following:\\
			\textbf{2.1.} Set
			\(\rho_x^0=\rho_x^\text{in}\otimes\rho_{x-1}^\text{m}\) where \(\rho^\text{m}_0\) is given.\\
			\textbf{2.2.} For \(l=1,...,L+1\) set
			\(\rho_x^l=\mathrm{tr}_{l-1}\left( U^l(\rho_x^{l-1}\otimes \ket{0\dots 0}_{l}\bra{0\dots 0}) U^{l^\dagger} \right).\)\\
			\textbf{2.3.} Set
			\(\rho_x^\text{m}=\mathrm{tr}_\text{out}^x(\rho_x^{L+1})\)
			and 
			\(\rho_x^\text{out}=\mathrm{tr}_\text{m}^x(\rho_x^{L+1}).\)\\
			\hspace{2pt}\\
			\textbf{3. Feedbackward:} For \(x=N,...,1\) do the following:\\
			\textbf{3.1.} Set \(\sigma_x^{L+1}=\id ^\text{m}_x\otimes \ket{\phi_x^{\text{out}}}\bra{\phi_x^{\text{out}}}\).\\
			\textbf{3.2.} For \(l=L,...,1\) set
			\(\sigma_x^l=\mathrm{tr}_{l+1}\left( (\id ^l\otimes \ket{0\dots 0}_{l+1}\bra{0\dots 0}) U^{l+1^\dagger} (\id ^l\otimes \sigma_x^{l+1})  U^{l+1} \right)\).\\
			\textbf{3.3.} Set 
			\(\omega_{xx}^\text{m}=\mathrm{tr}_{1,\text{in}}\left( (\rho_x^\text{in}\otimes \id _{x-1}^m\otimes \ket{0\dots 0}_{1}\bra{0\dots 0}) U^{1^\dagger} (\id _x^0\otimes \sigma_x^{1})  U^{1} \right)\).\\
			\textbf{3.4.} For \(z=x-1,...,1\) do the following steps:\\
			\textbf{3.4.1.} Set \(\omega_{zx}^{L+1}=\omega_{z+1\,x}^\text{m}\otimes \id ^\text{out}\).\\
			\textbf{3.4.2.} For \(l=L,...,1\) set
			\(\omega_{zx}^l=\mathrm{tr}_{l+1}\left( (\id ^l\otimes \ket{0\dots 0}_{l+1}\bra{0\dots 0}) U^{l+1^\dagger}(\id ^l\otimes \omega_{zx}^{l+1})   U^{l+1} \right).\)\\
			\textbf{3.4.3.} Set 
			\(\omega_{zx}^\text{m}=\mathrm{tr}_{1,\text{in}}\left( (\rho_x^\text{in}\otimes \id _{x-1}^m\otimes \ket{0\dots 0}_{1}\bra{0\dots 0}) U^{1^\dagger}  (\id _x^0\otimes \omega_{zx}^{1})  U^{1} \right).\)\\
			\textbf{4. Update the network:}  For \(l=1,...,L+1\) and \(j=1,...,m_l\) do the following: 
			
		\end{minipage}
		\hfill
		\begin{minipage}{.485\textwidth}
			\scriptsize
			\flushleft	
			%\phantom{\textbf{Box 1: Training algorithm}\\}
			
			\textbf{4.1.} For \(x=1,...,N\) set
			\begin{align*}
				M_{j\,x}^l=& \Bigg[\prod_{k=j}^{1} U_k^l\left( \rho_x^{l-1}\otimes\ket{0\dots 0}_l\bra{0\dots 0}\right) \prod_{k=1}^{j} U_k^{l^\dagger},\\
				&\ \prod_{k=j+1}^{m_l} U_k^{l^\dagger} \left(\id ^{l-1}\otimes \sigma_x^l\right)  \prod_{k=m_l}^{j+1} U_k^l \Bigg]\\
				&+ \sum_{z=1}^{x-1} \Bigg[ \prod_{k=j}^{1} U_k^l \left( \rho_z^{l-1}\otimes\ket{0\dots 0}_l\bra{0\dots 0}\right)  \prod_{k=1}^{j} U_k^{l^\dagger} ,\\
				&\ \prod_{k=j+1}^{m_l} U_k^{l^\dagger} \left(\id ^{l-1}\otimes \omega_{zx}^l\right) \prod_{k=m_l}^{j+1} U_k^l \Bigg].
			\end{align*}
			\textbf{4.2.} Set
			\[K_j^l=\frac{i2^{m_{l-1}} \eta}{N}\sum_{x=1}^N \mathrm{tr}_\text{rest}(M_{j\,x}^l),\]
			where \(\mathrm{tr}_\text{rest}\) denotes that the trace is taken over all systems that are unaffected by \(U_j^l\).\\
			\textbf{4.3.} Update each unitary $U_j^l$ according to $U_j^l\rightarrow e^{i\epsilon K_j^l} U_j^l$.\\
			\hspace{2pt}\\
			\textbf{5. Repeat:} Repeat steps 2. to 4. until the cost function reaches its maximum.
		\end{minipage}
	}
}\vspace{10pt}
\section{Optimising the local cost with mixed output states classically}
\label{localcostmixedstates}
If we want to learn from mixed output states, i.e. a training set of the form
\[S=\left(\rho_0^\text{m},\left(\rho_1^\text{in},\sigma_1^\text{out}\right),\dots ,  \left(\rho_N^\text{in},\sigma_N^\text{out}\right)  \right)\]
the first idea would be to use the fidelity for mixed states given by
\[F(\rho,\sigma )=\left(\mathrm{tr}\sqrt{\sqrt{\rho}\sigma \sqrt{\rho}}\right)^2\]
in the optimisation process, but the algorithm is not as easy to get with this, because it is not easy to determine the derivative of this anymore. But we can instead use the Hilbert-Schmidt norm for the cost function such that we get the cost function
\begin{equation}
	C=C_\text{local,mixed}=\frac{1}{N}\sum_{x=1}^N \Vert \rho_x^\text{out}- \sigma_x^\text{out} \Vert _2 ^2= \frac{1}{N} \sum_{x=1}^N \mathrm{tr}\left(\rho_x^\text{out}- \sigma_x^\text{out}\right)^2.
\end{equation}
Because of the cyclic rule of trace, this is just
\[ 	C= \frac{1}{N} \sum_{x=1}^N \mathrm{tr}\left(\rho_x^{\text{out}^2}+ \sigma_x^{\text{out}^2} - 2 \sigma_x^\text{out}\rho_x^\text{out}\right). \]
We now want to get a similar algorithm to the one in Box~1. As before, we start the training by choosing the perceptron unitaries \(U_j^l\) randomly at first and updating the perceptron unitaries in each step according to
\[U_j^l\mapsto U_j^{l\prime}=e^{i\epsilon K_j^l}U_j^l,\]
where the \(K_j^l\) are chosen in a way so that they minimise the derivative of the cost function, as opposed to the case with pure output states as the norm distance has to be minimised.\\
Up to first order in \(\epsilon\) it is because of the cyclic rule of trace
\[ \Delta	C= \frac{1}{N} \sum_{x=1}^N 2\mathrm{tr}\left(\rho_x^{\text{out}}\Delta \rho_x^\text{out}\right) - 2\mathrm{tr}\left( \sigma_x^\text{out}\Delta\rho_x^\text{out}\right) +\mathcal{O}(\epsilon ^2) =\frac{2}{N} \sum_{x=1}^N \mathrm{tr}\left(\left(\rho_x^{\text{out}}- \sigma_x^\text{out}\right)\Delta\rho_x^\text{out}\right)+\mathcal{O}(\epsilon ^2). \]
Like before, it is
\[\rho_x^\text{out}=\mathrm{tr}_{\text{in,}1:L\text{,m}}^x(\mathrm{tr}_{\text{m}}^{x-1}(U^{L+1}_{m_{L+1}} \dots U^{l}_{j} \dots U^{1}_{1} (\rho_x^{\text{in}}\otimes \rho_{x-1}^{\text{m}} \otimes \ket{0\dots 0}_{ {1:L+1}}^x\bra{0\dots 0}) U^{1^\dagger}_{1} \dots U^{l^\dagger}_{j} \dots U^{L+1^\dagger}_{m_{L+1}}))\]
and hence
\begin{align*}
	\Delta  \rho_x^\text{out} 
	=&  i\epsilon\sum_{l=1}^{L+1}\sum_{j=1}^{m_l} \mathrm{tr}_{\text{in,}1:L\text{,m}}^x\Big(\mathrm{tr}_{\text{m}}^{x-1}\Big( U_{m_{L+1}\,x}^{L+1}\dots U_{j+1\,x}^{l}\Big[ K_{j\,x}^l, U_{j\,x}^l\dots U_{1\,x}^1 \big( \rho_x^\text{in}\otimes \rho_{x-1}^\text{m} \\
	& \otimes \ket{0\dots 0}_\text{1:L+1}^x \bra{0\dots 0}\big) U_{1\, x}^{1^\dagger}\dots U_{j\, x}^{l^\dagger}\Big] U_{j+1\, x}^{l^\dagger} \dots U_{m_{L+1}\, x}^{L+1^\dagger} \Big) \\
	&+\mathcal{U}_x(\rho_x^{\text{in}}\otimes \Delta \rho_{x-1}^{\text{m}} \otimes \ket{0\dots 0}_{ 1:L+1}^x\bra{0\dots 0})\mathcal{U}_x^\dagger +\mathcal{O}(\epsilon^2)\Big)\Big) .
\end{align*}
We already know, that
\begin{align*}
	\Delta \rho_x^m=&i\epsilon \sum_{z=1}^{x}\sum_{l=1}^{L+1}\sum_{j=1}^{m_l} \mathrm{tr}_{{\text{in,}1:L\text{,out}}}^{x:z}\Bigg(\mathrm{tr}_{\text{m}}^{x-1:z-1} \Bigg( \mathcal{U}_x\dots \mathcal{U}_{z+1} U_{m_{L+1}\, z}^{L+1} \dots U_{j+1\, z}^{l}\Bigg[ K_{j\, z}^l, U_{j\, z}^{l} \dots U_{1\,z}^1   \left( \bigotimes_{y=x}^z\rho_y^{\text{in}} \right.\\
	&\otimes \rho_{z-1}^\text{m}  \otimes \ket{0\dots 0}_{ {1:L+1}}^{z:x}\bra{0\dots 0}\Bigg) U_{1\, z}^{1^\dagger }\dots U_{j\, z}^{l^\dagger }\Bigg]U_{j+1\, z}^{l^\dagger }\dots U_{m_{L+1}\, z}^{L+1^\dagger } \U_{z+1}^\dagger\dots \U_{x}^\dagger\Bigg)\Bigg)+\mathcal{O}(\epsilon^2),
\end{align*}
hence 
\begin{align*}
	\Delta  \rho_x^\text{out} 
	=&  i\epsilon\sum_{l=1}^{L+1}\sum_{j=1}^{m_l} \Bigg[ \mathrm{tr}_{\text{in,}1:L\text{,m}}^x\Big(\mathrm{tr}_{\text{m}}^{x-1}\Big( U_{m_{L+1}\,x}^{L+1}\dots U_{j+1\,x}^{l}\left[ K_{j\,x}^l, U_{j\,x}^l\dots U_{1\,x}^1 \left( \rho_x^\text{in}\otimes \rho_{x-1}^\text{m} \right. \right.\\
	& \left. \otimes \ket{0\dots 0}_\text{1:L+1}^x \bra{0\dots 0}\right) \left. \left. U_{1\, x}^{1^\dagger}\dots U_{j\, x}^{l^\dagger}\right] U_{j+1\, x}^{l^\dagger} \dots U_{m_{L+1}\, x}^{L+1^\dagger} \right) \Big)\Big) + \sum_{z=1}^x \mathrm{tr}_{\text{in,}1:L\text{,m}}^x\Bigg(\mathrm{tr}_{\text{m}}^{x-1}\Bigg( \mathcal{U}_x\\
	&\cdot\Bigg(\rho_x^{\text{in}}\otimes \mathrm{tr}_{{\text{in,}1:L\text{,out}}}^{x-1:z}\Bigg(\mathrm{tr}_{\text{m}}^{x-2:z-1} \Bigg( \mathcal{U}_{x-1}\dots \mathcal{U}_{z+1} U_{m_{L+1}\, z}^{L+1} \dots U_{j+1\, z}^{l}\Bigg[ K_{j\, z}^l, U_{j\, z}^{l} \dots U_{1\,z}^1  \\
	&\cdot   \left( \bigotimes_{y=x-1}^z\rho_y^{\text{in}} \otimes \rho_{z-1}^\text{m}  \otimes \ket{0\dots 0}_{ {1:L+1}}^{z:x-1}\bra{0\dots 0}\right) U_{1\, z}^{1^\dagger }\dots U_{j\, z}^{l^\dagger }\Bigg]U_{j+1\, z}^{l^\dagger }\dots U_{m_{L+1}\, z}^{L+1^\dagger } \U_{z+1}^\dagger\dots \U_{x-1}^\dagger\Bigg)\Bigg) \\ 
	& \otimes \ket{0\dots 0}_{ 1:L+1}^{x}\bra{0\dots 0}\Bigg)\mathcal{U}_x^\dagger \Bigg) \Bigg) \Bigg] +\mathcal{O}(\epsilon^2) .\\
	=&  i\epsilon\sum_{l=1}^{L+1}\sum_{j=1}^{m_l} \Bigg[ \mathrm{tr}_{\text{in,}1:L\text{,m}}^{x}\Big(\mathrm{tr}_{\text{m}}^{x-1}\Big( U_{m_{L+1}\,x}^{L+1}\dots U_{j+1\,x}^{l}\left[ K_{j\,x}^l, U_{j\,x}^l\dots U_{1\,x}^1 \left( \rho_x^\text{in}\otimes \rho_{x-1}^\text{m} \right. \right.\\
	& \left. \otimes \ket{0\dots 0}_\text{1:L+1}^x \bra{0\dots 0}\right) \left. \left. U_{1\, x}^{1^\dagger}\dots U_{j\, x}^{l^\dagger}\right] U_{j+1\, x}^{l^\dagger} \dots U_{m_{L+1}\, x}^{L+1^\dagger} \right) \Big)\Big) + \sum_{z=1}^x \mathrm{tr}_{\text{in,}1:L\text{,m}}^{x:z}\Bigg(\mathrm{tr}_{\text{m}}^{x-1:z-1} \\
	&\cdot \Bigg( \mathcal{U}_{x}\dots \mathcal{U}_{z+1} U_{m_{L+1}\, z}^{L+1} \dots U_{j+1\, z}^{l}\Bigg[ K_{j\, z}^l, U_{j\, z}^{l} \dots U_{1\,z}^1 \left( \bigotimes_{y=x}^z\rho_y^{\text{in}} \otimes \rho_{z-1}^\text{m}  \otimes \ket{0\dots 0}_{ {1:L+1}}^{z:x}\bra{0\dots 0}\right) \\
	&\cdot    U_{1\, z}^{1^\dagger }\dots U_{j\, z}^{l^\dagger }\Bigg]U_{j+1\, z}^{l^\dagger }\dots U_{m_{L+1}\, z}^{L+1^\dagger } \U_{z+1}^\dagger\dots \U_{x}^\dagger\Bigg)\Bigg)   \Bigg) \Bigg] +\mathcal{O}(\epsilon^2) .
\end{align*}
Therefore, we get up to first order
\begin{align*}
	\delta C
	=& \frac{i}{N}\sum_{x=1}^{N}\sum_{l=1}^{L+1}\sum_{j=1}^{m_l}\Bigg[  \mathrm{tr} \left( \left( \id_x^{{\text{in,}1:L\text{,m}}} \otimes \id_{x-1}^\text{m} \otimes \left(\rho_x^{\text{out}}- \sigma_x^\text{out}\right)\right)U_{m_{L+1}\,x}^{L+1}\dots U_{j+1\,x}^{l}\left[ K_{j\,x}^l,\right.\right.\\
	&\left. \left. U_{j\,x}^l\dots U_{1\,x}^1 \left( \rho_x^\text{in}\otimes \rho_{x-1}^\text{m} \otimes \ket{0\dots 0}_\text{1:L+1}^x \bra{0\dots 0}\right)U_{1\, x}^{1^\dagger}\dots U_{j\, x}^{l^\dagger}\right] U_{j+1\, x}^{l^\dagger} \dots U_{m_{L+1}\, x}^{L+1^\dagger} \right)\\
	&+\sum_{z=1}^{x-1}\mathrm{tr}\Bigg(\left(\id_{x:z}^{\text{in,}1:L\text{,m}}\otimes \id_{z-1}^\text{m}\otimes \id_{x-1:z}^\text{out}\otimes\left(\rho_x^{\text{out}}- \sigma_x^\text{out}\right)\right)\mathcal{U}_x\dots \mathcal{U}_{z+1}U_{m_{L+1}\, z}^{L+1}\dots U_{j+1\, z}^{l} \nonumber\\
	&\cdot  \Bigg[ K_{j\, z}^l, U_{j\, z}^{l} \dots U_{1\,z}^1   \left( \bigotimes_{y=x}^z\rho_y^{\text{in}} \otimes \rho_{z-1}^\text{m}  \otimes \ket{0\dots 0}_{ 1:L+1}^{z:x}\bra{0\dots 0}\right) U_{1\, z}^{1^\dagger }\dots U_{j\, z}^{l^\dagger }\Bigg]\\
	&\cdot U_{j+1\, z}^{l^\dagger }\dots U_{m_{L+1}\, z}^{L+1^\dagger } \U_{z+1}^\dagger\dots \mathcal{U}_x^\dagger \Bigg) \Bigg].
\end{align*}
We can see, that this has the same form as before with pure states except that we have \(\rho_x^{\text{out}}- \sigma_x^\text{out}\) instead of \(\ket{\phi_x^\text{out}}\bra{\phi_x^\text{out}}\). The other difference is, that we have to minimise and not maximise the cost function, hence with
\begin{align*}
	\chi_x^\text{out}=&\sigma_x^\text{out}-\rho_x^\text{out}\\
	\rho_x^{l-1}=&\mathcal{E}^{l-1}\left(\dots \mathcal{E}^1\left( \rho_x^\text{in}\otimes \rho_{x-1}^\text{m} \right)\dots\right)\\
	\chi_x^l=& \mathcal{F}^{l+1}\left(\dots\mathcal{F}^{L+1}\left(\id_{x}^{\text{m}} \otimes \chi_x^\text{out}\right)\dots\right)\\
	\omega_{zx}^l=&\mathcal{F}_z^{l+1}\big(\dots \mathcal{F}_z^{L+1}\big(\mathcal{S}_{z+1}\big(\dots \mathcal{S}_{x-1}\big(\tilde{\mathcal{S}}_x\big(\chi_x^\text{out}\big)\big)\dots\big)\big)\dots\big)\\
	M_{j\,x}^l=&  \Big[ U_{j}^l\dots U_{1}^l \Big( \rho_x^{l-1}\otimes \ket{0\dots 0}_l\bra{0\dots 0} \Big) U_{1}^{l^\dagger} \dots U_{j}^{l^\dagger}, U_{j+1}^{l^\dagger} \dots U_{m_{l}}^{l^\dagger} \Big(\id^{l-1}\otimes \chi_x^l\Big) U_{m_{l}}^{l}\dots U_{j+1}^{l}  \Big]\\
	&+\sum_{z=1}^{x-1}\Big[U_{j}^l\dots U_{1}^l \Big( \rho_z^{l-1}\otimes \ket{0\dots 0}_l \bra{0\dots 0} \Big) U_{1}^{l^\dagger} \dots U_{j}^{l^\dagger},U_{j+1}^{l^\dagger} \dots U_{m_{l}}^{l^\dagger} \Big(\id^{l-1}\otimes\omega_{zx}^l\Big) U_{m_{l}}^{l}\dots U_{j+1}^{l} \Big]
\end{align*}
we get
\begin{align*}
	\delta C
	=&-\frac{2i}{N}\sum_{x=1}^{N}\sum_{l=1}^{L+1}\sum_{j=1}^{m_l} \mathrm{tr}\Big(M_{j\,x}^lK_j^l\Big).
\end{align*}
The matrices \(K_j^l\) that minimize the derivative of the cost function are then given by
\begin{equation}
	K_j^l=\frac{i2^{m_{l-1}+1}\eta}{N}\sum_{x=1}^N \mathrm{tr}_\text{rest}\Big(M_{j\,x}^l\Big).
\end{equation}
Therefore, the optimisation algorithm is the following one.\\[1ex] \vspace{10pt}
{\centering
	\fbox{\label{box2}
		\begin{minipage}{.485\textwidth}
			\scriptsize
			\flushleft	
			{\normalsize \textbf{Box 2: Training algorithm local cost with mixed output for \(M= 1\)}\\\vspace{5pt}}
			\textbf{1. Initialize:}\\
			Choose the perceptron unitaries \(U_j^l\) randomly.\\
			\hspace{2pt}\\
			\textbf{2. Feedforward:} 
			For \(x=1,...,N\) do the following:\\
			\textbf{2.1.} Set
			\(\rho_x^0=\rho_x^\text{in}\otimes\rho_{x-1}^\text{m}\) where \(\rho^\text{m}_0\) is given.\\
			\textbf{2.2.} For \(l=1,...,L+1\) set
			\(\rho_x^l=\mathrm{tr}_{l-1}\left( U^l(\rho_x^{l-1}\otimes \ket{0\dots 0}_{l}\bra{0\dots 0}) U^{l^\dagger} \right).\)\\
			\textbf{2.3.} Set
			\(\rho_x^\text{m}=\mathrm{tr}_\text{out}^x(\rho_x^{L+1})\)
			and 
			\(\rho_x^\text{out}=\mathrm{tr}_\text{m}^x(\rho_x^{L+1}).\)\\
			\hspace{2pt}\\
			\textbf{3. Feedbackward:} For \(x=N,...,1\) do the following:\\
			\textbf{3.1.} Set \(\chi_x^{L+1}=\id ^\text{m}_x\otimes  \left(\sigma_x^\text{out}- \rho_x^{\text{out}}\right)\).\\
			\textbf{3.2.} For \(l=L,...,1\) set
			\(\chi_x^l=\mathrm{tr}_{l+1}\left( (\id ^l\otimes \ket{0\dots 0}_{l+1}\bra{0\dots 0}) U^{l+1^\dagger} (\id ^l\otimes \chi_x^{l+1})  U^{l+1} \right)\).\\
			\textbf{3.3.} Set 
			\(\omega_{xx}^\text{m}=\mathrm{tr}_{1,\text{in}}\left( (\rho_x^\text{in}\otimes \id _{x-1}^m\otimes \ket{0\dots 0}_{1}\bra{0\dots 0}) U^{1^\dagger} (\id _x^0\otimes \chi_x^{1})  U^{1} \right)\).\\
			\textbf{3.4.} For \(z=x-1,...,1\) do the following steps:\\
			\textbf{3.4.1.} Set \(\omega_{zx}^{L+1}=\omega_{z+1\,x}^\text{m}\otimes \id ^\text{out}\).\\
			\textbf{3.4.2.} For \(l=L,...,1\) set
			\(\omega_{zx}^l=\mathrm{tr}_{l+1}\left( (\id ^l\otimes \ket{0\dots 0}_{l+1}\bra{0\dots 0}) U^{l+1^\dagger}(\id ^l\otimes \omega_{zx}^{l+1})   U^{l+1} \right).\)\\
			\textbf{3.4.3.} Set 
			\(\omega_{zx}^\text{m}=\mathrm{tr}_{1,\text{in}}\left( (\rho_x^\text{in}\otimes \id _{x-1}^m\otimes \ket{0\dots 0}_{1}\bra{0\dots 0}) U^{1^\dagger}  (\id _x^0\otimes \omega_{zx}^{1})  U^{1} \right).\)\\
			\textbf{4. Update the network:}  For \(l=1,...,L+1\) and \(j=1,...,m_l\) do the following: 
			
		\end{minipage}
		\hfill
		\begin{minipage}{.485\textwidth}
			\scriptsize
			\flushleft	
			%\phantom{\textbf{Box 1: Training algorithm}\\}
			
			\textbf{4.1.} For \(x=1,...,N\) set
			\begin{align*}
				M_{j\,x}^l=& \Bigg[\prod_{k=j}^{1} U_k^l\left( \rho_x^{l-1}\otimes\ket{0\dots 0}_l\bra{0\dots 0}\right) \prod_{k=1}^{j} U_k^{l^\dagger},\\
				&\ \prod_{k=j+1}^{m_l} U_k^{l^\dagger} \left(\id ^{l-1}\otimes \chi_x^l\right)  \prod_{k=m_l}^{j+1} U_k^l \Bigg]\\
				&+ \sum_{z=1}^{x-1} \Bigg[ \prod_{k=j}^{1} U_k^l \left( \rho_z^{l-1}\otimes\ket{0\dots 0}_l\bra{0\dots 0}\right)  \prod_{k=1}^{j} U_k^{l^\dagger} ,\\
				&\ \prod_{k=j+1}^{m_l} U_k^{l^\dagger} \left(\id ^{l-1}\otimes \omega_{zx}^l\right) \prod_{k=m_l}^{j+1} U_k^l \Bigg].
			\end{align*}
			\textbf{4.2.} Set
			\[K_j^l=\frac{i2^{m_{l-1}} \eta}{N}\sum_{x=1}^N \mathrm{tr}_\text{rest}(M_{j\,x}^l),\]
			where \(\mathrm{tr}_\text{rest}\) denotes that the trace is taken over all systems that are unaffected by \(U_j^l\).\\
			\textbf{4.3.} Update each unitary $U_j^l$ according to $U_j^l\rightarrow e^{i\epsilon K_j^l} U_j^l$.\\
			\hspace{2pt}\\
			\textbf{5. Repeat:} Repeat steps 2. to 4. until the cost function reaches its maximum.
		\end{minipage}
	}
}\vspace{10pt}
Note that the \(\chi_x^l\) and \(\omega_{zx}^l\) are not normed, so no states, anymore.
\section{Optimising the Global Cost with Pure Output classically}
\label{globalcostpurestates}
The global cost for pure output states is given by  
\begin{align*}
	C=C_\text{global,pure}&=\left( \bra{\phi_1^\text{out}}\otimes \dots\otimes \bra{\phi_N^\text{out}} \right) \mathrm{tr}_\text{m}^N \left(\rho^\text{OUT}\right)  \left( \ket{\phi_1^\text{out}}\otimes \dots\otimes \ket{\phi_N^\text{out}} \right)\\
	&=\mathrm{tr}\left( \left( \ket{\phi_1^\text{out}}\bra{\phi_1^\text{out}}\otimes \dots \otimes \ket{\phi_N^\text{out}}\bra{\phi_N^\text{out}} \otimes \id_N^\text{m}   \right) \rho^\text{OUT} \right)\\
	&=\mathrm{tr}\left( \left( \ket{\phi_1^\text{out}}\bra{\phi_1^\text{out}}\otimes \dots \otimes \ket{\phi_N^\text{out}}\bra{\phi_N^\text{out}} \otimes \id_N^\text{m}   \right) \left(\N_{\U_N}\circ \dots \circ \N_{\U_1}\right) \left(\rho_0^\text{m}\otimes \rho_1^\text{in}\otimes \dots \otimes \rho_N^\text{in}\right) \right).
\end{align*}
With \(\sigma_i^\text{out}= \ket{\phi_i^\text{out}}\bra{\phi_i^\text{out}} \),
\[ \mathcal{N}_{\U}\left( \rho^{0} \right) =\mathrm{tr}^{0:L}(\mathcal{U}(\rho^{0}\otimes \ket{0\dots 0}_{1:L+1}\bra{0\dots 0})\mathcal{U}^\dagger) \]
and hence
\begin{align*}
	\left(\N_{\U_N}\circ \dots \circ \N_{\U_1}\right) \left(\rho_0^\text{m}\otimes \rho_1^\text{in}\otimes \dots \otimes \rho_N^\text{in}\right) =\mathrm{tr}^{0:L}_{1:N}(\mathcal{U}_N\dots \U_1( \rho_0^\text{m}\otimes \rho_1^\text{in}\otimes \dots \otimes \rho_N^\text{in} \otimes \ket{0\dots 0}_{1:L+1}^{1:x}\bra{0\dots 0})\mathcal{U}_1^\dagger \dots \U_N^\dagger )
\end{align*}
we get
\begin{align*}
	C=&\mathrm{tr}\left( \left( \id_{1:N}^{0:L} \otimes \sigma_1^\text{out}\otimes \dots \otimes \sigma_N^\text{out} \otimes \id_N^\text{m}   \right) \mathcal{U}_N\dots \U_1( \rho_0^\text{m}\otimes \rho_1^\text{in}\otimes \dots \otimes \rho_N^\text{in} \otimes \ket{0\dots 0}_{1:L+1}^{1:x}\bra{0\dots 0})\mathcal{U}_1^\dagger \dots \U_N^\dagger \right)\\
	=&\mathrm{tr}\Big( \left( \id_{1:N}^{0:L} \otimes \id_1^\text{out} \otimes \sigma_2^\text{out} \otimes \dots \otimes \sigma_N^\text{out} \otimes \id_N^\text{m}   \right)  \left( \id_{1:N}^{0:L} \otimes \sigma_1^\text{out}\otimes \id^\text{out}_{2:N}\otimes \id_N^\text{m}   \right) \left(\U_N\dots \U_2 \otimes \id_{0}^\text{m}\otimes \id_1^{\text{in},1:L,\text{out}}\right) \left(\U_1\otimes \id_{2:N}^{\text{in},1:L+1}\right)\\
	&\cdot \left( \rho_0^\text{m}\otimes \rho_1^\text{in} \ket{0\dots 0}_{1:L+1}^{1}\bra{0\dots 0}\otimes \id_{2:N}^{\text{in},1:L+1}\right)  \left( \id_0^\text{m}\otimes \id_1^{\text{in},1:L+1}\otimes \rho_2^\text{in}\otimes \dots \otimes \rho_N^\text{in} \otimes \ket{0\dots 0}_{1:L+1}^{2:x}\bra{0\dots 0}\right) \\
	&\cdot \left(\U_1^\dagger\otimes \id_{2:N}^{\text{in},1:L+1}\right)  \left(\U_2^\dagger\dots \U_N^\dagger \otimes \id_{0}^\text{m}\otimes \id_1^{\text{in},1:L,\text{out}}\right) \Big)\\
	=& \mathrm{tr}\Big( \left( \id_{1:N}^{0:L} \otimes \id_1^\text{out} \otimes \sigma_2^\text{out} \otimes \dots \otimes \sigma_N^\text{out} \otimes \id_N^\text{m}   \right)   \left(\U_N\dots \U_2 \otimes \id_{0}^\text{m}\otimes \id_1^{\text{in},1:L,\text{out}}\right)  \left( \id_{1:N}^{0:L} \otimes \sigma_1^\text{out}\otimes \id^\text{out}_{2:N}\otimes \id_N^\text{m}   \right) \left(\U_1\otimes \id_{2:N}^{\text{in},1:L+1}\right)  \\
	&\cdot \left( \rho_0^\text{m}\otimes \rho_1^\text{in} \otimes \ket{0\dots 0}_{1:L+1}^{1}\bra{0\dots 0}\otimes \id_{2:N}^{\text{in},1:L+1}\right) \left(\U_1^\dagger\otimes \id_{2:N}^{\text{in},1:L+1}\right)  \\
	&\cdot   \left( \id_0^\text{m}\otimes \id_1^{\text{in},1:L+1}\otimes \rho_2^\text{in}\otimes \dots \otimes \rho_N^\text{in} \otimes \ket{0\dots 0}_{1:L+1}^{2:x}\bra{0\dots 0}\right) \left(\U_2^\dagger\dots \U_N^\dagger \otimes \id_{0}^\text{m}\otimes \id_1^{\text{in},1:L,\text{out}}\right) \Big)\\
	=& \mathrm{tr}\Big( \left( \id_{2:N}^{0:L} \otimes \sigma_2^\text{out}\otimes \dots \otimes \sigma_N^\text{out} \otimes \id_N^\text{m}   \right)   \U_N\dots \U_2   \Big( \mathrm{tr}_0^\text{m}\Big( \mathrm{tr}_1^{\text{in}, 1:L,\text{out}}\Big(\left( \sigma_1^\text{out}\otimes \id_{0,1}^{\text{m}} \otimes \id^{\text{in},1:L}_{1}  \right) \U_1  \\
	&\cdot \left( \rho_0^\text{m}\otimes \rho_1^\text{in} \otimes\ket{0\dots 0}_{1:L+1}^{1}\bra{0\dots 0}\right)   \U_1^\dagger\Big) \Big) \otimes \id_{2:N}^{\text{in},1:L+1}  \Big) \left( \id_1^\text{m}\otimes  \rho_2^\text{in}\otimes \dots \otimes \rho_N^\text{in} \otimes \ket{0\dots 0}_{1:L+1}^{2:x}\bra{0\dots 0}\right) \U_2^\dagger\dots \U_N^\dagger  \Big).
\end{align*}
By setting \(\tau_0^\text{m}=\rho_0^\text{m}\), \(\tau_x^0=\tau_{x-1}^\text{m}\otimes \rho_x^\text{in}\),
\[\tau_x^l=\mathcal{E}^l\left(\tau_x^{l-1}\right)\]
and 
\begin{align}
	\tau_x^\text{m}&=\mathrm{tr}_{x-1}^\text{m}\left(\mathrm{tr}_x^{\text{in},1:L,\text{out}}\left( \left(\sigma_x^\text{out}\otimes \id_{x-1,x}^\text{m}\otimes \id_x^{\text{in},1:L}\right) \U_x \left(\tau_{x-1}^\text{m}\otimes \rho_x^\text{in} \otimes \ket{0...0}_x^{1:L+1}\bra{0...0}\right) \U_x^\dagger\right)\right)\\
	&=\mathrm{tr}_x^\text{out}\left(\left(\sigma_x^\text{out}\otimes \id_x^\text{m}\right)\tau_x^{L+1}\right)
\end{align}
for \(x=1,...,N\) we obtain
\begin{align}
	C=& \mathrm{tr}\Big( \left( \id_{2:N}^{0:L} \otimes \sigma_2^\text{out}\otimes \dots \otimes \sigma_N^\text{out} \otimes \id_N^\text{m}   \right)   \U_N\dots \U_2   \Big( \tau_1^\text{m} \otimes \id_{2:N}^{\text{in},1:L+1}  \Big) \nonumber \\
	&\cdot\left( \id_1^\text{m}\otimes  \rho_2^\text{in}\otimes \dots \otimes \rho_N^\text{in} \otimes \ket{0\dots 0}_{1:L+1}^{2:x}\bra{0\dots 0}\right) \U_2^\dagger\dots \U_N^\dagger  \Big) \nonumber \\
	=& \mathrm{tr}\Big( \left( \id_{2:N}^{0:L} \otimes \sigma_2^\text{out}\otimes \dots \otimes \sigma_N^\text{out} \otimes \id_N^\text{m}   \right)   \U_N\dots \U_2   \left( \tau_1^\text{m}\otimes  \rho_2^\text{in}\otimes \dots \otimes \rho_N^\text{in} \otimes \ket{0\dots 0}_{1:L+1}^{2:x}\bra{0\dots 0}\right) \U_2^\dagger\dots \U_N^\dagger  \Big) \nonumber \\
	=& \dots\nonumber\\
	=&\mathrm{tr}\left(\tau_N^\text{m}\right).
\end{align}
Note that the \(\tau_x^l\) do not have trace 1. By using the same update rule as for the local cost and the same argument we get
\begin{align*}
	\Delta \rho^\text{OUT}=& i \epsilon \sum_{z=1}^{N} \sum_{l=1}^{L+1} \sum_{j=1}^{m_l} \mathrm{tr}_{0:N-1}^{\text{m}}\Big(   \mathrm{tr}_{1:N}^{\text{in},1:L}\Big( \U_N \dots \U_{z+1} U^{L+1}_{m_{L+1}\, z}\dots U^l_{j+1\,z} \Big[ K^l_{j\,z}, U^l_{j\,z} \dots U^1_{1\,z} \U_{z-1}\dots \U_1 \\
	&\left( \rho_0^\text{m}\otimes \rho_1^\text{in}\otimes \dots \otimes\rho_N^\text{in} \otimes \ket{0\dots 0}_{1:N}^{1:L+1}\bra{0\dots 0} \right)\U_1^\dagger \dots \U_{z-1}^\dagger U^{1^\dagger}_{1\, z}\dots U^{l^\dagger}_{j\, z}    \Big] U^{l^\dagger}_{j+1\, z}\dots U^{L+1^\dagger}_{m_{L+1}\, z} \U_{z+1}^\dagger \dots \U_N^\dagger \Big) \Big).
\end{align*}
Therefore we get in the same way as before 
\begin{align*}
	\delta C=& i  \sum_{z=1}^{N} \sum_{l=1}^{L+1} \sum_{j=1}^{m_l} \mathrm{tr}_{0:N}^{\text{m}}\Big(   \mathrm{tr}_{1:N}^{\text{in},1:L,\text{out}}\Big( \U_N \dots \U_{z+1} U^{L+1}_{m_{L+1}\, z}\dots U^l_{j+1\,z} \Big[ K^l_{j\,z}, U^l_{j\,z} \dots U^1_{1\,z} \U_{z-1}\dots \U_1 \\
	&\left( \rho_0^\text{m}\otimes \rho_1^\text{in}\otimes \dots \otimes\rho_N^\text{in} \otimes \ket{0\dots 0}_{1:N}^{1:L+1}\bra{0\dots 0} \right)\U_1^\dagger \dots \U_{z-1}^\dagger U^{1^\dagger}_{1\, z}\dots U^{l^\dagger}_{j\, z}    \Big] U^{l^\dagger}_{j+1\, z}\dots U^{L+1^\dagger}_{m_{L+1}\, z} \U_{z+1}^\dagger \dots \U_N^\dagger \\
	&\left(\sigma_1^\text{out}\otimes\dots\otimes \sigma_N^\text{out} \otimes \id_{0:N}^\text{m}\otimes \id_{1:N}^{\text{in},1:L}\right)\Big) \Big).
\end{align*}
With the cyclic rule of trace it follows that
\begin{align*}
	\delta C=& i  \sum_{z=1}^{N} \sum_{l=1}^{L+1} \sum_{j=1}^{m_l} \mathrm{tr}_{0:N}^{\text{m}}\Big(   \mathrm{tr}_{1:N}^{\text{in},1:L,\text{out}}\Big( U^{l+1^\dagger}_z\dots U^{L+1^\dagger}_z\U_{z+1}^\dagger \dots \U_N^\dagger \left(\sigma_1^\text{out}\otimes\dots \otimes \sigma_N^\text{out} \otimes \id_{0:N}^\text{m}\otimes \id_{1:N}^{\text{in},1:L}\right)\\
	& \U_N \dots \U_{z+1} U^{L+1}_z\dots U^{l+1}_z U^{l}_{m_{l}\, z} \dots U^l_{j+1\,z} \Big[ K^l_{j\,z}, U^l_{j\,z} \dots U^1_{1\,z} \U_{z-1}\dots \U_1 \\
	&\left( \rho_0^\text{m}\otimes \rho_1^\text{in}\otimes \dots \otimes\rho_N^\text{in} \otimes \ket{0\dots 0}_{1:N}^{1:L+1}\bra{0\dots 0} \right)\U_1^\dagger \dots \U_{z-1}^\dagger U^{1^\dagger}_{1\, z}\dots U^{l^\dagger}_{j\, z}    \Big] U^{l^\dagger}_{j+1\, z}\dots U^{l^\dagger}_{m_{l}\, z}  \Big)\Big)\\
	=& i  \sum_{z=1}^{N} \sum_{l=1}^{L+1} \sum_{j=1}^{m_l} \mathrm{tr}_{0:N}^{\text{m}}\Big(   \mathrm{tr}_{1:N}^{\text{in},1:L,\text{out}}\Big( \Big( \sigma_1^\text{out}\otimes\dots \otimes \sigma_{z-1}^\text{out} \otimes U^{l+1^\dagger}_z\dots U^{L+1^\dagger}_z\U_{z+1}^\dagger \dots \U_N^\dagger \Big(\sigma_z^\text{out}\otimes\dots \otimes \sigma_N^\text{out} \\
	&\otimes \id_{z-1:N}^\text{m}\otimes \id_{z:N}^{\text{in},1:L}\Big) \U_N \dots \U_{z+1} U^{L+1}_z\dots U^{l+1}_z  \otimes \id_{0:z-2}^\text{m}\otimes \id_{1:z-1}^{\text{in},1:L} \Big) \Big(U^{l}_{m_{l}\, z} \dots U^l_{j+1\,z} \\
	&\Big[ K^l_{j\,z}, U^l_{j\,z} \dots U^1_{1\,z} \U_{z-1}\dots \U_1 \left( \rho_0^\text{m}\otimes \rho_1^\text{in}\otimes \dots \otimes\rho_z^\text{in} \otimes \ket{0\dots 0}_{1:z}^{1:L+1}\bra{0\dots 0} \right)\U_1^\dagger \dots \U_{z-1}^\dagger U^{1^\dagger}_{1\, z}\dots U^{l^\dagger}_{j\, z}    \Big] \\
	&U^{l^\dagger}_{j+1\, z}\dots U^{l^\dagger}_{m_{l}\, z}\otimes \rho_{z+1}^\text{in}\otimes \dots \otimes\rho_N^\text{in} \otimes \ket{0\dots 0}_{z+1:N}^{1:L+1}\bra{0\dots 0} \Big)  \Big)\Big).
\end{align*}
By writing 
\begin{align}
	A=&U^{l+1^\dagger}_z\dots U^{L+1^\dagger}_z\U_{z+1}^\dagger \dots \U_N^\dagger \Big(\sigma_z^\text{out}\otimes\dots \otimes \sigma_N^\text{out}\otimes \id_{z-1:N}^\text{m}\otimes \id_{z:N}^{\text{in},1:L}\Big) \U_N \dots \U_{z+1} U^{L+1}_z\dots U^{l+1}_z,\\
	B=&U^{l}_{m_{l}\, z} \dots U^l_{j+1\,z} \Big[ K^l_{j\,z}, U^l_{j\,z} \dots U^1_{1\,z} \U_{z-1}\dots \U_1 \left( \rho_0^\text{m}\otimes \rho_1^\text{in}\otimes \dots \otimes\rho_z^\text{in} \otimes \ket{0\dots 0}_{1:z}^{1:L+1}\bra{0\dots 0} \right)\nonumber\\
	&\U_1^\dagger \dots \U_{z-1}^\dagger U^{1^\dagger}_{1\, z}\dots U^{l^\dagger}_{j\, z}    \Big] U^{l^\dagger}_{j+1\, z}\dots U^{l^\dagger}_{m_{l}\, z}
\end{align}
in shorthand we obtain
\begin{align*}
	\delta C	=& i  \sum_{z=1}^{N} \sum_{l=1}^{L+1} \sum_{j=1}^{m_l} \mathrm{tr}_{0:N}^{\text{m}}\Big(   \mathrm{tr}_{1:N}^{\text{in},1:L,\text{out}}\Big( \Big( \sigma_1^\text{out}\otimes\dots \otimes \sigma_{z-1}^\text{out} \otimes A  \otimes \id_{0:z-2}^\text{m}\otimes \id_{1:z-1}^{\text{in},1:L} \Big)\\
	& \Big(B\otimes \rho_{z+1}^\text{in}\otimes \dots \otimes\rho_N^\text{in} \otimes \ket{0\dots 0}_{z+1:N}^{1:L+1}\bra{0\dots 0} \Big)  \Big)\Big)\\
	=& i  \sum_{z=1}^{N} \sum_{l=1}^{L+1} \sum_{j=1}^{m_l} \mathrm{tr}_{0:N}^{\text{m}}\Big(   \mathrm{tr}_{1:N}^{\text{in},1:L,\text{out}}\Big( \Big(  A  \otimes \id_{0:z-2}^\text{m}\otimes \id_{1:z-1}^{\text{in},1:L,\text{out}} \Big) \Big( \sigma_1^\text{out}\otimes\dots \otimes \sigma_{z-1}^\text{out}   \otimes \id_{0:N}^\text{m}\otimes \id_{1:N}^{\text{in},1:L} \otimes \id_{z:N}^\text{out} \Big) \\
	& \Big(B\otimes \id_{z+1:N}^{\text{in},1:L+1}\Big) \Big(\id_{0:z}^\text{m}\otimes \id_{1:z}^{\text{in},1:L,\text{out}}\otimes \rho_{z+1}^\text{in}\otimes \dots \otimes\rho_N^\text{in} \otimes \ket{0\dots 0}_{z+1:N}^{1:L+1}\bra{0\dots 0} \Big) \Big)\Big)\\
	=& i  \sum_{z=1}^{N} \sum_{l=1}^{L+1} \sum_{j=1}^{m_l} \mathrm{tr}_{0:N}^{\text{m}}\Big(   \mathrm{tr}_{1:N}^{\text{in},1:L,\text{out}}\Big( \Big(\id_{0:z}^\text{m}\otimes \id_{1:z}^{\text{in},1:L,\text{out}}\otimes \rho_{z+1}^\text{in}\otimes \dots \otimes\rho_N^\text{in} \otimes \ket{0\dots 0}_{z+1:N}^{1:L+1}\bra{0\dots 0} \Big)  \\
	& \Big(  A  \otimes \id_{0:z-2}^\text{m}\otimes \id_{1:z-1}^{\text{in},1:L,\text{out}} \Big)  \Big( \sigma_1^\text{out}\otimes\dots \otimes \sigma_{z-1}^\text{out}   \otimes \id_{0:N}^\text{m}\otimes \id_{1:N}^{\text{in},1:L} \otimes \id_{z:N}^\text{out} \Big) \Big(B\otimes \id_{z+1:N}^{\text{in},1:L+1}\Big)  \Big)\Big)\\
	=& i  \sum_{z=1}^{N} \sum_{l=1}^{L+1} \sum_{j=1}^{m_l} \mathrm{tr}_{z-1:N}^{\text{m}}\Big(   \mathrm{tr}_{z:N}^{\text{in},1:L,\text{out}}\Big( \Big(\id_{z-1,z}^\text{m}\otimes \id_{z}^{\text{in},1:L,\text{out}}\otimes \rho_{z+1}^\text{in}\otimes \dots \otimes\rho_N^\text{in} \otimes \ket{0\dots 0}_{z+1:N}^{1:L+1}\bra{0\dots 0} \Big)  A\\
	& \Big(\mathrm{tr}_{0:z-2}^\text{m}\Big(\mathrm{tr}_{1:z-1}^{\text{in},1:L,\text{out}}\Big(\Big( \sigma_1^\text{out}\otimes\dots \otimes \sigma_{z-1}^\text{out}   \otimes \id_{0:N}^\text{m}\otimes \id_{1:N}^{\text{in},1:L} \otimes \id_{z:N}^\text{out} \Big) B\Big)\Big) \otimes \id_{z+1:N}^{\text{in},1:L+1} \Big) \Big)\Big).
\end{align*}
By substituting 
\begin{equation}
	E=\mathrm{tr}_{0:z-2}^\text{m}\Big(\mathrm{tr}_{1:z-1}^{\text{in},1:L,\text{out}}\Big(\Big( \sigma_1^\text{out}\otimes\dots \otimes \sigma_{z-1}^\text{out}   \otimes \id_{0:N}^\text{m}\otimes \id_{1:N}^{\text{in},1:L} \otimes \id_{z:N}^\text{out} \Big) B\Big)\Big)
\end{equation}
we get 
\begin{align*}
	\delta C	=& i  \sum_{z=1}^{N} \sum_{l=1}^{L+1} \sum_{j=1}^{m_l} \mathrm{tr}_{z-1:N}^{\text{m}}\Big(   \mathrm{tr}_{z:N}^{\text{in},1:L,\text{out}}\Big( \Big(\id_{z-1,z}^\text{m}\otimes \id_{z}^{\text{in},1:L,\text{out}}\otimes \rho_{z+1}^\text{in}\otimes \dots \otimes\rho_N^\text{in} \otimes \ket{0\dots 0}_{z+1:N}^{1:L+1}\bra{0\dots 0} \Big)  A\\
	& \Big(E \otimes \id_{z+1:N}^{\text{in},1:L+1} \Big) \Big)\Big)\\
	=& i  \sum_{z=1}^{N} \sum_{l=1}^{L+1} \sum_{j=1}^{m_l} \mathrm{tr}_{z-1:z}^{\text{m}}\Big(   \mathrm{tr}_{z}^{\text{in},1:L,\text{out}}\Big( \mathrm{tr}_{z+1:N}^{\text{in},1:L,\text{out,m}}\Big(\Big(\id_{z-1,z}^\text{m}\otimes \id_{z}^{\text{in},1:L,\text{out}}\otimes \rho_{z+1}^\text{in}\otimes \dots \otimes\rho_N^\text{in} \\
	&\otimes \ket{0\dots 0}_{z+1:N}^{1:L+1}\bra{0\dots 0} \Big)  A\Big)E \Big)\Big).
\end{align*}
By further substituting
\begin{equation}
	D=\mathrm{tr}_{z+1:N}^{\text{in},1:L,\text{out,m}}\Big(\Big(\id_{z-1,z}^\text{m}\otimes \id_{z}^{\text{in},1:L,\text{out}}\otimes \rho_{z+1}^\text{in}\otimes \dots \otimes\rho_N^\text{in}\otimes \ket{0\dots 0}_{z+1:N}^{1:L+1}\bra{0\dots 0} \Big)  A\Big)
\end{equation}
we obtain
\begin{align*}
	\delta C	=& i  \sum_{z=1}^{N} \sum_{l=1}^{L+1} \sum_{j=1}^{m_l} \mathrm{tr}_{z-1:z}^{\text{m}}\Big(   \mathrm{tr}_{z}^{\text{in},1:L,\text{out}}\Big( DE \Big)\Big).
\end{align*}
By resubstituting \(B\) in \(E\) we get
\begin{align}
	E=&\mathrm{tr}_{0:z-2}^\text{m}\Big(\mathrm{tr}_{1:z-1}^{\text{in},1:L,\text{out}}\Big(\Big( \sigma_1^\text{out}\otimes\dots \otimes \sigma_{z-1}^\text{out}   \otimes \id_{0:N}^\text{m}\otimes \id_{1:N}^{\text{in},1:L} \otimes \id_{z:N}^\text{out} \Big) U^{l}_{m_{l}\, z} \dots U^l_{j+1\,z} \Big[ K^l_{j\,z}, U^l_{j\,z} \dots U^1_{1\,z} \nonumber\\
	& \U_{z-1}\dots \U_1 \left( \rho_0^\text{m}\otimes \rho_1^\text{in}\otimes \dots \otimes\rho_z^\text{in} \otimes \ket{0\dots 0}_{1:z}^{1:L+1}\bra{0\dots 0} \right)\U_1^\dagger \dots \U_{z-1}^\dagger U^{1^\dagger}_{1\, z}\dots U^{l^\dagger}_{j\, z}    \Big] U^{l^\dagger}_{j+1\, z}\dots U^{l^\dagger}_{m_{l}\, z}  \Big)\Big) \nonumber\\
	=&\mathrm{tr}_{1:z-2}^\text{m}\Big(\mathrm{tr}_{2:z-1}^{\text{in},1:L,\text{out}}\Big(\Big( \sigma_2^\text{out}\otimes\dots \otimes \sigma_{z-1}^\text{out}   \otimes \id_{0:N}^\text{m}\otimes \id_{1:N}^{\text{in},1:L} \otimes \id_{z:N}^\text{out} \Big) U^{l}_{m_{l}\, z} \dots U^l_{j+1\,z} \Big[ K^l_{j\,z}, U^l_{j\,z} \dots U^1_{1\,z} \U_{z-1}\dots \U_2 \nonumber\\
	&  \Big( \mathrm{tr}_{0}^\text{m} \Big( \mathrm{tr}_{1:z-1}^{\text{in},1:L,\text{out}} \Big(\left(\sigma_1^\text{out}\otimes \id_1^{\text{in},1:L}\otimes \id_{0,1}^\text{m}\right)\U_1 \left( \rho_0^\text{m}\otimes \rho_1^\text{in} \otimes \ket{0\dots 0}_{1}^{1:L+1}\bra{0\dots 0} \right)\U_1^\dagger\Big)\Big) \otimes \rho_2^\text{in}\otimes \dots \otimes\rho_z^\text{in}\nonumber \\
	&\otimes \ket{0\dots 0}_{2:z}^{1:L+1}\bra{0\dots 0} \Big)\U_2^\dagger \dots \U_{z-1}^\dagger U^{1^\dagger}_{1\, z}\dots U^{l^\dagger}_{j\, z}    \Big] U^{l^\dagger}_{j+1\, z}\dots U^{l^\dagger}_{m_{l}\, z}  \Big)\Big) \nonumber\\
	=&\mathrm{tr}_{1:z-2}^\text{m}\Big(\mathrm{tr}_{2:z-1}^{\text{in},1:L,\text{out}}\Big(\Big( \sigma_2^\text{out}\otimes\dots \otimes \sigma_{z-1}^\text{out}   \otimes \id_{0:N}^\text{m}\otimes \id_{1:N}^{\text{in},1:L} \otimes \id_{z:N}^\text{out} \Big) U^{l}_{m_{l}\, z} \dots U^l_{j+1\,z} \Big[ K^l_{j\,z}, U^l_{j\,z} \dots U^1_{1\,z} \U_{z-1}\dots \U_2 \nonumber\\
	&  \Big( \tau_1^\text{m} \otimes \rho_2^\text{in}\otimes \dots \otimes\rho_z^\text{in}\nonumber\otimes \ket{0\dots 0}_{2:z}^{1:L+1}\bra{0\dots 0} \Big)\U_2^\dagger \dots \U_{z-1}^\dagger U^{1^\dagger}_{1\, z}\dots U^{l^\dagger}_{j\, z}    \Big] U^{l^\dagger}_{j+1\, z}\dots U^{l^\dagger}_{m_{l}\, z}  \Big)\Big) \nonumber\\
	=& U^{l}_{m_{l}\, z} \dots U^l_{j+1\,z} \Big[ K^l_{j\,z}, U^l_{j\,z} \dots U^1_{1\,z} \Big( \tau_{z-1}^\text{m} \otimes\rho_z^\text{in}\otimes \ket{0\dots 0}_{z}^{1:L+1}\bra{0\dots 0} \Big) U^{1^\dagger}_{1\, z}\dots U^{l^\dagger}_{j\, z}    \Big] U^{l^\dagger}_{j+1\, z}\dots U^{l^\dagger}_{m_{l}\, z} .
\end{align}
Also by resubstituting \(A\) in \(D\) it is
\begin{align}
	D=&\mathrm{tr}_{z+1:N}^{\text{in},1:L,\text{out,m}}\Big(\Big(\id_{z-1,z}^\text{m}\otimes \id_{z}^{\text{in},1:L,\text{out}}\otimes \rho_{z+1}^\text{in}\otimes \dots \otimes\rho_N^\text{in}\otimes \ket{0\dots 0}_{z+1:N}^{1:L+1}\bra{0\dots 0} \Big)   U^{l+1^\dagger}_z\dots U^{L+1^\dagger}_z\nonumber\\
	&\U_{z+1}^\dagger \dots \U_N^\dagger \Big(\sigma_z^\text{out}\otimes\dots \otimes \sigma_N^\text{out}\otimes \id_{z-1:N}^\text{m}\otimes \id_{z:N}^{\text{in},1:L}\Big) \U_N \dots \U_{z+1} U^{L+1}_z\dots U^{l+1}_z \Big) \nonumber \\
	=&\mathrm{tr}_{z+1:N-1}^{\text{in},1:L,\text{out,m}}\Big(\Big(\id_{z-1,z}^\text{m}\otimes \id_{z}^{\text{in},1:L,\text{out}}\otimes \rho_{z+1}^\text{in}\otimes \dots \otimes\rho_{N-1}^\text{in}\otimes \ket{0\dots 0}_{z+1:N-1}^{1:L+1}\bra{0\dots 0} \Big)   U^{l+1^\dagger}_z\dots U^{L+1^\dagger}_z\nonumber\U_{z+1}^\dagger \dots \U_{N-1}^\dagger \\
	&\Big( \sigma_z^\text{out}\otimes\dots \otimes \sigma_{N-1}^\text{out}\otimes \id_{z-1:N-2}^\text{m}\otimes \id_{z:N-1}^{\text{in},1:L} \otimes \mathrm{tr}_{N}^{\text{in},1:L,\text{out,m}} \Big(\Big(\id_{N-1}^\text{m}\otimes \rho_N^\text{in} \otimes \ket{0\dots 0}_{N}^{1:L+1}\bra{0\dots 0} \Big)\U_N^\dagger \nonumber \\
	&\Big(\sigma_N^\text{out}\otimes \id_{N-1,N}^\text{m}\otimes \id_{N}^{\text{in},1:L}\Big) \U_N\Big)\Big) \U_{N-1}\dots \U_{z+1} U^{L+1}_z\dots U^{l+1}_z \Big) \nonumber .
\end{align}
Let \(\chi_N^\text{m}=\id_N^\text{m}\), \(\chi_x^{L+1}=\chi_{x}^\text{m}\otimes \sigma_x^\text{out}\),
\[\chi_x^l=\mathcal{F}^{l+1}\left(\chi_x^{l+1}\right)\]
and 
\begin{align}
	\chi_x^\text{m}=&\mathrm{tr}_{x+1}^{\text{in},1:L+1} \Big(\Big(\id_{x}^\text{m}\otimes \rho_{x+1}^\text{in} \otimes \ket{0\dots 0}_{x+1}^{1:L+1}\bra{0\dots 0} \Big)\U_{x+1}^\dagger\Big( \chi_{x+1}^\text{m}\otimes \sigma_{x+1}^\text{out}\otimes \id_{x}^\text{m}\otimes \id_{x+1}^{\text{in},1:L}\Big) \U_{x+1}\Big)\\
	=&\mathrm{tr}_{x+1}^{\text{in}} \Big(\Big(\id_x^\text{m}\otimes \rho_x^\text{in}\Big)\chi_{y+1}^0\Big)
\end{align}
for \(x=1,...,N\) which, analogously to \(\tau_x^\text{m}\), do not have trace \(1\).
Hence, we get
\begin{align}
	D=&\mathrm{tr}_{z+1:N-1}^{\text{in},1:L,\text{out,m}}\Big(\Big(\id_{z-1,z}^\text{m}\otimes \id_{z}^{\text{in},1:L,\text{out}}\otimes \rho_{z+1}^\text{in}\otimes \dots \otimes\rho_{N-1}^\text{in}\otimes \ket{0\dots 0}_{z+1:N-1}^{1:L+1}\bra{0\dots 0} \Big)   U^{l+1^\dagger}_z\dots U^{L+1^\dagger}_z\nonumber \\
	&\U_{z+1}^\dagger \dots \U_{N-1}^\dagger \Big( \sigma_z^\text{out}\otimes\dots \otimes \sigma_{N-1}^\text{out} \otimes \chi_{N-1}^\text{m} \otimes \id_{z-1:N-2}^\text{m}\otimes \id_{z:N-1}^{\text{in},1:L} \Big) \U_{N-1}\dots \U_{z+1} U^{L+1}_z\dots U^{l+1}_z \Big) \nonumber \\
	=&\dots \nonumber \\
	=&  U^{l+1^\dagger}_z\dots U^{L+1^\dagger}_z\Big(\id_{z}^{0:L}\otimes \chi_z^\text{m}\otimes\sigma_z^\text{out}\Big) U^{L+1}_z\dots U^{l+1}_z.
\end{align}
By resubstituting \(D\) and \(E\) in \(\delta C\), we obtain
\begin{align*}
	\delta C	=& i  \sum_{z=1}^{N} \sum_{l=1}^{L+1} \sum_{j=1}^{m_l} \mathrm{tr}_{z-1:z}^{\text{m}}\Big(   \mathrm{tr}_{z}^{\text{in},1:L,\text{out}}\Big(   U^{l+1^\dagger}_z\dots U^{L+1^\dagger}_z\Big(\id_{z}^{0:L}\otimes \chi_z^\text{m}\otimes\sigma_z^\text{out}\Big) U^{L+1}_z\dots U^{l+1}_z U^{l}_{m_{l}\, z} \dots U^l_{j+1\,z} \\
	&\Big[ K^l_{j\,z}, U^l_{j\,z} \dots U^1_{1\,z} \Big( \tau_{z-1}^\text{m} \otimes\rho_z^\text{in}\otimes \ket{0\dots 0}_{z}^{1:L+1}\bra{0\dots 0} \Big) U^{1^\dagger}_{1\, z}\dots U^{l^\dagger}_{j\, z}    \Big] U^{l^\dagger}_{j+1\, z}\dots U^{l^\dagger}_{m_{l}\, z} \Big)\Big).
\end{align*}
As everything happens in the \(z^\text{th}\) QNN, we can leave the index \(z\) away in the unitaries, identities and traces.
\begin{align*}
	\delta C	=& i  \sum_{z=1}^{N} \sum_{l=1}^{L+1} \sum_{j=1}^{m_l}    \mathrm{tr}^{0:L+1}\Big(   U^{l+1^\dagger}\dots U^{L+1^\dagger}\Big(\id^{0:L}\otimes \chi_z^\text{m}\otimes\sigma_z^\text{out}\Big) U^{L+1}\dots U^{l+1} U^{l}_{m_{l}} \dots U^l_{j+1} \\
	&\Big[ K^l_{j}, U^l_{j} \dots U^1_{1} \Big( \tau_{z-1}^\text{m} \otimes\rho_z^\text{in}\otimes \ket{0\dots 0}^{1:L+1}\bra{0\dots 0} \Big) U^{1^\dagger}_{1}\dots U^{l^\dagger}_{j}    \Big] U^{l^\dagger}_{j+1}\dots U^{l^\dagger}_{m_{l}} \Big)\\
	=& i  \sum_{z=1}^{N} \sum_{l=1}^{L+1} \sum_{j=1}^{m_l}    \mathrm{tr}^{0:L+1}\Big(   \Big(U^{l+1^\dagger}\dots U^{L+1^\dagger}\Big(\id^{l:L}\otimes \chi_z^\text{m}\otimes\sigma_z^\text{out}\Big) U^{L+1}\dots U^{l+1}\otimes \id^{0:l-1} \Big) \Big(U^{l}_{m_{l}} \dots U^l_{j+1} \\
	&\Big[ K^l_{j}, U^l_{j} \dots U^1_{1} \Big( \tau_{z-1}^\text{m} \otimes\rho_z^\text{in}\otimes \ket{0\dots 0}^{1:l}\bra{0\dots 0} \Big) U^{1^\dagger}_{1}\dots U^{l^\dagger}_{j}    \Big] U^{l^\dagger}_{j+1}\dots U^{l^\dagger}_{m_{l}} \otimes \ket{0\dots 0}^{l+1:L+1}\bra{0\dots 0} \Big) \Big)
\end{align*}
By writing
\begin{align}
	G=&U^{l+1^\dagger}\dots U^{L+1^\dagger}\Big(\id^{l:L}\otimes \chi_z^\text{m}\otimes\sigma_z^\text{out}\Big) U^{L+1}\dots U^{l+1},\\
	H=&U^{l}_{m_{l}} \dots U^l_{j+1} \Big[ K^l_{j}, U^l_{j} \dots U^1_{1} \Big( \tau_{z-1}^\text{m} \otimes\rho_z^\text{in}\otimes \ket{0\dots 0}^{1:l}\bra{0\dots 0} \Big) U^{1^\dagger}_{1}\dots U^{l^\dagger}_{j}    \Big] U^{l^\dagger}_{j+1}\dots U^{l^\dagger}_{m_{l}}
\end{align}
in shorthand, we obtain
\begin{align*}
	\delta C	=& i  \sum_{z=1}^{N} \sum_{l=1}^{L+1} \sum_{j=1}^{m_l}    \mathrm{tr}^{0:L+1}\Big(   \Big(G\otimes \id^{0:l-1} \Big) \Big(H \otimes \ket{0\dots 0}^{l+1:L+1}\bra{0\dots 0} \Big) \Big)\\
	=&i  \sum_{z=1}^{N} \sum_{l=1}^{L+1} \sum_{j=1}^{m_l}    \mathrm{tr}^{l:L+1}\Big(   G \Big(\mathrm{tr}^{0:l-1}\left(H\right) \otimes \ket{0\dots 0}^{l+1:L+1}\bra{0\dots 0} \Big) \Big)\\
	=&i  \sum_{z=1}^{N} \sum_{l=1}^{L+1} \sum_{j=1}^{m_l}    \mathrm{tr}^{l:L+1}\Big(   G \Big(\mathrm{tr}^{0:l-1}\left(H\right) \otimes \id^{l+1:L+1}\Big) \Big(\id^l \otimes \ket{0\dots 0}^{l+1:L+1}\bra{0\dots 0} \Big) \Big)\\
	=&i  \sum_{z=1}^{N} \sum_{l=1}^{L+1} \sum_{j=1}^{m_l}    \mathrm{tr}^{l}\Big(  \mathrm{tr}^{l+1:L+1} \Big(\Big(\id^l \otimes \ket{0\dots 0}^{l+1:L+1}\bra{0\dots 0} \Big)  G\Big) \mathrm{tr}^{0:l-1}\left(H\right)  \Big). 
\end{align*}
In the same way as for the local cost, we can do the following steps. It is
\begin{align*}
	\mathrm{tr}^{0:l-1}\left(H\right) =&\mathrm{tr}^{0:l-1}\Big( U^{l}_{m_{l}} \dots U^l_{j+1} \Big[ K^l_{j}, U^l_{j} \dots U^1_{1} \Big( \tau_{z-1}^\text{m} \otimes\rho_z^\text{in}\otimes \ket{0\dots 0}^{1:l}\bra{0\dots 0} \Big) U^{1^\dagger}_{1}\dots U^{l^\dagger}_{j}    \Big] U^{l^\dagger}_{j+1}\dots U^{l^\dagger}_{m_{l}} \Big) \\
	=& \mathrm{tr}^{l-1} \Big( U^{l}_{m_{l}} \dots U^l_{j+1} \Big[ K^l_{j}, U^l_{j} \dots U^l_{1} \Big( \tau_{z}^{l-1} \otimes \ket{0\dots 0}^{l}\bra{0\dots 0} \Big) U^{l^\dagger}_{1}\dots U^{l^\dagger}_{j}    \Big] U^{l^\dagger}_{j+1}\dots U^{l^\dagger}_{m_{l}}\Big)
\end{align*}
and
\begin{align*}
	\mathrm{tr}^{l+1:L+1} \Big(\Big(\id^l \otimes \ket{0\dots 0}^{l+1:L+1}\bra{0\dots 0} \Big)  G\Big)=& \mathrm{tr}^{l+1:L+1} \Big(\Big(\id^l \otimes \ket{0\dots 0}^{l+1:L+1}\bra{0\dots 0} \Big) U^{l+1^\dagger}\dots U^{L+1^\dagger}\\
	&\Big(\id^{l:L}\otimes \chi_z^\text{m}\otimes\sigma_z^\text{out}\Big) U^{L+1}\dots U^{l+1} \Big)\\
	=&\chi_z^l ,
\end{align*}
hence
\begin{align}
	\delta C=&i  \sum_{z=1}^{N} \sum_{l=1}^{L+1} \sum_{j=1}^{m_l}    \mathrm{tr}^{l}\Big( \chi_z^l  \mathrm{tr}^{l-1} \Big( U^{l}_{m_{l}} \dots U^l_{j+1} \Big[ K^l_{j}, U^l_{j} \dots U^l_{1} \Big( \tau_{z}^{l-1} \otimes \ket{0\dots 0}^{l}\bra{0\dots 0} \Big) U^{l^\dagger}_{1}\dots U^{l^\dagger}_{j}    \Big] U^{l^\dagger}_{j+1}\dots U^{l^\dagger}_{m_{l}}\Big)  \Big) \nonumber  \\
	=&i  \sum_{z=1}^{N} \sum_{l=1}^{L+1} \sum_{j=1}^{m_l}    \mathrm{tr}\Big( \Big(\id^{l-1}\otimes\chi_z^l \Big)  U^{l}_{m_{l}} \dots U^l_{j+1} \Big[ K^l_{j}, U^l_{j} \dots U^l_{1} \Big( \tau_{z}^{l-1} \otimes \ket{0\dots 0}^{l}\bra{0\dots 0} \Big) U^{l^\dagger}_{1}\dots U^{l^\dagger}_{j}    \Big] U^{l^\dagger}_{j+1}\dots U^{l^\dagger}_{m_{l}}  \Big) \nonumber  \\
	=&i  \sum_{z=1}^{N} \sum_{l=1}^{L+1} \sum_{j=1}^{m_l}    \mathrm{tr}\Big( \Big[  U^l_{j} \dots U^l_{1} \Big( \tau_{z}^{l-1} \otimes \ket{0\dots 0}^{l}\bra{0\dots 0} \Big) U^{l^\dagger}_{1}\dots U^{l^\dagger}_{j} ,U^{l^\dagger}_{j+1}\dots U^{l^\dagger}_{m_{l}}  \Big(\id^{l-1}\otimes\chi_z^l \Big)  U^{l}_{m_{l}} \dots U^l_{j+1}     \Big]    K^l_{j} \Big) \nonumber  \\
	=& i  \sum_{z=1}^{N} \sum_{l=1}^{L+1} \sum_{j=1}^{m_l}    \mathrm{tr}\Big(M_{j\,z}^l   K^l_{j} \Big) 
\end{align}
with
\begin{equation}
	M_{j\,z}^l = \Big[  U^l_{j} \dots U^l_{1} \Big( \tau_{z}^{l-1} \otimes \ket{0\dots 0}^{l}\bra{0\dots 0} \Big) U^{l^\dagger}_{1}\dots U^{l^\dagger}_{j} ,U^{l^\dagger}_{j+1}\dots U^{l^\dagger}_{m_{l}}  \Big(\id^{l-1}\otimes\chi_z^l \Big)  U^{l}_{m_{l}} \dots U^l_{j+1}     \Big].
\end{equation}
This again leads to 
\begin{equation*}
	K_j^l=i2^{m_{l-1}+1}\eta\sum_{x=1}^N \mathrm{tr}_\text{rest}\Big(M_{j\,x}^l\Big).
\end{equation*}
This matrix is very small, if \(\rho^{OUT}\) and \(\sigma_1^\text{out}\otimes \dots \otimes \sigma_N^\text{out} \otimes \sigma_N^\text{m}\) are far away from each other, and gets bigger and bigger, the more the states are similar. So for a constant learning rate \(\eta\) we would make relatively small changes to the perceptron unitaries at the start and these changes would get bigger with the training. But normally we want this to happen the other way round. Hence, we introduce scaling to the learning rate by setting \(\eta=\frac{\tilde{\eta}}{C}\). This corrensponds to minimizing \(\mathrm{ln}(C)\) which leads to the same result as minimizing \(C\) directly as the logarithm is strictly monotonically increasing. Hence, we choose the update rule
\begin{equation}
	K_j^l=\frac{i2^{m_{l-1}+1}\eta}{C}\sum_{x=1}^N \mathrm{tr}_\text{rest}\Big(M_{j\,x}^l\Big)
\end{equation}
with a constant learning rate \(\eta\) and obtain the algorithm presented below in Box~3.\\[1ex] \vspace{10pt}
{\centering
	\fbox{\label{box3}
		\begin{minipage}{.485\textwidth}
			\scriptsize
			\flushleft	
			{\normalsize \textbf{Box 3: Training algorithm global cost with pure output for \(M=1\)}\\\vspace{5pt}}
			\textbf{1. Initialize:}\\
			Choose the perceptron unitaries \(U_j^l\) randomly.\\
			\hspace{2pt}\\
			\textbf{2. Feedforward:} \\
			\textbf{2.1.} Set \(\tau_0^\text{m}=\rho_0^\text{m}\).\\
			\textbf{2.2.} For \(x=1,...,N\) do the following:\\
			\textbf{2.2.1.} Set
			\(\tau_x^0=\rho_x^\text{in}\otimes\tau_{x-1}^\text{m}\).\\
			\textbf{2.2.2.} For \(l=1,...,L+1\) set
			\(\tau_x^l=\mathrm{tr}_{l-1}\left( U^l(\tau_x^{l-1}\otimes \ket{0\dots 0}_{l}\bra{0\dots 0})U^{l^\dagger} \right).\)\\
			\textbf{2.2.3.} Set
			\(\tau_x^\text{m}=\mathrm{tr}_\text{out}^x\big(\big(\ket{\phi_x^\text{out}}\bra{\phi_x^\text{out}}\otimes \id_x^\text{m}\big)\tau_x^{L+1}\big)\).
			\hspace{2pt}\\
			\textbf{3. Feedbackward:} \\
			\textbf{3.1.} Set \(\chi_N^\text{m}=\id_N^\text{m}\).\\
			\textbf{3.2.} For \(x=N,...,1\) do the following:\\
			\textbf{3.2.1.} Set \(\chi_x^{L+1}=\chi^\text{m}_x\otimes \ket{\phi_x^{\text{out}}}\bra{\phi_x^{\text{out}}}\).\\
			\textbf{3.2.2.}For \(l=L,...,0\) set
			\(\chi_x^l=\mathrm{tr}_{l+1}\left( (\id ^l\otimes \ket{0\dots 0}_{l+1}\bra{0\dots 0}) U^{l+1^\dagger} (\id ^l\otimes \chi_x^{l+1})   U^{l+1} \right)\).\\
			\textbf{3.2.3.} Set 
			\(\chi_{x-1}^\text{m}=\mathrm{tr}_\text{in}\big(\big(\id_{x-1}^\text{m}\otimes \rho_x^\text{in}\big)\chi_x^0\big)\).\\
			\hspace{2pt}\\
			\textbf{4. Update the network:}  For \(l=1,...,L+1\) and \(j=1,...,m_l\) do the following: 
			
		\end{minipage}
		\hfill
		\begin{minipage}{.485\textwidth}
			\scriptsize
			\flushleft	
			%\phantom{\textbf{Box 1: Training algorithm}\\}
			\textbf{4.1.} For \(x=1,...,N\) set
			\begin{align*}
				M_{j\,x}^l=& \Bigg[\prod_{k=j}^{1} U_k^l\left( \tau_x^{l-1}\otimes\ket{0\dots 0}_l\bra{0\dots 0}\right) \prod_{k=1}^{j} U_k^{l^\dagger},\\
				&\ \prod_{k=j+1}^{m_l} U_k^{l^\dagger} \left(\id ^{l-1}\otimes \chi_x^l\right)  \prod_{k=m_l}^{j+1} U_k^l \Bigg].
			\end{align*}
			\textbf{4.2.} Set
			\[K_j^l=\frac{i2^{m_{l-1}+1}\eta}{C}\sum_{x=1}^N \mathrm{tr}_\text{rest}\Big(M_{j\,x}^l\Big),\]
			where \(\mathrm{tr}_\text{rest}\) denotes that the trace is taken over all systems that are unaffected by \(U_j^l\).\\
			\textbf{4.3.} Update each unitary $U_j^l$ according to $U_j^l\rightarrow e^{i\epsilon K_j^l} U_j^l$.\\
			\hspace{2pt}\\
			\textbf{5. Repeat:} Repeat steps 2. to 4. until the cost function reaches its maximum.
		\end{minipage}
	}
}\vspace{10pt}
\section{The training algorithm for many runs (\(M>1\))}
\label{manysequences}
In the same way, we can generalise all of the algorithms to many runs by just adding an average over \(\alpha\).

{\centering
	\fbox{\label{box4}
		\begin{minipage}{.485\textwidth}
			\scriptsize
			\flushleft	
			{\normalsize \textbf{Box 4: Training algorithm local cost with pure output for \(M\geq 1\)}\\\vspace{5pt}}
			\textbf{1. Initialize:}\\
			Choose the perceptron unitaries \(U_j^l\) randomly.\\
			\hspace{2pt}\\
			\textbf{2. Feedforward:} 
			For \(\alpha=1,...,M,\ x=1,...,N\) do the following:\\
			\textbf{2.1.} Set
			\(\rho_{x\, \alpha}^0=\rho_{x\, \alpha}^\text{in}\otimes\rho_{x-1\, \alpha}^\text{m}\) where \(\rho^\text{m}_{0\, \alpha}\) is given.\\
			\textbf{2.2.} For \(l=1,...,L+1\) set
			\(\rho_{x\, \alpha}^l=\mathrm{tr}_{l-1}\left( U^l(\rho_{x\, \alpha}^{l-1}\otimes \ket{0\dots 0}_{l}\bra{0\dots 0}) U^{l^\dagger} \right).\)\\
			\textbf{2.3.} Set
			\(\rho_{x\, \alpha}^\text{m}=\mathrm{tr}_\text{out}^x(\rho_{x\, \alpha}^{L+1})\)
			and 
			\(\rho_x^\text{out}=\mathrm{tr}_\text{m}^x(\rho_{x\, \alpha}^{L+1}).\)\\
			\hspace{2pt}\\
			\textbf{3. Feedbackward:} For \(\alpha=1,...,M,\ x=N,...,1\) do the following:\\
			\textbf{3.1.} Set \(\sigma_{x\, \alpha}^{L+1}=\id ^\text{m}_{x\, \alpha}\otimes \ket{\phi_{x\, \alpha}^{\text{out}}}\bra{\phi_{x\, \alpha}^{\text{out}}}\).\\
			\textbf{3.2.} For \(l=L,...,1\) set
			\(\sigma_{x\, \alpha}^l=\mathrm{tr}_{l+1}\left( (\id ^l\otimes \ket{0\dots 0}_{l+1}\bra{0\dots 0}) U^{l+1^\dagger} (\id ^l\otimes \sigma_{x\, \alpha}^{l+1})  U^{l+1} \right)\).\\
			\textbf{3.3.} Set 
			\(\omega_{xx\, \alpha}^\text{m}=\mathrm{tr}_{1,\text{in}}\left( (\rho_{x\, \alpha}^\text{in}\otimes \id _{x-1}^m\otimes \ket{0\dots 0}_{1}\bra{0\dots 0}) U^{1^\dagger} (\id _x^0\otimes \sigma_{x\, \alpha}^{1})  U^{1} \right)\).\\
			\textbf{3.4.} For \(z=x-1,...,1\) do the following steps:\\
			\textbf{3.4.1.} Set \(\omega_{zx\, \alpha}^{L+1}=\omega_{z+1\,x\, \alpha}^\text{m}\otimes \id ^\text{out}\).\\
			\textbf{3.4.2.} For \(l=L,...,1\) set
			\(\omega_{zx\, \alpha}^l=\mathrm{tr}_{l+1}\left( (\id ^l\otimes \ket{0\dots 0}_{l+1}\bra{0\dots 0}) U^{l+1^\dagger}(\id ^l\otimes \omega_{zx\, \alpha}^{l+1})   U^{l+1} \right).\)\\
			\textbf{3.4.3.} Set 
			\(\omega_{zx\, \alpha}^\text{m}=\mathrm{tr}_{1,\text{in}}\left( (\rho_{x\, \alpha}^\text{in}\otimes \id _{x-1}^m\otimes \ket{0\dots 0}_{1}\bra{0\dots 0}) U^{1^\dagger}  (\id _x^0\otimes \omega_{zx\, \alpha}^{1})  U^{1} \right).\)\\
			\textbf{4. Update the network:}  For \(l=1,...,L+1\) and \(j=1,...,m_l\) do the following: 
			
		\end{minipage}
		\hfill
		\begin{minipage}{.485\textwidth}
			\scriptsize
			\flushleft	
			%\phantom{\textbf{Box 1: Training algorithm}\\}
			
			\textbf{4.1.} For \(\alpha=1,...,M,\ x=1,...,N\) set
			\begin{align*}
				M_{j\,x\,\alpha}^l=& \Bigg[\prod_{k=j}^{1} U_k^l\left( \rho_{x\, \alpha}^{l-1}\otimes\ket{0\dots 0}_l\bra{0\dots 0}\right) \prod_{k=1}^{j} U_k^{l^\dagger},\\
				&\ \prod_{k=j+1}^{m_l} U_k^{l^\dagger} \left(\id ^{l-1}\otimes \sigma_{x\, \alpha}^l\right)  \prod_{k=m_l}^{j+1} U_k^l \Bigg]\\
				&+ \sum_{z=1}^{x-1} \Bigg[ \prod_{k=j}^{1} U_k^l \left( \rho_{z\, \alpha}^{l-1}\otimes\ket{0\dots 0}_l\bra{0\dots 0}\right)  \prod_{k=1}^{j} U_k^{l^\dagger} ,\\
				&\ \prod_{k=j+1}^{m_l} U_k^{l^\dagger} \left(\id ^{l-1}\otimes \omega_{zx\, \alpha}^l\right) \prod_{k=m_l}^{j+1} U_k^l \Bigg].
			\end{align*}
			\textbf{4.2.} Set
			\[K_j^l=\frac{i2^{m_{l-1}} \eta}{NM}\sum_{\alpha=1}^M \sum_{x=1}^N \mathrm{tr}_\text{rest}(M_{j\,x\,\alpha}^l),\]
			where \(\mathrm{tr}_\text{rest}\) denotes that the trace is taken over all systems that are unaffected by \(U_j^l\).\\
			\textbf{4.3.} Update each unitary $U_j^l$ according to $U_j^l\rightarrow e^{i\epsilon K_j^l} U_j^l$.\\
			\hspace{2pt}\\
			\textbf{5. Repeat:} Repeat steps 2. to 4. until the cost function reaches its maximum.
		\end{minipage}
	}
}\vspace{10pt}

{\centering
	\fbox{\label{box5}
		\begin{minipage}{.485\textwidth}
			\scriptsize
			\flushleft	
			{\normalsize \textbf{Box 5: Training algorithm local cost with mixed output for \(M\geq 1\)}\\\vspace{5pt}}
			\textbf{1. Initialize:}\\
			Choose the perceptron unitaries \(U_j^l\) randomly.\\
			\hspace{2pt}\\
			\textbf{2. Feedforward:} 
			For \(\alpha=1,...,M,\ x=1,...,N\) do the following:\\
			\textbf{2.1.} Set
			\(\rho_{x\, \alpha}^0=\rho_{x\, \alpha}^\text{in}\otimes\rho_{x-1\, \alpha}^\text{m}\) where \(\rho^\text{m}_{0\, \alpha}\) is given.\\
			\textbf{2.2.} For \(l=1,...,L+1\) set
			\(\rho_{x\, \alpha}^l=\mathrm{tr}_{l-1}\left( U^l(\rho_{x\, \alpha}^{l-1}\otimes \ket{0\dots 0}_{l}\bra{0\dots 0}) U^{l^\dagger} \right).\)\\
			\textbf{2.3.} Set
			\(\rho_{x\, \alpha}^\text{m}=\mathrm{tr}_\text{out}^x(\rho_{x\, \alpha}^{L+1})\)
			and 
			\(\rho_x^\text{out}=\mathrm{tr}_\text{m}^x(\rho_{x\, \alpha}^{L+1}).\)\\
			\hspace{2pt}\\
			\textbf{3. Feedbackward:} For \(\alpha=1,...,M,\ x=N,...,1\) do the following:\\
			\textbf{3.1.} Set \(\chi_{x\, \alpha}^{L+1}=\id ^\text{m}_{x\, \alpha}\otimes \left(\sigma_{x\, \alpha}^\text{out}- \rho_{x\, \alpha}^{\text{out}}\right)\).\\
			\textbf{3.2.} For \(l=L,...,1\) set
			\(\chi_{x\, \alpha}^l=\mathrm{tr}_{l+1}\left( (\id ^l\otimes \ket{0\dots 0}_{l+1}\bra{0\dots 0}) U^{l+1^\dagger} (\id ^l\otimes \chi_{x\, \alpha}^{l+1})  U^{l+1} \right)\).\\
			\textbf{3.3.} Set 
			\(\omega_{xx\, \alpha}^\text{m}=\mathrm{tr}_{1,\text{in}}\left( (\rho_{x\, \alpha}^\text{in}\otimes \id _{x-1}^m\otimes \ket{0\dots 0}_{1}\bra{0\dots 0}) U^{1^\dagger} (\id _x^0\otimes \chi_{x\, \alpha}^{1})  U^{1} \right)\).\\
			\textbf{3.4.} For \(z=x-1,...,1\) do the following steps:\\
			\textbf{3.4.1.} Set \(\omega_{zx\, \alpha}^{L+1}=\omega_{z+1\,x\, \alpha}^\text{m}\otimes \id ^\text{out}\).\\
			\textbf{3.4.2.} For \(l=L,...,1\) set
			\(\omega_{zx\, \alpha}^l=\mathrm{tr}_{l+1}\left( (\id ^l\otimes \ket{0\dots 0}_{l+1}\bra{0\dots 0}) U^{l+1^\dagger}(\id ^l\otimes \omega_{zx\, \alpha}^{l+1})   U^{l+1} \right).\)\\
			\textbf{3.4.3.} Set 
			\(\omega_{zx\, \alpha}^\text{m}=\mathrm{tr}_{1,\text{in}}\left( (\rho_{x\, \alpha}^\text{in}\otimes \id _{x-1}^m\otimes \ket{0\dots 0}_{1}\bra{0\dots 0}) U^{1^\dagger}  (\id _x^0\otimes \omega_{zx\, \alpha}^{1})  U^{1} \right).\)\\
			\textbf{4. Update the network:}  For \(l=1,...,L+1\) and \(j=1,...,m_l\) do the following: 
			
		\end{minipage}
		\hfill
		\begin{minipage}{.485\textwidth}
			\scriptsize
			\flushleft	
			%\phantom{\textbf{Box 1: Training algorithm}\\}
			
			\textbf{4.1.} For \(\alpha=1,...,M,\ x=1,...,N\) set
			\begin{align*}
				M_{j\,x\,\alpha}^l=& \Bigg[\prod_{k=j}^{1} U_k^l\left( \rho_{x\, \alpha}^{l-1}\otimes\ket{0\dots 0}_l\bra{0\dots 0}\right) \prod_{k=1}^{j} U_k^{l^\dagger},\\
				&\ \prod_{k=j+1}^{m_l} U_k^{l^\dagger} \left(\id ^{l-1}\otimes \sigma_{x\, \alpha}^l\right)  \prod_{k=m_l}^{j+1} U_k^l \Bigg]\\
				&+ \sum_{z=1}^{x-1} \Bigg[ \prod_{k=j}^{1} U_k^l \left( \rho_{z\, \alpha}^{l-1}\otimes\ket{0\dots 0}_l\bra{0\dots 0}\right)  \prod_{k=1}^{j} U_k^{l^\dagger} ,\\
				&\ \prod_{k=j+1}^{m_l} U_k^{l^\dagger} \left(\id ^{l-1}\otimes \omega_{zx\, \alpha}^l\right) \prod_{k=m_l}^{j+1} U_k^l \Bigg].
			\end{align*}
			\textbf{4.2.} Set
			\[K_j^l=\frac{i2^{m_{l-1}} \eta}{NM}\sum_{\alpha=1}^M \sum_{x=1}^N \mathrm{tr}_\text{rest}(M_{j\,x\,\alpha}^l),\]
			where \(\mathrm{tr}_\text{rest}\) denotes that the trace is taken over all systems that are unaffected by \(U_j^l\).\\
			\textbf{4.3.} Update each unitary $U_j^l$ according to $U_j^l\rightarrow e^{i\epsilon K_j^l} U_j^l$.\\
			\hspace{2pt}\\
			\textbf{5. Repeat:} Repeat steps 2. to 4. until the cost function reaches its maximum.
		\end{minipage}
	}
}\vspace{10pt}

{\centering
	\fbox{\label{box6}
		\begin{minipage}{.485\textwidth}
			\scriptsize
			\flushleft	
			{\normalsize \textbf{Box 6: Training algorithm global cost with pure output for \(M\geq 1\)}\\\vspace{5pt}}
			\textbf{1. Initialize:}\\
			Choose the perceptron unitaries \(U_j^l\) randomly.\\
			\hspace{2pt}\\
			\textbf{2. Feedforward:} For \(\alpha=1,...,M\) do the following:\\
			\textbf{2.1.} Set \(\tau_{0\, \alpha}^\text{m}=\rho_{0\, \alpha}^\text{m}\).\\
			\textbf{2.2.} For \(x=1,...,N\) do the following:\\
			\textbf{2.2.1.} Set
			\(\tau_{x\, \alpha}^0=\rho_{x\, \alpha}^\text{in}\otimes\tau_{x-1\, \alpha}^\text{m}\).\\
			\textbf{2.2.2.} For \(l=1,...,L+1\) set
			\(\tau_{x\, \alpha}^l=\mathrm{tr}_{l-1}\left( U^l(\tau_{x\, \alpha}^{l-1}\otimes \ket{0\dots 0}_{l}\bra{0\dots 0})U^{l^\dagger} \right).\)\\
			\textbf{2.2.3.} Set
			\(\tau_{x\, \alpha}^\text{m}=\mathrm{tr}_\text{out}^x\big(\big(\ket{\phi_{x\, \alpha}^\text{out}}\bra{\phi_{x\, \alpha}^\text{out}}\otimes \id_x^\text{m}\big)\tau_{x\, \alpha}^{L+1}\big)\).
			\hspace{2pt}\\
			\textbf{3. Feedbackward:} For \(\alpha=1,...,M\) do the following:\\
			\textbf{3.1.} Set \(\chi_{N\, \alpha}^\text{m}=\id_N^\text{m}\).\\
			\textbf{3.2.} For \(x=N,...,1\) do the following:\\
			\textbf{3.2.1.} Set \(\chi_{x\, \alpha}^{L+1}=\chi^\text{m}_{x\, \alpha}\otimes \ket{\phi_x^{\text{out}}}\bra{\phi_x^{\text{out}}}\).\\
			\textbf{3.2.2.}For \(l=L,...,0\) set
			\(\chi_{x\, \alpha}^l=\mathrm{tr}_{l+1}\left( (\id ^l\otimes \ket{0\dots 0}_{l+1}\bra{0\dots 0}) U^{l+1^\dagger} (\id ^l\otimes \chi_{x\, \alpha}^{l+1})   U^{l+1} \right)\).\\
			\textbf{3.2.3.} Set 
			\(\chi_{x-1\, \alpha}^\text{m}=\mathrm{tr}_\text{in}\big(\big(\id_{x-1}^\text{m}\otimes \rho_{x\, \alpha}^\text{in}\big)\chi_{x\, \alpha}^0\big)\).\\
			\hspace{2pt}\\
			\textbf{4. Update the network:}  For \(l=1,...,L+1\) and \(j=1,...,m_l\) do the following: 
			
		\end{minipage}
		\hfill
		\begin{minipage}{.485\textwidth}
			\scriptsize
			\flushleft	
			%\phantom{\textbf{Box 1: Training algorithm}\\}
			\textbf{4.1.} For \(\alpha=1,...,M,\ x=1,...,N\) set
			\begin{align*}
				M_{j\,x\,\alpha}^l=& \Bigg[\prod_{k=j}^{1} U_k^l\left( \tau_{x\, \alpha}^{l-1}\otimes\ket{0\dots 0}_l\bra{0\dots 0}\right) \prod_{k=1}^{j} U_k^{l^\dagger},\\
				&\ \prod_{k=j+1}^{m_l} U_k^{l^\dagger} \left(\id ^{l-1}\otimes \chi_{x\, \alpha}^l\right)  \prod_{k=m_l}^{j+1} U_k^l \Bigg].
			\end{align*}
			\textbf{4.2.} Set
			\[K_j^l=\frac{i2^{m_{l-1}+1}\eta}{CM}\sum_{\alpha=1}^M\sum_{x=1}^N \mathrm{tr}_\text{rest}\Big(M_{j\,x\,\alpha}^l\Big),\]
			where \(\mathrm{tr}_\text{rest}\) denotes that the trace is taken over all systems that are unaffected by \(U_j^l\).\\
			\textbf{4.3.} Update each unitary $U_j^l$ according to $U_j^l\rightarrow e^{i\epsilon K_j^l} U_j^l$.\\
			\hspace{2pt}\\
			\textbf{5. Repeat:} Repeat steps 2. to 4. until the cost function reaches its maximum.
		\end{minipage}
	}
}\vspace{10pt}
\section{Numerical results}
\label{numericalresults}
In the following examples, we explore the training and testing performance of QRNNs on tasks such as learning a delay channel, a time-evolution with time-dependent Hamiltonian and quantum filters. Note that for pure outputs, we use the same cost function, the fidelity, for training and testing. Unlike in some other situations common in unsupervised or semi-supervised learning, see e.g.~\cite{Goodfellow2016, DiPol2020}, the network performance during training on big enough data is a good indicator of its performance on the training data drawn from the same distribution. It is not necessarily so for small training sets or if the cost function used during training differs from the function used to evaluate the test performance. We use both the fidelity and the Hilbert-Schmidt norm when testing for mixed states.
\subsection{Local Cost for pure output}
Firstly, we present the numerical results for the local cost for pure output as this is the one for which we made most simulations.
\subsubsection{Delay Channel}
The first example is the delay channel where the \(i^\text{th}\) output is given by the \((i-p)^\text{th}\) input for an arbitrary integer \(p\). The learning process, i.e. the cost in dependence of the training step, for \(p=1,2\) is shown in Figure \ref{fig:SWAPs}.
\begin{figure}[htbp]
	\begin{subfigure}{0.49\textwidth}
		\begin{tikzpicture}
			\node at (-1,6.5) {\textbf{a}};
			\begin{axis}[
				width=\linewidth, % Scale the plot to \linewidth
				%grid=major, % Display a grid
				%grid style={dashed,gray!30}, % Set the style
				xlabel= Training step, % Set the labels
				ylabel= Cost,
				xmin=0,
				xmax=1500,
				ymin=0.4,
				ymax=1.05,
				legend columns=2, 
				legend style={at={(0.5,-0.2)},anchor=north},
				%legend style={at={(0.7,0.7)},anchor=north}, % Put the legend below the plot
				%x tick label style={rotate=90,anchor=east} % Display labels sideways
				]
				\addplot[very thick,color=red,mark=None]  table[x=step,y=Cwomround,col sep=comma] {SWAP1m.csv};
				\addlegendentry{\begin{tikzpicture}[scale=0.05, baseline]
						\node(1) [circle,draw,inner sep=0pt,minimum size=3.5pt] at (-1,0.15) {};
						\node(4) [circle,draw,inner sep=0pt,minimum size=3.5pt] at (0,0.15) {};
						\draw (1)--(4);
					\end{tikzpicture} training} 
				\addplot[very thick,color=blue,mark=None]   table[x=step,y=C1mround,col sep=comma] {SWAP1m.csv}; 
				\addlegendentry{\begin{tikzpicture}[xscale=0.05,yscale=0.1, baseline]
						\node(1) [circle,draw,inner sep=0pt,minimum size=3.5pt] at (-1,0) {};
						\node(2) [circle,draw,inner sep=0pt,minimum size=3.5 pt] at (-1,0.3) {};
						\node(4) [circle,draw,inner sep=0pt,minimum size=3.5pt] at (0,0) {};
						\node(5) [circle,draw,inner sep=0pt,minimum size=3.5pt] at (0,0.3) {};
						\draw (1)--(4);
						\draw (2)--(4);
						\draw (1)--(5);
						\draw (2)--(5);
						\draw (5) -- (0.3,0.3);
						\draw[out=0,in=0] (0.3,0.3) to (0.3,0.5);
						\draw (0.3,0.5) to (-1.3,0.5);
						\draw[out=180,in=180] (-1.3,0.5) to (-1.3,0.3);
						\draw (-1.3,0.3) -- (2);
					\end{tikzpicture} training}
				\addplot[very thick,color=red,mark=None,dashed]   table[x=step,y=Cwomtestround,col sep=comma] {SWAP1m.csv}; 
				\addlegendentry{\begin{tikzpicture}[scale=0.05, baseline]
						\node(1) [circle,draw,inner sep=0pt,minimum size=3.5pt] at (-1,0.15) {};
						\node(4) [circle,draw,inner sep=0pt,minimum size=3.5pt] at (0,0.15) {};
						\draw (1)--(4);
					\end{tikzpicture} testing} 
				\addplot[very thick,color=blue,mark=None,dashed]   table[x=step,y=C1mtestround,col sep=comma] {SWAP1m.csv}; 
				\addlegendentry{\begin{tikzpicture}[xscale=0.05,yscale=0.1, baseline]
						\node(1) [circle,draw,inner sep=0pt,minimum size=3.5pt] at (-1,0) {};
						\node(2) [circle,draw,inner sep=0pt,minimum size=3.5 pt] at (-1,0.3) {};
						\node(4) [circle,draw,inner sep=0pt,minimum size=3.5pt] at (0,0) {};
						\node(5) [circle,draw,inner sep=0pt,minimum size=3.5pt] at (0,0.3) {};
						\draw (1)--(4);
						\draw (2)--(4);
						\draw (1)--(5);
						\draw (2)--(5);
						\draw (5) -- (0.3,0.3);
						\draw[out=0,in=0] (0.3,0.3) to (0.3,0.5);
						\draw (0.3,0.5) to (-1.3,0.5);
						\draw[out=180,in=180] (-1.3,0.5) to (-1.3,0.3);
						\draw (-1.3,0.3) -- (2);
					\end{tikzpicture} testing}
				\addplot [domain=0:1500, samples=10, color=gray,]{0.5};
				\addplot [domain=0:1500, samples=10, color=gray,]{1};
				%\legend{training without memory, training with 1 memory qubit, testing without memory, testing with 1 memory qubit}
			\end{axis}
		\end{tikzpicture}
	\end{subfigure}
	\hfill
	\begin{subfigure}{0.49\textwidth}
		\begin{tikzpicture}
			\node at (-1,6.5) {\textbf{b}};
			\begin{axis}[
				width=\linewidth, % Scale the plot to \linewidth
				%grid=major, % Display a grid
				%grid style={dashed,gray!30}, % Set the style
				xlabel= Training step, % Set the labels
				ylabel= Cost,
				xmin=0,
				xmax=1000,
				ymin=0.4,
				ymax=1.05,
				legend columns=3, 
				legend style={at={(0.45,-0.2)},anchor=north},
				%legend style={at={(0.7,0.7)},anchor=north}, % Put the legend below the plot
				%x tick label style={rotate=90,anchor=east} % Display labels sideways
				]
				\addplot[very thick,color=red,mark=None]  table[x=step,y=Cwomround,col sep=comma] {SWAP2m.csv}; 
				\addlegendentry{\begin{tikzpicture}[scale=0.05, baseline]
						\node(1) [circle,draw,inner sep=0pt,minimum size=3.5pt] at (-1,0.15) {};
						\node(4) [circle,draw,inner sep=0pt,minimum size=3.5pt] at (0,0.15) {};
						\draw (1)--(4);
					\end{tikzpicture} training} 
				\addplot[very thick,color=blue,mark=None]   table[x=step,y=C1mround,col sep=comma] {SWAP2m.csv};
				\addlegendentry{\begin{tikzpicture}[xscale=0.05,yscale=0.1, baseline]
						\node(1) [circle,draw,inner sep=0pt,minimum size=3.5pt] at (-1,0) {};
						\node(2) [circle,draw,inner sep=0pt,minimum size=3.5 pt] at (-1,0.3) {};
						\node(4) [circle,draw,inner sep=0pt,minimum size=3.5pt] at (0,0) {};
						\node(5) [circle,draw,inner sep=0pt,minimum size=3.5pt] at (0,0.3) {};
						\draw (1)--(4);
						\draw (2)--(4);
						\draw (1)--(5);
						\draw (2)--(5);
						\draw (5) -- (0.3,0.3);
						\draw[out=0,in=0] (0.3,0.3) to (0.3,0.5);
						\draw (0.3,0.5) to (-1.3,0.5);
						\draw[out=180,in=180] (-1.3,0.5) to (-1.3,0.3);
						\draw (-1.3,0.3) -- (2);
					\end{tikzpicture} training}
				\addplot[very thick,color=ForestGreen,mark=None]   table[x=step,y=C2mround,col sep=comma] {SWAP2m.csv};
				\addlegendentry{\begin{tikzpicture}[xscale=0.05,yscale=0.1, baseline]
						\node(1) [circle,draw,inner sep=0pt,minimum size=3.5pt] at (-1,-0.15) {};
						\node(2) [circle,draw,inner sep=0pt,minimum size=3.5pt] at (-1,0.15) {};
						\node(3) [circle,draw,inner sep=0pt,minimum size=3.5pt] at (-1,0.45) {};
						\node(4) [circle,draw,inner sep=0pt,minimum size=3.5pt] at (0,-0.15) {};
						\node(5) [circle,draw,inner sep=0pt,minimum size=3.5pt] at (0,0.15) {};
						\node(6) [circle,draw,inner sep=0pt,minimum size=3.5pt] at (0,0.45) {};
						\draw (1)--(4);
						\draw (2)--(4);
						\draw (3)--(4);
						\draw (1)--(5);
						\draw (2)--(5);
						\draw (3)--(5);
						\draw (1)--(6);
						\draw (2)--(6);
						\draw (3)--(6);
						\draw (6) -- (0.3,0.45);
						\draw[out=0,in=0] (0.3,0.45) to (0.3,0.65);
						\draw (0.3,0.65) to (-1.3,0.65);
						\draw[out=180,in=180] (-1.3,0.65) to (-1.3,0.45);
						\draw (-1.3,0.45) -- (3);
						\draw (5) -- (0.35,0.15);
						\draw[out=0,in=0] (0.35,0.15) to (0.35,0.75);
						\draw (0.35,0.75) to (-1.35,0.75);
						\draw[out=180,in=180] (-1.35,0.75) to (-1.35,0.15);
						\draw (-1.35,0.15) -- (2);
				\end{tikzpicture} training}
				\addplot[very thick,color=red,mark=None,dashed]   table[x=step,y=Cwomtestround,col sep=comma] {SWAP2m.csv}; 
				\addlegendentry{\begin{tikzpicture}[scale=0.05, baseline]
						\node(1) [circle,draw,inner sep=0pt,minimum size=3.5pt] at (-1,0.15) {};
						\node(4) [circle,draw,inner sep=0pt,minimum size=3.5pt] at (0,0.15) {};
						\draw (1)--(4);
					\end{tikzpicture} testing} 
				\addplot[very thick,color=blue,mark=None,dashed]   table[x=step,y=C1mtestround,col sep=comma] {SWAP2m.csv};
				\addlegendentry{\begin{tikzpicture}[xscale=0.05,yscale=0.1, baseline]
						\node(1) [circle,draw,inner sep=0pt,minimum size=3.5pt] at (-1,0) {};
						\node(2) [circle,draw,inner sep=0pt,minimum size=3.5 pt] at (-1,0.3) {};
						\node(4) [circle,draw,inner sep=0pt,minimum size=3.5pt] at (0,0) {};
						\node(5) [circle,draw,inner sep=0pt,minimum size=3.5pt] at (0,0.3) {};
						\draw (1)--(4);
						\draw (2)--(4);
						\draw (1)--(5);
						\draw (2)--(5);
						\draw (5) -- (0.3,0.3);
						\draw[out=0,in=0] (0.3,0.3) to (0.3,0.5);
						\draw (0.3,0.5) to (-1.3,0.5);
						\draw[out=180,in=180] (-1.3,0.5) to (-1.3,0.3);
						\draw (-1.3,0.3) -- (2);
					\end{tikzpicture} testing}
				\addplot[very thick,color=ForestGreen,mark=None,dashed]   table[x=step,y=C2mtestround,col sep=comma] {SWAP2m.csv}; 
				\addlegendentry{\begin{tikzpicture}[xscale=0.05,yscale=0.1, baseline]
						\node(1) [circle,draw,inner sep=0pt,minimum size=3.5pt] at (-1,-0.15) {};
						\node(2) [circle,draw,inner sep=0pt,minimum size=3.5pt] at (-1,0.15) {};
						\node(3) [circle,draw,inner sep=0pt,minimum size=3.5pt] at (-1,0.45) {};
						\node(4) [circle,draw,inner sep=0pt,minimum size=3.5pt] at (0,-0.15) {};
						\node(5) [circle,draw,inner sep=0pt,minimum size=3.5pt] at (0,0.15) {};
						\node(6) [circle,draw,inner sep=0pt,minimum size=3.5pt] at (0,0.45) {};
						\draw (1)--(4);
						\draw (2)--(4);
						\draw (3)--(4);
						\draw (1)--(5);
						\draw (2)--(5);
						\draw (3)--(5);
						\draw (1)--(6);
						\draw (2)--(6);
						\draw (3)--(6);
						\draw (6) -- (0.3,0.45);
						\draw[out=0,in=0] (0.3,0.45) to (0.3,0.65);
						\draw (0.3,0.65) to (-1.3,0.65);
						\draw[out=180,in=180] (-1.3,0.65) to (-1.3,0.45);
						\draw (-1.3,0.45) -- (3);
						\draw (5) -- (0.35,0.15);
						\draw[out=0,in=0] (0.35,0.15) to (0.35,0.75);
						\draw (0.35,0.75) to (-1.35,0.75);
						\draw[out=180,in=180] (-1.35,0.75) to (-1.35,0.15);
						\draw (-1.35,0.15) -- (2);
				\end{tikzpicture} testing}
				%\legend{training without memory, training with 1 memory qubit, testing without memory, testing with 1 memory qubit}
				\addplot [domain=0:1000, samples=10, color=gray,]{0.5};
				\addplot [domain=0:1000, samples=10, color=gray,]{1};
			\end{axis}
		\end{tikzpicture}
	\end{subfigure}
	\caption[]{\textbf{Comparison of QRNN and QNN for different delay channels.} Both plots show the local cost for pure output states dependent on the training step, while a QNN is trained on a specific training set. Thereby, the colour indicates which QNN is trained (green for a 
	\begin{tikzpicture}[xscale=0.4,yscale=0.6, baseline]
						\node(1) [circle,draw,inner sep=0pt,minimum size=3.5pt] at (-1,-0.15) {};
						\node(2) [circle,draw,inner sep=0pt,minimum size=3.5pt] at (-1,0.15) {};
						\node(3) [circle,draw,inner sep=0pt,minimum size=3.5pt] at (-1,0.45) {};
						\node(4) [circle,draw,inner sep=0pt,minimum size=3.5pt] at (0,-0.15) {};
						\node(5) [circle,draw,inner sep=0pt,minimum size=3.5pt] at (0,0.15) {};
						\node(6) [circle,draw,inner sep=0pt,minimum size=3.5pt] at (0,0.45) {};
						\draw (1)--(4);
						\draw (2)--(4);
						\draw (3)--(4);
						\draw (1)--(5);
						\draw (2)--(5);
						\draw (3)--(5);
						\draw (1)--(6);
						\draw (2)--(6);
						\draw (3)--(6);
						\draw (6) -- (0.3,0.45);
						\draw[out=0,in=0] (0.3,0.45) to (0.3,0.65);
						\draw (0.3,0.65) to (-1.3,0.65);
						\draw[out=180,in=180] (-1.3,0.65) to (-1.3,0.45);
						\draw (-1.3,0.45) -- (3);
						\draw (5) -- (0.35,0.15);
						\draw[out=0,in=0] (0.35,0.15) to (0.35,0.75);
						\draw (0.35,0.75) to (-1.35,0.75);
						\draw[out=180,in=180] (-1.35,0.75) to (-1.35,0.15);
						\draw (-1.35,0.15) -- (2);
				\end{tikzpicture} QRNN with two memory qubits, blue for a
	\begin{tikzpicture}[xscale=0.4,yscale=0.6, baseline]
						\node(1) [circle,draw,inner sep=0pt,minimum size=3.5pt] at (-1,0) {};
						\node(2) [circle,draw,inner sep=0pt,minimum size=3.5 pt] at (-1,0.3) {};
						\node(4) [circle,draw,inner sep=0pt,minimum size=3.5pt] at (0,0) {};
						\node(5) [circle,draw,inner sep=0pt,minimum size=3.5pt] at (0,0.3) {};
						\draw (1)--(4);
						\draw (2)--(4);
						\draw (1)--(5);
						\draw (2)--(5);
						\draw (5) -- (0.3,0.3);
						\draw[out=0,in=0] (0.3,0.3) to (0.3,0.5);
						\draw (0.3,0.5) to (-1.3,0.5);
						\draw[out=180,in=180] (-1.3,0.5) to (-1.3,0.3);
						\draw (-1.3,0.3) -- (2);
					\end{tikzpicture} QRNN with one memory qubit, red for a \begin{tikzpicture}[yscale=0.6,xscale=0.4, baseline]
        \node(1) [circle,draw,inner sep=0pt,minimum size=3.5pt] at (-1,0.15) {};
        \node(4) [circle,draw,inner sep=0pt,minimum size=3.5pt] at (0,0.15) {};
        \draw (1)--(4);
    \end{tikzpicture} feed-forward QNN). In both cases, a QNN without hidden layers and a learningrate \(\epsilon  \eta=0.06\) is used. The drawn-through lines show the training, and the dashed lines the testing. Panel \textbf{(a)} shows the learning and testing of the delay by one channel (where \(\ket{\psi}_x^\text{out}= \ket{\psi}_{x-1}^\text{in} \)) with \(N=30\) training pairs over \(1500\) rounds.   Panel \textbf{(b)} shows the learning and testing of the delay-by-two channel (where \(\ket{\psi}_x^\text{out}= \ket{\psi}_{x-2}^\text{in} \)) with \(N=20\) training pairs over \(1000\) rounds. }
    \label{fig:SWAPs} 
\end{figure}
As you can see in Panel (a), in the case \(p=1\), a QRNN with one memory qubit learns the delay by one channel well and also generalises well to the test set, while the feed-forward QNN does not learn as well and does not generalise at all. In the case \(p=2\), as can be seen in Panel (b), the QRNN with two memory qubits learns the delay-by-two channel well and generalises well to the test set. The QRNN with one memory qubit learns the channel not as well and does not generalise as well, and the feed-forward QNN learns it even less well and does not generalise. As shown in Figure \ref{fig:SWAP1mdifferentarchitectures}, this behaviour has nothing to do with the simple structure of the feed-forward QNN since none of the displayed feed-forward QNNs is doing better than the simplest feed-forward QNN.

\begin{figure}[htbp]
    \centering
	\begin{tikzpicture}
		\begin{axis}[
			width=0.49\linewidth, % Scale the plot to \linewidth
			%grid=major, % Display a grid
			%grid style={dashed,gray!30}, % Set the style
			xlabel= Training step, % Set the labels
			ylabel= Cost,
			xmin=0,
			xmax=1000,
			ymin=0.45,
			ymax=1.05,
			legend columns=5, 
			legend style={at={(0.5,-0.2)},anchor=north},
			%legend style={at={(0.7,0.7)},anchor=north}, % Put the legend below the plot
			%x tick label style={rotate=90,anchor=east} % Display labels sideways
			]
			\addplot[very thick,color=red,mark=None]  table[x=step,y=Cwom1round,col sep=comma] {SWAP1mdifferentarchitectures.csv}; 
			\addlegendentry{\begin{tikzpicture}[yscale=0.1,xscale=0.05, baseline]
        \node(1) [circle,draw,inner sep=0pt,minimum size=3.5pt] at (-1,0.15) {};
        \node(4) [circle,draw,inner sep=0pt,minimum size=3.5pt] at (0,0.15) {};
        \draw (1)--(4);
    \end{tikzpicture} training}
			\addplot[very thick,color=purple,mark=None]  table[x=step,y=Cwom2round,col sep=comma] {SWAP1mdifferentarchitectures.csv}; 
			\addlegendentry{\begin{tikzpicture}[yscale=0.1,xscale=0.05, baseline]
        \node(1) [circle,draw,inner sep=0pt,minimum size=3.5pt] at (-1,0.15) {};
        \node(4) [circle,draw,inner sep=0pt,minimum size=3.5pt] at (0,0) {};
        \node(5) [circle,draw,inner sep=0pt,minimum size=3.5pt] at (0,0.3) {};
        \node(6) [circle,draw,inner sep=0pt,minimum size=3.5pt] at (1,0.15) {};
        \draw (1)--(4);
        \draw (1)--(5);
        \draw (6)--(4);
        \draw (6)--(5);
    \end{tikzpicture} training}
			\addplot[very thick,color=orange,mark=None]  table[x=step,y=Cwom3round,col sep=comma] {SWAP1mdifferentarchitectures.csv};
			\addlegendentry{\begin{tikzpicture}[yscale=0.1,xscale=0.05, baseline]
        \node(1) [circle,draw,inner sep=0pt,minimum size=3.5pt] at (-1,0.15) {};
        \node(3) [circle,draw,inner sep=0pt,minimum size=3.5pt] at (0,0.15) {};
        \node(4) [circle,draw,inner sep=0pt,minimum size=3.5pt] at (0,0.45) {};
        \node(5) [circle,draw,inner sep=0pt,minimum size=3.5pt] at (0,-0.15) {};
        \node(6) [circle,draw,inner sep=0pt,minimum size=3.5pt] at (1,0.15) {};
        \draw (1)--(3);
        \draw (1)--(4);
        \draw (1)--(5);
        \draw (6)--(3);
        \draw (6)--(4);
        \draw (6)--(5);
    \end{tikzpicture} training}
			\addplot[very thick,color=brown,mark=None]  table[x=step,y=Cwom4round,col sep=comma] {SWAP1mdifferentarchitectures.csv};
			\addlegendentry{\begin{tikzpicture}[yscale=0.1,xscale=0.05, baseline]
        \node(1) [circle,draw,inner sep=0pt,minimum size=3.5pt] at (-1,0.15) {};
        \node(4) [circle,draw,inner sep=0pt,minimum size=3.5pt] at (0,0.0) {};
        \node(5) [circle,draw,inner sep=0pt,minimum size=3.5pt] at (0,0.3) {};
        \node(7) [circle,draw,inner sep=0pt,minimum size=3.5pt] at (1,-0.15) {};
        \node(8) [circle,draw,inner sep=0pt,minimum size=3.5pt] at (1,0.15) {};
        \node(9) [circle,draw,inner sep=0pt,minimum size=3.5pt] at (1,0.45) {};
        \node(10) [circle,draw,inner sep=0pt,minimum size=3.5pt] at (2,0) {};
        \node(11) [circle,draw,inner sep=0pt,minimum size=3.5pt] at (2,0.3) {};
        \node(6) [circle,draw,inner sep=0pt,minimum size=3.5pt] at (3,0.15) {};
        \draw (1)--(4);
        \draw (1)--(5);
        \draw (7)--(4);
        \draw (7)--(5);
        \draw (8)--(4);
        \draw (8)--(5);
        \draw (9)--(4);
        \draw (9)--(5);
        \draw (7)--(10);
        \draw (7)--(11);
        \draw (8)--(10);
        \draw (8)--(11);
        \draw (9)--(10);
        \draw (9)--(11);
        \draw (6)--(10);
        \draw (6)--(11);
    \end{tikzpicture} training}
			\addplot[very thick,color=blue,mark=None]   table[x=step,y=C1mround,col sep=comma] {SWAP1mdifferentarchitectures.csv};
			\addlegendentry{\begin{tikzpicture}[xscale=0.05,yscale=0.1, baseline]
						\node(1) [circle,draw,inner sep=0pt,minimum size=3.5pt] at (-1,0) {};
						\node(2) [circle,draw,inner sep=0pt,minimum size=3.5 pt] at (-1,0.3) {};
						\node(4) [circle,draw,inner sep=0pt,minimum size=3.5pt] at (0,0) {};
						\node(5) [circle,draw,inner sep=0pt,minimum size=3.5pt] at (0,0.3) {};
						\draw (1)--(4);
						\draw (2)--(4);
						\draw (1)--(5);
						\draw (2)--(5);
						\draw (5) -- (0.3,0.3);
						\draw[out=0,in=0] (0.3,0.3) to (0.3,0.5);
						\draw (0.3,0.5) to (-1.3,0.5);
						\draw[out=180,in=180] (-1.3,0.5) to (-1.3,0.3);
						\draw (-1.3,0.3) -- (2);
					\end{tikzpicture} training}
			\addplot[very thick,color=red,mark=None,dashed]   table[x=step,y=Cwomtest1round,col sep=comma] {SWAP1mdifferentarchitectures.csv}; 
			\addlegendentry{\begin{tikzpicture}[yscale=0.1,xscale=0.05, baseline]
        \node(1) [circle,draw,inner sep=0pt,minimum size=3.5pt] at (-1,0.15) {};
        \node(4) [circle,draw,inner sep=0pt,minimum size=3.5pt] at (0,0.15) {};
        \draw (1)--(4);
    \end{tikzpicture} testing}
			\addplot[very thick,color=purple,mark=None,dashed]   table[x=step,y=Cwomtest2round,col sep=comma] {SWAP1mdifferentarchitectures.csv};
			\addlegendentry{\begin{tikzpicture}[yscale=0.1,xscale=0.05, baseline]
        \node(1) [circle,draw,inner sep=0pt,minimum size=3.5pt] at (-1,0.15) {};
        \node(4) [circle,draw,inner sep=0pt,minimum size=3.5pt] at (0,0) {};
        \node(5) [circle,draw,inner sep=0pt,minimum size=3.5pt] at (0,0.3) {};
        \node(6) [circle,draw,inner sep=0pt,minimum size=3.5pt] at (1,0.15) {};
        \draw (1)--(4);
        \draw (1)--(5);
        \draw (6)--(4);
        \draw (6)--(5);
    \end{tikzpicture} testing}
			\addplot[very thick,color=orange,mark=None,dashed]   table[x=step,y=Cwomtest3round,col sep=comma] {SWAP1mdifferentarchitectures.csv};
			\addlegendentry{\begin{tikzpicture}[yscale=0.1,xscale=0.05, baseline]
        \node(1) [circle,draw,inner sep=0pt,minimum size=3.5pt] at (-1,0.15) {};
        \node(3) [circle,draw,inner sep=0pt,minimum size=3.5pt] at (0,0.15) {};
        \node(4) [circle,draw,inner sep=0pt,minimum size=3.5pt] at (0,0.45) {};
        \node(5) [circle,draw,inner sep=0pt,minimum size=3.5pt] at (0,-0.15) {};
        \node(6) [circle,draw,inner sep=0pt,minimum size=3.5pt] at (1,0.15) {};
        \draw (1)--(3);
        \draw (1)--(4);
        \draw (1)--(5);
        \draw (6)--(3);
        \draw (6)--(4);
        \draw (6)--(5);
    \end{tikzpicture} testing}
			\addplot[very thick,color=brown,mark=None,dashed]   table[x=step,y=Cwomtest4round,col sep=comma] {SWAP1mdifferentarchitectures.csv};
			\addlegendentry{\begin{tikzpicture}[yscale=0.1,xscale=0.05, baseline]
        \node(1) [circle,draw,inner sep=0pt,minimum size=3.5pt] at (-1,0.15) {};
        \node(4) [circle,draw,inner sep=0pt,minimum size=3.5pt] at (0,0.0) {};
        \node(5) [circle,draw,inner sep=0pt,minimum size=3.5pt] at (0,0.3) {};
        \node(7) [circle,draw,inner sep=0pt,minimum size=3.5pt] at (1,-0.15) {};
        \node(8) [circle,draw,inner sep=0pt,minimum size=3.5pt] at (1,0.15) {};
        \node(9) [circle,draw,inner sep=0pt,minimum size=3.5pt] at (1,0.45) {};
        \node(10) [circle,draw,inner sep=0pt,minimum size=3.5pt] at (2,0) {};
        \node(11) [circle,draw,inner sep=0pt,minimum size=3.5pt] at (2,0.3) {};
        \node(6) [circle,draw,inner sep=0pt,minimum size=3.5pt] at (3,0.15) {};
        \draw (1)--(4);
        \draw (1)--(5);
        \draw (7)--(4);
        \draw (7)--(5);
        \draw (8)--(4);
        \draw (8)--(5);
        \draw (9)--(4);
        \draw (9)--(5);
        \draw (7)--(10);
        \draw (7)--(11);
        \draw (8)--(10);
        \draw (8)--(11);
        \draw (9)--(10);
        \draw (9)--(11);
        \draw (6)--(10);
        \draw (6)--(11);
    \end{tikzpicture} testing}
			\addplot[very thick,color=blue,mark=None,dashed]   table[x=step,y=C1mtestround,col sep=comma] {SWAP1mdifferentarchitectures.csv}; 
			\addlegendentry{\begin{tikzpicture}[xscale=0.05,yscale=0.1, baseline]
						\node(1) [circle,draw,inner sep=0pt,minimum size=3.5pt] at (-1,0) {};
						\node(2) [circle,draw,inner sep=0pt,minimum size=3.5 pt] at (-1,0.3) {};
						\node(4) [circle,draw,inner sep=0pt,minimum size=3.5pt] at (0,0) {};
						\node(5) [circle,draw,inner sep=0pt,minimum size=3.5pt] at (0,0.3) {};
						\draw (1)--(4);
						\draw (2)--(4);
						\draw (1)--(5);
						\draw (2)--(5);
						\draw (5) -- (0.3,0.3);
						\draw[out=0,in=0] (0.3,0.3) to (0.3,0.5);
						\draw (0.3,0.5) to (-1.3,0.5);
						\draw[out=180,in=180] (-1.3,0.5) to (-1.3,0.3);
						\draw (-1.3,0.3) -- (2);
					\end{tikzpicture} testing}
				\addplot [domain=0:1000, samples=10, color=gray,]{0.5};
				\addplot [domain=0:1000, samples=10, color=gray,]{1};
		\end{axis}
	\end{tikzpicture}
	\caption[]{\textbf{Comparison of QRNN and feed-forward QNN with different architectures.}
	The plot shows the local cost for pure output states dependent on the training step while different QNNs are trained over $1000$ rounds for the delay by one channel on $N=20$ training pairs with learningrate $\epsilon  \eta=0.06$. Thereby, the colour indicates which QNN is trained. The drawn-through lines show the training, and the dashed lines the testing. }
	\label{fig:SWAP1mdifferentarchitectures}
\end{figure}
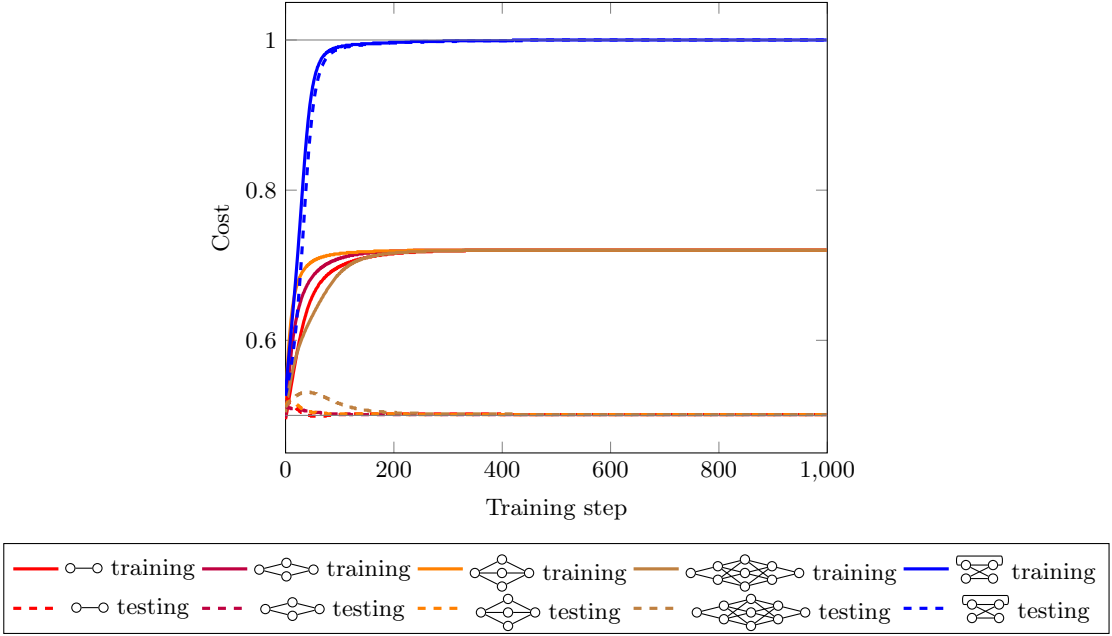

\begin{figure}[htbp]
\centering
		\begin{tikzpicture}
			\begin{axis}[
				width=0.49\linewidth, % Scale the plot to \linewidth
				%grid=major, % Display a grid
				%grid style={dashed,gray!30}, % Set the style
				xlabel= Number of training pairs, % Set the labels
				ylabel= Cost,
				xmin=0,
				xmax=11,
				ymin=0.4,
				ymax=1.05,
			    legend columns=2, 
			    legend style={at={(0.5,-0.2)},anchor=north},
				%legend style={at={(0.7,0.7)},anchor=north}, % Put the legend below the plot
				%x tick label style={rotate=90,anchor=east} % Display labels sideways
				]
				\addplot+[only marks,color=red,mark=triangle,mark size=2.9pt]  table[x=N,y=Cwomround,col sep=comma] {Generalisationbehaviour.csv}; 
				\addlegendentry{\begin{tikzpicture}[scale=0.05, baseline]
						\node(1) [circle,draw,inner sep=0pt,minimum size=3.5pt] at (-1,0.15) {};
						\node(4) [circle,draw,inner sep=0pt,minimum size=3.5pt] at (0,0.15) {};
						\draw (1)--(4);
					\end{tikzpicture} training} 
				\addplot+[only marks,color=blue,mark=triangle,mark size=2.9pt]   table[x=N,y=C1mround,col sep=comma] {Generalisationbehaviour.csv}; 
				\addlegendentry{\begin{tikzpicture}[xscale=0.05,yscale=0.1, baseline]
						\node(1) [circle,draw,inner sep=0pt,minimum size=3.5pt] at (-1,0) {};
						\node(2) [circle,draw,inner sep=0pt,minimum size=3.5 pt] at (-1,0.3) {};
						\node(4) [circle,draw,inner sep=0pt,minimum size=3.5pt] at (0,0) {};
						\node(5) [circle,draw,inner sep=0pt,minimum size=3.5pt] at (0,0.3) {};
						\draw (1)--(4);
						\draw (2)--(4);
						\draw (1)--(5);
						\draw (2)--(5);
						\draw (5) -- (0.3,0.3);
						\draw[out=0,in=0] (0.3,0.3) to (0.3,0.5);
						\draw (0.3,0.5) to (-1.3,0.5);
						\draw[out=180,in=180] (-1.3,0.5) to (-1.3,0.3);
						\draw (-1.3,0.3) -- (2);
					\end{tikzpicture} training}
				\addplot+[only marks,color=red,mark=square,mark size=2pt]   table[x=N,y=Cwomtestround,col sep=comma] {Generalisationbehaviour.csv}; 
				\addlegendentry{\begin{tikzpicture}[scale=0.05, baseline]
						\node(1) [circle,draw,inner sep=0pt,minimum size=3.5pt] at (-1,0.15) {};
						\node(4) [circle,draw,inner sep=0pt,minimum size=3.5pt] at (0,0.15) {};
						\draw (1)--(4);
					\end{tikzpicture} testing} 
				\addplot+[only marks,color=blue,mark=square,mark size=2pt]   table[x=N,y=C1mtestround,col sep=comma] {Generalisationbehaviour.csv}; 
				\addlegendentry{\begin{tikzpicture}[xscale=0.05,yscale=0.1, baseline]
						\node(1) [circle,draw,inner sep=0pt,minimum size=3.5pt] at (-1,0) {};
						\node(2) [circle,draw,inner sep=0pt,minimum size=3.5 pt] at (-1,0.3) {};
						\node(4) [circle,draw,inner sep=0pt,minimum size=3.5pt] at (0,0) {};
						\node(5) [circle,draw,inner sep=0pt,minimum size=3.5pt] at (0,0.3) {};
						\draw (1)--(4);
						\draw (2)--(4);
						\draw (1)--(5);
						\draw (2)--(5);
						\draw (5) -- (0.3,0.3);
						\draw[out=0,in=0] (0.3,0.3) to (0.3,0.5);
						\draw (0.3,0.5) to (-1.3,0.5);
						\draw[out=180,in=180] (-1.3,0.5) to (-1.3,0.3);
						\draw (-1.3,0.3) -- (2);
					\end{tikzpicture} testing}
				\addplot [domain=0:11, samples=10, color=gray,]{0.5};
				\addplot [domain=0:11, samples=10, color=gray,]{1};
			\end{axis}
		\end{tikzpicture}
	\caption[]{\textbf{Generalisation behaviour.} The plot shows the generalisation behaviour of a
	\begin{tikzpicture}[xscale=0.4,yscale=0.6, baseline]
						\node(1) [circle,draw,inner sep=0pt,minimum size=3.5pt] at (-1,0) {};
						\node(2) [circle,draw,inner sep=0pt,minimum size=3.5 pt] at (-1,0.3) {};
						\node(4) [circle,draw,inner sep=0pt,minimum size=3.5pt] at (0,0) {};
						\node(5) [circle,draw,inner sep=0pt,minimum size=3.5pt] at (0,0.3) {};
						\draw (1)--(4);
						\draw (2)--(4);
						\draw (1)--(5);
						\draw (2)--(5);
						\draw (5) -- (0.3,0.3);
						\draw[out=0,in=0] (0.3,0.3) to (0.3,0.5);
						\draw (0.3,0.5) to (-1.3,0.5);
						\draw[out=180,in=180] (-1.3,0.5) to (-1.3,0.3);
						\draw (-1.3,0.3) -- (2);
					\end{tikzpicture}  QRNN with one memory qubit (blue)
     and a \begin{tikzpicture}[yscale=0.6,xscale=0.4, baseline]
        \node(1) [circle,draw,inner sep=0pt,minimum size=3.5pt] at (-1,0.15) {};
        \node(4) [circle,draw,inner sep=0pt,minimum size=3.5pt] at (0,0.15) {};
        \draw (1)--(4);
    \end{tikzpicture} feed-forward QNN (red). We trained the QNNs for $1000$ rounds with a learningrate of $\epsilon  \eta=0.06$ and $N = 1, 2, \dots , 10$ training pairs and evaluated the cost function on the training set (triangles) and for a set of $30$ test pairs (squares) afterwards. This is then averaged over $50$ rounds.}
	\label{fig:generalisationbehaviour} 
\end{figure}
Also, as shown in Figure \ref{fig:generalisationbehaviour}, even \(N=8\) training pairs are sufficient for the simplest QRNN to learn the delay by one channel. The training cost of the QRNN stays at approximately one, and the test cost rises to approximately one with a rising number of training pairs (until \(N=8\)), as expected. The training cost of the feed-forward QNN starts at one as one random state can be learned from another well but then declines with a rising number of training pairs while the test cost stays at about \(0.5\), so the feed-forward QNN does not generalise at all.
\subsubsection{Time evolution}
The other presented task is learning the time evolution of a state governed by a certain Hamiltonian. Thereby, we start with the time evolution \(\ket{\psi (t)}\) of that state, take \(N\) equidistant points out of a first interval \([T_1,T_2]\) and try to predict the time evolution in the next interval \((T_2, T_3]\) where \(T_2=T_1+N\tau\) and \(T_3=T_2+M\tau\). This is done by firstly learning from the training set consisting of the training pairs \((\ket{\psi (t_0+i\tau)},\ket{\psi (t_0+(i+1)\tau)})\) for \(i=1,...,N-1\), then secondly using \(\ket{\psi (t_0+N\tau)}\) as input, hopefully getting something similar to \(\ket{\psi (t_0+(N+1)\tau)}\) as output, thirdly this is then used again as input and so on leading to a prediction for the time evolution in the interval \((T_2, T_3]\). \\
The first example is shown in Figure \ref{fig:sigmaxflip}: we started in state \(\ket{0}\), then went over \(\ket{\phi^-}=\frac{1}{\sqrt{2}}\ket{0}-\frac{i}{\sqrt{2}}\ket{1}\) to \(\ket{1}\), back to \(\ket{0}\) over \(\ket{\phi^-}\) and so on. It can be seen that the QRNN learns and predicts the time evolution a lot better than the feed-forward QNN: The blue circles in Panel (a) are nearly on top of the black ones, and the red ones are completely different. Also, the cost in Panel (b) gets much higher for the QRNN.
\begin{figure}[htbp]
	\begin{subfigure}{0.49\textwidth}
		\begin{tikzpicture}
			\node at (-0.5,6.3) {\textbf{a}};
			\draw[] (3,3) circle (2.675);
			\draw[->] (3,3)--(3,6) node[above]{\(tr(\rho \sigma^z)\)};
			\draw[->] (3,3)--(6,3) node[right]{\(tr(\rho \sigma^y)\)};
			\draw[white] (-1,-1)--(6,-1);
			\node[blue](4) [circle,draw,inner sep=0pt,minimum size=12pt] at (3.094,0.328) {};
			\node[blue](5) [circle,draw,inner sep=0pt,minimum size=12pt] at (0.325,3.005) {};
			\node[blue](6) [circle,draw,inner sep=0pt,minimum size=12pt] at (3.013,5.675) {};
			\node[thick](1) [star,draw,inner sep=0pt,minimum size=12pt] at (3,5.675) {\(1\)};
			\node[thick](2) [star,draw,inner sep=0pt,minimum size=12pt] at (0.325,3) {\(2\)};
			\node[thick](3) [star,draw,inner sep=0pt,minimum size=12pt] at (3,0.325) {\(3\)};
			\node[red](4) [circle,draw,inner sep=0pt,minimum size=12pt] at (0.526,1.994) {\(1\)};
			\node[red](5) [circle,draw,inner sep=0pt,minimum size=12pt] at (0.389,2.609) {\(3\)};
			\node[red](6) [circle,draw,inner sep=0pt,minimum size=12pt] at (0.469,2.35) {\(2\)};
			\draw[->] (2.8,5.77) arc (95:174:2.85);
			\draw[->] (0.23,2.8) arc (185:264:2.85);
			\draw[->] (0.12,3.25) arc (175:93.5:2.87);
			\draw[->] (2.75,0.12) arc (265:183.5:2.87);
			
			\node[thick](7) [star,draw,inner sep=0pt,minimum size=12pt] at (0.9,-0.75){};
			\node (8) at (2.15,-0.75){training data};
			
			\node[blue](10) [circle,draw,inner sep=0pt,minimum size=12pt] at (3.5,-0.75) {};
			\node(11) [circle,draw,inner sep=0pt,minimum size=3.5pt] at (3.9,-0.85) {};
			\node(12) [circle,draw,inner sep=0pt,minimum size=3.5 pt] at (3.9,-0.65) {};
			\node(14) [circle,draw,inner sep=0pt,minimum size=3.5pt] at (4.2,-0.85) {};
			\node(15) [circle,draw,inner sep=0pt,minimum size=3.5pt] at (4.2,-0.65) {};
			\node(16) [circle,draw,inner sep=0pt,minimum size=3.5pt] at (4.5,-0.85) {};
			\node(17) [circle,draw,inner sep=0pt,minimum size=3.5pt] at (4.5,-0.65) {};
			\draw (11)--(14);
			\draw (12)--(14);
			\draw (11)--(15);
			\draw (12)--(15);
			\draw (16)--(14);
			\draw (17)--(14);
			\draw (16)--(15);
			\draw (17)--(15);
			\draw (17) -- (4.3,-0.65);
			\draw[out=0,in=0] (4.6,-0.65) to (4.6,-0.5);
			\draw (4.6,-0.5) to (3.8,-0.5);
			\draw[out=180,in=180] (3.8,-0.5) to (3.8,-0.65);
			\draw (3.8,-0.65) -- (12);
			
			\node[red](20) [circle,draw,inner sep=0pt,minimum size=12pt] at (5.1,-0.75) {};
			\node(21) [circle,draw,inner sep=0pt,minimum size=3.5pt] at (5.5,-0.75) {};
			\node(24) [circle,draw,inner sep=0pt,minimum size=3.5pt] at (5.8,-0.85) {};
			\node(25) [circle,draw,inner sep=0pt,minimum size=3.5pt] at (5.8,-0.65) {};
			\node(26) [circle,draw,inner sep=0pt,minimum size=3.5pt] at (6.1,-0.75) {};
			\draw (21)--(24);
			\draw (21)--(25);
			\draw (26)--(24);
			\draw (26)--(25);
			
			\draw (0.5,-1.2)--(6.5,-1.2)--(6.5,-0.25)--(0.5,-0.25)--(0.5,-1.2);
		\end{tikzpicture}
	\end{subfigure}
	\hfill
	\begin{subfigure}{0.49\textwidth}
		\begin{tikzpicture}
			\node at (-1,6.5) {\textbf{b}};
			\begin{axis}[
				width=\linewidth, % Scale the plot to \linewidth
				%grid=major, % Display a grid
				%grid style={dashed,gray!30}, % Set the style
				xlabel= Training step, % Set the labels
				ylabel= Cost,
				xmin=0,
				xmax=1000,
				ymin=0.25,
				ymax=1.05,
				legend columns=2, 
				legend style={at={(0.51,0.25)},anchor=north},
				%legend style={at={(0.7,0.7)},anchor=north}, % Put the legend below the plot
				%x tick label style={rotate=90,anchor=east} % Display labels sideways
				]
				\addplot+[very thick,color=red,mark=None]  table[x=step,y=Cwomtraininground,col sep=comma] {Csigmaxflip.csv}; 
				\addlegendentry{\begin{tikzpicture}[yscale=0.1,xscale=0.05, baseline]
						\node(1) [circle,draw,inner sep=0pt,minimum size=3.5pt] at (-1,0.15) {};
						\node(4) [circle,draw,inner sep=0pt,minimum size=3.5pt] at (0,0) {};
						\node(5) [circle,draw,inner sep=0pt,minimum size=3.5pt] at (0,0.3) {};
						\node(6) [circle,draw,inner sep=0pt,minimum size=3.5pt] at (1,0.15) {};
						\draw (1)--(4);
						\draw (1)--(5);
						\draw (6)--(4);
						\draw (6)--(5);
					\end{tikzpicture} training}
				\addplot+[very thick,color=blue,mark=None]   table[x=step,y=C1mtraininground,col sep=comma] {Csigmaxflip.csv}; 
				\addlegendentry{\begin{tikzpicture}[yscale=0.1,xscale=0.05, baseline]
						\node(1) [circle,draw,inner sep=0pt,minimum size=3.5pt] at (-1,0) {};
						\node(2) [circle,draw,inner sep=0pt,minimum size=3.5pt] at (-1,0.3) {};
						\node(4) [circle,draw,inner sep=0pt,minimum size=3.5pt] at (0,0) {};
						\node(5) [circle,draw,inner sep=0pt,minimum size=3.5pt] at (0,0.3) {};
						\node(6) [circle,draw,inner sep=0pt,minimum size=3.5pt] at (1,0) {};
						\node(7) [circle,draw,inner sep=0pt,minimum size=3.5pt] at (1,0.3) {};
						\draw (1)--(4);
						\draw (2)--(4);
						\draw (1)--(5);
						\draw (2)--(5);
						\draw (6)--(4);
						\draw (7)--(4);
						\draw (6)--(5);
						\draw (7)--(5);
						\draw (7) -- (1.3,0.3);
						\draw[out=0,in=0] (1.3,0.3) to (1.3,0.5);
						\draw (1.3,0.5) to (-1.3,0.5);
						\draw[out=180,in=180] (-1.3,0.5) to (-1.3,0.3);
						\draw (-1.3,0.3) -- (2);
					\end{tikzpicture} training }
				\addplot+[very thick,color=red,mark=None,dashed]   table[x=step,y=Cwomtrainingpredictionround,col sep=comma] {Csigmaxflip.csv};
				\addlegendentry{\begin{tikzpicture}[yscale=0.1,xscale=0.05, baseline]
						\node(1) [circle,draw,inner sep=0pt,minimum size=3.5pt] at (-1,0.15) {};
						\node(4) [circle,draw,inner sep=0pt,minimum size=3.5pt] at (0,0) {};
						\node(5) [circle,draw,inner sep=0pt,minimum size=3.5pt] at (0,0.3) {};
						\node(6) [circle,draw,inner sep=0pt,minimum size=3.5pt] at (1,0.15) {};
						\draw (1)--(4);
						\draw (1)--(5);
						\draw (6)--(4);
						\draw (6)--(5);
					\end{tikzpicture} prediction}
				\addplot+[very thick,color=blue,mark=None,dashed]   table[x=step,y=C1mtrainingpredictionround,col sep=comma] {Csigmaxflip.csv}; 
				\addlegendentry{\begin{tikzpicture}[yscale=0.1,xscale=0.05, baseline]
						\node(1) [circle,draw,inner sep=0pt,minimum size=3.5pt] at (-1,0) {};
						\node(2) [circle,draw,inner sep=0pt,minimum size=3.5pt] at (-1,0.3) {};
						\node(4) [circle,draw,inner sep=0pt,minimum size=3.5pt] at (0,0) {};
						\node(5) [circle,draw,inner sep=0pt,minimum size=3.5pt] at (0,0.3) {};
						\node(6) [circle,draw,inner sep=0pt,minimum size=3.5pt] at (1,0) {};
						\node(7) [circle,draw,inner sep=0pt,minimum size=3.5pt] at (1,0.3) {};
						\draw (1)--(4);
						\draw (2)--(4);
						\draw (1)--(5);
						\draw (2)--(5);
						\draw (6)--(4);
						\draw (7)--(4);
						\draw (6)--(5);
						\draw (7)--(5);
						\draw (7) -- (1.3,0.3);
						\draw[out=0,in=0] (1.3,0.3) to (1.3,0.5);
						\draw (1.3,0.5) to (-1.3,0.5);
						\draw[out=180,in=180] (-1.3,0.5) to (-1.3,0.3);
						\draw (-1.3,0.3) -- (2);
					\end{tikzpicture} prediction }
				\addplot [domain=0:1000, samples=10, color=gray,]{0.5};
				\addplot [domain=0:1000, samples=10, color=gray,]{1};
				%\legend{training without memory, training with 1 memory qubit, testing without memory, testing with 1 memory qubit}
			\end{axis}
		\end{tikzpicture}
	\end{subfigure}
	\caption[]{\textbf{\(\sigma^x\) flip.}  Panel \textbf{(a)} shows the \(y,z\)-plane of the Bloch sphere. A \begin{tikzpicture}[yscale=0.6,xscale=0.4, baseline]
			\node(1) [circle,draw,inner sep=0pt,minimum size=3.5pt] at (-1,0) {};
			\node(2) [circle,draw,inner sep=0pt,minimum size=3.5pt] at (-1,0.3) {};
			\node(4) [circle,draw,inner sep=0pt,minimum size=3.5pt] at (0,0) {};
			\node(5) [circle,draw,inner sep=0pt,minimum size=3.5pt] at (0,0.3) {};
			\node(6) [circle,draw,inner sep=0pt,minimum size=3.5pt] at (1,0) {};
			\node(7) [circle,draw,inner sep=0pt,minimum size=3.5pt] at (1,0.3) {};
			\draw (1)--(4);
			\draw (2)--(4);
			\draw (1)--(5);
			\draw (2)--(5);
			\draw (6)--(4);
			\draw (7)--(4);
			\draw (6)--(5);
			\draw (7)--(5);
			\draw (7) -- (1.3,0.3);
			\draw[out=0,in=0] (1.3,0.3) to (1.3,0.5);
			\draw (1.3,0.5) to (-1.3,0.5);
			\draw[out=180,in=180] (-1.3,0.5) to (-1.3,0.3);
			\draw (-1.3,0.3) -- (2);
		\end{tikzpicture} QRNN with one memory qubit (blue) and a \begin{tikzpicture}[yscale=0.6,xscale=0.4, baseline]
			\node(1) [circle,draw,inner sep=0pt,minimum size=3.5pt] at (-1,0.15) {};
			\node(4) [circle,draw,inner sep=0pt,minimum size=3.5pt] at (0,0) {};
			\node(5) [circle,draw,inner sep=0pt,minimum size=3.5pt] at (0,0.3) {};
			\node(6) [circle,draw,inner sep=0pt,minimum size=3.5pt] at (1,0.15) {};
			\draw (1)--(4);
			\draw (1)--(5);
			\draw (6)--(4);
			\draw (6)--(5);
		\end{tikzpicture} feed-forward QNN (red) are trained over \(1000\) training rounds with a learningrate of \(\eta\epsilon=0.05\) to learn the time evolution, where the state goes from \(\ket{0}\) (black star 1) over \(\ket{\phi^-}\) (black star 2) to \(\ket{1}\) (black star 3), over \(\ket{\phi^-}\) (black star 2) back to \(\ket{0}\) (black star 3) and so on. To train the networks, we used \(N=10\) training pairs and predicted the time evolution for \(M=20\) points. The blue circles show the prediction for the QRNN, and the red the prediction of the QNN. In the latter case, the evolution goes from 1 to 2 to 1 to 3 to 1 to 2 to 1 to 3 to 1.... Panel \textbf{b} shows the cost in dependence on the training step while training. The drawn-through lines show the cost of the training, and the dashed lines are the ones for the training and prediction.}
	\label{fig:sigmaxflip} 
\end{figure}

A similar example, but with the time evolution going more smoothly from point \(\ket{0}\) to \(\ket{1}\) and back, is the time evolution of \(\ket{\psi (t_0)}=\ket{0}\) under \(H(t)=\frac{\pi}{2}f(t)(1-\sigma^x)\) where 
\[f(t)=\begin{cases}
			\begin{aligned}
				&1 &&,t\in [2k,2k+1)\\
				&-1 &&,t\in [2k+1,2k)
			\end{aligned}
		\end{cases},\ k\in \mathbb{N}\]
is the step function. This is shown in Figure \ref{fig:timeevolution} for \(\tau=0.1\) (for comparison, we used the same Hamiltonian before but with \(\tau=0.5\)).
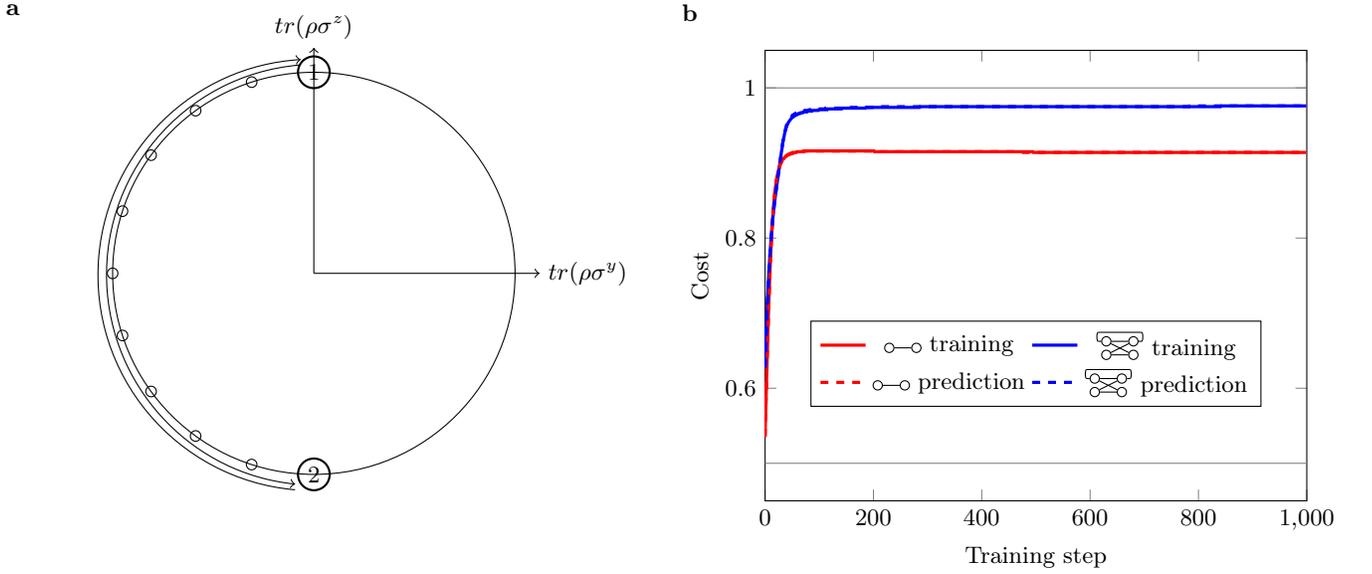
\begin{figure}[htbp]
	\begin{subfigure}{0.49\textwidth}
		\begin{tikzpicture}
			\node at (-1,6.5) {\textbf{a}};
			\draw[] (3,3) circle (2.675);
			\draw[->] (3,3)--(3,6) node[above]{\(tr(\rho \sigma^z)\)};
			\draw[->] (3,3)--(6,3) node[right]{\(tr(\rho \sigma^y)\)};
			\draw[white] (-1,-1)--(6,-1);
			\node[thick](0) [circle,draw,inner sep=0pt,minimum size=12pt] at (3,5.675) {\(1\)};
			\node[thick](10) [circle,draw,inner sep=0pt,minimum size=12pt] at (3,0.325) {\(2\)};
			\node(1) [circle,draw,inner sep=0pt,minimum size=4pt] at (2.173,5.544) {};
			\node(2) [circle,draw,inner sep=0pt,minimum size=4pt] at (1.428,5.164) {};
			\node(3) [circle,draw,inner sep=0pt,minimum size=4pt] at (0.836,4.572) {};
			\node(4) [circle,draw,inner sep=0pt,minimum size=4pt] at (0.456,3.827) {};
			\node(5) [circle,draw,inner sep=0pt,minimum size=4pt] at (0.325,3) {};
			\node(6) [circle,draw,inner sep=0pt,minimum size=4pt] at (0.456,2.173) {};
			\node(7) [circle,draw,inner sep=0pt,minimum size=4pt] at (0.836,1.428) {};
			\node(8) [circle,draw,inner sep=0pt,minimum size=4pt] at (1.428,0.836) {};
			\node(9) [circle,draw,inner sep=0pt,minimum size=4pt] at (2.173,0.456) {};
			\draw[->] (2.8,5.77) arc (95:264:2.8);
			\draw[->] (2.75,0.12) arc (265:93.5:2.87);
		\end{tikzpicture}
	\end{subfigure}
	\hfill
	\begin{subfigure}{0.49\textwidth}
		\begin{tikzpicture}
			\node at (-1,6.5) {\textbf{b}};
			\begin{axis}[
				width=\linewidth, % Scale the plot to \linewidth
				%grid=major, % Display a grid
				%grid style={dashed,gray!30}, % Set the style
				xlabel= Training step, % Set the labels
				ylabel= Cost,
				xmin=0,
				xmax=1000,
				ymin=0.45,
				ymax=1.05,
				legend columns=2, 
				legend style={at={(0.5,0.4)},anchor=north},
				%legend style={at={(0.7,0.6)},anchor=north}, % Put the legend below the plot
				%x tick label style={rotate=90,anchor=east} % Display labels sideways
				]
				\addplot[very thick,color=red,mark=None]  table[x=step,y=Cwomtraininground,col sep=comma] {timeevolution1m.csv}; 
				\addlegendentry{\begin{tikzpicture}[scale=0.05, baseline]
						\node(1) [circle,draw,inner sep=0pt,minimum size=3.5pt] at (-1,0.15) {};
						\node(4) [circle,draw,inner sep=0pt,minimum size=3.5pt] at (0,0.15) {};
						\draw (1)--(4);
					\end{tikzpicture} training} 
				\addplot[very thick,color=blue,mark=None]   table[x=step,y=C1mtraininground,col sep=comma] {timeevolution1m.csv};
				\addlegendentry{\begin{tikzpicture}[xscale=0.05,yscale=0.1, baseline]
						\node(1) [circle,draw,inner sep=0pt,minimum size=3.5pt] at (-1,0) {};
						\node(2) [circle,draw,inner sep=0pt,minimum size=3.5 pt] at (-1,0.3) {};
						\node(4) [circle,draw,inner sep=0pt,minimum size=3.5pt] at (0,0) {};
						\node(5) [circle,draw,inner sep=0pt,minimum size=3.5pt] at (0,0.3) {};
						\draw (1)--(4);
						\draw (2)--(4);
						\draw (1)--(5);
						\draw (2)--(5);
						\draw (5) -- (0.3,0.3);
						\draw[out=0,in=0] (0.3,0.3) to (0.3,0.5);
						\draw (0.3,0.5) to (-1.3,0.5);
						\draw[out=180,in=180] (-1.3,0.5) to (-1.3,0.3);
						\draw (-1.3,0.3) -- (2);
					\end{tikzpicture} training}
				\addplot[very thick,color=red,mark=None,dashed]  table[x=step,y=Cwomtrainingpredictionround,col sep=comma] {timeevolution1m.csv}; 
				\addlegendentry{\begin{tikzpicture}[scale=0.05, baseline]
						\node(1) [circle,draw,inner sep=0pt,minimum size=3.5pt] at (-1,0.15) {};
						\node(4) [circle,draw,inner sep=0pt,minimum size=3.5pt] at (0,0.15) {};
						\draw (1)--(4);
					\end{tikzpicture} prediction} 
				\addplot[very thick,color=blue,mark=None,dashed]   table[x=step,y=C1mtrainingpredictionround,col sep=comma] {timeevolution1m.csv};  
				\addlegendentry{\begin{tikzpicture}[xscale=0.05,yscale=0.1, baseline]
						\node(1) [circle,draw,inner sep=0pt,minimum size=3.5pt] at (-1,0) {};
						\node(2) [circle,draw,inner sep=0pt,minimum size=3.5 pt] at (-1,0.3) {};
						\node(4) [circle,draw,inner sep=0pt,minimum size=3.5pt] at (0,0) {};
						\node(5) [circle,draw,inner sep=0pt,minimum size=3.5pt] at (0,0.3) {};
						\draw (1)--(4);
						\draw (2)--(4);
						\draw (1)--(5);
						\draw (2)--(5);
						\draw (5) -- (0.3,0.3);
						\draw[out=0,in=0] (0.3,0.3) to (0.3,0.5);
						\draw (0.3,0.5) to (-1.3,0.5);
						\draw[out=180,in=180] (-1.3,0.5) to (-1.3,0.3);
						\draw (-1.3,0.3) -- (2);
					\end{tikzpicture} prediction}
				%\addplot[very thick,color=red,mark=None,dotted]   table[x=step,y=Cwomtestround,col sep=comma] {timeevolution1m.csv}; 
				%\addplot[very thick,color=blue,mark=None,dotted]   table[x=step,y=C1mtestround,col sep=comma] {timeevolution1m.csv}; 
				%\legend{training without memory, training with 1 memory qubit, prediction without memory, prediction with 1 memory qubit, testing without memory, testing with 1 memory qubit}
				\addplot [domain=0:1000, samples=10, color=gray,]{0.5};
				\addplot [domain=0:1000, samples=10, color=gray,]{1};
			\end{axis}
		\end{tikzpicture}
	\end{subfigure}
	\caption[]{\textbf{Time evolution of a state with time-dependent Hamiltonian.} Panel \textbf{a} shows the time evolution of \(\ket{0}\) governed by the Hamiltonian \(H(t)=\frac{\pi}{2}f(t)(1-\sigma^x)\) where \(f\) is the step function in the \(y,z\)-plane of the Bloch sphere. Panel \textbf{b} shows the local cost for pure output in dependence on the training step for the learning and prediction of that time evolution. The colour indicates which QNN is trained (blue for a
		\begin{tikzpicture}[xscale=0.4,yscale=0.6, baseline]
			\node(1) [circle,draw,inner sep=0pt,minimum size=3.5pt] at (-1,0) {};
			\node(2) [circle,draw,inner sep=0pt,minimum size=3.5 pt] at (-1,0.3) {};
			\node(4) [circle,draw,inner sep=0pt,minimum size=3.5pt] at (0,0) {};
			\node(5) [circle,draw,inner sep=0pt,minimum size=3.5pt] at (0,0.3) {};
			\draw (1)--(4);
			\draw (2)--(4);
			\draw (1)--(5);
			\draw (2)--(5);
			\draw (5) -- (0.3,0.3);
			\draw[out=0,in=0] (0.3,0.3) to (0.3,0.5);
			\draw (0.3,0.5) to (-1.3,0.5);
			\draw[out=180,in=180] (-1.3,0.5) to (-1.3,0.3);
			\draw (-1.3,0.3) -- (2);
		\end{tikzpicture} QRNN with one memory qubit, red for a \begin{tikzpicture}[yscale=0.6,xscale=0.4, baseline]
		\node(1) [circle,draw,inner sep=0pt,minimum size=3.5pt] at (-1,0.15) {};
		\node(4) [circle,draw,inner sep=0pt,minimum size=3.5pt] at (0,0.15) {};
		\draw (1)--(4);
	\end{tikzpicture} feed-forward QNN). \(N=100\) training states and a learningrate of \(\epsilon \eta=0.06\) are used. The prediction was tested on the next \(M=200\) states in the time evolution. The drawn-through lines show the training cost, and the dashed lines the training and prediction cost (which are directly on top of each other). }
	\label{fig:timeevolution} 
\end{figure}
As can be seen, the QRNN with one memory qubit reaches a cost value of approximately \(0.976\) and the feed-forward QNN one of approximately \(0.931\), so the QRNN does not learn the time evolution as perfectly as the delay by one channel but still learns and predicts well. The feed-forward QNN does this worse but not as bad as when learning the SWAP channel.
\subsubsection{High- and low-pass filters}
In interferometry, information is typically encoded in a relative phase of a superposition state. Let us introduce a short-hand notation for the simplest representatives of such states via
\begin{equation}
    |\phi \rangle = \frac{|\!\:\!\!\downarrow \rangle + e^{i \phi} |\!\:\!\!\uparrow \rangle}{\sqrt{2}}.
\end{equation}

{To avoid being confused by the meaning of $|0\ra$, in this subsection we computational basis states for a single qubit as $|\!\:\!\!\downarrow\ra$ and $|\!\:\!\!\uparrow \ra$ instead of $|0\ra$ and $|1\ra$.} Let us concentrate on the product interferometric inputs and labels. That is, the training data $S$ is composed of $N$ sequences, and $j$th sequence consists of $T_j$ input-label pairs
\begin{equation}
    S=\left\{ \left\{|\phi^j_t \rangle, |\psi^j_t\rangle \right\}_{t=1}^{T_j} \right\}_{j=1}^{N}.
\end{equation}
Imagine some quantity is being measured by both a cheap transportable and a big expensive interferometer. The latter is used as a source of labels, and the task is to design a post-processing quantum filter for the former.

We assume that the cheap interferometer suffers from frequency noise and that a good guess about the correct output can be inferred by looking at the sequence of inputs
\begin{equation} \label{function_noise}
    \psi^j_t \approx f\left( \left\{ \phi^j_t \right\}, t \right)
\end{equation}
where $f$ is a function to be learned. We assume that $\phi^j_t$ are randomly distributed. Such problems often arise in the design of classical filters, e.g. high-pass filters to combat drifts or low-pass filters to combat cracking in music recordings.

{For all numerical examples, we sample each $\phi^j_t$ from the uniform distribution on $[-\pi, \pi]$.}

For classical data $\left\{\left( \phi_{j}^{(k)}, \psi_{j}^{(k)}  \right)_{j=1}^{T_k} \right\}_{k=1}^{N}$ we can expect $f$ to be learned perfectly. This follows from the universality of RNNs. However, this is not the case for the quantum data set
\begin{equation}\label{dat:filter}
    S = \{ S_\alpha \} = \left\{|\!\:\!\!\downarrow\rangle, \left(|\phi^\alpha_t \rangle, |\psi^\alpha_t\rangle \right)_{t=1}^{T_\alpha} \right\}_{\alpha=1}^{N}
\end{equation} 
even if $f$ is affine. Indeed, it is impossible to realise neither (perfect) quantum multiplication (Proposition~\ref{thm:Qmulti}), nor quantum addition Corollary~\ref{thm:NoAdd}. The impossibility of the former directly follows from the definition of a quantum channel (see e.g.~\cite{wolf2012quantum}), while the impossibility of the latter is connected to the No Cloning Theorem (Corollary~\ref{thm:NoCloning} of the more general No Broadcasting Theorem~\ref{thm:NoBroadcast}).

\begin{proposition}[No perfect quantum multiplication]\label{thm:Qmulti}
Consider an open set $I \subseteq \mathbb{R}$ containing $0, \pi/2, \pi, 3 \pi/2$. There is no quantum channel $\varepsilon_a$ such that for a given number $a\notin \{0,1,-1\}$ 
\begin{equation} \label{def:Qmulti}
    \forall \theta \in I \ \ve_a\left(|\theta\rangle \langle \theta |\right) = 
    |a\cdot\theta\rangle \langle a\cdot \theta |
\end{equation}
with \( |\theta \rangle \equiv \frac{|\downarrow \rangle + e^{i \theta} |\uparrow \rangle}{\sqrt{2}}.\)
\end{proposition}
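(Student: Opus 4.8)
The plan is to play off two rather different constraints that such a channel $\ve_a$ would have to obey. The first is \emph{global}: it uses only that $\ve_a$ is linear, together with the orthogonality pattern of the interferometric states, and it will pin $a$ down to a discrete set. The second is \emph{local}: it uses the data-processing inequality for the fidelity, and it will force $|a|$ to be small. The two are incompatible once $0,1,-1$ are excluded, so assuming $\ve_a$ exists yields a contradiction.

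\emph{Step 1 (linearity forces $a$ to be an odd integer).}
First I would record $\bk{\theta}{\theta'}=\tfrac12(1+e^{i(\theta'-\theta)})$, hence $|\bk{\theta}{\theta'}|^2=\cos^2(\tfrac{\theta-\theta'}{2})$; in particular $\ket{\theta}\perp\ket{\theta+\pi}$, so $\tfrac12\kb{\theta}+\tfrac12\kb{\theta+\pi}=\tfrac12\id$ for every $\theta$. Since $I$ is open and contains $0$ and $\pi$, choose $r>0$ with $(-r,r)\cup(\pi-r,\pi+r)\subseteq I$; then for $\theta\in(-r,r)$ both $\theta$ and $\theta+\pi$ lie in $I$, and linearity of $\ve_a$ gives
\[
\ve_a\!\left(\tfrac12\id\right)=\tfrac12\,\kb{a\theta}+\tfrac12\,\kb{a\theta+a\pi}.
\]
Expanding the equator states as $\kb{\phi}=\tfrac12(\id+\cos\phi\,\sigma^x+\sin\phi\,\sigma^y)$ and using sum-to-product, the right-hand side equals $\tfrac12\id+\tfrac12\cos(\tfrac{a\pi}{2})\bigl(\cos(a\theta+\tfrac{a\pi}{2})\,\sigma^x+\sin(a\theta+\tfrac{a\pi}{2})\,\sigma^y\bigr)$. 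The left-hand side does not depend on $\theta$, while, since $a\neq0$, the vector $\bigl(\cos(a\theta+\tfrac{a\pi}{2}),\sin(a\theta+\tfrac{a\pi}{2})\bigr)$ is non-constant on $(-r,r)$; hence the prefactor must vanish, $\cos(\tfrac{a\pi}{2})=0$, i.e.\ $a$ is an odd integer.

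\emph{Step 2 (the fidelity bound forces $|a|\le1$, and conclusion).}
Next I would use the data-processing inequality for the fidelity, $F(\ve_a(\rho),\ve_a(\sigma))\ge F(\rho,\sigma)$, established above (see also~\cite{Wilde2019}), with $\rho=\kb{\theta}$, $\sigma=\kb{\theta'}$ and $\theta,\theta'\in I$. Since $F$ of two pure states is the squared overlap, this reads $\cos^2(\tfrac{a(\theta-\theta')}{2})\ge\cos^2(\tfrac{\theta-\theta'}{2})$. Taking $\theta'=0$ and $\theta=\delta$ with $0<|\delta|<\min\{r,\pi/\max(1,|a|)\}$, both half-angles lie in $[0,\pi/2)$, where $\cos^2$ is strictly decreasing, so $|a\delta|/2\le|\delta|/2$, i.e.\ $|a|\le1$. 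Combined with Step 1, $a$ would be an odd integer with $|a|\le1$, hence $a\in\{-1,1\}$, contradicting $a\notin\{0,1,-1\}$.

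\emph{Where the difficulty sits.}
The trigonometric manipulations and the $\cos^2$-monotonicity are routine; the point that needs care is Step 1, where openness of $I$ must be exploited so that $\theta$ and $\theta+\pi$ lie in $I$ \emph{simultaneously} over a whole interval of $\theta$, and where the residual $\theta$-dependence of $\ve_a(\tfrac12\id)$ cannot cancel precisely because $a\neq0$. It is also worth noting that neither bound closes the argument alone — the fidelity bound only constrains $|a|$ and the linearity bound only the arithmetic type of $a$ — so both ingredients are genuinely needed; for this particular statement only openness of $I$ near $0$ and $\pi$ is used.
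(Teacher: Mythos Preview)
Your proof is correct and follows a genuinely different route from the paper's. The paper works purely with linearity: it expands $\kb{\theta}$ as an affine combination of $\kb{0},\kb{\pi/2},\kb{\pi}$, applies $\ve_a$, and reads off a functional identity $e^{ia\theta}=\tfrac{1+\cos\theta-\sin\theta}{2}+\sin\theta\,e^{ia\pi/2}+\tfrac{1-\cos\theta-\sin\theta}{2}e^{ia\pi}$; evaluating at $\theta=3\pi/2$ and then analysing the parity of $a$ finishes the job. You instead combine two independent channel properties: linearity via the resolution $\tfrac12\kb{\theta}+\tfrac12\kb{\theta+\pi}=\tfrac12\id$ (yielding $a$ odd), and contractivity via the data-processing inequality for the fidelity (yielding $|a|\le1$). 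Your argument is conceptually cleaner, the trigonometry is lighter, and---as you correctly note---it only uses openness of $I$ near $0$ and $\pi$, whereas the paper's proof genuinely needs all four anchor points $0,\pi/2,\pi,3\pi/2$. The trade-off is that you import an extra tool (monotonicity of the fidelity under channels), while the paper's proof stays entirely within elementary linear algebra; both are perfectly adequate here.
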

\begin{proof}
    The Operation~\ref{def:Qmulti} is non-linear and, thus, not a quantum channel. 
    
    Indeed, as 
    \be
    2|\theta \ra \la \theta | = \id_2 + e^{i\theta} |\!\:\!\! \uparrow \ra \la \downarrow\!\:\!\! | + e^{-i\theta} |\!\:\!\! \uparrow \ra \la \downarrow\!\:\!\!|,
    \ee
    we can write 
    \be
    |\theta \ra \la \theta | = 
    \alpha |0 \ra \la 0| + \beta|\pi/2 \ra \la \pi/2 | + \gamma|\pi \ra \la \pi |,
    \ee
    where coefficients $\alpha, \beta, \gamma$ satisfy
    \be
        \begin{cases}
            1 = \alpha + \beta + \gamma\\
            e^{i \theta} = \alpha + i \beta - \gamma \quad \Leftrightarrow \alpha = \frac{1 +\cos(\theta) - \sin(\theta)}{2}, \ \beta = \sin(\theta), \ \gamma=\frac{1 - \cos(\theta) - \sin(\theta)}{2}.\\
            e^{-i \theta} = \alpha - i \beta - \gamma\\
        \end{cases}
    \ee
    Let us assume that a perfect quantum multiplier $\varepsilon_a$ exists. Due to the linearity of quantum channels,
    \be \label{eq:nonlin}
        |a \theta \ra \la a \theta | =  
        \frac{1 +\cos(\theta) - \sin(\theta)}{2}|0 \ra \la 0| + \sin(\theta)|a\pi/2 \ra \la a\pi/2 | + \frac{1 - \cos(\theta) - \sin(\theta)}{2}|a\pi \ra \la a\pi |.
    \ee
    Rewriting Equation~\ref{eq:nonlin} in computational basis, we get
    \be \label{eq:nonlin_phase}
    e^{i a \theta} = \frac{1 +\cos(\theta) - \sin(\theta)}{2} + \sin(\theta) e^{i a \pi/2} 
                     + \frac{1 - \cos(\theta) - \sin(\theta)}{2} e^{i a \pi}.
    \ee
    For $\theta = 3 \pi/2$, this yields $1 - e^{i a \pi/2} + e^{i a \pi} + e^{i 3 a \pi/2}=0$. If we multiply this equation by $1 + e^{i a \pi/2}$, we obtain
    \be
        1 - e^{i2 a \pi} = 0 \ \Leftrightarrow \ a \in \mathbb{Z}.
    \ee
    Integer $a$ in Equation~\ref{eq:nonlin_phase} produce
    \be
        e^{i a \theta} = \frac{1 +\cos(\theta) - \sin(\theta)}{2} + \sin(\theta) i^a
                     + \frac{1 - \cos(\theta) - \sin(\theta)}{2} (-1)^a.
    \ee
    For even non-zero $a$, right-hand side is real. There exists $\theta$ such that the left-hand side is non-real, leading to the contradiction.
    
    For odd $a$ Equation~\ref{eq:nonlin_phase} yields
    \be
        e^{i a \theta} = e^{i (-1)^\frac{a-1}{2} \theta}.
    \ee
    As this has to hold for any $\theta \in I$, we can exclude $a \notin \{0, 1, -1\}$.
    
    Finally, for $a \notin \{0, 1, -1\}$ the channels $\varepsilon_a$ exist
    \be
        \varepsilon_0 (\cdot) = \tr(\cdot) |0\ra\la0|, \ \varepsilon_1 (\cdot) = \id (\cdot) \id, \ \varepsilon_{-1} (\cdot) = \sigma_x (\cdot) \sigma_x.
    \ee
\end{proof}

\begin{theorem}[No broadcasting, \cite{NoBroadcast}]\label{thm:NoBroadcast}
    Let $\rho_s\in \T(\h), \; s \in \{0, 1\}$ and $\tilde{\rho}_s \in \T(\h \otimes \h)$ be quantum states such that $\tr_{1}(\tilde{\rho}_s) = \tr_{2}(\tilde{\rho}_s) = \rho_s$.
    A quantum channel $\rho_s \rightarrow \tilde{\rho}_s$ exists if and only if $\rho_0$ and $\rho_1$ commute.
\end{theorem}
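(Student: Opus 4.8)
The proof splits into the two implications of the equivalence, which have very different flavours: ``$[\rho_0,\rho_1]=0$'' is established by exhibiting a broadcasting channel explicitly, while ``broadcasting $\Rightarrow$ commutation'' is forced by the data-processing inequality for the fidelity together with an analysis of when that inequality is tight. Throughout I use the (squared) fidelity $F$ as defined in the main text, its symmetry $F(\rho,\sigma)=F(\sigma,\rho)$, and the monotonicity $F(\Lambda(\rho),\Lambda(\sigma))\ge F(\rho,\sigma)$ under any channel $\Lambda$, i.e.\ inequality~\eqref{eq:datafidelity}.

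For the direction ``$[\rho_0,\rho_1]=0\ \Rightarrow$ broadcasting is possible'' I would argue constructively. If $\rho_0$ and $\rho_1$ commute they share an orthonormal eigenbasis $\{\ket{i}\}$ of $\h$, so $\rho_s=\sum_i p_i^{(s)}\kb{i}$ for probability vectors $(p_i^{(s)})_i$. Define the measure-and-prepare channel
\begin{equation*}
  \Lambda(X)=\sum_i \bra{i}X\ket{i}\,\kb{i}\otimes\kb{i},
\end{equation*}
which is completely positive and trace preserving by construction. Setting $\tilde\rho_s=\Lambda(\rho_s)=\sum_i p_i^{(s)}\,\kb{i}\otimes\kb{i}$, one checks $\tr_1\tilde\rho_s=\tr_2\tilde\rho_s=\sum_i p_i^{(s)}\kb{i}=\rho_s$, so $\Lambda$ broadcasts $\rho_0$ and $\rho_1$ for this choice of the $\tilde\rho_s$, as required.

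For the converse, suppose some channel $\Lambda$ satisfies $\Lambda(\rho_s)=\tilde\rho_s$ with $\tr_1\tilde\rho_s=\tr_2\tilde\rho_s=\rho_s$. Applying~\eqref{eq:datafidelity} first to $\Lambda$ and then to the partial trace $\tr_2$ (both of which are channels), and using the symmetry of $F$, gives the sandwich
\begin{equation*}
  F(\rho_0,\rho_1)\;\le\; F(\Lambda(\rho_0),\Lambda(\rho_1))\;=\;F(\tilde\rho_0,\tilde\rho_1)\;\le\; F(\tr_2\tilde\rho_0,\tr_2\tilde\rho_1)\;=\;F(\rho_0,\rho_1).
\end{equation*}
Hence every inequality is an equality; in particular the bipartite extensions $\tilde\rho_s$, both of whose marginals equal $\rho_s$, saturate the monotonicity of the fidelity under the partial trace.

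The remaining step --- and the one I expect to be the real work --- is to deduce $[\rho_0,\rho_1]=0$ from this saturation. The plan is to combine Uhlmann's theorem with the equality conditions for the data-processing inequality: tightness means that the Petz recovery map for $\tr_2$ (with respect to, say, $\rho_1$) exactly reconstructs both $\tilde\rho_0$ and $\tilde\rho_1$ from their common first marginal, and one then exploits that the \emph{same} operator $\rho_s$ also occurs as the second marginal. Pushing this through should pin the correlations in $\tilde\rho_s$ down to the classical form $\tilde\rho_s=\sum_i p_i^{(s)}\kb{i}\otimes\kb{i}$ in a single $s$-independent orthonormal basis $\{\ket{i}\}$, whence $\rho_0=\sum_i p_i^{(0)}\kb{i}$ and $\rho_1=\sum_i p_i^{(1)}\kb{i}$ are jointly diagonal and therefore commute. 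This is precisely where the quantum content sits: a \emph{one}-sided marginal constraint is already saturated by the product extension $\rho_s\otimes\sigma$ with no commutation whatsoever, so it is the two-sided constraint that must be used, and I expect the careful handling of the equality case to be the main obstacle. Since this is the classical no-broadcasting theorem, if space is tight I would simply cite the original analysis of~\cite{NoBroadcast} for this last step rather than reproduce it in full.
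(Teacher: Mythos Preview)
The paper does not actually prove this theorem: it is stated with a citation to \cite{NoBroadcast} and used as a black box (e.g.\ to derive Corollaries~\ref{thm:NoCloning} and~\ref{thm:NoAdd}), with no proof or proof sketch given. So there is nothing to compare your proposal against within the paper.

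That said, your outline is a faithful sketch of the original argument. The ``if'' direction via the measure-and-prepare channel in a common eigenbasis is complete and correct. For ``only if'', the fidelity sandwich
\[
F(\rho_0,\rho_1)\le F(\tilde\rho_0,\tilde\rho_1)\le F(\rho_0,\rho_1)
\]
using~\eqref{eq:datafidelity} is exactly the starting point of \cite{NoBroadcast}, and you correctly isolate the only nontrivial step as the analysis of when this saturation, together with the \emph{two-sided} marginal constraint, forces $[\rho_0,\rho_1]=0$. Your instinct that the one-sided constraint is vacuous (product extensions) and that the two-sided constraint is what bites is right. The one caveat is that invoking ``Petz recovery'' is a slight anachronism relative to the original proof, which works directly with Uhlmann purifications and polar decompositions; either route can be made to work, but neither is a one-liner, so your plan to cite \cite{NoBroadcast} for that step is appropriate and matches what the paper itself does.
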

\begin{corollary}[No cloning, ~\cite{NoCloning,NoCloning1}]
\label{thm:NoCloning}
    There is no unitary operator $U$ acting on $\h \otimes \h$ and such that for any $| \phi \ra$ and some $| e \ra \in \h$ and real $\alpha(\phi, e)$
    \be
    U | \phi \ra \otimes | e \ra = e^{i\alpha(\phi, e)} 
    | \phi \ra \otimes | \phi \ra.
    \ee
\end{corollary}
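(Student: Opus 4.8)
The plan is to derive the statement directly from the No Broadcasting Theorem~\ref{thm:NoBroadcast}, which is the route the numbering of the corollary suggests. Throughout I take $\dim\h\geq2$ (the one-dimensional case being degenerate). Suppose, for contradiction, that a unitary $U$ on $\h\otimes\h$ and a fixed $|e\rangle\in\h$ with the stated cloning property exist. The first step is to package $U$ into the map
\[
  \mathcal{E}\colon\T(\h)\to\T(\h\otimes\h),\qquad
  \mathcal{E}(\rho)=U\bigl(\rho\otimes|e\rangle\langle e|\bigr)U^\dagger,
\]
which is manifestly a quantum channel, being the composition of appending the fixed ancillary state $|e\rangle\langle e|$ with conjugation by a unitary.

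The second step is to observe that $\mathcal{E}$ broadcasts every pure input. For $\rho=|\phi\rangle\langle\phi|$ the hypothesis gives $U|\phi\rangle\otimes|e\rangle=e^{i\alpha(\phi,e)}|\phi\rangle\otimes|\phi\rangle$, so the scalar phase cancels in the density operator and $\mathcal{E}(|\phi\rangle\langle\phi|)=|\phi\rangle\langle\phi|\otimes|\phi\rangle\langle\phi|$; both partial traces of this state equal $|\phi\rangle\langle\phi|$.

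Third, I would pick two pure states $|\phi_0\rangle,|\phi_1\rangle\in\h$ with $0<|\langle\phi_0|\phi_1\rangle|<1$, which exist since $\dim\h\geq2$, and set $\rho_s=|\phi_s\rangle\langle\phi_s|$ and $\tilde\rho_s=\mathcal{E}(\rho_s)$. By the previous step $\tr_{1}(\tilde\rho_s)=\tr_{2}(\tilde\rho_s)=\rho_s$, so $\mathcal{E}$ restricts to a channel $\rho_s\mapsto\tilde\rho_s$ of exactly the form considered in Theorem~\ref{thm:NoBroadcast}; that theorem then forces $\rho_0$ and $\rho_1$ to commute. But two rank-one projectors commute only if they are equal or orthogonal (e.g.\ from $\rho_0\rho_1\rho_0=|\langle\phi_0|\phi_1\rangle|^2\rho_0$), contradicting $0<|\langle\phi_0|\phi_1\rangle|<1$. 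This contradiction completes the proof.

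I do not expect a genuine obstacle here; the argument is a short reduction. The only points needing care are the cancellation of the unspecified phase $\alpha(\phi,e)$ when forming $\mathcal{E}(|\phi\rangle\langle\phi|)$, the standing assumption $\dim\h\geq2$, and the elementary commutation fact for rank-one projectors. As a self-contained alternative one can bypass Theorem~\ref{thm:NoBroadcast}: unitarity preserves inner products, so for any $|\phi\rangle,|\psi\rangle$ one gets $\langle\phi|\psi\rangle=e^{i(\alpha(\psi,e)-\alpha(\phi,e))}\langle\phi|\psi\rangle^2$, whence $|\langle\phi|\psi\rangle|\in\{0,1\}$ for all pairs — impossible once $\dim\h\geq2$.
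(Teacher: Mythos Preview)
Your proposal is correct and follows exactly the route the paper intends: the paper states this result as a corollary of the No Broadcasting Theorem~\ref{thm:NoBroadcast} without supplying details, and your argument fills in precisely the natural reduction---packaging $U$ into a channel, observing it broadcasts all pure inputs, and contradicting Theorem~\ref{thm:NoBroadcast} via two non-commuting rank-one projectors. The self-contained alternative via inner-product preservation is a nice bonus, but the main argument already matches the paper's implicit reasoning.
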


\begin{corollary}[No perfect quantum adders] \label{thm:NoAdd}
There is no CPTP $\ve$ such that for any $| x \ra, | y \rangle \in \h$, with different values of continuous parameters $x,y$ corresponding to different quantum states, $\ve\left(|x\rangle\langle x| \otimes |y\rangle \langle y |\right) = \rho_{x, x+y}$ such that $\tr_{1}(\rho_{x, x+y}) = |x + y\rangle \langle x + y|$ and $\tr_{2}(\rho_{x, x+y}) = |x\rangle \langle x|$.
\end{corollary}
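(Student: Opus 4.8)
The plan is to reduce the non-existence of a perfect quantum adder to the No Broadcasting Theorem~\ref{thm:NoBroadcast}. Assume, towards a contradiction, that a CPTP map $\ve$ with the stated property exists. The first step is to freeze the second summand: take $y=0$, which for the interferometric family corresponds to the perfectly legitimate state $|0\rangle = \frac{|\!\:\!\!\downarrow\rangle+|\!\:\!\!\uparrow\rangle}{\sqrt{2}}$, and set $\Lambda(\,\cdot\,):=\ve(\,\cdot\,\otimes|0\rangle\langle0|)$. Tensoring with a fixed state and then applying a channel is again a channel, so $\Lambda$ is CPTP on the single input system. (If the admissible range of the continuous parameter happens not to contain $0$, pick any fixed $y_0$ in it and instead use $(\mathcal{R}_{y_0}\otimes\id)\circ\ve(\,\cdot\,\otimes|y_0\rangle\langle y_0|)$, where $\mathcal{R}_{y_0}$ is conjugation by the phase gate $\mathrm{diag}(1,e^{-iy_0})$, which implements the shift $|\theta\rangle\langle\theta|\mapsto|\theta-y_0\rangle\langle\theta-y_0|$ and is a genuine unitary channel.)

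The second step is to read off that $\Lambda$ broadcasts the whole family $\{|x\rangle\langle x|\}_x$ at once: using the defining marginals $\tr_1(\rho_{x,x+y})=|x+y\rangle\langle x+y|$ and $\tr_2(\rho_{x,x+y})=|x\rangle\langle x|$ at $y=0$, the output $\Lambda(|x\rangle\langle x|)=\rho_{x,x}\in\T(\h\otimes\h)$ obeys $\tr_1(\rho_{x,x})=\tr_2(\rho_{x,x})=|x\rangle\langle x|$. Thus one single CPTP map produces, from each $|x\rangle\langle x|$, a bipartite state whose two marginals both equal $|x\rangle\langle x|$.

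The third step is to invoke Theorem~\ref{thm:NoBroadcast} on an arbitrary pair of distinct parameter values $x\neq x'$, with $\rho_0=|x\rangle\langle x|$ and $\rho_1=|x'\rangle\langle x'|$: since $\Lambda$ restricts to a broadcasting channel for $\{\rho_0,\rho_1\}$, the theorem forces $[\rho_0,\rho_1]=0$. But two distinct rank-one projectors commute only if they are orthogonal, so $\langle x|x'\rangle=0$ would have to hold for every $x\neq x'$ — impossible for a continuous family of pairwise distinct states in a fixed finite-dimensional space, and concretely false for the interferometric family, where $|\langle\theta|\theta'\rangle|^2=\cos^2\!\big(\tfrac{\theta-\theta'}{2}\big)$ lies strictly between $0$ and $1$ once $\theta'$ is close to but different from $\theta$. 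This contradiction establishes the corollary.

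I expect the only genuinely delicate point to be the reduction in the first two steps: one has to be careful that after fixing $y$ (and, if necessary, applying the correcting rotation) the resulting channel really outputs a bipartite state with both marginals equal to the \emph{same} $x$-dependent state, which is exactly the hypothesis Theorem~\ref{thm:NoBroadcast} requires; everything after that is immediate. One could also try to route the argument through the No Cloning Corollary~\ref{thm:NoCloning} directly, but since $\rho_{x,x}$ need not be a pure product state the broadcasting formulation is the natural and cleanest one.
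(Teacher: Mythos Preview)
Your proof is correct and shares the paper's key reduction: fix $y=0$ so that the hypothetical adder becomes a broadcasting map for the family $\{|x\rangle\langle x|\}_x$, then invoke Theorem~\ref{thm:NoBroadcast}. The difference lies only in the finishing step. The paper argues that by tensoring up copies of parameterised states one could broadcast an \emph{arbitrary} state, hence a contradiction; you instead apply the two-state formulation of Theorem~\ref{thm:NoBroadcast} directly to a pair $|x\rangle,|x'\rangle$ with $x$ close to $x'$, observe that commuting rank-one projectors must be orthogonal, and note that continuity of the family rules this out. Your route is the more elementary and self-contained one, since it avoids the somewhat informal tensor-product encoding step and uses only the theorem exactly as stated; the paper's version, on the other hand, gestures at why broadcasting even a one-parameter family is already as strong as broadcasting everything. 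Your remark that the No~Cloning Corollary~\ref{thm:NoCloning} would not suffice because $\rho_{x,x}$ need not be pure product is also a useful clarification absent from the paper.
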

\begin{proof}
    If such $\ve$ had existed, $x$ could have been broadcasted by choosing $y=0$. By encoding into a tensor product of a sufficient number of states in the parameterised subspace, this would lead to a possibility of broadcasting an arbitrary state---a contradiction to the No Broadcasting Theorem~\ref{thm:NoBroadcast}.
\end{proof}

Nevertheless, there are no fundamental reasons why the task can not be solved approximately. First, it is often sufficient that the state $| x \ra$ is copied only for some values of $x$, e.g. a finite subset of integers. This is not forbidden by Theorem~\ref{thm:NoBroadcast}, and indeed devices that utilise such limited cloning exist. For example, a machine can use only orthogonal states in some known basis for information processing. As the basis is known, states in such a finite set can be measured without perturbation. Second, there is a bound on how well approximate cloning can be performed; notably, the error vanishes in the classical limit.
\begin{theorem}[Approximate $N$ to $M$ cloning,~\cite{Cloning_MtoN}]\label{thm:CloneApprox}
    An approximate cloning map is a completely positive, unital map $T: \B\left(\h^{\otimes M}\right)\rightarrow \B\left(\h_+^{\otimes N}\right)$ where $\h_+^{\otimes M}$ is a space spanned by $|\phi\ra^{\otimes M}, \; |\phi\ra \in \h$. It is a map on observables, and its dual approximately clones $N$- to $M$-particle states.
    
    For any approximate cloning map of $M$ to $N<M$ observables, the error in one-particle expectation values
    \be
    \Delta(T) = \sup_{\psi, \id \geq a  \in\B(\h),  k}
    \left| \la \psi ^{\otimes N}, T\left(\id^{\otimes (k-1)}\otimes a \otimes \id^{\otimes (M-k)} \right)  \psi^{\otimes N}\ra - \la \psi | a | \psi\ra \right|
    \ee
    is bounded from below as 
    \be
    \Delta \geq \frac{d-1}{d}\left|1 - \frac{N}{N+d}\frac{M+d}{M} \right|,
    \ee
    where $d= \dim(\h)$. The bound is tight.
\end{theorem}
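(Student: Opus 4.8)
The plan is to prove both the lower bound and its tightness by reducing to the highly symmetric case and then invoking the representation-theoretic structure of covariant cloners, with the symmetrisation cloner furnishing the extremal example. First I would exploit the invariances of the error functional $\Delta$: it is unchanged under $\ket\psi\mapsto U\ket\psi$, $a\mapsto UaU^\dagger$ for a unitary $U$ on $\h$ (with $d=\dim\h$), and under independently permuting the $M$ input factors and the $N$ output factors. Given an arbitrary approximate cloning map $T$, I would form its group average $\bar T(X)=\int U^{\dagger\otimes N}\,T\!\big(U^{\otimes M}X\,U^{\dagger\otimes M}\big)\,U^{\otimes N}\,dU$ followed by the appropriate permutation averages. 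Complete positivity and unitality are preserved, so $\bar T$ is again an approximate cloning map, and since $\Delta$ is a supremum of affine functionals that are merely re-parametrised under the group action, $\Delta(\bar T)\le\Delta(T)$. It therefore suffices to bound $\Delta$ from below over maps that are $U(d)$-covariant and symmetric under the two permutation groups.

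For such a $T$, the Heisenberg-dual output state $\omega_\psi:=T^\dagger\big(\kb{\psi}^{\otimes N}\big)$ is a permutation-symmetric state on $\h^{\otimes M}$ whose one-particle marginal $\rho^{\text{out}}(\psi)$ is the same for every slot and, by covariance, commutes with the stabiliser of $\ket\psi$; hence
\begin{equation}
  \rho^{\text{out}}(\psi)=\eta\,\kb{\psi}+(1-\eta)\,\frac{\id}{d}
\end{equation}
for a single shrinking parameter $\eta=\eta(T)\le 1$. A short computation then gives, for every $0\le a\le\id$,
\begin{equation}
  \tr\!\big(\rho^{\text{out}}(\psi)\,a\big)-\bra{\psi}a\ket{\psi}=(1-\eta)\Big(\frac{\tr a}{d}-\bra{\psi}a\ket{\psi}\Big),
\end{equation}
whose modulus is maximised at $a=\kb{\psi}$, so $\Delta(T)=(1-\eta(T))\,\frac{d-1}{d}$ after the reduction. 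The problem is thus reduced to finding the largest $\eta$ attainable by a covariant cloner.

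The key step is to show $\eta(T)\le\frac{N(M+d)}{M(N+d)}$ for every such $T$, with equality realised. Writing the covariant Choi operator of $T$ and decomposing $[M]\otimes\overline{[N]}$ into $U(d)$-isotypic components via Schur--Weyl duality, $\omega_\psi$ is a convex combination $\sum_\mu q_\mu\,\sigma_\mu(\psi)$ of elementary covariant outputs, each with its own marginal parameter $\eta_\mu$; linearity of $\eta$ in the $q_\mu$ forces $\eta(T)=\sum_\mu q_\mu\eta_\mu\le\max_\mu\eta_\mu$, with the maximum attained at a single component. Identifying that component as the fully symmetric one, i.e. the symmetrisation cloner
\begin{equation}
  T^\dagger\big(\kb{\psi}^{\otimes N}\big)=\frac{S_M\big(\kb{\psi}^{\otimes N}\otimes\id^{\otimes(M-N)}\big)S_M}{\tr\!\big(S_M(\kb{\psi}^{\otimes N}\otimes\id^{\otimes(M-N)})S_M\big)},
\end{equation}
with $S_M$ the projector onto $\h_+^{\otimes M}$, I would compute its one-particle marginal using the symmetric-subspace identities $\int\kb{\phi}^{\otimes k}\,d\phi=S_k/\binom{d+k-1}{k}$ and $\tr_k S_k=\frac{d+k-1}{k}\,S_{k-1}$, obtaining $\eta=\frac{N(M+d)}{M(N+d)}=1-\frac{d(M-N)}{M(N+d)}$. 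Substituting into $\Delta(T)=(1-\eta)\frac{d-1}{d}$ yields the claimed bound $\Delta\ge\frac{d-1}{d}\big|1-\frac{N}{N+d}\frac{M+d}{M}\big|$ for all approximate cloning maps, and the same computation shows the symmetrisation cloner attains it, proving tightness. I expect the main obstacle to be exactly this third step: justifying that permutation averaging on the output does not violate the codomain constraint (range inside $\h_+^{\otimes N}$), and, above all, establishing rigorously that the symmetric isotypic component uniquely maximises $\eta_\mu$ among covariant cloners; for this last point I would rely on the detailed Schur--Weyl analysis of \cite{Cloning_MtoN} rather than reproduce it here.
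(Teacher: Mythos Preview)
The paper does not prove this theorem: it is quoted verbatim as a result from \cite{Cloning_MtoN} (Werner's optimal cloning paper) and used as background, with no proof given. So there is no ``paper's own proof'' to compare against.

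That said, your proposal is essentially Werner's original argument and is correct in outline. Two minor comments. First, your worry about permutation averaging on the $N$-side violating the codomain constraint is a non-issue: the codomain is $\B(\h_+^{\otimes N})$ with $\h_+^{\otimes N}$ the \emph{symmetric} subspace, on which the permutation group acts trivially, so that average does nothing and the constraint is automatically preserved. Second, you are right that the crux is the optimality of the symmetrisation cloner among covariant maps; in Werner's treatment this is established by observing that the covariant Choi matrix is supported on the commutant of $U^{\otimes M}\otimes \bar U^{\otimes N}$, which by Schur--Weyl is a direct sum of matrix algebras indexed by Young diagrams, and then computing the per-block shrinking factors $\eta_\mu$ explicitly to see the fully symmetric block wins. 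Your deferral to \cite{Cloning_MtoN} for that computation is appropriate, since reproducing it would amount to rewriting that paper.
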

For learning approximate quantum adders, see also~\cite{QaddApprox}.

This dramatically complicates the design of quantum filters and manifests itself in QRNNs typically not reaching fidelity 1 for Data~\ref{dat:filter}. Nevertheless, our examples show that a QRNN is a capable approach to filter design. Let us study the performance of the QRNNs by evaluating their performance on various synthetic data sets with different choices of $f$ in equation~\ref{function_noise}.
    
\paragraph{High-pass filters for drift mitigation} \leavevmode\\
    Let us start with a linear drift where parameters are constant in every sequence
    \begin{equation} \label{dat:drift}
        \psi^j_t = \phi^j_t - v \cdot t
    \end{equation}
    This task requires knowledge about the step number and the correct ordering of inputs. Let us estimate how well a feed-forward network agnostic to this knowledge can perform.
    \begin{claim}\label{Caim_FF_Drift}
   Consider drifts $\psi^j_t = \phi^j_t - f(t)$ for a non-constant function $f(\cdot)$. Let the inputs $\phi^j_t$ be sampled from a distribution that does not depend on $t$. As feed-forward QNNs do not have access to the time step $t$, they cannot correct drifts beyond a random guess of the value of $f$. For $f(t) = v \cdot t$, this yields a guess equivalent to outputting a totally mixed state.
    \end{claim}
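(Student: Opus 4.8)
The plan is to isolate the one structural fact that separates a feed-forward QNN from a QRNN: it has no memory register, so the channel $\N_\U$ it realises acts on each time step independently, and the step index $t$ never enters its input. Consequently, on the drift data~\eqref{dat:drift} the network output at position $t$ of sequence $j$ is $\N_\U\big(\kb{\phi^j_t}\big)$, a function of the single phase $\phi^j_t$ and nothing else. First I would write the averaged local (fidelity) cost as
\[
	\bar C \;=\; \bE_{j,t}\,\la \phi^j_t-f(t)\,|\,\N_\U\big(\kb{\phi^j_t}\big)\,|\,\phi^j_t-f(t)\ra .
\]
Because the phases $\phi^j_t$ are drawn i.i.d.\ from a law $D$ that does not depend on $t$, the ``input phase'' and the ``step'' are statistically independent in the data, so conditioning on the input phase $\phi$ factorises the expectation into
\[
	\bar C \;=\; \bE_{\phi\sim D}\,\tr\!\Big(\N_\U\big(\kb{\phi}\big)\,\bar\sigma_\phi\Big),\qquad \bar\sigma_\phi \;:=\; \bE_t\big[\,|\phi-f(t)\ra\la\phi-f(t)|\,\big].
\]

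The second step is a one-line optimisation: for any density operator $\rho$ and positive-semidefinite $\omega$ one has $\tr(\rho\,\omega)\le\lambda_{\max}(\omega)$, with equality iff $\rho$ is supported on the top eigenspace of $\omega$. Applied termwise this bounds the best attainable feed-forward cost by $\bE_{\phi\sim D}\,\lambda_{\max}(\bar\sigma_\phi)$. To identify $\lambda_{\max}$ and its eigenvector I would pass to the Bloch picture: $|\theta\ra\la\theta|$ has Bloch vector $(\cos\theta,\sin\theta,0)$, so $\bar\sigma_\phi$ has Bloch vector $\big(\bE_t\cos(\phi-f(t)),\,\bE_t\sin(\phi-f(t)),\,0\big)$; writing $\bE_t e^{-if(t)}=R\,e^{-i\bar s}$ with $R\in[0,1]$, this equals $R\big(\cos(\phi-\bar s),\sin(\phi-\bar s),0\big)$, where $R$ and $\bar s$ are \emph{independent of} $\phi$. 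Hence $\lambda_{\max}(\bar\sigma_\phi)=\tfrac{1+R}{2}$, the maximiser for input $|\phi\ra$ is $|\phi-\bar s\ra$, and this bound is actually tight, being achieved globally by the $\phi$-independent diagonal unitary $\mathrm{diag}(1,e^{-i\bar s})$. This is precisely the content of the claim: a memoryless network can do no better than subtract one fixed ``best-guess'' value $\bar s$ of $f(t)$, and it falls strictly short of fidelity $1$ --- reaching only $\tfrac{1+R}{2}<1$ --- whenever $f$ is non-constant modulo $2\pi$, i.e.\ $R<1$.

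It then remains to specialise to the linear drift $f(t)=v\cdot t$. Here $\bE_t e^{-ivt}=\tfrac1T\sum_{t=1}^{T}e^{-ivt}$ is a geometric sum of modulus at most $\big(T\,|\sin(v/2)|\big)^{-1}$ for $v\notin 2\pi\mathbb Z$, which vanishes as the sequence length $T$ grows (equivalently, $\{vt\bmod 2\pi\}$ equidistributes on the circle). Thus $R=0$, $\bar\sigma_\phi=\id/2$ for every $\phi$, and $\tr\!\big(\N_\U(\kb{\phi})\,\id/2\big)=\tfrac12$ for \emph{every} channel $\N_\U$ --- the same cost produced by returning the maximally mixed state against any pure label, which is therefore an optimal feed-forward response. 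I expect no deep obstacle; the only delicate point is the regime in this last step: for fixed finite $T$ and generic $v$ the number $R$ is small but nonzero, so ``equivalent to the totally mixed state'' should be read in the long-sequence/equidistribution limit, whereas the weaker statement ``no drift correction beyond the $\tfrac{1+R}{2}<1$ bound'' needs no limit and holds for any non-constant $f$. Finally I would remark that the argument uses only that a feed-forward architecture acts identically and independently on each time step without access to $t$, so the conclusion is insensitive to the QNN's width or depth, and that replacing the fidelity by the Hilbert-Schmidt cost for mixed outputs leaves the per-$\phi$ bound, and hence the conclusion, intact.
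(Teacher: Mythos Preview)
Your argument is correct and in fact cleaner than the paper's route at the level of this informal Claim. The paper does not prove the Claim directly; it instead supplies two rigorous finite-sample versions. Lemma~\ref{lem:FFforDrift} partitions the realised input phases into $\epsilon$-close classes, invokes a continuity bound (Lemma~\ref{lem:epsilon-close_cost}) to replace all inputs in a class by a representative, and then optimises over \emph{pure} outputs within each class, obtaining $\tfrac12+\tfrac{1}{T|1-e^{-iv}|}+O(\epsilon)$. Lemma~\ref{lem:DriftProb} keeps the same partition idea but controls the random fluctuations of $\tfrac{1}{m_k}\sum_{t\in M_k}e^{ivt}$ by Chebyshev's inequality, yielding a high-probability bound $\tfrac12+O(T^{-1/3})$ valid for mixed outputs. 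Your approach sidesteps the discretisation entirely by working at the population level: you use the independence of $\phi$ and $t$ to pull the channel out of the $t$-average, reduce each term to $\lambda_{\max}(\bar\sigma_\phi)$, and compute this exactly in the Bloch picture as $\tfrac{1+R}{2}$ with $R=|\bE_t e^{-if(t)}|$, together with the optimal strategy $|\phi\ra\mapsto|\phi-\bar s\ra$. This recovers the paper's leading term (since $R\le\tfrac{2}{T|1-e^{-iv}|}$) and, unlike Lemma~\ref{lem:FFforDrift}, does not require pure outputs. What the paper's lemmas buy in exchange is control of the \emph{empirical} training cost on a fixed finite sample: your factorisation $\bar C=\bE_\phi\tr\!\big(\N_\U(\kb{\phi})\,\bar\sigma_\phi\big)$ is exact only for the population cost, and the $\epsilon$-class and Chebyshev machinery are precisely what quantify the gap. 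For the heuristic Claim your population reading is entirely appropriate, and you already flag the finite-$T$ caveat correctly.
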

    The rigorous versions of the Claim~\ref{Caim_FF_Drift} are contained in both Lemma~\ref{lem:FFforDrift} and Lemma~\ref{lem:DriftProb}; the bound in~\ref{lem:FFforDrift} is slightly stronger and the proof is less complicated, while assumptions in Lemma~\ref{lem:DriftProb} are more general and clearer.
    
    As the network sees only interferometric labels during training, for simplicity, we assume in Lemma~\ref{lem:FFforDrift} that it learns to output a pure state. We will be aided in our pursuits by the following lemma:
    \begin{lem}[Cost function of $\epsilon$-close states] \label{lem:epsilon-close_cost}
        Consider two pairs of states $\rho, \rho^\prime$ and $\sigma, \sigma^\prime$ such that $\tr \left|
        \rho - \sigma  \right|, \ \tr \left| \rho^\prime - \sigma^\prime  \right|  \leq \epsilon$ and a quantum channel $\ve$. 
        Then $\tr[\rho^\prime \ve (\rho)] \leq \tr[\sigma^\prime \ve (\sigma)] + 2 \epsilon$. 
    \end{lem}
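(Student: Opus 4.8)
The plan is to prove the inequality $\tr[\rho^\prime \ve(\rho)] \leq \tr[\sigma^\prime \ve(\sigma)] + 2\epsilon$ by splitting the difference into two telescoping pieces and bounding each using the trace-norm hypothesis together with basic operator inequalities. First I would write
\begin{align*}
    \tr[\rho^\prime \ve(\rho)] - \tr[\sigma^\prime \ve(\sigma)]
    &= \tr[\rho^\prime \ve(\rho)] - \tr[\sigma^\prime \ve(\rho)]
     + \tr[\sigma^\prime \ve(\rho)] - \tr[\sigma^\prime \ve(\sigma)]\\
    &= \tr[(\rho^\prime - \sigma^\prime)\ve(\rho)]
     + \tr[\sigma^\prime (\ve(\rho) - \ve(\sigma))].
\end{align*}
Each summand is then estimated separately.

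For the first summand, I would use the standard bound $|\tr[X Y]| \leq \|X\|_1 \|Y\|_\infty$ for the pairing of trace class and bounded operators. Here $X = \rho^\prime - \sigma^\prime$ has $\|X\|_1 = \tr|\rho^\prime - \sigma^\prime| \leq \epsilon$ by hypothesis, and $Y = \ve(\rho)$ is a density operator, hence $\|Y\|_\infty \leq 1$ (its largest eigenvalue is at most its trace, which is $1$). This gives $|\tr[(\rho^\prime - \sigma^\prime)\ve(\rho)]| \leq \epsilon$. For the second summand, since $\ve$ is a quantum channel it is trace-norm contractive: $\|\ve(\rho) - \ve(\sigma)\|_1 \leq \|\rho - \sigma\|_1 = \tr|\rho - \sigma| \leq \epsilon$. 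Combined with $\|\sigma^\prime\|_\infty \leq 1$ (again $\sigma^\prime$ is a state), the same operator-pairing bound yields $|\tr[\sigma^\prime(\ve(\rho) - \ve(\sigma))]| \leq \epsilon$. Adding the two contributions gives $\tr[\rho^\prime \ve(\rho)] - \tr[\sigma^\prime \ve(\sigma)] \leq 2\epsilon$, which is the claim.

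There is no real obstacle here; the only points that merit a line of justification are (i) that a density operator has operator norm at most $1$, and (ii) that completely positive trace-preserving maps are contractions in trace norm — the latter being a well-known fact (it follows, e.g., from the data-processing inequality for trace distance, or can be cited alongside the data-processing reference already used in Appendix~\ref{comparisoncosts}). One subtlety worth noting in the write-up: the hypothesis is stated for density operators, so $\rho, \rho^\prime, \sigma, \sigma^\prime$ are all genuine states; if one wanted the statement for subnormalised or general Hermitian operators the bound would pick up norm factors, but for the stated setting the clean constant $2\epsilon$ is exactly what the two-term split produces. I would therefore keep the proof to essentially the three displayed lines above plus a sentence invoking contractivity of $\ve$.
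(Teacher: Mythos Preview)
Your proof is correct and follows essentially the same route as the paper: the same telescoping split $\tr[(\rho'-\sigma')\ve(\rho)] + \tr[\sigma'(\ve(\rho)-\ve(\sigma))]$, each term bounded by $\epsilon$ via H\"older's inequality and trace-norm contractivity of channels. The paper's write-up is slightly terser (it jumps directly to $\tr|\rho'-\sigma'|$ and $\tr|\ve(\rho-\sigma)|$ without naming the H\"older bound), but the argument is identical.
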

    \begin{remark}
    The parameter $\epsilon$ in Lemma~\ref{lem:FFforDrift} tends to fall with both $N$ and $T$ if some assumptions about the distribution of phases $\{\phi^j_t\}$ are satisfied. See Example~\ref{ex:drift_sequences} and Lemma~\ref{lem:DriftProb} for the details.
    \end{remark}
    \begin{proof}
    \be
        \tr[\rho^\prime \ve (\rho)] &=& 
        \tr\left[\sigma^\prime \ve (\sigma)\right] + \tr\left[(\rho^\prime - \sigma^\prime) \ve (\rho )\right] + \tr\left[\sigma^\prime \ve (\rho - \sigma)\right]         \nn\\
        &\leq& \tr\left[\sigma^\prime \ve (\sigma)\right] +  \tr\left|\rho^\prime - \sigma^\prime \right| + \tr\left|\ve (\rho - \sigma) \right|
    \ee
    Quantum channels are contractive~\cite[Theorem 9.2]{Nielsen_Chuang_QI_2010}, i.e. $\tr \left| \ve(\rho) - \ve(\sigma) \right| \leq \tr \left| \rho - \sigma \right|$. Combining contractivity with the assumptions of the lemma, we obtain the desired bound.
    \end{proof}
    
    \begin{lem}[Feed-forward networks can not correct drifts]
    \label{lem:FFforDrift}
    Consider a feed-forward QNN trained on sequences $\left\{\left(|\phi^j_1 \ra, |\phi^j_1 \ra\right), \left( |\phi^j_2 \ra,  |\phi^j_2 - v\ra \right), \left(|\phi^j_3 \ra, |\phi^j_3 - v \cdot 2 \ra \right), \dots, \left(|\phi^j_{T} \ra, |\phi^j_{T} - v \cdot (T-1) \ra \right) \right\}_{j=1}^N$  to output pure states, where \( |\phi \rangle \equiv \frac{|\downarrow \rangle + e^{i \phi} |\uparrow \rangle}{\sqrt{2}}\). Split the phases $\{ \phi^j_t \}$ into disjoint classes $C^i$ such that $\phi^j_t, \phi^k_s \in C^i\ \Rightarrow \ \tr\left|| \phi^j_t \ra \la \phi^j_t | -  |\phi^k_s \ra \la \phi^k_s | \right| \leq  \epsilon$. Denote by $n^i_t$ the number of $j$s for which $\phi^j_t \in C^i$. If there exists such splitting where $ \forall t \ \frac{n^i_t}{|C_i|} \leq \frac{1}{T} + \epsilon$, then the achievable average fidelity of the QNN during training is bounded from above by $\frac{1}{2} + \frac{1}{T \left| 1 - e^{-iv} \right|} + \epsilon \left(2  + \frac{1}{\left| 1 - e^{-iv} \right|}\right)$ and from below by $\frac{1}{2}$.
    \end{lem}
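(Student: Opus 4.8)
The plan is to use that a feed‑forward QNN carries no memory between time steps, so on the training data it applies one and the same channel \(\ve:=\N_{\U}\) to every input, and the local pure‑output cost is
\[
C \;=\; \frac{1}{NT}\sum_{j=1}^{N}\sum_{t=1}^{T}\la \phi^{j}_{t}-v(t-1)\,|\;\ve\!\left(\kb{\phi^{j}_{t}}\right)\;|\,\phi^{j}_{t}-v(t-1)\ra .
\]
The bound from below by \(\tfrac12\) is immediate: the channel \(\ve(\cdot)=\tr(\cdot)\,\kb{\downarrow}\) is realizable by a large enough QNN, outputs a pure state, and gives overlap \(|\la\theta|\!\downarrow\ra|^{2}=\tfrac12\) with \emph{every} target \(|\theta\ra\), hence average fidelity exactly \(\tfrac12\). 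So the achievable training fidelity is at least \(\tfrac12\).

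\textbf{Reduction to class representatives.}
For the upper bound I would first localize the inputs. Fix the splitting \(\{C^{i}\}\) from the hypothesis, pick a phase \(\phi^{i}\) in each class, and set \(\rho^{i}:=\ve(\kb{\phi^{i}})\). Since \(\tr|\kb{\alpha}-\kb{\beta}|=|\sin\tfrac{\alpha-\beta}{2}|\) depends only on \(\alpha-\beta\), shifting both phases by \(-v(t-1)\) preserves trace distance, so \(\tr\big|\kb{\phi^{j}_{t}-v(t-1)}-\kb{\phi^{i}-v(t-1)}\big|\le\epsilon\) whenever \(\phi^{j}_{t}\in C^{i}\). Applying Lemma~\ref{lem:epsilon-close_cost} term by term (with \(\rho=\kb{\phi^{j}_{t}}\), \(\sigma=\kb{\phi^{i}}\), \(\rho'=\kb{\phi^{j}_{t}-v(t-1)}\), \(\sigma'=\kb{\phi^{i}-v(t-1)}\)) then gives
\[
C\;\le\;2\epsilon+\frac{1}{NT}\sum_{i}\sum_{t=1}^{T}n^{i}_{t}\,\la \phi^{i}-v(t-1)\,|\,\rho^{i}\,|\,\phi^{i}-v(t-1)\ra .
\]

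\textbf{The qubit estimate.}
Writing \(\rho^{i}\) in the \(\{|\!\downarrow\ra,|\!\uparrow\ra\}\) basis with off‑diagonal entry \(q^{i}\) (so \(|q^{i}|\le\tfrac12\)), one computes \(\la\theta|\rho^{i}|\theta\ra=\tfrac12+\mathrm{Re}(e^{i\theta}q^{i})\), hence the inner sum equals \(\tfrac{|C_{i}|}{2}+\mathrm{Re}\big(q^{i}e^{i\phi^{i}}\sum_{t}n^{i}_{t}e^{-iv(t-1)}\big)\), and optimizing over \(\rho^{i}\) (i.e.\ over \(q^{i}\)) bounds it by \(\tfrac{|C_{i}|}{2}+\tfrac12\big|\sum_{t=1}^{T}n^{i}_{t}e^{-iv(t-1)}\big|\). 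The remaining task is to estimate this partial exponential sum from the data constraints \(n^{i}_{t}\le|C_{i}|(\tfrac1T+\epsilon)\) and \(\sum_{t}n^{i}_{t}=|C_{i}|\): split \(n^{i}_{t}=\tfrac{|C_{i}|}{T}+m^{i}_{t}\); the time‑average part contributes the full geometric series \(\tfrac{|C_{i}|}{T}\big|\tfrac{1-e^{-ivT}}{1-e^{-iv}}\big|\le\tfrac{2|C_{i}|}{T|1-e^{-iv}|}\), and the fluctuation \(m^{i}_{t}\) has \(\sum_{t}m^{i}_{t}=0\), \(m^{i}_{t}\le|C_{i}|\epsilon\). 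Bounding \(\big|\sum_{t}m^{i}_{t}e^{-iv(t-1)}\big|\) by isolating its positive part and regrouping the balanced sum into consecutive blocks of the exponents — each block being a partial geometric series of modulus \(\le 2/|1-e^{-iv}|\) — produces the \(\epsilon/|1-e^{-iv}|\) contribution; collecting everything and summing over \(i\) with \(\sum_{i}|C_{i}|=NT\) gives \(C\le\tfrac12+\tfrac{1}{T|1-e^{-iv}|}+\epsilon\big(2+\tfrac{1}{|1-e^{-iv}|}\big)\).

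\textbf{Main obstacle.}
The delicate point is precisely this fluctuation estimate: a crude triangle inequality \(\sum_{t}|m^{i}_{t}|\le 2T|C_{i}|\epsilon\) loses a factor of \(T\), so one genuinely has to exploit that the phases \(e^{-iv(t-1)}\) are (partial) geometric progressions — equivalently, that any run of consecutive time steps contributes at most \(2/|1-e^{-iv}|\) to the weighted sum — in order to keep the \(\epsilon\)-coefficient \(T\)-independent. Everything else (the reduction via Lemma~\ref{lem:epsilon-close_cost}, the contractivity already used there, the single‑qubit parametrization, and the geometric‑series bound for the uniform part) is routine. I would also note that the hypothesis of a good class splitting is what makes \(\epsilon\) small in practice — \(\epsilon\) shrinks with both \(N\) and \(T\) under mild assumptions on the distribution of the phases, as quantified in Lemma~\ref{lem:DriftProb} and Example~\ref{ex:drift_sequences} — so the bound does pin the feed‑forward training fidelity essentially at \(\tfrac12\), i.e.\ no better than outputting a totally mixed (or fixed) state.
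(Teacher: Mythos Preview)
Your overall strategy matches the paper's: reduce to class representatives via Lemma~\ref{lem:epsilon-close_cost}, write the per-class contribution as $\tfrac12+\tfrac12\bigl|\sum_t p^i_t\,e^{-iv(t-1)}\bigr|$ with $p^i_t=n^i_t/|C_i|$, and then try to control this exponential sum using $p^i_t\le\tfrac1T+\epsilon$. The lower bound and the reduction step are correct.

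The step you label the ``main obstacle'' is a genuine gap, and the sketch you give does not close it. Take $v=\pi$, $T$ even, and a class with $n^i_t=N$ for odd $t$ and $n^i_t=0$ for even $t$; then $|C_i|=NT/2$, the hypotheses hold with $\epsilon=1/T$, the fluctuations $m^i_t=\pm|C_i|/T$ alternate in sign, and $\bigl|\sum_t m^i_t\,e^{-i\pi(t-1)}\bigr|=|C_i|$, not the $O\bigl(|C_i|\epsilon/|1-e^{-iv}|\bigr)=|C_i|/T$ you need. Regrouping into consecutive blocks cannot help here, because the large weights sit on a step-$2$ arithmetic progression, so each ``block'' is a single term rather than a short geometric sum. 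The paper, for its part, does not split off the uniform piece but directly uses $\bigl|\sum_t p^i_t e^{-ivt}\bigr|\le(\max_t p^i_t)\bigl|\sum_t e^{-ivt}\bigr|$ --- an inequality that is false in general (same example: the right-hand side vanishes while the left equals $1$). In fact, realising this class structure with $\phi^j_t=0$ for odd $t$ and $\phi^j_t=\pi$ for even $t$ makes every target equal to $|0\rangle$, so the constant pure-output channel $\rho\mapsto|0\rangle\langle 0|$ achieves training fidelity $1$ while the stated upper bound is $\tfrac12+3/T$; the inequality as written therefore needs an additional hypothesis excluding such resonances between $v$ and the class occupancies, of the kind Lemma~\ref{lem:DriftProb} supplies by randomising the phases.
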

    \begin{proof}
      For inputs $\{|\phi^j_t \ra\}$, consider outputs $\left\{a^*_{\phi^j_t} |\!\:\!\!\downarrow \ra + b^*_{\phi^j_t} |\!\:\!\!\uparrow \ra \right\}$, $\left|a_{\phi^j_t}\right|^2 + \left|b_{\phi^j_t}\right|^2 =1$ that maximize the average fidelity $\bar{F}$ on the full data set
    \be
        \bar{F} = \frac{1}{NT}\sum_{j=1}^N \sum_{t=1}^T \left| \left(a_{\phi^j_t}\la \downarrow\!\:\!\!| + b_{\phi^j_t}\la \uparrow\!\:\!\!| \right) \cdot \left( \frac{|\!\:\!\!\downarrow\ra + e^{i \left(\phi^j_t- v\cdot t\right)}|\!\:\!\!\uparrow \ra}{\sqrt{2}} \right) \right|^2.
    \ee  
  We can rewrite $\bar{F}$ as a convex combination of average fidelities on classes $C^j$ weighted with $\frac{|C^j|}{NT}$
    \be
        \bar{F} &=& \sum_i \frac{|C^i|}{NT} \bar{F}_i,
        \\
        \bar{F}_i &=& \frac{1}{|C^i|}\sum_{\phi^j_t \in C^i}
        \left| \left(a_{\phi^j_t}\la \downarrow\!\:\!\!| + b_{\phi^j_t}\la \uparrow\!\:\!\!| \right) \cdot \left( \frac{|\!\:\!\!\downarrow\ra + e^{i \left(\phi^j_t- v\cdot t \right)}|\!\:\!\!\uparrow \ra}{\sqrt{2}} \right) \right|^2
    \ee
    Let us show that $\bar{F}_i \leq \frac{1}{2} + \frac{1}{T \left| 1 - e^{-iv} \right|} 2 \epsilon$ for every class $C^i$. Define $p^i_t = \frac{n^i_t}{|C^i|}$, then
    \be
        \bar{F}_i = \sum_{t=1}^T p^i_t \left| \left(a_{\phi^j_t}\la \downarrow\!\:\!\!| + b_{\phi^j_t}\la \uparrow\!\:\!\!| \right) \cdot \left( \frac{|\!\:\!\!\downarrow\ra + e^{i \left(\phi^j_t- v\cdot t \right)}|\!\:\!\!\uparrow \ra}{
        \sqrt{2}} \right) \right|^2.
    \ee
    Denote by $|\phi^{(j)} \ra$ an arbitrary state that belongs to $C^j$. A QNN implements a channel. We can invoke Lemma~\ref{lem:epsilon-close_cost}
    \be
        \left| \left(a_{\phi^j_t}\la \downarrow\!\:\!\!| + b_{\phi^j_t}\la \uparrow\!\:\!\!| \right) \cdot \left( \frac{|\!\:\!\!\downarrow\ra + e^{i \left(\phi^j_t- v\cdot t \right)}|\!\:\!\!\uparrow \ra}{
        \sqrt{2}} \right) \right|^2_{\phi^j_t \in C^i} = \nn\\
        \left| \left(a_{\phi^{(i)}}\la \downarrow\!\:\!\!| + b_{\phi^{(i)}}\la \uparrow\!\:\!\!| \right) \cdot \left( \frac{|\!\:\!\!\downarrow\ra + e^{i \left(\phi^{(i)}- v\cdot t \right)}|\!\:\!\!\uparrow \ra}{
        \sqrt{2}} \right) \right|^2 + 2 \epsilon.
    \ee
    By opening the brackets, taking into account $\sum_{t} p^j_t=1$ and introducing phases $\theta^{(j)}: \ a_{\phi^{(j)}}^* \cdot b_{\phi^{(j)}} \equiv \left|a_{\phi^{(j)}}b_{\phi^{(j)}}\right|e^{i \theta^{(j)}}$, we get
    \be
        \bar{F}_{j} = \frac{1}{2} +  \max_{|ab|:|a|^2 + |b|^2 =1} |ab| \max_{\theta^{(j)}} \sum_{t} p^j_t \cos\left(\phi^{(j)} + \theta^j -v \cdot t\right) + 2 \epsilon \nn\\ 
        \prescript{{\textrm{Cauchy's inequality}}}{}\leq 
        \frac{1}{2} + \max_{\theta^{(j)}} \sum_{t} \frac{p^j_t}{4}\left( e^{i\left(\phi^{(j)} + \theta^j - v\cdot t\right)} + c.c. \right) 3 \epsilon.
    \ee
    Unsurprisingly, the maximum is achieved by an interferometric output. By the assumtion of the lemma, $p^j_t \leq \frac{1}{T} + \epsilon$. By summing up a geometrical progression and taking an absolute value,
    \be
        \bar{F}_j \leq \frac{1}{2} + \frac{1 + T \epsilon}{2T} \left| \frac{1-e^{-ivT}}{1 - e^{-iv}} \right| + 2 \epsilon \leq  \frac{1}{2} + \frac{1}{T \left| 1 - e^{-iv} \right|} + \epsilon \left(2  + \frac{1}{\left| 1 - e^{-iv} \right|}\right).
    \ee
    As $\bar{F}$ is the convex combination of $\bar{F}_j$,
    \be
        \bar{F} \leq \max_j \bar{F}_j,
    \ee
    yielding the desired bound from above.
    
    The bound from below on the average fidelity is achieved by a network that outputs the totally mixed state $\mathbbm{1}/2$ or a Haar-random pure state.
    \end{proof}
    
     The conditions of the Lemma~\ref{lem:FFforDrift} can seem unwieldy. Let us consider a couple of examples where conditions in Lemma~\ref{lem:FFforDrift} are or are not satisfied for small $\epsilon$.
    
    \begin{example}
        Consider a data set consisting of a single sequence
        \be 
        \left\{\left(|\phi \ra, |\phi \ra\right), \left( |\phi \ra,  |\phi - v\ra \right), \left(|\phi \ra, |\phi - v \cdot 2 \ra \right), \dots, \left(|\phi \ra, |\phi - v \cdot (T-1) \ra \right) \right\}.
        \ee
        The conditions of Lemma~\ref{lem:FFforDrift} are satisfied with $\epsilon = 0$. 
    \end{example}   
    
  \begin{example}\label{ex:drift_sequences}
        Consider a data set
        \be
        \left\{\left(|\phi^j_1 \ra, |\phi^j_1 \ra\right), \left( |\phi^j_2 \ra,  |\phi^j_2 - v\ra \right), \left(|\phi^j_3 \ra, |\phi^j_3 - v \cdot 2 \ra \right), \dots, \left(|\phi^j_{T} \ra, |\phi^j_{T} - v \cdot (T-1) \ra \right) \right\}_{j=1}^N
        \ee
        where for each $t$ and $j$ $\phi^j_t$s are drawn from the same distribution. Fix some decomposition into classes, then the distributions of probabilities $n^i_t/|C_i|$ are the same for every $t$. Thus, $n^i_t/|C_i|$ converge to $1/T$ as $N\to\infty$ with probability $1$. Hence, the method of proof of Lemma~\ref{lem:FFforDrift} can be applied in this situation.
        
        Does the assignment to different classes work even for a single long enough sequence? See Lemma~\ref{lem:DriftProb} for proof that it is indeed the case, at least if the distribution of phases is uniform.
        %The probability that the conditions of Lemma~\ref{lem:FFforDrift} are satisfied with $\epsilon < \delta$ tend to 1 as the numer of seqences $N \rightarrow \infty$ for any $\delta$.
    \end{example} 
    
    \begin{example}
        Consider a data set consisting of a single sequence
        \be 
        \left\{\left(|\phi \ra, |\phi \ra\right), \left( |\phi + \pi \ra,  |\phi - v\ra \right), \left(|\phi+ \pi \ra, |\phi - v \cdot 2 \ra \right), \dots, \left(|\phi+\pi\left(1 + (-1)^T \right) \ra, |\phi - v \cdot (T-1) \ra \right) \right\}.
        \ee
        The Lemma~\ref{lem:FFforDrift} fails to yield any useful bound. If a splitting $n^i_t$ is independent of $t$, a non-empty class contains phases in even and odd positions in a sequence, thus $\min(\epsilon) = 1$. A feed-forward network can distinguish between even and odd positions in the sequence. Nevertheless, the network can not distinguish between inputs on positions of the same parity. A small modification of the proof consisting of splitting phases on only odd or even positions into classes would yield a useful bound from above. This bound can be obtained by substituting $T \rightarrow \frac{T}{2}$ in the statement of the Lemma~\ref{lem:FFforDrift}.
    \end{example}
    The Lemma~\ref{lem:DriftProb} extends the results to non-pure outputs and clarifies the role of splittings $n^i_t$ in Lemma~\ref{lem:FFforDrift} at the cost of the slightly weaker bound and a more complicated proof.
    
\begin{lem}\label{lem:DriftProb}
	Consider a feed-forward QNN trained on a sequence $\left(|\phi_t \ra, |\phi_t -v(t-1) \ra\right)_{t=1}^T$ with each phase $\phi_t$ being drawn i.i.d. from the uniform measure on the circle and \( |\phi \rangle \equiv \frac{|\downarrow \rangle + e^{i \phi} |\uparrow \rangle}{\sqrt{2}}\). Then for any $\eta>0$, the network's optimal cost function is upper bounded by
	\begin{equation}
	C_{QNN} \le \frac{1}{2} + \frac{c}{\sqrt{\eta}|1-e^{iv}|}\frac{1}{\sqrt[3]{T}}.
	\end{equation}
	with probability greater or equal to $1-\eta$ and for $c>0$ independent of $T$ and $\eta$. Moreover, $C_{QNN}\ge 1/2$ with certainty.
\end{lem}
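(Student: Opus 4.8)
The starting point is that a feed-forward QNN has no memory, so it realises a single fixed channel $\mathcal{E}$ on the data qubit, applied independently at every step; consequently the optimal local (pure-output) cost is $C_{QNN}=\sup_{\mathcal{E}}\,\frac1T\sum_{t=1}^{T}\langle\phi_t-v(t-1)|\,\mathcal{E}(|\phi_t\rangle\langle\phi_t|)\,|\phi_t-v(t-1)\rangle$, the supremum running over all quantum channels. The lower bound $C_{QNN}\ge\frac12$ is immediate: the depolarising channel $\rho\mapsto\tr(\rho)\,\mathbbm{1}/2$, realised by a QNN that discards its input and emits a maximally mixed qubit, makes every term exactly $\frac12$. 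The rest of the plan is to bound $C_{QNN}$ from above via a second-moment estimate in the random phases $\{\phi_t\}$, followed by Chebyshev.

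First I would discretise. Fix an integer $K$ (to be optimised), cut the equatorial circle into $K$ equal arcs $C^1,\dots,C^K$ with centres $\phi^{(i)}$, write $\epsilon:=2\pi/K$ and $S_i:=\{t:\phi_t\in C^i\}$. Two equatorial states whose phases lie in one arc are $\le\epsilon$-close in trace norm, and the drift $\phi\mapsto\phi-v(t-1)$ is a Bloch-sphere rotation, hence an isometry, so for $t\in S_i$ both the input pair $(|\phi_t\rangle\langle\phi_t|,|\phi^{(i)}\rangle\langle\phi^{(i)}|)$ and the label pair $(|\phi_t-v(t-1)\rangle\langle\phi_t-v(t-1)|,|\phi^{(i)}-v(t-1)\rangle\langle\phi^{(i)}-v(t-1)|)$ are $\epsilon$-close. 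Lemma~\ref{lem:epsilon-close_cost} then replaces every summand by its value at the representative at a cost $2\epsilon$, after which the terms in a fixed class only couple $\sigma^{(i)}:=\mathcal{E}(|\phi^{(i)}\rangle\langle\phi^{(i)}|)$ — a state, Bloch norm $\le1$ — with the average projector $\frac1{|S_i|}\sum_{t\in S_i}|\phi^{(i)}-v(t-1)\rangle\langle\phi^{(i)}-v(t-1)|$, whose Bloch vector has norm $\frac1{|S_i|}\,|Z_i|$ with $Z_i:=\sum_{t\in S_i}e^{-iv(t-1)}$. Using $\tr[\tfrac12(\mathbbm{1}+\vec r\!\cdot\!\vec\sigma)\tfrac12(\mathbbm{1}+\vec b\!\cdot\!\vec\sigma)]=\tfrac12+\tfrac12\vec r\!\cdot\!\vec b$ and $\sum_i|S_i|=T$, this gives, uniformly in $\mathcal{E}$,
\[
C_{QNN}\ \le\ \tfrac12+2\epsilon+\frac1{2T}\sum_{i=1}^{K}\Big|\sum_{t\in S_i}e^{-iv(t-1)}\Big|,
\]
so only the $\phi_t$ are random and the supremum over $\mathcal{E}$ commutes with everything that follows.

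The heart of the argument is the moment bound on $\sum_i|Z_i|$. Each $Z_i=\sum_{t=1}^T\mathbbm{1}[\phi_t\in C^i]\,e^{-iv(t-1)}$ is a sum of independent terms, so $\mathbb{E}[Z_i]=\frac1K\sum_{t=1}^Te^{-iv(t-1)}$, whence $|\mathbb{E}[Z_i]|\le\frac2{K|1-e^{iv}|}$ by summing a geometric progression, and $\mathbb{E}\,|Z_i-\mathbb{E}Z_i|^2=\sum_t\frac1K(1-\frac1K)\le\frac TK$, so $\mathbb{E}\,|Z_i|^2\le\frac4{K^2|1-e^{iv}|^2}+\frac TK$. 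By Cauchy–Schwarz $\mathbb{E}\big[(\sum_i|Z_i|)^2\big]\le K\sum_i\mathbb{E}|Z_i|^2\le\frac4{|1-e^{iv}|^2}+TK$, hence, pathwise using $0\le C_{QNN}-\tfrac12$,
\[
\mathbb{E}\big[(C_{QNN}-\tfrac12)^2\big]\ \le\ 8\epsilon^2+\frac2{T^2|1-e^{iv}|^2}+\frac{K}{2T}\ =\ \frac{32\pi^2}{K^2}+\frac{K}{2T}+\frac2{T^2|1-e^{iv}|^2}.
\]
Choosing $K\asymp T^{1/3}$ balances $\epsilon^2\asymp T^{-2/3}$ against $K/T\asymp T^{-2/3}$, and since $|1-e^{iv}|\le2$ the last term is absorbed, giving $\mathbb{E}\big[(C_{QNN}-\tfrac12)^2\big]\le c^2\,T^{-2/3}/|1-e^{iv}|^2$ for an absolute constant $c$. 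Chebyshev on the nonnegative quantity $C_{QNN}-\tfrac12$, i.e. $\mathbb{P}(C_{QNN}-\tfrac12\ge\lambda)\le\mathbb{E}[(C_{QNN}-\tfrac12)^2]/\lambda^2$, then gives probability $\le\eta$ at $\lambda=c\,T^{-1/3}/(\sqrt\eta\,|1-e^{iv}|)$, which is the claim.

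The main obstacle is producing the stated $T^{-1/3}$ rate together with the correct $\sqrt\eta^{-1}$ and $|1-e^{iv}|^{-1}$ dependence: one must keep the two error sources — arc coarseness $\epsilon$ and the fluctuation of the partial exponential sums $Z_i$ across arcs — under simultaneous second-moment control, and recognise that the purely deterministic geometric-progression contribution only reaches order $1/T$ and is harmless. The remaining points are bookkeeping: checking that Lemma~\ref{lem:epsilon-close_cost} applies with the drifted states as the second pair (legitimate precisely because the drift acts isometrically on density matrices), that empty arcs contribute nothing, and that the high-probability event depends only on $\{\phi_t\}$ so the supremum over channels may be taken inside it.
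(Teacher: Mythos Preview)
Your proposal is correct and reaches the same $T^{-1/3}$, $\eta^{-1/2}$, $|1-e^{iv}|^{-1}$ scaling as the paper, but by a genuinely different and more economical route. The paper also discretises the circle into $K\asymp T^{1/3}$ arcs and reduces to controlling the exponential sums over each bin, but it then applies Chebyshev \emph{per bin} to the normalised quantity $\frac{1}{m_k}\sum_{t\in M_k}e^{iv(t-1)}$. Because the bin size $m_k$ is itself random, the resulting second moment is a sum over binomial weights with an extra $1/n$ factor, which the paper evaluates via the integral identity $\int_{-\infty}^0 e^{nx}\,dx=1/n$ and a further split at an auxiliary index $s=T/2$ before optimising $\epsilon$. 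You sidestep all of this by working with the \emph{unnormalised} sums $Z_i=\sum_{t\in S_i}e^{-iv(t-1)}$, whose second moments are immediate from independence, bundling the bins with a single Cauchy--Schwarz, and applying Markov/Chebyshev once to the nonnegative scalar $C_{QNN}-\tfrac12$ rather than bin by bin. This buys you a much shorter argument with no combinatorics and no implicit union bound over bins; the paper's per-bin analysis in principle gives finer local information, but that is not used in the final statement. The remaining bookkeeping points you flag (isometry of the drift so Lemma~\ref{lem:epsilon-close_cost} applies to both pairs, empty bins, channel-independence of the upper bound) are all handled correctly.
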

\begin{proof}
	Let $0<\eps \le 1/2$ and let $\eps' = \floor{1/\eps}^{-1}$ which by definition satisfies $1/\eps' \in\mathbbm{N}$. Note $\eps\le \eps'\le 2\eps$.
	Consider now the partition of $[0,2\pi)$ into the  intervals of length $2\pi \eps'$ given by $I^{\eps}_k= [(k-1)2\pi \eps',k2\pi \eps')$ for $k= 1,\cdots,1/\eps'\in\mathbbm{N}$. Note that by choosing $\eps'$ as above, this partition fully covers $[0,2\pi)$ without a remainder. For each $t=1,\cdots, T$, the $\phi_t$ are taken uniformly at random out the interval $[0,2\pi)$ and independent of each other. For $k=1,\cdots,1/\eps'$ denote by $M_k$ the set of $t=1,\cdots T$ for which $\phi_t\in I^{\eps}_k$ and their cardinality $m_k = |M_k|$. Let us now consider the average output state 
	\begin{equation*}
	\sigma_k = \frac{1}{m_k}\sum_{t \in M_k} \kb{\phi_t - v(t-1)} = \1/2 + \frac{1}{m_k}\sum_{t\in M_k} \left(e^{i(\phi_t -v(t-1))} \kbb{\!\:\!\!\uparrow}{\downarrow\!\:\!\!} + e^{-i(\phi_t -v(t-1))} \kbb{\!\:\!\!\downarrow}{\uparrow\!\:\!\!}\right)
	\end{equation*}
	Using  that for every $t\in M_k$ we have $|e^{i(k-1)2\pi\eps'}-e^{i\phi_t}| \le |1-e^{i 2\pi\eps'}|\le e^{2\pi}\eps'$, we see that $\sigma_k$ is close to
	\begin{equation*}
	\widetilde\sigma_k = \1/2 + \frac{1}{m_k}\sum_{t\in M_k} \left(e^{i((k-1)\eps -v(t-1))} \kbb{\!\:\!\!\uparrow}{\downarrow\!\:\!\!} + e^{-i((k-1)\eps-v(t-1))} \kbb{\!\:\!\!\downarrow}{\uparrow\!\:\!\!}\right)
	\end{equation*}
	as their trace distance can be bounded by 
	\begin{equation*}
	\frac{1}{2}\|\sigma_k-\widetilde\sigma_k\|_1 \le \left|\frac{1}{m_k}\sum_{t\in M_k} e^{i(\phi_t -v(t-1))} -e^{i(k-1)\eps} \frac{1}{m_k}\sum_{t\in M_k} e^{iv(t-1)}\right| \le e^{2\pi}\eps'.
	\end{equation*}
	 The probability that given $t=1,\cdots,T$ satisfies $t\in M_k$  is given by
	$
	p_k = \eps'.
	$
	Note that the probability that the set cardinality of $M_k$ is equal to $n\in\mathbbm{N}$ is given by $\Pr(m_k=n) = p_k^n(1-p_k)^{T-k}{T \choose n}$ and that the conditional probability of $M_k$ being equal to a particular set $Y\subset\{1,\cdots,T\}$ with cardinality $n$ is given by $\Pr\left(M_k= Y| \,|M_k|=n\right) = {T \choose n}^{-1}$. Denoting $$\mu = \frac{1}{T} \sum_{t=1}^T e^{iv(t-1)} = \frac{1-e^{ivT}}{T(1-e^{iv})}$$ and using this and Chebyshev's inequality we find
	\begin{align}
	\nonumber &\Pr\left(\left|\frac{1}{m_k}\sum_{t\in M_k} e^{iv(t-1)}  - \mu\right|\ge \delta \right) \le \frac{\bE\left[\left|\frac{1}{m_k}\sum_{t\in M_k} e^{iv(t-1)} -\mu\right|^2\right]}{\delta^2}
	\\&=\nonumber \sum_{n=1}^T \frac{1}{n^2 \delta^2}\Pr(m_k=n) \Pr\left(M_k= Y| \,|M_k|=n\right)\sum_{\substack{Y\subset\{1,\dots,T\}\\|Y|=n}} \sum_{t,s\in Y} (e^{iv(t-1)}  - \mu)(e^{-iv(s-1)}  - \mu^*) \\&=\nonumber \sum_{n=1}^T \frac{1}{n^2 \delta^2} p_k^n(1-p_k)^{T-n} \sum_{t,s=1}^T\sum_{\substack{Y\subset\{1,\dots,T\}\\|Y|=n, t,s\in Y}}  (e^{iv(t-1)}  - \mu)(e^{-iv(s-1)}  - \mu^*)
	\\&=\nonumber\sum_{n=1}^T \frac{1}{n^2 \delta^2} p_k^n(1-p_k)^{T-n} \left(\sum_{t=1}^T \sum_{\substack{Y\subset\{1,\dots,T\}\\|Y|=n, t\in Y}}|e^{iv(t-1)}  - \mu|^2 + \sum_{t\neq s\in\{1,\cdots T\}}^T\sum_{\substack{Y\subset\{1,\dots,T\}\\|Y|=n, t,s\in Y}}(e^{iv(t-1)}  - \mu)(e^{-iv(s-1)}  - \mu^*)\right)
	\\&=\nonumber\sum_{n=1}^T \frac{1}{n^2 \delta^2} p_k^n(1-p_k)^{T-n} \left(\sum_{t=1}^T {T-1 \choose n-1} |e^{iv(t-1)}  - \mu|^2 + \sum_{t\neq s\in\{1,\cdots T\}}^T{T-2 \choose n-2} (e^{iv(t-1)}  - \mu)(e^{-iv(s-1)}  - \mu^*)\right)
	\\&\le \sum_{n=1}^T \frac{1}{n^2 \delta^2} p_k^n(1-p_k)^{T-n} 2T{T-1 \choose n-1}  = \sum_{n=1}^T \frac{2}{n \delta^2} p_k^n(1-p_k)^{T-n} {T \choose n} \label{bound:Chebushev}
	\end{align}
	Here, we have used for the inequality in the last line that
	\begin{equation*}
	\sum_{t\neq s\in\{1,\cdots T\}}^T (e^{iv(t-1)}  - \mu)(e^{-iv(s-1)}  - \mu^*) = \sum_{t=1}^T(e^{iv(t-1)}  - \mu)(\mu^* -e^{-iv(t-1)} ) =  -\sum_{t=1}^T |e^{iv(t-1)}  - \mu|^2 \le 0
	\end{equation*}
	We can use $\int_{-\infty}^{0} dx \left(e^{x}\right)^n = \frac{1}{n}$ and $\left(p_k e^x + (1-p_k)\right)^T = \sum_{n=1}^T p_k e^{nx} (1-p_k)^{T-n} {T \choose n}$ to rewrite
	\begin{equation*}
	     \sum_{n=1}^T \frac{p_k^n}{n} (1-p_k)^{T-n} {T \choose n} = \int_{-\infty}^{0} dx \left[\left(p_k e^x + (1-p_k)\right)^T - (1-p_k)^T  \right].
	\end{equation*}
	After reminding ourselves that $a^T - b^T = (a-b)\sum_{n=0}^{T-1} a^n b^{T-n-1}$, we obtain
	\begin{equation*}
	    \sum_{n=1}^T \frac{p_k^n}{n} (1-p_k)^{T-n} {T \choose n} = p \int_{-\infty}^{0}dx e^x \sum_{n=0}^{T-1} \left(p_k e^x + (1-p_k)\right)^n (1-p_k)^{T-n-1}.
	\end{equation*}
	Evaluating the integrals
	\begin{equation*}
	    \int_{-\infty}^{0}dx e^x \left(p_k e^x + (1-p_k)\right)^n
	    =_{x \rightarrow p_k e^x} \frac{1}{p_k}\int_{0}^{p_k}dy (x + 1-p_k)^n
	    =_{x \rightarrow x+1-p_k} \frac{1}{p_k}\int_{1-p_k}^{1}dy x^n = \frac{1 - (1-p_k)^{n+1}}{p_k(n+1)},
	\end{equation*}
	we get
	\begin{equation*}
	    \sum_{n=1}^T \frac{p_k^n}{n} (1-p_k)^{T-n} {T \choose n} = \sum_{n=0}^{T-1} \frac{(1-p_k)^{T-n-1} - (1-p_k)^T}{n+1}.
	\end{equation*}
	This, combined with \eqref{bound:Chebushev} yields the bound
	\be
	    \Pr\left(\left|\frac{1}{m_k}\sum_{t\in M_k} e^{iv(t-1)}  - \mu\right|\ge \delta \right) \le
	    \frac{2(1-p_k)^T}{\delta^2} \sum_{n=1}^T \frac{(1-p_k)^{-n}-1}{n} \le  \frac{2(1-p_k)^T}{\delta^2} \sum_{n=1}^T\frac{(1-p_k)^{-n}}{n}
	\ee
	Let now $s=1,\cdots, T$. From the above, we find
	\begin{align*}
	&\Pr\left(\left|\frac{1}{m_k}\sum_{t\in M_k} e^{iv(t-1)}  - \mu\right|\ge \delta \right) \le \frac{2(1-p_k)^T}{\delta^2} \sum_{n=1}^s\frac{(1-p_k)^{-n}}{n} + \frac{2(1-p_k)^T}{\delta^2} \sum_{n=s+1}^T\frac{(1-p_k)^{-n}}{n} \\&\le \frac{2s(1-p_k)^{T-s}}{\delta^2}+ \frac{2}{\delta^2 s} \sum_{n=0}^{T-s-1}(1-p_k)^{n}  \le \frac{2s(1-p_k)^{T-s}}{\delta^2}+ \frac{2}{\delta^2 s p_k},
	\end{align*}
	where for the last inequality, we have upper bounded the sum with the corresponding geometric series.
	Picking now for $0 <\eta\le 1$ fixed $\delta=\sqrt{ \frac{2}{\eta}\left(s(1-p_k)^{T-s}+ \frac{1}{s p_k}\right)}$,  this gives that with probability smaller or equal $\eta$ we have
	\begin{align*}
	 \left|\frac{1}{m_k}\sum_{t\in M_k} e^{iv(t-1)} -\mu\right| \ge \sqrt{ \frac{2}{\eta}\left(s(1-p_k)^{T-s}+ \frac{1}{s p_k}\right)}.
	\end{align*}
	Hence, with probability greater or equal to $1-\eta$ 
	\begin{align}
	\label{eq:PhaseSumBound}
	 \left|\frac{1}{m_k}\sum_{t\in M_k} e^{iv(t-1)} \right| \le |\mu| + \sqrt{ \frac{2}{\eta}\left(s(1-p_k)^{T-s}+ \frac{1}{s p_k}\right)}.
	\end{align}
	Moreover, note that as above for $t\in M_k$ we have $\frac{1}{2}\|\kb{\phi_t} - \kb{(k-1)\eps'}\|_1 \le e^{2\pi}\eps'$ and that since any $\E$ CPTP is a contraction also $\frac{1}{2}\|\E(\kb{\phi_t}) - \E(\kb{(k-1)\eps'})\|_1 \le e^{2\pi}\eps'$.
	Using this, we see that the optimal fidelity of the network is bounded by
	\begin{align*}
	C_{QNN} &\le  \max_{\E \text{ CPTP}}\frac{1}{T}\sum_{t=1}^T \langle \phi_t- v(t-1)|\E(\kb{\phi_t})|\phi_t-v(t-1)\rangle \\&= \max_{\E \text{ CPTP}}\frac{1}{T}\sum_{k=1}^{1/\eps'}\sum_{t\in M_k} \langle \phi_t- v(t-1)|\E(\kb{\phi_t})|\phi_t-v(t-1)\rangle 
	\\&\le \frac{1}{T}\sum_{k=1}^{1/\eps'}\max_{\E \text{ CPTP}}\sum_{t\in M_k} \langle \phi_t- v(t-1)|\E(\kb{\phi_t})|\phi_t-v(t-1)\rangle \\&\le \frac{1}{T}\sum_{k=1}^{1/\eps'}\max_{\rho \text{ state}}\sum_{t\in M_k} \langle \phi_t- v(t-1)|\rho|\phi_t-v(t-1)\rangle + e^{2\pi}\eps' \\&= \frac{1}{T}\sum_{k=1}^{1/\eps'}m_k\max_{\rho \text{ state}}\tr\left(\rho\sigma_k\right) + e^{2\pi}\eps' \le \frac{1}{T}\sum_{k=1}^{1/\eps'}m_k\max_{\rho \text{ state}}\tr\left(\rho\widetilde\sigma_k\right) + 2e^{2\pi}\eps'\\& = \frac{1}{T}\sum_{k=1}^{1/\eps'}m_k \frac{1}{2}\left(1+	\left|\frac{1}{m_k}\sum_{t\in M_k} e^{iv(t-1))}\right| \right) +2e^{2\pi}\eps'
	\end{align*}
	Combing this with \eqref{eq:PhaseSumBound} we get that with probability greater or equal to $1-\eta$
	\begin{align*}
	  C_{QNN} &\le  \frac{1}{2} (1+|\mu|) +\frac{1}{2}\frac{1}{T}\sum_{k=1}^{1/\eps'}m_k\sqrt{ \frac{2}{\eta}\left(s(1-p_k)^{T-s}+ \frac{1}{s p_k}\right)} +2e^{2\pi}\eps' \\& \le\frac{1}{2}+ \frac{1}{T|1-e^{iv}|} +\frac{1}{2}\frac{1}{T}\sum_{k=1}^{1/\eps'}m_k\sqrt{ \frac{2}{\eta}\left(s(1-p_k)^{T-s}+ \frac{1}{s p_k}\right)} +2e^{2\pi}\eps' \\&\le \frac{1}{2}+ \frac{1}{T|1-e^{iv}|} +\frac{1}{2}\frac{1}{T}\sum_{k=1}^{1/\eps'}m_k\sqrt{ \frac{2}{\eta}\left(se^{-(T-s)p_k}+ \frac{1}{s p_k}\right)} +2e^{2\pi}\eps',
	\end{align*}
	where in the last line we have used the elementary inequality $\ln(1-p_k) \le -p_k$. 
 Noting that $p_k = \eps'$ and $\frac{1}{T}\sum_{k=1}^{1/\eps'}m_k=1$ and picking $s =T/2$ gives
	\begin{align*}
	   C_{QNN} &\le\frac{1}{2}+ \frac{1}{T|1-e^{iv}|} +\frac{1}{2}\sqrt{ \frac{2}{\eta}\left(\frac{Te^{-T\eps'/2}}{2}+ \frac{2}{T \eps'}\right)} +2e^{2\pi}\eps'.
	\end{align*}
	Picking now $\eps = \frac{2}{\sqrt[3]{T}}$, remembering $\eps \le \eps'\le 2\eps$ and noting $Te^{-T\eps'/2}/2\le Te^{-T^{2/3}}/2\le 1/(2T^{2/3}) \le 2/(T \eps')\le 1/T^{2/3}$ for all $T\ge 1$ gives
	\begin{align*}
	   C_{QNN} &\le\frac{1}{2}+ \frac{1}{T|1-e^{iv}|} +\frac{1}{2}\sqrt{ \frac{4}{\eta} \frac{1}{T^{2/3}}} +\frac{4e^{2\pi}}{\sqrt[3]{T}} \le \frac{1}{2}+ \frac{(4e^{2\pi}+1)}{\sqrt{\eta}|1-e^{iv}|}\frac{1}{\sqrt[3]{T}},
	\end{align*}
	which is the desired upper bound.
	
	For the lower bound on $C_{QNN}$, consider a QNN with the constant output being the completely mixed state $\1/2$.
\end{proof}
    
    With memory consisting of $m$ qubits, QRNNs can be trained to cancel drifts on sequences of length $2^m$ exactly (more precisely--up to the numerical precision). The circuit of such a QRNN can be seen in Figure~\ref{fig:Drift_exact}. The memory stores a binary representation in the computational basis of the position in a sequence $t$. At each time step, the QRNN applies $e^{-ivt\frac{\sigma_z}{2}}$ and adds 1 to the stored value.
    \begin{figure}[htbp]
        \centering
        \includegraphics[scale=1]{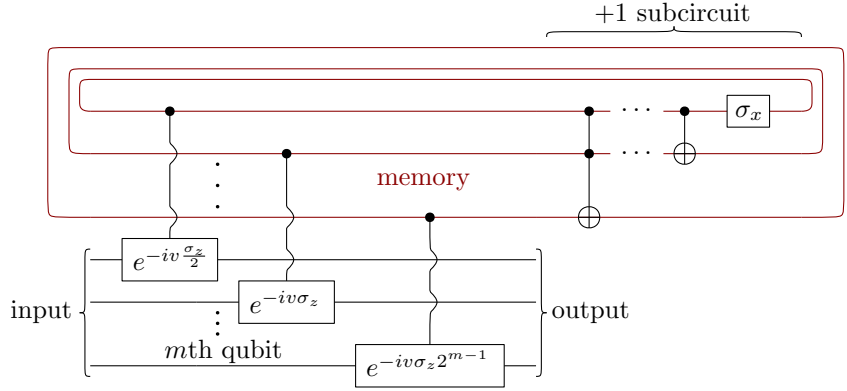}
        \caption{A circuit that cancels a linear drift with speed $v$ exactly. The memory is initialised as $\ket{\downarrow}^{\otimes m}$. Here we use the standard notation for controlled rotations, multi-controlled CNOTs and Pauli matrices.}
        \label{fig:Drift_exact}
    \end{figure}
    
    We expect the network's performance will deteriorate as test sequences become longer. Indeed, even a slight misestimation of $v$ results in fidelity approaching $0.5$ for long enough sequences.
    
    We can see all of the effects mentioned above: the limit on the achievable fidelity for feed-forward networks, QRNNs superseding this limit---increasingly so with the memory size, and performance decaying with test sequence length in Figure~\ref{fig:LinDrift}.
    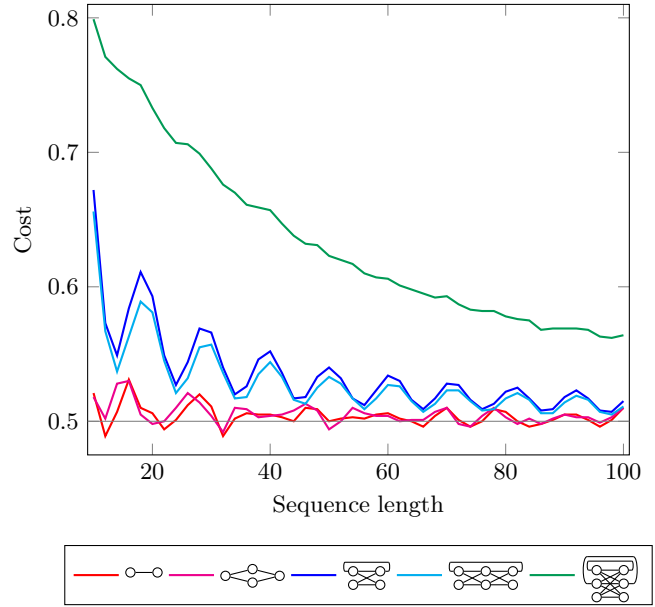
\begin{figure}[htbp]
	\centering
	\begin{tikzpicture}
		\begin{axis}[
			width=0.49\linewidth, % Scale the plot to \linewidth
			%grid=major, % Display a grid
			%grid style={dashed,gray!30}, % Set the style
			xlabel= Sequence length, % Set the labels
			ylabel= Cost,
			xmin=9,
			xmax=101,
			ymin=0.475,
			ymax=0.81,
			legend columns=5, 
			legend style={at={(0.5,-0.2)},anchor=north}, % Put the legend below the plot
			%x tick label style={rotate=90,anchor=east} % Display labels sideways
			]
			\addplot[thick,color=red,mark=None]  table[x=x,y=i1m0round,col sep=comma] {fidelity_LinearDrift.csv}; 
			\addlegendentry{\begin{tikzpicture}[yscale=0.1,xscale=0.05, baseline]
					\node(1) [circle,draw,inner sep=0pt,minimum size=3.5pt] at (-1,0.15) {};
					\node(4) [circle,draw,inner sep=0pt,minimum size=3.5pt] at (0,0.15) {};
					\draw (1)--(4);
			\end{tikzpicture}}
			\addplot[thick,color=magenta,mark=None]  table[x=x,y=i1m0h2round,col sep=comma] {fidelity_LinearDrift.csv}; 
			\addlegendentry{\begin{tikzpicture}[yscale=0.1,xscale=0.05, baseline]
					\node(1) [circle,draw,inner sep=0pt,minimum size=3.5pt] at (-1,0.15) {};
					\node(4) [circle,draw,inner sep=0pt,minimum size=3.5pt] at (0,0) {};
					\node(5) [circle,draw,inner sep=0pt,minimum size=3.5pt] at (0,0.3) {};
					\node(6) [circle,draw,inner sep=0pt,minimum size=3.5pt] at (1,0.15) {};
					\draw (1)--(4);
					\draw (1)--(5);
					\draw (6)--(4);
					\draw (6)--(5);
			\end{tikzpicture}}
			\addplot[thick,color=blue,mark=None]  table[x=x,y=i1m1round,col sep=comma] {fidelity_LinearDrift.csv}; 
			\addlegendentry{\begin{tikzpicture}[yscale=0.1,xscale=0.05, baseline]
					\node(1) [circle,draw,inner sep=0pt,minimum size=3.5pt] at (-1,0) {};
					\node(2) [circle,draw,inner sep=0pt,minimum size=3.5 pt] at (-1,0.3) {};
					\node(4) [circle,draw,inner sep=0pt,minimum size=3.5pt] at (0,0) {};
					\node(5) [circle,draw,inner sep=0pt,minimum size=3.5pt] at (0,0.3) {};
					\draw (1)--(4);
					\draw (2)--(4);
					\draw (1)--(5);
					\draw (2)--(5);
					\draw (5) -- (0.3,0.3);
					\draw[out=0,in=0] (0.3,0.3) to (0.3,0.5);
					\draw (0.3,0.5) to (-1.3,0.5);
					\draw[out=180,in=180] (-1.3,0.5) to (-1.3,0.3);
					\draw (-1.3,0.3) -- (2);
			\end{tikzpicture}}
			\addplot[ thick,color=cyan,mark=None]  table[x=x,y=i1m1h2round,col sep=comma] {fidelity_LinearDrift.csv}; 
			\addlegendentry{\begin{tikzpicture}[yscale=0.1,xscale=0.05, baseline]
					\node(1) [circle,draw,inner sep=0pt,minimum size=3.5pt] at (-1,0) {};
					\node(2) [circle,draw,inner sep=0pt,minimum size=3.5pt] at (-1,0.3) {};
					\node(4) [circle,draw,inner sep=0pt,minimum size=3.5pt] at (0,0) {};
					\node(5) [circle,draw,inner sep=0pt,minimum size=3.5pt] at (0,0.3) {};
					\node(6) [circle,draw,inner sep=0pt,minimum size=3.5pt] at (1,0) {};
					\node(7) [circle,draw,inner sep=0pt,minimum size=3.5pt] at (1,0.3) {};
					\draw (1)--(4);
					\draw (2)--(4);
					\draw (1)--(5);
					\draw (2)--(5);
					\draw (6)--(4);
					\draw (7)--(4);
					\draw (6)--(5);
					\draw (7)--(5);
					\draw (7) -- (1.3,0.3);
					\draw[out=0,in=0] (1.3,0.3) to (1.3,0.5);
					\draw (1.3,0.5) to (-1.3,0.5);
					\draw[out=180,in=180] (-1.3,0.5) to (-1.3,0.3);
					\draw (-1.3,0.3) -- (2);
			\end{tikzpicture} }
			\addplot[thick,color=ForestGreen,mark=None]  table[x=x,y=i1m2round,col sep=comma] {fidelity_LinearDrift.csv}; 
			\addlegendentry{\begin{tikzpicture}[yscale=0.1,xscale=0.05, baseline]
					\node(1) [circle,draw,inner sep=0pt,minimum size=3.5pt] at (-1,-0.15) {};
					\node(2) [circle,draw,inner sep=0pt,minimum size=3.5pt] at (-1,0.15) {};
					\node(3) [circle,draw,inner sep=0pt,minimum size=3.5pt] at (-1,0.45) {};
					\node(4) [circle,draw,inner sep=0pt,minimum size=3.5pt] at (0,-0.15) {};
					\node(5) [circle,draw,inner sep=0pt,minimum size=3.5pt] at (0,0.15) {};
					\node(6) [circle,draw,inner sep=0pt,minimum size=3.5pt] at (0,0.45) {};
					\draw (1)--(4);
					\draw (2)--(4);
					\draw (3)--(4);
					\draw (1)--(5);
					\draw (2)--(5);
					\draw (3)--(5);
					\draw (1)--(6);
					\draw (2)--(6);
					\draw (3)--(6);
					\draw (6) -- (0.3,0.45);
					\draw[out=0,in=0] (0.3,0.45) to (0.3,0.65);
					\draw (0.3,0.65) to (-1.3,0.65);
					\draw[out=180,in=180] (-1.3,0.65) to (-1.3,0.45);
					\draw (-1.3,0.45) -- (3);
					\draw (5) -- (0.35,0.15);
					\draw[out=0,in=0] (0.35,0.15) to (0.35,0.75);
					\draw (0.35,0.75) to (-1.35,0.75);
					\draw[out=180,in=180] (-1.35,0.75) to (-1.35,0.15);
					\draw (-1.35,0.15) -- (2);
			\end{tikzpicture}}
			\addplot [domain=9:101, samples=10, color=gray,]{0.5};
		\end{axis}
	\end{tikzpicture}
	 \caption[]{Linear drift mitigation for different network topologies. As expected, feed-forward networks can not be trained to fidelities above 0.5, yet memory access improves performance. The performance increases with the memory size. {Here the drift velocity $v =0.6$.}}
	\label{fig:LinDrift}
\end{figure}

    Similar reasoning can be applied to non-linear drifts
    \begin{equation}
        \psi^j_t = \phi^j_t - g(t),
    \end{equation}
    where $g(t)$ is some function. Results for $g(t) = v \cdot \sqrt{t}$ can be seen in Figure~\ref{fig:SqrtDrift} and are qualitatively similar to the case of linear drifts (Figure~\ref{fig:LinDrift}).
    	
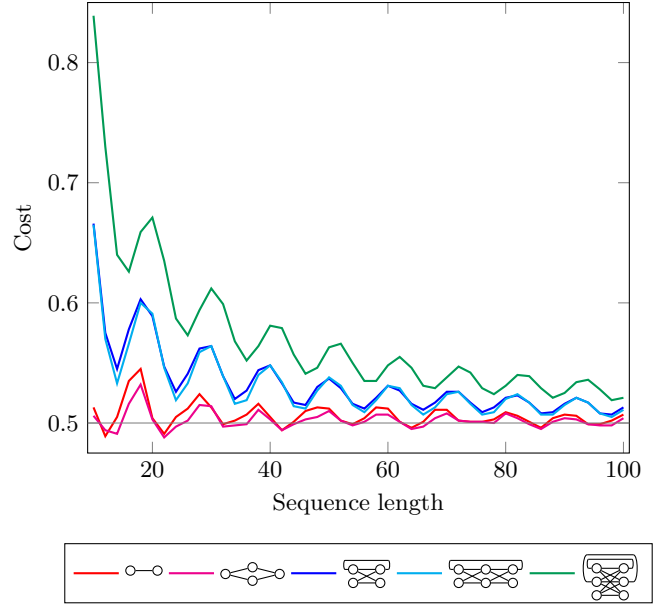
\begin{figure}[htbp]
	\centering
	\begin{tikzpicture}
		\begin{axis}[
			width=0.49\linewidth, % Scale the plot to \linewidth
			%grid=major, % Display a grid
			%grid style={dashed,gray!30}, % Set the style
			xlabel= Sequence length, % Set the labels
			ylabel= Cost,
			xmin=9,
			xmax=101,
			ymin=0.475,
			ymax=0.85,
			legend columns=5, 
			legend style={at={(0.5,-0.2)},anchor=north}, % Put the legend below the plot
			%x tick label style={rotate=90,anchor=east} % Display labels sideways
			]
			\addplot[thick,color=red,mark=None]  table[x=x,y=i1m0round,col sep=comma] {fidelity_SqrtDrift.csv}; 
			\addlegendentry{\begin{tikzpicture}[yscale=0.1,xscale=0.05, baseline]
					\node(1) [circle,draw,inner sep=0pt,minimum size=3.5pt] at (-1,0.15) {};
					\node(4) [circle,draw,inner sep=0pt,minimum size=3.5pt] at (0,0.15) {};
					\draw (1)--(4);
			\end{tikzpicture}}
			\addplot[thick,color=magenta,mark=None]  table[x=x,y=i1m0h2round,col sep=comma] {fidelity_SqrtDrift.csv}; 
			\addlegendentry{\begin{tikzpicture}[yscale=0.1,xscale=0.05, baseline]
					\node(1) [circle,draw,inner sep=0pt,minimum size=3.5pt] at (-1,0.15) {};
					\node(4) [circle,draw,inner sep=0pt,minimum size=3.5pt] at (0,0) {};
					\node(5) [circle,draw,inner sep=0pt,minimum size=3.5pt] at (0,0.3) {};
					\node(6) [circle,draw,inner sep=0pt,minimum size=3.5pt] at (1,0.15) {};
					\draw (1)--(4);
					\draw (1)--(5);
					\draw (6)--(4);
					\draw (6)--(5);
			\end{tikzpicture}}
			\addplot[thick,color=blue,mark=None]  table[x=x,y=i1m1round,col sep=comma] {fidelity_SqrtDrift.csv}; 
			\addlegendentry{\begin{tikzpicture}[yscale=0.1,xscale=0.05, baseline]
					\node(1) [circle,draw,inner sep=0pt,minimum size=3.5pt] at (-1,0) {};
					\node(2) [circle,draw,inner sep=0pt,minimum size=3.5 pt] at (-1,0.3) {};
					\node(4) [circle,draw,inner sep=0pt,minimum size=3.5pt] at (0,0) {};
					\node(5) [circle,draw,inner sep=0pt,minimum size=3.5pt] at (0,0.3) {};
					\draw (1)--(4);
					\draw (2)--(4);
					\draw (1)--(5);
					\draw (2)--(5);
					\draw (5) -- (0.3,0.3);
					\draw[out=0,in=0] (0.3,0.3) to (0.3,0.5);
					\draw (0.3,0.5) to (-1.3,0.5);
					\draw[out=180,in=180] (-1.3,0.5) to (-1.3,0.3);
					\draw (-1.3,0.3) -- (2);
			\end{tikzpicture}}
			\addplot[ thick,color=cyan,mark=None]  table[x=x,y=i1m1h2round,col sep=comma] {fidelity_SqrtDrift.csv}; 
			\addlegendentry{\begin{tikzpicture}[yscale=0.1,xscale=0.05, baseline]
					\node(1) [circle,draw,inner sep=0pt,minimum size=3.5pt] at (-1,0) {};
					\node(2) [circle,draw,inner sep=0pt,minimum size=3.5pt] at (-1,0.3) {};
					\node(4) [circle,draw,inner sep=0pt,minimum size=3.5pt] at (0,0) {};
					\node(5) [circle,draw,inner sep=0pt,minimum size=3.5pt] at (0,0.3) {};
					\node(6) [circle,draw,inner sep=0pt,minimum size=3.5pt] at (1,0) {};
					\node(7) [circle,draw,inner sep=0pt,minimum size=3.5pt] at (1,0.3) {};
					\draw (1)--(4);
					\draw (2)--(4);
					\draw (1)--(5);
					\draw (2)--(5);
					\draw (6)--(4);
					\draw (7)--(4);
					\draw (6)--(5);
					\draw (7)--(5);
					\draw (7) -- (1.3,0.3);
					\draw[out=0,in=0] (1.3,0.3) to (1.3,0.5);
					\draw (1.3,0.5) to (-1.3,0.5);
					\draw[out=180,in=180] (-1.3,0.5) to (-1.3,0.3);
					\draw (-1.3,0.3) -- (2);
			\end{tikzpicture} }
			\addplot[thick,color=ForestGreen,mark=None]  table[x=x,y=i1m2round,col sep=comma] {fidelity_SqrtDrift.csv}; 
			\addlegendentry{\begin{tikzpicture}[yscale=0.1,xscale=0.05, baseline]
					\node(1) [circle,draw,inner sep=0pt,minimum size=3.5pt] at (-1,-0.15) {};
					\node(2) [circle,draw,inner sep=0pt,minimum size=3.5pt] at (-1,0.15) {};
					\node(3) [circle,draw,inner sep=0pt,minimum size=3.5pt] at (-1,0.45) {};
					\node(4) [circle,draw,inner sep=0pt,minimum size=3.5pt] at (0,-0.15) {};
					\node(5) [circle,draw,inner sep=0pt,minimum size=3.5pt] at (0,0.15) {};
					\node(6) [circle,draw,inner sep=0pt,minimum size=3.5pt] at (0,0.45) {};
					\draw (1)--(4);
					\draw (2)--(4);
					\draw (3)--(4);
					\draw (1)--(5);
					\draw (2)--(5);
					\draw (3)--(5);
					\draw (1)--(6);
					\draw (2)--(6);
					\draw (3)--(6);
					\draw (6) -- (0.3,0.45);
					\draw[out=0,in=0] (0.3,0.45) to (0.3,0.65);
					\draw (0.3,0.65) to (-1.3,0.65);
					\draw[out=180,in=180] (-1.3,0.65) to (-1.3,0.45);
					\draw (-1.3,0.45) -- (3);
					\draw (5) -- (0.35,0.15);
					\draw[out=0,in=0] (0.35,0.15) to (0.35,0.75);
					\draw (0.35,0.75) to (-1.35,0.75);
					\draw[out=180,in=180] (-1.35,0.75) to (-1.35,0.15);
					\draw (-1.35,0.15) -- (2);
			\end{tikzpicture}}
			\addplot [domain=9:101, samples=10, color=gray,]{0.5};
		\end{axis}
	\end{tikzpicture}
	\caption[]{Non-linear drift mitigation for different network topologies, $\psi^j_t = \phi^j_t - 0.6 \cdot\sqrt{t}$. }
	\label{fig:SqrtDrift}
\end{figure}

\paragraph{Low-pass filters for smoothing}\leavevmode\\
High-frequency noise can be mitigated by an average running method such as exponential smoothing
    \be \label{eq:HiFreq}
        \psi^j_0 &=& \phi^j_0 \nn\\
        \psi^j_t &=& \alpha \cdot \phi^j_{t} 
                 + (1-\alpha) \cdot \psi^j_{t-1}, 
        \quad 0 < \alpha < 1.
    \ee
This task explicitly requires knowledge of previous states. As such, a feed-forward network does seem ill-equipped for the job. Nevertheless, as there are no quantum adders and multipliers, we do not expect even QRNNs with large memory to solve exponential smoothing with fidelity one. Our numerical results suggest a gap in the performance of recurrent and feed-forward QNNs; see Figure~\ref{fig:HiFreq}.
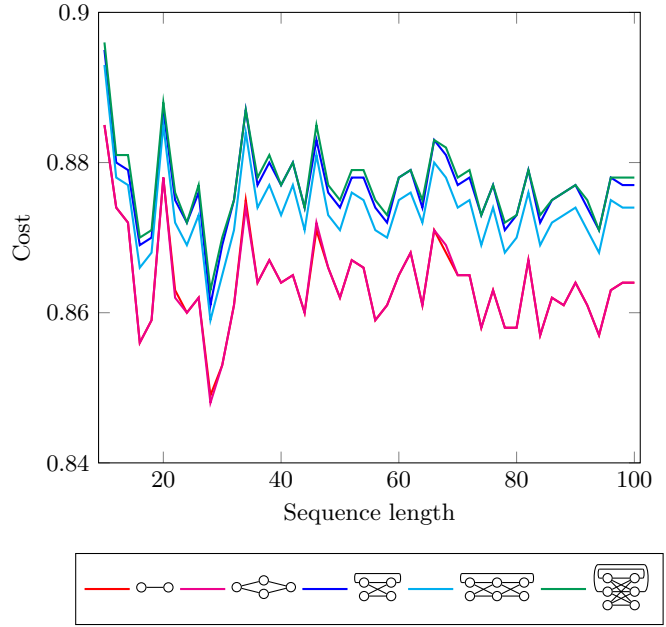
\begin{figure}[htbp]
	\centering
	\begin{tikzpicture}
		\begin{axis}[
			width=0.49\linewidth, % Scale the plot to \linewidth
			%grid=major, % Display a grid
			%grid style={dashed,gray!30}, % Set the style
			xlabel= Sequence length, % Set the labels
			ylabel= Cost,
			xmin=9,
			xmax=101,
			ymin=0.84,
			ymax=0.9,
			legend columns=5, 
			legend style={at={(0.5,-0.2)},anchor=north}, % Put the legend below the plot
			%x tick label style={rotate=90,anchor=east} % Display labels sideways
			]
			\addplot[thick,color=red,mark=None]  table[x=x,y=i1m0round,col sep=comma] {fidelity_LowpassDrift.csv}; 
			\addlegendentry{\begin{tikzpicture}[yscale=0.1,xscale=0.05]
					\node(1) [circle,draw,inner sep=0pt,minimum size=3.5pt] at (-1,0.15) {};
					\node(4) [circle,draw,inner sep=0pt,minimum size=3.5pt] at (0,0.15) {};
					\draw (1)--(4);
			\end{tikzpicture}}
			\addplot[thick,color=magenta,mark=None]  table[x=x,y=i1m0h2round,col sep=comma] {fidelity_LowpassDrift.csv}; 
			\addlegendentry{\begin{tikzpicture}[yscale=0.1,xscale=0.05]
					\node(1) [circle,draw,inner sep=0pt,minimum size=3.5pt] at (-1,0.15) {};
					\node(4) [circle,draw,inner sep=0pt,minimum size=3.5pt] at (0,0) {};
					\node(5) [circle,draw,inner sep=0pt,minimum size=3.5pt] at (0,0.3) {};
					\node(6) [circle,draw,inner sep=0pt,minimum size=3.5pt] at (1,0.15) {};
					\draw (1)--(4);
					\draw (1)--(5);
					\draw (6)--(4);
					\draw (6)--(5);
			\end{tikzpicture}}
			\addplot[thick,color=blue,mark=None]  table[x=x,y=i1m1round,col sep=comma] {fidelity_LowpassDrift.csv}; 
			\addlegendentry{\begin{tikzpicture}[yscale=0.1,xscale=0.05]
					\node(1) [circle,draw,inner sep=0pt,minimum size=3.5pt] at (-1,0) {};
					\node(2) [circle,draw,inner sep=0pt,minimum size=3.5 pt] at (-1,0.3) {};
					\node(4) [circle,draw,inner sep=0pt,minimum size=3.5pt] at (0,0) {};
					\node(5) [circle,draw,inner sep=0pt,minimum size=3.5pt] at (0,0.3) {};
					\draw (1)--(4);
					\draw (2)--(4);
					\draw (1)--(5);
					\draw (2)--(5);
					\draw (5) -- (0.3,0.3);
					\draw[out=0,in=0] (0.3,0.3) to (0.3,0.5);
					\draw (0.3,0.5) to (-1.3,0.5);
					\draw[out=180,in=180] (-1.3,0.5) to (-1.3,0.3);
					\draw (-1.3,0.3) -- (2);
			\end{tikzpicture}}
			\addplot[ thick,color=cyan,mark=None]  table[x=x,y=i1m1h2round,col sep=comma] {fidelity_LowpassDrift.csv}; 
			\addlegendentry{\begin{tikzpicture}[yscale=0.1,xscale=0.05]
					\node(1) [circle,draw,inner sep=0pt,minimum size=3.5pt] at (-1,0) {};
					\node(2) [circle,draw,inner sep=0pt,minimum size=3.5pt] at (-1,0.3) {};
					\node(4) [circle,draw,inner sep=0pt,minimum size=3.5pt] at (0,0) {};
					\node(5) [circle,draw,inner sep=0pt,minimum size=3.5pt] at (0,0.3) {};
					\node(6) [circle,draw,inner sep=0pt,minimum size=3.5pt] at (1,0) {};
					\node(7) [circle,draw,inner sep=0pt,minimum size=3.5pt] at (1,0.3) {};
					\draw (1)--(4);
					\draw (2)--(4);
					\draw (1)--(5);
					\draw (2)--(5);
					\draw (6)--(4);
					\draw (7)--(4);
					\draw (6)--(5);
					\draw (7)--(5);
					\draw (7) -- (1.3,0.3);
					\draw[out=0,in=0] (1.3,0.3) to (1.3,0.5);
					\draw (1.3,0.5) to (-1.3,0.5);
					\draw[out=180,in=180] (-1.3,0.5) to (-1.3,0.3);
					\draw (-1.3,0.3) -- (2);
			\end{tikzpicture} }
			\addplot[thick,color=ForestGreen,mark=None]  table[x=x,y=i1m2round,col sep=comma] {fidelity_LowpassDrift.csv}; 
			\addlegendentry{\begin{tikzpicture}[yscale=0.1,xscale=0.05]
					\node(1) [circle,draw,inner sep=0pt,minimum size=3.5pt] at (-1,-0.15) {};
					\node(2) [circle,draw,inner sep=0pt,minimum size=3.5pt] at (-1,0.15) {};
					\node(3) [circle,draw,inner sep=0pt,minimum size=3.5pt] at (-1,0.45) {};
					\node(4) [circle,draw,inner sep=0pt,minimum size=3.5pt] at (0,-0.15) {};
					\node(5) [circle,draw,inner sep=0pt,minimum size=3.5pt] at (0,0.15) {};
					\node(6) [circle,draw,inner sep=0pt,minimum size=3.5pt] at (0,0.45) {};
					\draw (1)--(4);
					\draw (2)--(4);
					\draw (3)--(4);
					\draw (1)--(5);
					\draw (2)--(5);
					\draw (3)--(5);
					\draw (1)--(6);
					\draw (2)--(6);
					\draw (3)--(6);
					\draw (6) -- (0.3,0.45);
					\draw[out=0,in=0] (0.3,0.45) to (0.3,0.65);
					\draw (0.3,0.65) to (-1.3,0.65);
					\draw[out=180,in=180] (-1.3,0.65) to (-1.3,0.45);
					\draw (-1.3,0.45) -- (3);
					\draw (5) -- (0.35,0.15);
					\draw[out=0,in=0] (0.35,0.15) to (0.35,0.75);
					\draw (0.35,0.75) to (-1.35,0.75);
					\draw[out=180,in=180] (-1.35,0.75) to (-1.35,0.15);
					\draw (-1.35,0.15) -- (2);
			\end{tikzpicture}}
			\addplot [domain=9:101, samples=10, color=gray,]{0.5};
		\end{axis}
	\end{tikzpicture}
	\caption[]{High-frequency noise mitigation for different network topologies, smoothing parameter $\alpha = 0.4$.}
	\label{fig:HiFreq}
\end{figure}
It would be nice to have an analytic estimate of the performance boost that the memory provides for the exponential smoothing. Unfortunately, we do not know how to derive such an estimate yet. Note that the numerical results in Figure~\ref{fig:HiFreq} fall within our expectations. Indeed, as feed-forward networks do not have access to the previously seen phases, a pretty good strategy is to output
\be \label{eq:LowPassMulti}
    \left. \psi^j_t \right|_{FF} &=& \alpha \cdot \phi^j_{t} 
                 + (1-\alpha) \cdot \overline{\psi^{j}_{t-1}}\nn\\
                 &=& \alpha \cdot \left( \phi^j_{t} +  \sum_{\tau < t} (1-\alpha)^{t - \tau} \overline{\phi^{j}_{\tau}} \right) = |\text{for distributions such that }\overline{\phi^j_{\tau}} =0 | \nn\\
                 &=&   \alpha \cdot \phi^j_{t}
\ee
where $\overline{\phi}$ denotes an expectation value of $\phi$. The expected fidelity that this strategy would have produced if it was realisable is
\be
    F_{\overline{\phi}} &=& \int_{-\pi}^{\pi} \frac{d \phi^j_{1}}{2 \pi} \dots \int_{-\pi}^{\pi} \frac{d \phi^j_{t}}{2 \pi}  \left| \left\la \alpha \cdot \phi^j_{t} \left| \alpha \cdot \left( \phi^j_{t} +  \sum_{\tau < t} (1-\alpha)^{t - \tau} \phi^j_{\tau} \right) \right. \right\ra \right|^2 \nn\\
           &=& \frac{1}{4} \cdot \int_{-\pi}^{\pi} \frac{d \phi^j_{1}}{2 \pi} \dots \int_{-\pi}^{\pi} \frac{d \phi^j_{t}}{2 \pi} 
           \left|1 + e^{\alpha \cdot \sum_{\tau < t} (1-\alpha)^{t - \tau} \phi^j_{\tau}} \right|^2 \nn\\
           &=& \frac{1}{2} + \frac{1}{2} \cdot \prod_\tau \frac{\sin\left((1-\alpha)^\tau \cdot \alpha \cdot \pi \right)}{(1-\alpha)^\tau\cdot \alpha \cdot \pi}
\ee
For the parameters of our data set and test sequence length $100$, $F_{\overline{\phi}}$ is close to its minimum and is$\sim0.93$.
The map~\ref{eq:LowPassMulti} is not a channel, see Proposition~\ref{thm:Qmulti}. We train feed-forward QNNs with the same topology as in Figure~\ref{fig:HiFreq} and observe that each network trains to the fidelity $F_{\cdot \alpha}~\sim0.925$. We can bound from below the attainable fidelity of feed-forward QNNs $F_{FF} \geq F_{\overline{\phi}} \cdot F_{\cdot \alpha} \sim 0.86$. Intriguingly, this bound, up to fluctuations, corresponds very well with the average performance of feed-forward QNNs observed in Figure~\ref{fig:HiFreq}. We hope this argument can be used as a basis for a rigorous proof.

On the other hand, Recurrent networks have access to information about previously encountered phases leading to increased performance. However, this knowledge is very limited, as some unclonable information was encoded into the previous outputs. Moreover, as a trained Q(R)NN is a quantum channel, some of this knowledge might not be utilisable. Together with the decent performance of feed-forward QNNs for exponential smoothing, these reasons lead to detectable, but in comparison to drift mitigation, modest performance gap.

Deeper networks have higher expressive power but may require more data or optimisation rounds to train. This is especially true if the extra expressive power is not needed for the task. We have used the same data and number of optimisation steps for every network we trained. One may attribute the slightly lower performance of the deeper network to the unneeded complexity.

\paragraph{Bandwidth filters}\leavevmode\\
Combining high- and low-frequency noise mitigation, we can train a filter on a data set that requires a band window
    \be
        \psi^j_0 &=& \phi^j_0 \nn\\
        \psi^j_t &=& \alpha \cdot \phi^j_{t} 
                 + (1-\alpha) \cdot \psi^j_{t-1} - g^j(t), 
        \quad 0 < \alpha < 1.
    \ee
    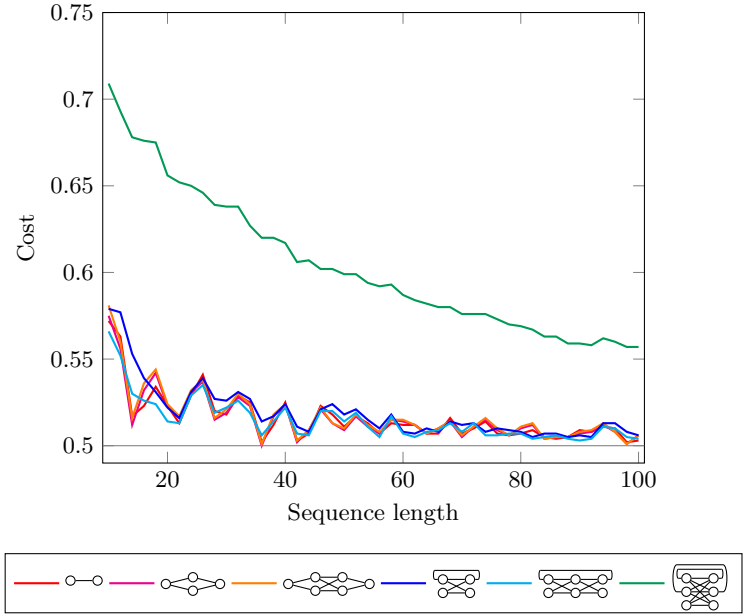
\begin{figure}[htbp]
	\centering
	\begin{tikzpicture}
		\begin{axis}[
			width=0.49\linewidth, % Scale the plot to \linewidth
			%grid=major, % Display a grid
			%grid style={dashed,gray!30}, % Set the style
			xlabel= Sequence length, % Set the labels
			ylabel= Cost,
			xmin=9,
			xmax=101,
			ymin=0.49,
			ymax=0.75,
			legend columns=6, 
			legend style={at={(0.5,-0.2)},anchor=north}, % Put the legend below the plot
			%x tick label style={rotate=90,anchor=east} % Display labels sideways
			]
			\addplot[thick,color=red,mark=None]  table[x=x,y=i1m0round,col sep=comma] {fidelity_DriftAndSmooth.csv}; 
			\addlegendentry{\begin{tikzpicture}[yscale=0.1,xscale=0.05, baseline]
					\node(1) [circle,draw,inner sep=0pt,minimum size=3.5pt] at (-1,0.15) {};
					\node(4) [circle,draw,inner sep=0pt,minimum size=3.5pt] at (0,0.15) {};
					\draw (1)--(4);
			\end{tikzpicture}}
			\addplot[thick,color=magenta,mark=None]  table[x=x,y=i1m0h2round,col sep=comma] {fidelity_DriftAndSmooth.csv}; 
			\addlegendentry{\begin{tikzpicture}[yscale=0.1,xscale=0.05, baseline]
					\node(1) [circle,draw,inner sep=0pt,minimum size=3.5pt] at (-1,0.15) {};
					\node(4) [circle,draw,inner sep=0pt,minimum size=3.5pt] at (0,0) {};
					\node(5) [circle,draw,inner sep=0pt,minimum size=3.5pt] at (0,0.3) {};
					\node(6) [circle,draw,inner sep=0pt,minimum size=3.5pt] at (1,0.15) {};
					\draw (1)--(4);
					\draw (1)--(5);
					\draw (6)--(4);
					\draw (6)--(5);
			\end{tikzpicture}}
			\addplot[thick,color=orange,mark=None]  table[x=x,y=i1m0h2h2round,col sep=comma] {fidelity_DriftAndSmooth.csv}; 
			\addlegendentry{\begin{tikzpicture}[yscale=0.1,xscale=0.05, baseline]
					\node(1) [circle,draw,inner sep=0pt,minimum size=3.5pt] at (-1,0.15) {};
					\node(2) [circle,draw,inner sep=0pt,minimum size=3.5pt] at (0,0) {};
					\node(3) [circle,draw,inner sep=0pt,minimum size=3.5pt] at (0,0.3) {};
					\node(4) [circle,draw,inner sep=0pt,minimum size=3.5pt] at (1,0) {};
					\node(5) [circle,draw,inner sep=0pt,minimum size=3.5pt] at (1,0.3) {};
					\node(6) [circle,draw,inner sep=0pt,minimum size=3.5pt] at (2,0.15) {};
					\draw (1)--(2);
					\draw (1)--(3);
					\draw (2)--(4);
					\draw (2)--(5);
					\draw (3)--(4);
					\draw (3)--(5);
					\draw (6)--(4);
					\draw (6)--(5);
			\end{tikzpicture}}
			\addplot[thick,color=blue,mark=None]  table[x=x,y=i1m1round,col sep=comma] {fidelity_DriftAndSmooth.csv}; 
			\addlegendentry{\begin{tikzpicture}[yscale=0.1,xscale=0.05, baseline]
					\node(1) [circle,draw,inner sep=0pt,minimum size=3.5pt] at (-1,0) {};
					\node(2) [circle,draw,inner sep=0pt,minimum size=3.5 pt] at (-1,0.3) {};
					\node(4) [circle,draw,inner sep=0pt,minimum size=3.5pt] at (0,0) {};
					\node(5) [circle,draw,inner sep=0pt,minimum size=3.5pt] at (0,0.3) {};
					\draw (1)--(4);
					\draw (2)--(4);
					\draw (1)--(5);
					\draw (2)--(5);
					\draw (5) -- (0.3,0.3);
					\draw[out=0,in=0] (0.3,0.3) to (0.3,0.5);
					\draw (0.3,0.5) to (-1.3,0.5);
					\draw[out=180,in=180] (-1.3,0.5) to (-1.3,0.3);
					\draw (-1.3,0.3) -- (2);
			\end{tikzpicture}}
			\addplot[ thick,color=cyan,mark=None]  table[x=x,y=i1m1h2round,col sep=comma] {fidelity_DriftAndSmooth.csv}; 
			\addlegendentry{\begin{tikzpicture}[yscale=0.1,xscale=0.05, baseline]
					\node(1) [circle,draw,inner sep=0pt,minimum size=3.5pt] at (-1,0) {};
					\node(2) [circle,draw,inner sep=0pt,minimum size=3.5pt] at (-1,0.3) {};
					\node(4) [circle,draw,inner sep=0pt,minimum size=3.5pt] at (0,0) {};
					\node(5) [circle,draw,inner sep=0pt,minimum size=3.5pt] at (0,0.3) {};
					\node(6) [circle,draw,inner sep=0pt,minimum size=3.5pt] at (1,0) {};
					\node(7) [circle,draw,inner sep=0pt,minimum size=3.5pt] at (1,0.3) {};
					\draw (1)--(4);
					\draw (2)--(4);
					\draw (1)--(5);
					\draw (2)--(5);
					\draw (6)--(4);
					\draw (7)--(4);
					\draw (6)--(5);
					\draw (7)--(5);
					\draw (7) -- (1.3,0.3);
					\draw[out=0,in=0] (1.3,0.3) to (1.3,0.5);
					\draw (1.3,0.5) to (-1.3,0.5);
					\draw[out=180,in=180] (-1.3,0.5) to (-1.3,0.3);
					\draw (-1.3,0.3) -- (2);
			\end{tikzpicture} }
			\addplot[thick,color=ForestGreen,mark=None]  table[x=x,y=i1m2round,col sep=comma] {fidelity_DriftAndSmooth.csv}; 
			\addlegendentry{\begin{tikzpicture}[yscale=0.1,xscale=0.05, baseline]
					\node(1) [circle,draw,inner sep=0pt,minimum size=3.5pt] at (-1,-0.15) {};
					\node(2) [circle,draw,inner sep=0pt,minimum size=3.5pt] at (-1,0.15) {};
					\node(3) [circle,draw,inner sep=0pt,minimum size=3.5pt] at (-1,0.45) {};
					\node(4) [circle,draw,inner sep=0pt,minimum size=3.5pt] at (0,-0.15) {};
					\node(5) [circle,draw,inner sep=0pt,minimum size=3.5pt] at (0,0.15) {};
					\node(6) [circle,draw,inner sep=0pt,minimum size=3.5pt] at (0,0.45) {};
					\draw (1)--(4);
					\draw (2)--(4);
					\draw (3)--(4);
					\draw (1)--(5);
					\draw (2)--(5);
					\draw (3)--(5);
					\draw (1)--(6);
					\draw (2)--(6);
					\draw (3)--(6);
					\draw (6) -- (0.3,0.45);
					\draw[out=0,in=0] (0.3,0.45) to (0.3,0.65);
					\draw (0.3,0.65) to (-1.3,0.65);
					\draw[out=180,in=180] (-1.3,0.65) to (-1.3,0.45);
					\draw (-1.3,0.45) -- (3);
					\draw (5) -- (0.35,0.15);
					\draw[out=0,in=0] (0.35,0.15) to (0.35,0.75);
					\draw (0.35,0.75) to (-1.35,0.75);
					\draw[out=180,in=180] (-1.35,0.75) to (-1.35,0.15);
					\draw (-1.35,0.15) -- (2);
			\end{tikzpicture}}
			\addplot [domain=9:101, samples=10, color=gray,]{0.5};
		\end{axis}
	\end{tikzpicture}
	\caption[]{Low and high frequency noise mitigation, drift velocity $v=0.6$ and smoothing parameter $\alpha=0.4$.}
	\label{fig:BandFilter}
\end{figure}
You can see the results for the performance of the trained filter in Figure~\ref{fig:BandFilter}, which is a larger version of Figure~\ref{fig:resultsmainpaper} (c) in the main text. We see that this task has a threshold behaviour regarding the memory; namely, at least two qubits of memory are needed to outperform feed-forward QNNs significantly.
\subsection{Local Cost with mixed output}
For a proof of concept of the algorithm given for mixed output states, Figure \ref{fig:mixedSWAP1m} shows the learning procedure for mixed states for the SWAP channel with each \(N=20\) training and test pairs and a learning rate of \(\epsilon  \eta=0.05\). As the optimisation here is done with the averaged Hilbert-Schmidt norm and the averaged fidelity is physically more relevant, both are shown in dependence on the training step. The plot in panel (b) looks similar to the one with pure output, only that the starting fidelities and the fidelity the feed-forward QNN reaches are higher as the fidelity of two random mixed states is higher than the one of two random pure states if the Haar measure is used as probability measure (for mixed states in combination with purification).
\begin{figure}[htbp]
	\begin{subfigure}{0.49\textwidth}
		\begin{tikzpicture}
			\node at (-1,6.5) {\textbf{a}};
			\begin{axis}[
				width=\linewidth, % Scale the plot to \linewidth
				xlabel= Training step, % Set the labels
				ylabel= Averaged Hilbert-Schmidt norm,
				xmin=0,
				xmax=1500,
				ymin=-0.025,
				ymax=0.55,
				legend columns=2, 
				legend style={at={(0.5,-0.2)},anchor=north},
				]
				\addplot[very thick,color=red,mark=None]  table[x=step,y=hilbertwomround,col sep=comma] {mixedSWAP1m.csv}; 
				\addlegendentry{\begin{tikzpicture}[scale=0.05]
						\node(1) [circle,draw,inner sep=0pt,minimum size=3.5pt] at (-1,0.15) {};
						\node(4) [circle,draw,inner sep=0pt,minimum size=3.5pt] at (0,0.15) {};
						\draw (1)--(4);
					\end{tikzpicture} training}
				\addplot[very thick,color=blue,mark=None]   table[x=step,y=hilbert1mround,col sep=comma] {mixedSWAP1m.csv}; 
				\addlegendentry{\begin{tikzpicture}[xscale=0.05,yscale=0.1]
						\node(1) [circle,draw,inner sep=0pt,minimum size=3.5pt] at (-1,0) {};
						\node(2) [circle,draw,inner sep=0pt,minimum size=3.5 pt] at (-1,0.3) {};
						\node(4) [circle,draw,inner sep=0pt,minimum size=3.5pt] at (0,0) {};
						\node(5) [circle,draw,inner sep=0pt,minimum size=3.5pt] at (0,0.3) {};
						\draw (1)--(4);
						\draw (2)--(4);
						\draw (1)--(5);
						\draw (2)--(5);
						\draw (5) -- (0.3,0.3);
						\draw[out=0,in=0] (0.3,0.3) to (0.3,0.5);
						\draw (0.3,0.5) to (-1.3,0.5);
						\draw[out=180,in=180] (-1.3,0.5) to (-1.3,0.3);
						\draw (-1.3,0.3) -- (2);
					\end{tikzpicture} training}
				\addplot[very thick,color=red,mark=None,dashed]   table[x=step,y=hilbertwomtestround,col sep=comma] {mixedSWAP1m.csv}; 
				\addlegendentry{\begin{tikzpicture}[scale=0.05]
						\node(1) [circle,draw,inner sep=0pt,minimum size=3.5pt] at (-1,0.15) {};
						\node(4) [circle,draw,inner sep=0pt,minimum size=3.5pt] at (0,0.15) {};
						\draw (1)--(4);
					\end{tikzpicture} testing} 
				\addplot[very thick,color=blue,mark=None,dashed]   table[x=step,y=hilbert1mtestround,col sep=comma] {mixedSWAP1m.csv}; 
				\addlegendentry{\begin{tikzpicture}[xscale=0.05,yscale=0.1]
						\node(1) [circle,draw,inner sep=0pt,minimum size=3.5pt] at (-1,0) {};
						\node(2) [circle,draw,inner sep=0pt,minimum size=3.5 pt] at (-1,0.3) {};
						\node(4) [circle,draw,inner sep=0pt,minimum size=3.5pt] at (0,0) {};
						\node(5) [circle,draw,inner sep=0pt,minimum size=3.5pt] at (0,0.3) {};
						\draw (1)--(4);
						\draw (2)--(4);
						\draw (1)--(5);
						\draw (2)--(5);
						\draw (5) -- (0.3,0.3);
						\draw[out=0,in=0] (0.3,0.3) to (0.3,0.5);
						\draw (0.3,0.5) to (-1.3,0.5);
						\draw[out=180,in=180] (-1.3,0.5) to (-1.3,0.3);
						\draw (-1.3,0.3) -- (2);
					\end{tikzpicture} testing}
			\end{axis}
		\end{tikzpicture}
	\end{subfigure}
	\hfill
	\begin{subfigure}{0.49\textwidth}
		\begin{tikzpicture}
			\node at (-1,6.5) {\textbf{b}};
			\begin{axis}[
				width=\linewidth, % Scale the plot to \linewidth
				xlabel= Training step, % Set the labels
				ylabel=Averaged fidelity,
				xmin=0,
				xmax=1500,
				ymin=0.82,
				ymax=1.025,
				legend columns=2, 
				legend style={at={(0.5,-0.2)},anchor=north},
				]
				\addplot[very thick,color=red,mark=None]  table[x=step,y=fidelitywomround,col sep=comma] {mixedSWAP1m.csv}; 
				\addlegendentry{\begin{tikzpicture}[scale=0.05]
						\node(1) [circle,draw,inner sep=0pt,minimum size=3.5pt] at (-1,0.15) {};
						\node(4) [circle,draw,inner sep=0pt,minimum size=3.5pt] at (0,0.15) {};
						\draw (1)--(4);
					\end{tikzpicture} training}
				\addplot[very thick,color=blue,mark=None]   table[x=step,y=fidelity1mround,col sep=comma] {mixedSWAP1m.csv}; 
				\addlegendentry{\begin{tikzpicture}[xscale=0.05,yscale=0.1]
						\node(1) [circle,draw,inner sep=0pt,minimum size=3.5pt] at (-1,0) {};
						\node(2) [circle,draw,inner sep=0pt,minimum size=3.5 pt] at (-1,0.3) {};
						\node(4) [circle,draw,inner sep=0pt,minimum size=3.5pt] at (0,0) {};
						\node(5) [circle,draw,inner sep=0pt,minimum size=3.5pt] at (0,0.3) {};
						\draw (1)--(4);
						\draw (2)--(4);
						\draw (1)--(5);
						\draw (2)--(5);
						\draw (5) -- (0.3,0.3);
						\draw[out=0,in=0] (0.3,0.3) to (0.3,0.5);
						\draw (0.3,0.5) to (-1.3,0.5);
						\draw[out=180,in=180] (-1.3,0.5) to (-1.3,0.3);
						\draw (-1.3,0.3) -- (2);
					\end{tikzpicture} training}
				\addplot[very thick,color=red,mark=None,dashed]   table[x=step,y=fidelitywomtestround,col sep=comma] {mixedSWAP1m.csv}; 
				\addlegendentry{\begin{tikzpicture}[scale=0.05]
						\node(1) [circle,draw,inner sep=0pt,minimum size=3.5pt] at (-1,0.15) {};
						\node(4) [circle,draw,inner sep=0pt,minimum size=3.5pt] at (0,0.15) {};
						\draw (1)--(4);
					\end{tikzpicture} testing} 
				\addplot[very thick,color=blue,mark=None,dashed]   table[x=step,y=fidelity1mtestround,col sep=comma] {mixedSWAP1m.csv}; 
				\addlegendentry{\begin{tikzpicture}[xscale=0.05,yscale=0.1]
						\node(1) [circle,draw,inner sep=0pt,minimum size=3.5pt] at (-1,0) {};
						\node(2) [circle,draw,inner sep=0pt,minimum size=3.5 pt] at (-1,0.3) {};
						\node(4) [circle,draw,inner sep=0pt,minimum size=3.5pt] at (0,0) {};
						\node(5) [circle,draw,inner sep=0pt,minimum size=3.5pt] at (0,0.3) {};
						\draw (1)--(4);
						\draw (2)--(4);
						\draw (1)--(5);
						\draw (2)--(5);
						\draw (5) -- (0.3,0.3);
						\draw[out=0,in=0] (0.3,0.3) to (0.3,0.5);
						\draw (0.3,0.5) to (-1.3,0.5);
						\draw[out=180,in=180] (-1.3,0.5) to (-1.3,0.3);
						\draw (-1.3,0.3) -- (2);
					\end{tikzpicture} testing}
			\end{axis}
		\end{tikzpicture}
	\end{subfigure}
\caption[]{\textbf{Comparison of QRNN and QNN for the delay by one channel with mixed in- and outputs.} A 
	\begin{tikzpicture}[yscale=0.6,xscale=0.4, baseline]
		\node(1) [circle,draw,inner sep=0pt,minimum size=3.5pt] at (-1,0) {};
		\node(2) [circle,draw,inner sep=0pt,minimum size=3.5 pt] at (-1,0.3) {};
		\node(4) [circle,draw,inner sep=0pt,minimum size=3.5pt] at (0,0) {};
		\node(5) [circle,draw,inner sep=0pt,minimum size=3.5pt] at (0,0.3) {};
		\draw (1)--(4);
		\draw (2)--(4);
		\draw (1)--(5);
		\draw (2)--(5);
		\draw (5) -- (0.3,0.3);
		\draw[out=0,in=0] (0.3,0.3) to (0.3,0.5);
		\draw (0.3,0.5) to (-1.3,0.5);
		\draw[out=180,in=180] (-1.3,0.5) to (-1.3,0.3);
		\draw (-1.3,0.3) -- (2);
	\end{tikzpicture} QRNN with one memory qubit (blue) and a \begin{tikzpicture}[yscale=0.6,xscale=0.4, baseline]
		\node(1) [circle,draw,inner sep=0pt,minimum size=3.5pt] at (-1,0.15) {};
		\node(4) [circle,draw,inner sep=0pt,minimum size=3.5pt] at (0,0.15) {};
		\draw (1)--(4);
	\end{tikzpicture} feed-forward QNN (red) are trained with learningrate \(\epsilon  \eta=0.05\) to learn the delay by one channel with \(N=20\) training pairs; the network is then tested on a test set of the same size. The drawn-through lines show the training, and the dashed lines the testing. Panel \textbf{(a)} shows the Cost used to train the networks, which is the averaged Hilbert-Schmidt norm, in dependence on the training step.   Panel \textbf{(b)} shows the physically more relevant averaged fidelity in dependence of the training step.}
\label{fig:mixedSWAP1m} 
\end{figure}
\subsection{Global Cost with pure output}
For a proof of principle of the algorithm provided for the global cost, Figure \ref{fig:mixedSWAP1m} shows the learning procedure for the SWAP channel with each \(N=20\) training and test pairs and a learningrate of \(\epsilon  \eta=0.005\) with the algorithm for maximising the global cost. As the global cost is smaller and before the training at about \(10^{-7}\), for clarity, the cost's logarithm is shown in a second plot. Except for the different scaling of the cost from zero to one and the steepness, the plots look similar to the one for the local cost.
\begin{figure}[htbp]
	\begin{subfigure}{0.49\textwidth}
		\begin{tikzpicture}
			\node at (-1,6.5) {\textbf{a}};
			\begin{axis}[
				width=\linewidth, % Scale the plot to \linewidth
				xlabel= Training step, % Set the labels
				ylabel= Cost,
				xmin=0,
				xmax=1000,
				ymin=-0.05,
				ymax=1.05,
				legend columns=2, 
				legend style={at={(0.5,-0.2)},anchor=north},
				]
				\addplot[very thick,color=red,mark=None]  table[x=step,y=Cwomround,col sep=comma] {bigSWAP1m.csv};
				\addlegendentry{\begin{tikzpicture}[scale=0.05, baseline]
						\node(1) [circle,draw,inner sep=0pt,minimum size=3.5pt] at (-1,0.15) {};
						\node(4) [circle,draw,inner sep=0pt,minimum size=3.5pt] at (0,0.15) {};
						\draw (1)--(4);
					\end{tikzpicture} training} 
				\addplot[very thick,color=blue,mark=None]   table[x=step,y=C1mround,col sep=comma] {bigSWAP1m.csv};
				\addlegendentry{\begin{tikzpicture}[xscale=0.05,yscale=0.1, baseline]
						\node(1) [circle,draw,inner sep=0pt,minimum size=3.5pt] at (-1,0) {};
						\node(2) [circle,draw,inner sep=0pt,minimum size=3.5 pt] at (-1,0.3) {};
						\node(4) [circle,draw,inner sep=0pt,minimum size=3.5pt] at (0,0) {};
						\node(5) [circle,draw,inner sep=0pt,minimum size=3.5pt] at (0,0.3) {};
						\draw (1)--(4);
						\draw (2)--(4);
						\draw (1)--(5);
						\draw (2)--(5);
						\draw (5) -- (0.3,0.3);
						\draw[out=0,in=0] (0.3,0.3) to (0.3,0.5);
						\draw (0.3,0.5) to (-1.3,0.5);
						\draw[out=180,in=180] (-1.3,0.5) to (-1.3,0.3);
						\draw (-1.3,0.3) -- (2);
					\end{tikzpicture} training} 
				\addplot[very thick,color=red,mark=None,dashed]   table[x=step,y=Cwomtestround,col sep=comma] {bigSWAP1m.csv}; 
				\addlegendentry{\begin{tikzpicture}[scale=0.05, baseline]
						\node(1) [circle,draw,inner sep=0pt,minimum size=3.5pt] at (-1,0.15) {};
						\node(4) [circle,draw,inner sep=0pt,minimum size=3.5pt] at (0,0.15) {};
						\draw (1)--(4);
					\end{tikzpicture} testing} 
				\addplot[very thick,color=blue,mark=None,dashed]   table[x=step,y=C1mtestround,col sep=comma] {bigSWAP1m.csv};
				\addlegendentry{\begin{tikzpicture}[xscale=0.05,yscale=0.1, baseline]
						\node(1) [circle,draw,inner sep=0pt,minimum size=3.5pt] at (-1,0) {};
						\node(2) [circle,draw,inner sep=0pt,minimum size=3.5 pt] at (-1,0.3) {};
						\node(4) [circle,draw,inner sep=0pt,minimum size=3.5pt] at (0,0) {};
						\node(5) [circle,draw,inner sep=0pt,minimum size=3.5pt] at (0,0.3) {};
						\draw (1)--(4);
						\draw (2)--(4);
						\draw (1)--(5);
						\draw (2)--(5);
						\draw (5) -- (0.3,0.3);
						\draw[out=0,in=0] (0.3,0.3) to (0.3,0.5);
						\draw (0.3,0.5) to (-1.3,0.5);
						\draw[out=180,in=180] (-1.3,0.5) to (-1.3,0.3);
						\draw (-1.3,0.3) -- (2);
					\end{tikzpicture} testing}
				\addplot [domain=0:1000, samples=10, color=gray,]{1};
			\end{axis}
		\end{tikzpicture}
	\end{subfigure}
	\hfill
	\begin{subfigure}{0.49\textwidth}
		\begin{tikzpicture}
			\node at (-1,6.5) {\textbf{b}};
			\begin{axis}[
				width=\linewidth, % Scale the plot to \linewidth
				xlabel= Training step, % Set the labels
				ylabel= ln(Cost),
				xmin=0,
				xmax=1000,
				ymin=-16,
				ymax=0.5,
				legend columns=2, 
				legend style={at={(0.5,-0.2)},anchor=north},
				]
				\addplot[very thick,color=red,mark=None]  table[x=step,y=logCwomround,col sep=comma] {bigSWAP1m.csv}; 
				\addlegendentry{\begin{tikzpicture}[scale=0.05, baseline]
						\node(1) [circle,draw,inner sep=0pt,minimum size=3.5pt] at (-1,0.15) {};
						\node(4) [circle,draw,inner sep=0pt,minimum size=3.5pt] at (0,0.15) {};
						\draw (1)--(4);
					\end{tikzpicture} training}
				\addplot[very thick,color=blue,mark=None]   table[x=step,y=logC1mround,col sep=comma] {bigSWAP1m.csv};
				\addlegendentry{\begin{tikzpicture}[xscale=0.05,yscale=0.1, baseline]
						\node(1) [circle,draw,inner sep=0pt,minimum size=3.5pt] at (-1,0) {};
						\node(2) [circle,draw,inner sep=0pt,minimum size=3.5 pt] at (-1,0.3) {};
						\node(4) [circle,draw,inner sep=0pt,minimum size=3.5pt] at (0,0) {};
						\node(5) [circle,draw,inner sep=0pt,minimum size=3.5pt] at (0,0.3) {};
						\draw (1)--(4);
						\draw (2)--(4);
						\draw (1)--(5);
						\draw (2)--(5);
						\draw (5) -- (0.3,0.3);
						\draw[out=0,in=0] (0.3,0.3) to (0.3,0.5);
						\draw (0.3,0.5) to (-1.3,0.5);
						\draw[out=180,in=180] (-1.3,0.5) to (-1.3,0.3);
						\draw (-1.3,0.3) -- (2);
					\end{tikzpicture} training} 
				\addplot[very thick,color=red,mark=None,dashed]   table[x=step,y=logCwomtestround,col sep=comma] {bigSWAP1m.csv}; 
				\addlegendentry{\begin{tikzpicture}[scale=0.05, baseline]
						\node(1) [circle,draw,inner sep=0pt,minimum size=3.5pt] at (-1,0.15) {};
						\node(4) [circle,draw,inner sep=0pt,minimum size=3.5pt] at (0,0.15) {};
						\draw (1)--(4);
					\end{tikzpicture} testing} 
				\addplot[very thick,color=blue,mark=None,dashed]   table[x=step,y=logC1mtestround,col sep=comma] {bigSWAP1m.csv}; 
				\addlegendentry{\begin{tikzpicture}[xscale=0.05,yscale=0.1, baseline]
						\node(1) [circle,draw,inner sep=0pt,minimum size=3.5pt] at (-1,0) {};
						\node(2) [circle,draw,inner sep=0pt,minimum size=3.5 pt] at (-1,0.3) {};
						\node(4) [circle,draw,inner sep=0pt,minimum size=3.5pt] at (0,0) {};
						\node(5) [circle,draw,inner sep=0pt,minimum size=3.5pt] at (0,0.3) {};
						\draw (1)--(4);
						\draw (2)--(4);
						\draw (1)--(5);
						\draw (2)--(5);
						\draw (5) -- (0.3,0.3);
						\draw[out=0,in=0] (0.3,0.3) to (0.3,0.5);
						\draw (0.3,0.5) to (-1.3,0.5);
						\draw[out=180,in=180] (-1.3,0.5) to (-1.3,0.3);
						\draw (-1.3,0.3) -- (2);
					\end{tikzpicture} testing}
				\addplot [domain=0:1000, samples=10, color=gray,]{0};
				\addplot [domain=0:1000, samples=10, color=gray,]{-13.863};
			\end{axis}
		\end{tikzpicture}
	\end{subfigure}
	\caption[]{\textbf{Comparison of QRNN and QNN for the delay by one channel with global cost.} A 
		\begin{tikzpicture}[yscale=0.6,xscale=0.4, baseline]
			\node(1) [circle,draw,inner sep=0pt,minimum size=3.5pt] at (-1,0) {};
			\node(2) [circle,draw,inner sep=0pt,minimum size=3.5 pt] at (-1,0.3) {};
			\node(4) [circle,draw,inner sep=0pt,minimum size=3.5pt] at (0,0) {};
			\node(5) [circle,draw,inner sep=0pt,minimum size=3.5pt] at (0,0.3) {};
			\draw (1)--(4);
			\draw (2)--(4);
			\draw (1)--(5);
			\draw (2)--(5);
			\draw (5) -- (0.3,0.3);
			\draw[out=0,in=0] (0.3,0.3) to (0.3,0.5);
			\draw (0.3,0.5) to (-1.3,0.5);
			\draw[out=180,in=180] (-1.3,0.5) to (-1.3,0.3);
			\draw (-1.3,0.3) -- (2);
		\end{tikzpicture} QRNN with one memory qubit (blue) and a \begin{tikzpicture}[yscale=0.6,xscale=0.4, baseline]
		\node(1) [circle,draw,inner sep=0pt,minimum size=3.5pt] at (-1,0.15) {};
		\node(4) [circle,draw,inner sep=0pt,minimum size=3.5pt] at (0,0.15) {};
		\draw (1)--(4);
	\end{tikzpicture} feed-forward QNN (red) are trained with learningrate \(\epsilon  \eta=0.005\) to learn the delay by one channel with \(N=20\) training pairs; the network is then tested on a test set of the same size. The drawn-through lines show the training, and the dashed lines the testing. Panel \textbf{(a)} shows the cost in dependence on the training step.   Panel \textbf{(b)} shows the logarithm of that cost as the training of the feed-forward QNN is not visible otherwise.}
	\label{fig:globalSWAP} 
\end{figure}
\section{Quantum algorithm for quantum training of the QRNN}
\label{quantumalgorithm}
The quantum algorithm is mainly the same as for feed-forward QNNs as presented in \cite{Beer2021training}. In the QNN, parameterised unitaries are used with \(N_p\) parameters in total, where all parameters are stored in a vector \(\Vec{p_t}\in \mathbb{R}^{N_p}\) that is trained using classical gradient descent. This vector is initialised as \(\Vec{p_0}\) and then updated according to \(\Vec{p_{t+1}}=\Vec{p_{t}}+\Vec{\mathrm{d}p}\) where \(\Vec{\mathrm{d}p}=\eta\nabla C\left(\Vec{p_t}\right)\) with learningrate \(\eta\). The gradient is estimated by
\[\left(\nabla C\left(\Vec{p_t}\right)\right)_k=\frac{C\left(\Vec{p_t}+\epsilon\Vec{e_k}\right)-C\left(\Vec{p_t}-\epsilon\Vec{e_k}\right)}{2\epsilon}\]
with \(\left(\Vec{e_k}\right)^j=\delta_k^j, \ k,j=1,...,N_p\) where the cost is estimated on a quantum computer using the SWAP trick as described in \cite{beerTrainingDeepQuantum2020} in supplementary note 4 for pure outputs.\\
Here, we have two possibilities; as already mentioned in the main text, we can either use the global or the local cost to train the QRNN. For the global cost, we first would have to evaluate the whole QRNN and compare the global output of the QRNN to the global training output. Using the global cost has the advantage of learning entanglement between different outputs. As opposed to this, for the local cost, we would start with the first input and memory state, make one run of the QRNN, and do the SWAP trick on the output and the corresponding output in the training set. We would then be left with a state in the memory register on which we then - together with the following input - use the QRNN a second time and so on until the last run is reached. The major advantage of this is that fewer qubits are needed because the in- and output qubit registers can be used again.

\newpage
\bibliographystyle{naturemag}
\bibliography{literaturnotes}
\end{document}